\let\emph\undefined
\newcommand{\emph}[1]{\textsl{#1}}
\numberwithin{equation}{section}
\numberwithin{equation}{section}
\newtheoremstyle{style1}
{13pt}
{13pt}
{}
{}
{\normalfont\bfseries}
{.}
{.5em}
{}
\theoremstyle{style1}
\newtheorem{definition}{Definition}[section]
\newtheorem{example}[definition]{Example}
\newtheorem{remark}[definition]{Remark}
\newcommand{\catf}[1]{{\mathsf{#1}}}
\newtheoremstyle{style2}
{13pt}
{13pt}
{\slshape}
{}
{\normalfont\bfseries}
{.}
{.5em}
{}
\theoremstyle{style2}
\newtheorem{lemma}[definition]{Lemma}
\newtheorem{theorem}[definition]{Theorem}
\newtheorem{proposition}[definition]{Proposition}
\newtheorem{corollary}[definition]{Corollary}
\newcommand{\R}{\mathbb{R}}
\newcommand{\C}{\mathbb{C}}
\newcommand{\Z}{\mathbb{Z}}
\newcommand{\SO}{\operatorname{SO}}
\newcommand{\ob}{\operatorname{ob}}
\newcommand{\colim}{\operatorname{colim}}
\newcommand{\TGrpd}{2\catf{Grpd}}
\newcommand{\TGrp}{2\catf{Grp}}
\newcommand{\Top}{\catf{Top}}
\newcommand{\cat}[1]{\mathcal{#1}}
\newcommand{\Spec}{\operatorname{Spec}}
\newcommand{\Aut}{\operatorname{Aut}}
\newcommand{\End}{\operatorname{End}}
\newcommand{\Hom}{\operatorname{Hom}}
\newcommand{\id}{\operatorname{id}}
\newcommand{\pr}{\operatorname{pr}}
\newcommand{\ev}{\operatorname{ev}}
\newcommand{\coev}{\operatorname{coev}}
\newcommand{\fd}{\operatorname{fd}}
\newcommand{\Fun}{\operatorname{Fun}}
\newcommand{\Obj}{\operatorname{Obj}}
\newcommand{\BiCat}{{\catf{BiCat}}}
\newcommand{\Cat}{\catf{Cat}}
\newcommand{\DS}{\text{/\hspace{-0.1cm}/}}
\let\to\undefined
\newcommand{\to}{\longrightarrow}
\let\mapsto\undefined
\newcommand{\mapsto}{\longmapsto}
\newcommand{\Alg}{\catf{Alg}}
\newcommand{\hsVect}{\catf{hsVect}}
\newcommand{\sAlg}{\catf{sAlg}}
\newcommand{\stAlg}{\catf{stAlg}}
\newcommand{\sVect}{\catf{sVect}}
\newcommand{\Vect}{\catf{Vect}}
\newcommand{\Bord}{\catf{Bord}}
\newcommand{\op}{\text{op}}
\newcommand{\Spin}{\operatorname{Spin}}
\newcommand{\fr}{\operatorname{fr}}
\newcommand{\Pin}{\operatorname{Pin}}
\newcommand{\Cl}{\operatorname{Cl}}
\let\O\undefined\newcommand{\O}{\operatorname{O}}
\newcommand{\stFrob}{\catf{stFrob}}
\DeclareMathSymbol{\Phiit}{\mathalpha}{letters}{"08} 
\DeclareMathSymbol{\Psiit}{\mathalpha}{letters}{"09}
\DeclareMathSymbol{\Sigmait}{\mathalpha}{letters}{"06}
\DeclareMathSymbol{\Xiit}{\mathalpha}{letters}{"04}
\DeclareMathSymbol{\Piit}{\mathalpha}{letters}{"05}\let\Pi\undefined\newcommand{\Pi}{\Piit}
\DeclareMathSymbol{\Gammait}{\mathalpha}{letters}{"00}
\DeclareMathSymbol{\Omegait}{\mathalpha}{letters}{"0A}\let\Omega\undefined\newcommand{\Omega}{\Omegait}
\DeclareMathSymbol{\Upsilonit}{\mathalpha}{letters}{"07}
\DeclareMathSymbol{\Thetait}{\mathalpha}{letters}{"02}
\DeclareMathSymbol{\Lambdait}{\mathalpha}{letters}{"03}\let\Lambda\undefined\newcommand{\Lambda}{\Lambdait}
\let\Phi\undefined\newcommand{\Phi}{\Phiit}
\let\Sigma\undefined\newcommand{\Sigma}{\Sigmait}
\let\Psi\undefined\newcommand{\Psi}{\Psiit}
\let\Gamma\undefined\newcommand{\Gamma}{\Gammait}
\definecolor{Blue}  {rgb} {0.282352,0.239215,0.803921}
\definecolor{Green} {rgb} {0.133333,0.545098,0.133333}
\definecolor{Red}   {rgb} {0.803921,0.000000,0.000000}
\definecolor{Violet}{rgb} {0.580392,0.000000,0.827450}
\newcounter{jfc}
\def\temp{&} \catcode`&=\active \let&=\temp
\tikzset{
	on each segment/.style={
		decorate,
		decoration={
			show path construction,
			moveto code={},
			lineto code={
				\path [#1]
				(\tikzinputsegmentfirst) -- (\tikzinputsegmentlast);
			},
			curveto code={
				\path [#1] (\tikzinputsegmentfirst)
				.. controls
				(\tikzinputsegmentsupporta) and (\tikzinputsegmentsupportb)
				..
				(\tikzinputsegmentlast);
			},
			closepath code={
				\path [#1]
				(\tikzinputsegmentfirst) -- (\tikzinputsegmentlast);
			},
		},
	},
	mid arrow/.style={postaction={decorate,decoration={
				markings,
				mark=at position .5 with {\arrow[#1]{stealth}}
	}}},
}
\tikzset{%
	link/.style    = { white, double = black, line width = 1.8pt,
		double distance = 0.41pt },
	channel/.style = { white, double = black, line width = 0.8pt,
		double distance = 0.61pt },
}
\newcommand*{\tarrow}[2][]{\arrow[Rrightarrow, #1]{#2}\arrow[dash, shorten >= 0.5pt, #1]{#2}}
\newcommand*{\qarrow}[2][]{\arrow[RRightarrow, #1]{#2}\arrow[equal, double distance = 0.25pt, shorten >= 1.28pt, #1]{#2}}
\begin{document}
	
	\vspace*{-1.5cm}
	
	\vspace{5mm}
	
	\begin{center}
		\textbf{\LARGE{Reflection Structures and Spin Statistics in Low Dimensions}}\\
		\vspace{1cm}
		{\large Lukas Müller $^{a}$} \ \ and \ \ {\large Luuk Stehouwer $^{b}$ }
		\\ 	\vspace{5mm}{\slshape $^a$ Perimeter Institute \\ 31 Caroline Street North \\  N2L 2Y5 Waterloo }	\\[7pt]	{\slshape $^b$\em  Max-Planck-Institut f\"ur Mathematik\\
			Vivatsgasse 7, \\ 53111 Bonn }\\
		
		\vspace{5mm}

	\end{center}
	\begin{abstract}\noindent 
  	We give a complete classification of topological field theories with reflection
  	structure and spin-statistics in one and two spacetime dimensions. Our answers
  	can be naturally expressed in terms of an internal fermionic symmetry group $G$ which is different from the spacetime structure group. Fermionic groups
  	encode symmetries of systems with fermions and time reversing symmetries. 
  	We show that 1-dimensional topological field theories with reflection structure and spin-statistics are classified by finite dimensional hermitian representations of $G$. In spacetime dimension two we give a classification in terms strongly $G$-graded stellar Frobenius algebras. Our proofs are based on the cobordism hypothesis. Along the way, we develop some
  	useful tools for the computation of homotopy fixed points of 2-group actions on bicategories.  
	\end{abstract}
	
	\tableofcontents

\section{Introduction}

Functorial field theories are a rigorous approach to quantum field theories. They are
best known in the incarnation of (and evolved from) Atiyah's definition of topological field theories~\cite{atiyah1988topological}. The framework in its traditional setting does not include reflection positivity (the Euclidean version of unitarity) or a connection between the spin and statistics of a particle. 
Both reflection structures\footnote{We comment on positivity in a moment.} and spin-statistics relations can 
be understood as an equivariance condition on the field theory~\cite{theospinstatistics,freedhopkins} with respect to a $\Z_2$ and $B\Z_2$ action, respectively. 

Most mathematical classification results and constructions in the context of topological field theories
do not consider these additional equivariance conditions. One big exception is the seminal work by Freed 
and Hopkins~\cite{freedhopkins} which classify fully extended reflection positive invertible theories using tools 
from homotopy theory. However, we are not aware of any work studying non-invertible theories in 
detail. For example there does not exist a definition of positivity for extended field theories. This paper provides a 
detailed study and classification of topological field theories with reflection structure and spin statistics in one and two spacetime dimensions. Even though we will not solve the problem of defining reflection positivity in general we hope that our results give some insights into how to
define it for extended non-invertible field theories.  
One (mathematically) surprising part of the work by Freed and Hopkins is that imposing reflection 
positivity leads to large simplifications in the computations. Something similar also happens in our work: 
By considering topological field theories with reflection structure and or spin statistics connection we 
will be able to arrive at a classification in terms of concrete algebraic structures. 

Due to the length and at times technical nature of the paper, we will give a detailed and extended summary 
of our results in the introduction. We start with a few comments about the connection to condensed 
matter physics partially motivating our work. 

\subsection{Gapped lattice systems and topological field theories}
A \emph{symmetry enriched topological phase of matter}~\cite{kitaevperiodic, gu2009tensor, Kapustin:2014dxa, freedhopkins, gaiottoSPT} is roughly speaking an equivalence class of 
gapped lattice Hamiltonians which can be continuously deformed into each other without breaking a symmetry or closing the energy gap.
A good tool to study them is their low energy effective field theory describing the ground states which turns out to be topological in many cases. 
Conjecturally, the effective low energy field theory is a complete invariant of the topological phase~\cite{Kapustin:2014tfa,Kapustin:2014dxa,Kapustin:2015uma,freedhopkins}. 
Topological quantum field theories are reasonably well understood mathematically allowing in principle explicit classifications and hence predictions of possible topological phases of matter.        
An interesting class of symmetry protected topological phases are the so called-short ranged entangled topological phases. 
They are characterized by the fact that their low energy effective field theory is invertible. 
Unitary, i.e. reflection positive, invertible topological field theories are classified in terms of bordism invariants~\cite{Yonekura:2018ufj,freedhopkins}. 
The non-invertible case is not nearly as well-understood.

Before providing more details on the realization of symmetries of a lattice system in the low energy effective field theory we give a rough introduction to the mathematical definition of topological field theories we use in this paper. More details can be found in Section~\ref{Sec:TFT}. 
The definition goes back to Atiyah~\cite{atiyah1988topological}, but has evolved significantly since its original definition. The definition comes in various flavors depending on the type of spacetime manifolds one considers. For example in theories involving fermions all manifolds should be equipped with a spin-structure. All possible structures relevant for this article can be 
formulated in terms of tangential structures (see Section~\ref{Sec:TFT} for details). To specify a type of tangential structure for $d$-dimensional manifolds, one has to fix a Lie group $H$ together with a real representation $\rho \colon H\to O_d$. 
A tangential structure on a $d$-dimensional manifold $M$ consists of a principal $H$-bundle $P$ together with an isomorphism of vector bundles $P\times_\rho \R^d \cong TM$. For example, tangential structures for the map $\Spin_d\to \SO_d \to O_d$ are spin structures on $M$. 

A topological field theory (with values in super vector spaces) assigns to every closed $d-1$-dimensional manifold $\Sigma$ equipped with a tangential structure a super vector space $\mathcal{Z}(\Sigma)$; the state space of the theory graded by fermion parity. Furthermore, $\mathcal{Z}$
assigns to every bordism, i.e. compact manifold $M$ with boundary $\partial M \cong \Sigma_1 \sqcup \Sigma_2$ equipped with a tangential structure a linear map $\mathcal{Z}(M)\colon \mathcal{Z}(\Sigma_1)\to \mathcal{Z}(\Sigma_2)$. 
The linear map associated to the gluing of two manifolds along a common boundary is required to be the composition of the linear maps for the individual pieces. 
There are a few more conditions usually imposed onto these data which can be conveniently encoded using the language of symmetric monoidal categories: Let
$\Bord^{H,\rho}_d$ be the symmetric monoidal category with objects closed $d-1$-dimensional manifolds equipped with a tangential $H$-structure and bordisms as morphisms. 
The symmetric monoidal structure is given by disjoint union of manifolds.
A \emph{$d$-dimensional topological field theory} is now a symmetric monoidal functor
\begin{align}
\mathcal{Z} \colon \Bord^{H,\rho}_d \to \sVect \ \ . 
\end{align} 
To fully capture locality one usually also allows cutting the $d-1$-dimensional manifolds into smaller pieces leading to a symmetric monoidal $d$-category $\Bord^{H,\rho}_{d,0}$~\cite{lurietft, calaquescheimbauer}. A fully local topological field theory with values in a symmetric monoidal $d$-category $\cat{S}$ is a symmetric monoidal functor $\Bord^{H,\rho}_{d,0} \to \cat{S}$. For us only the case $d=2$ will be relevant and the target bicategory $\sAlg$ of super algebras, bimodules, and intertwiners. We explain all this in more detail in Section~\ref{Sec:TFT} and Appendix~\ref{App: Super alg}. For the rest of this introduction, no detailed understanding of these concepts is needed.     

After this brief detour, we come back to symmetries of lattice systems. 
These are given by automorphisms of the Hilbert space associated to every site which commute with the Hamiltonian of the system. 
In theories with fermions, $-1$ to the number of fermions is always a non-trivial symmetry of the system denoted by $(-1)^F$. 
Furthermore, one distinguishes between symmetries which are unitary (time preserving) and anti-unitary (time reversing). 
We will call this local symmetry group the \emph{internal symmetry group} $G$ of the lattice system. Abstracting the structure of local symmetries in lattice
systems leads to the definition of a \emph{fermionic group} as a $\Z_2$-graded Lie group $G=G_0\sqcup G_1 $ together with an even central element $(-1)^F\in G_0$ squaring to $1$. We discuss fermionic groups in detail in Section~\ref{Sec:FG}.      
               
Since the internal symmetry is local, the low-energy effective field theory 
should be definable on manifolds with background fields for the internal symmetry. Here the time reversing symmetries and the fermion parity need to be combined 
appropriately with the local spacetime symmetries. This means that the low-energy effective topological field theory is not defined on manifolds with a tangential 
$G$ structure, but rather with a tangential structure for a spacetime structure
group $H_d$ associated to $G$. We explain in detail how to construct $H_d$ from
a fermionic group $G$ in Section~\ref{Sec:spacetime group}. 
For the moment, 
let us just give a few illustrating examples: 
\begin{itemize}
	\item If the internal symmetry group is trivial, then the associated 
	spacetime structure group is $SO_d$. Hence the low energy effective theory
	can be defined on oriented manifolds. 
	\item If the theory contains fermions, but no other internal symmetries, then the internal symmetry group is $\Z_2^F=\{ 1,(-1)^F\}$ and the spacetime structure group is $\Spin_d$. Hence the low energy effective theory
    is defined on spin manifolds.  
    \item In the case there is one time reversing symmetry $T$ squaring to $(-1)^F$,
    the spacetime structure group is $\Pin_d^+$. 
\end{itemize}   
One important feature about the theories arising from unitary lattice systems is that they are expected to be reflection positive and satisfy a version of the spin statistics relations. Both concepts are not well-defined for topological field theories with arbitrary tangential structure, but can be formulated for structure groups constructed from internal symmetry groups $G$. 
        
\subsection{Reflection structures and spin statistics}
By construction the spacetime structure group $H_d$ associated to an internal 
symmetry group $G$ is part of a short exact sequence 
\begin{align}
	1\to H_d\to \widehat{H}_d\to \Z_2 \to 1 
\end{align}
of Lie groups. 
This sequence can be used to define an involution $\overline{( \cdot )}$ on 
$H_d$-structured manifolds generalising orientation reversal. 
The construction goes roughly as follows (for details we refer to Section~\ref{Sec:Z2}): From the principal $H_d$-bundle $P$ we can form an
associated $\widehat{H}_d$-bundle $\widehat{P}$ which contains $P$ as a natural 
submanifold. 
We can construct a new principal $H_d$-bundle as $\overline{P} \coloneqq \widehat{P}\setminus P $. In addition the identification $\overline{P}\times_{\rho } \R^d \cong TM $ is constructed from the old identification and a chosen reflection in $\R^d$ (we choose the reflection along the $(e_1=0)$-plane). The involution $\overline{( \cdot )}$ gives rise to a $\Z_2$-action on $\Bord_d^{H_d, \rho}$ generalising the orientation-reversal action for $H = SO_d$. 

A reflection structure on a field theory is the requirement that `orientation reversal' corresponds to complex conjugation. 
Mathematically, a topological field theory with reflection structure is defined as a $\Z_2$-equivariant functor $\mathcal{Z} \colon \Bord_d^{H_d,\rho}\to \sVect$ with respect to the orientation reversal action on $\Bord_d^{H_d,\rho}$ and complex conjugation on $\sVect$~\cite{freedhopkins}. 
Note that being equivariant is a structure and not a property. 
A consequence of the definition is that the state space at every object is equipped with a hermitian inner product~\cite{freedhopkins}. 
A reflection structure is called \emph{positive} if all these products are positive.         

The spin statistic relation asserts that the transformation of a particle
under the element $c=(-1)\in \Spin_d$ determines its statistics, i.e. it acts as the identity on bosonic particles and by multiplication with $-1$ on fermionic particles. The type of particle is encoded by the grading of the super vector 
spaces: For a super vector space $V=V_0\oplus V_1$ we think of the elements of
$V_0$ as bosonic and $V_1$ as fermionic. The spin statistics connection is now 
the condition that $c$ acts on the space $V$ by the grading operator $(-1)^F_V$.
The action by $c$ defines an automorphism $c_\Sigma$ of every object $\Sigma \in \Bord_d^{\Spin_d}$. A topological field theory $\mathcal{Z}$ \emph{satisfies the spin statistics relation}~\cite{theospinstatistics} if $\mathcal{Z}(c_\Sigma)=(-1)^F_{\mathcal{Z}(\Sigma)}$. To extend this definition
to general spacetime structure groups $H_d$ constructed from an internal symmetry group $G$, it is enough to replace $c\in \Spin_d$ with an analogous element $c\in H_d$. We refer to Section~\ref{Sec:spacetime group} for its definition. We can reformulate this definition in a way more similar to the definition of a reflection structure: For this note that $(-1)^F_{-}$ induces a natural transformation from the identity on $\sVect$ to itself and hence an action of the 2-group $B\Z_2$ on $\sVect$. Similarly, acting with $c$ induces an action
of $B\Z_2$ on $\Bord_d^{H_d,\rho}$. The spin statistics relation is equivalent
to $\mathcal{Z}\colon \Bord_d^{H_d,\rho}\to \sVect$ being $B\Z_2$-equivariant. 

The $B\Z_2$-action encoding spin statistics and the $\Z_2$-action encoding reflection structures combine into a $\Z_2\times B\Z_2$-action on $\Bord_d^{H_d,\rho}$. 
The action extends to the fully local bordism category
$\Bord_{d,0}^{H_d,\rho}$ (this follows from the cobordism hypothesis as we explain in Section~\ref{Sec:Z2}). 
Furthermore, the $\Z_2\times B\Z_2$-action on $\sVect$ induces a $\Z_2\times B\Z_2$-action on the symmetric monoidal bicategory $\sAlg$ (see Appendix~\ref{App: Super alg}). 
Hence we can define extended field theories 
with reflection structure and spin statistics connection as $\Z_2\times B\Z_2$-equivariant functors. 
This paper is concerned with the classification and study of such field theories 
in low dimensions.  
We summarize our results in the next section.

\subsection{Main results}
Our main result is a classification of $\Z_2\times B\Z_2$-equivariant functors for an arbitrary internal
symmetry group $G$ in spacetime dimension 1 and 2.
One surprising outcome of our work is that topological field theories with
structure group $H_d$, reflection structure and spin statistics, associated to
an internal symmetry group $G$ admit a good description in terms of $G$ rather then
$H_d$ for $d$ equal to 1 and 2. 

To appreciate our result more we first describe the classification of 1-dimensional field theories without additional structure.  
In dimension one $\O_1=\Z_2$ and hence the tangential structure is fixed by a
Lie group homomorphism $\rho\colon H_1\to \Z_2$. The classification will only depend on
the map of discrete groups $\pi_0(\rho)\colon \pi_0(H_1)\to \Z_2$ and hence we will assume from 
now on (without loss of generality) that $G$ and hence $H_1$ is discrete. 
A 1-dimensional topological field theory $\mathcal{Z}\colon \Bord_1^{H_1,\rho}\to \sVect$ is completely described by a finite dimensional super vector space $V$ the value of $\mathcal{Z}$ on a point together with a representation of the even part of $(H_1)_{\operatorname{ev}}=\rho^{-1}(0)$ and a non-degenerate bilinear form $_{h}\langle -,- \rangle \colon V\otimes V\to \C $ for all odd elements of $H_1$ satisfying a bunch of not really enlightening relations (see Proposition~\ref{Prop: 1D TFT}). 

On the condensed matter side this is supposed
to correspond to a zero dimensional lattice system with internal $G$-symmetry. 
These are usually described by one Hilbert space $\mathcal{H}$, together with a Hamiltonian $H$, and an action of $G$ such that its even elements act unitarily
and its odd elements act antiunitaritly. 
We can weaken the condition of being
a Hilbert space to a hermitian vector space to also describe non-unitary systems\footnote{Non-Hermitian systems which have Hamiltonians that are not self adjoint have also found applications in condensed matter, but will not be considered here.}.
The low energy effective topological field theory of a gapped system should now be described by the
ground state of $H$, which still has an action by $G$. 
We can abstract the structure of the action in the following definition: 
A \emph{unitary fermionic representation} on a hermitian super vector space 
$V$ is a representation of $G$ on $V$ where even elements act by unitary maps
$\rho(g)\colon V \to V$,
odd elements act via $\C$-antiunitary maps $\rho(g)\colon \overline{V} \to V$, and $\rho(c)=(-1)^F_V$.  
However, this seems at first glance quite different from the data describing a $H_1$-topological field theory.  

This mismatch completely disappears when theories with reflection structure and 
spin statistics connection are considered. We prove in Section~\ref{Sec:1D} the
following classification result:

\begin{proposition}[Proposition~\ref{prop:1d classification}]
	Let $G$ be a fermionic group. The groupoid of 1-dimensional reflection and spin-statistics field 
	theories with internal symmetry group $G$ is equivalent to the core of the category of unitary fermionic
	representations of the fermionic group $\pi_0(G)$.  
\end{proposition}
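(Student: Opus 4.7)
The plan is to use the classification of plain 1-dimensional $H_1$-TFTs established just above (Proposition~\ref{Prop: 1D TFT}) as a starting point, and then translate the $\Z_2 \times B\Z_2$-equivariance structure into the additional data needed to upgrade the output of that classification to a unitary fermionic representation. Since the only data that matter are the connected components, I would immediately reduce to the case where $G$, and hence $H_1$, are discrete, so that $H_1 = \pi_0(G)$.

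First I would recall that a 1-dimensional $H_1$-TFT $\mathcal{Z}$ is determined by its value $V=\mathcal{Z}(\pt)$, which carries a representation of $(H_1)_{\operatorname{ev}}$ and, for each odd element $h$, a non-degenerate bilinear pairing ${}_h\langle -,-\rangle \colon V \otimes V \to \C$, together with coherence among these. Next I would compute the $\Z_2$-action on the object $\pt$ in $\Bord_1^{H_1,\rho}$ coming from the orientation-reversal involution $\overline{(\cdot)}$ described in Section~\ref{Sec:Z2}: it sends $\pt$ to its dual $\pt^\vee$, so the reflection-equivariance data at $\pt$ is an identification $\mathcal{Z}(\pt^\vee) \cong \overline{\mathcal{Z}(\pt)}$. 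Chasing this identification through the bilinear pairings ${}_h\langle -,-\rangle$ and using that $\sVect$ carries the complex conjugation $\Z_2$-action, the pairings are forced to descend to sesquilinear forms; in particular, the identity-component pairing yields a non-degenerate hermitian form on $V$, and equivariance under odd elements forces their action to be antiunitary with respect to this form while even elements remain unitary.

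Then I would analyze the $B\Z_2$-equivariance. The $B\Z_2$-action on $\sVect$ is generated by the parity automorphism $(-1)^F_{-}$ of the identity functor, while on $\Bord_1^{H_1,\rho}$ it is generated by the automorphism $c_\Sigma$ induced by the canonical central element $c \in H_1$ (see Section~\ref{Sec:spacetime group}). Equivariance precisely says that $\mathcal{Z}(c_\pt) = (-1)^F_V$, i.e.\ the representation $\rho$ of $\pi_0(G)$ on $V$ satisfies $\rho(c) = (-1)^F_V$. Combined with the previous step, the data of a $\Z_2 \times B\Z_2$-equivariant $\mathcal{Z}$ evaluates at $\pt$ to exactly the data of a unitary fermionic representation of $\pi_0(G)$ in the sense defined above. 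A morphism-level check confirms that equivariant natural isomorphisms correspond to intertwiners preserving the hermitian form and the grading, yielding an equivalence of groupoids.

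The main obstacle will be step two: making the unpacking of the $\Z_2$-equivariance genuinely precise. Because reflection is a structure rather than a property, one has to keep track of coherence isomorphisms at the level of objects, morphisms, and the symmetric monoidal structure simultaneously. In particular, verifying that the induced sesquilinear form is hermitian (rather than merely sesquilinear), that the antiunitarity of the $H_1$-odd action is compatible with the composition relations coming from the bilinear form data, and that the coherence of the $\Z_2$-action with disjoint union is correctly reflected, will require a careful diagrammatic argument of the type carried out in Section~\ref{Sec:Z2}. Once this is done, layering the $B\Z_2$-equivariance on top is essentially tautological, and the equivalence of groupoids follows.
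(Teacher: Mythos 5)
Your proposal is correct in outline but takes a genuinely different route from the paper. You compute the iterated fixed points in the order $((\sVect^\times)^{H_1})^{\Z_2^R\times B\Z_2^F}$: first the plain $H_1$-TFT classification of Proposition~\ref{Prop: 1D TFT} (a representation of $(H_1)_0$ plus bilinear forms for odd elements), then the reflection and spin-statistics equivariance layered on top, converting the bilinear forms into a hermitian form and antiunitary operators. The paper instead rewrites the whole problem as fixed points for the single $2$-group $H_1\rtimes(\Z_2^R\times B\Z_2^F)$ and exhibits a non-obvious isomorphism of this $2$-group with the direct product $G_b\times\Z_2^R$, so that it can take the $\Z_2^R$-fixed points \emph{first} --- landing in hermitian super vector spaces (Proposition~\ref{Prop: Action 1 D}) --- and then observe that the residual $G_b$-action is $V\mapsto\overline V$ on odd elements with $(-1)^F$-twisted coherence, whose fixed points are unitary fermionic representations of $\pi_0(G)$ by Remark~\ref{Rem: action on sVect}. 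The paper's order buys a cleaner computation: the bilinear forms of Proposition~\ref{Prop: 1D TFT} and their three awkward relations never appear, because the reflection datum has already traded $V^*$ for $\overline V$. Your order has a more concrete starting point but pushes all the difficulty into the translation step you yourself flag as the obstacle.

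Two points in that translation need more care than your sketch suggests. First, $H_1=(\Pin_1^+\otimes G)_0\cong G^{\op}$, not $G$; the pairings of Proposition~\ref{Prop: 1D TFT} compose according to the opposite multiplication, and it is precisely the interaction of the reflection $R$ with the odd generators (in the paper, the section $g\mapsto s(g)\rtimes R^{\theta(g)}$) that untwists this back to an honest representation of $\pi_0(G)$. If you only use the reflection datum at the object $\pt$ and never let it interact with the odd part of $H_1$, you will land on representations of $\pi_0(G)^{\op}$ rather than $\pi_0(G)$. Second, the reduction to $\pi_0(G)$ is not entirely free: the relevant actions are of $2$-groups, so $\pi_1(G)$ contributes morphism-level fixed-point conditions. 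The paper argues separately that either all loops act trivially, or --- when $1$ and $c$ are connected by a path in $G$ --- the condition forces $(-1)^F_V=\id_V$, i.e.\ $V$ purely even, which is consistent with the statement only because $c=1$ in $\pi_0(G)$ in that case. Neither issue is fatal, but both must be addressed for your route to close.
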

Let us stress again that these are $\Z_2\times B\Z_2$ equivariant functors $\Bord_1^{H_1,\rho}\to \sVect$.
We will comment on the proof in the next part of the introduction.
This classification result is satisfying both from a mathematical as well as physical perspective. Mathematically, our result is appealing because it describes 1-dimensional
field theories in terms of the natural notion of representations of fermionic groups, compared to the slightly ugly classification in terms of a collection of 
bilinear inner products. 
Physically it is appealing because it recovers exactly the answer which is 
expected by considering 0-dimensional lattice systems. 

Section~\ref{Sec:2D} is concerned with the classification of fully extended 2-dimensional topological field theories with reflection structure and spin statistics connection with values in super algebras. 
These are $\Z_2\times B\Z_2$-equivariant symmetric monoidal functors
\begin{align}
\mathcal{Z}\colon \Bord_{2,0}^{H_2}\to \sAlg 
\end{align}   
between symmetric monoidal bicategories. We state the classification result as 
Theorem~\ref{th:main}. Instead of stating the result in full we just indicate
the structure we find. Similar to what happens in one dimension the answer will
only depend on low dimensional homotopical data corresponding to $G$. 
More concretely, it only depends on the fundamental groupoid $\Pi_1(G)$. 
This is a fermionic 2-group. We define them in more detail in Section~\ref{Sec:fermskeletal}. 
Roughly, it is a $\Z_2$-graded monoidal groupoid $(G,\otimes, 1,\alpha)$ with a canonical central element $c\in G$ which squares to $1$. 
We denote the grading by $\theta$. 
We assume for the sake of this introduction that the associator $\alpha$ and some of the other coherence isomorphisms featuring in the formal definition are trivial.
For any fermionic 2-group $G$, a \emph{$G$-graded algebra} consists 
the following data
\begin{itemize}
	\item A complex super vector space $\mathcal{A}=\bigoplus_{g\in G} A_g$
	\item with structure of a $G$-graded superalgebra over the real numbers such that $a_g\cdot i = (-1)^{\theta(g)}ia_g$ for all $a_g\in A_g$
	\item For every morphism $\gamma\colon g\to g'$ in $G$ an isomorphism $a_\gamma \colon A_g\to A_{g'}$
\end{itemize}
which has to satisfy a list of natural conditions. In particular, the component
$A_c$ is required to be generated as an $A_e$-module by one object $(-1)^F$ squaring to $1$ and satisfying $(-1)^Fa_g =(-1)^{|a_g|}a_g (-1)^F$. 
We discuss the details in Section~\ref{Sec:fermgraded}. 
A \emph{strongly} $G$-graded algebra is a $G$-graded algebra such that the multiplication induces an isomorphism $A_g\otimes_{A_e} A_g'\to A_{gg'}$. 
Our main result classifies 2-dimensional reflection and spin statistics topological field theories with structure group $H_2$ in terms of strongly $G$-graded Frobenius algebras which are additionally equipped with a generalization of a hermitian pairing, which is called a stellar algebra structure.
These are roughly speaking Morita invariant versions of $*$-algebras.  
They have been introduced in a linear setting in~\cite{schommerpriesthesis}, whereas we use the $\C$-antilinear version of the concept. 
For more details we refer to Section~\ref{Sec:stellar}. Our main theorem is now
\begin{theorem}[Theorem~\ref{th:main}]
	Let $G$ be a fermionic group. The 2-groupoid of 2-dimensional reflection and spin-statistics field 
	theories with internal symmetry group $G$ is equivalent to the core of the bicategory of strongly $\Pi_1(G)$-graded stellar Frobenius algebras.  
\end{theorem}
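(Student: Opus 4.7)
The plan is to apply the equivariant form of the cobordism hypothesis, reducing the classification of $\Z_2\times B\Z_2$-equivariant symmetric monoidal functors $\Bord_{2,0}^{H_2}\to\sAlg$ to a homotopy fixed point calculation on the core of the full sub-bicategory of fully dualizable objects of $\sAlg$. I would first recall that fully dualizable objects of $\sAlg$ are exactly Frobenius superalgebras, and that the tangential structure group $H_2$, together with the reflection $\Z_2$ and the spin-statistics $B\Z_2$, has been built from $G$ precisely so that the resulting action on this core bicategory should be organized by the fundamental groupoid $\Pi_1(G)$, regarded as a fermionic 2-group. Here I would lean heavily on the paper's own toolkit for computing homotopy fixed points of 2-group actions on bicategories, advertised in the abstract.

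I would then analyse the three flavours of equivariance separately and only combine them at the end. First, the $H_2$-tangential structure, via the construction of $H_2$ from $G$ in Section~\ref{Sec:spacetime group}, should unfold into oriented data plus a $\Pi_1(G)$-worth of invertible bimodule transition data on the value at a point; in the fully dualizable setting this becomes precisely a strong $\Pi_1(G)$-grading, since each homogeneous piece $A_g$ must be an invertible $A_e$-bimodule for the background-field theory to exist on all oriented surfaces. Next, the $\Z_2$ reflection acts on $\sAlg$ by the conjugate opposite algebra construction, and a homotopy fixed point structure for this $\Z_2$-action is by definition a stellar algebra structure in the $\C$-antilinear version of \cite{schommerpriesthesis}. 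Finally, the $B\Z_2$ spin-statistics action is implemented by the fermion parity natural automorphism of the identity on $\sAlg$, and a fixed point structure reduces to the condition that the canonical central element $c\in \Pi_1(G)$ acts as fermion parity, which pins down the $A_c$-component as the distinguished grade-shift demanded in the definition of a $\Pi_1(G)$-graded algebra.

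Combining these three is the heart of the argument. I would package $H_2$ together with the reflection and spin-statistics actions into a single action of the 2-group $\Pi_1(G)$, with its fermionic center and its $\Z_2$-grading encoding time reversal, and then prove that the 2-groupoid of homotopy fixed points inside the core bicategory of Frobenius superalgebras is exactly the 2-groupoid of strongly $\Pi_1(G)$-graded stellar Frobenius algebras. At each step I would cross-check with the 1-dimensional classification of Proposition~\ref{prop:1d classification}: restriction along $\Bord_{1}^{H_1,\rho}\hookrightarrow \Bord_{2,0}^{H_2}$ should send a strongly $\Pi_1(G)$-graded stellar Frobenius algebra to a unitary fermionic representation of $\pi_0(G)$ living on the identity component $A_e$.

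The main obstacle I expect is the coherence bookkeeping of the combined 2-group homotopy fixed point calculation: the associator of $\Pi_1(G)$, the cocycle data encoding reflection, the cocycle data encoding spin-statistics, and the Frobenius pairing must all interact consistently, and matching the resulting coherence cells one-by-one to the numerous axioms in the definition of a strongly graded stellar Frobenius algebra is where the real work lies. A secondary subtlety is verifying at bicategorical level that reflection on $\sAlg$ is really the conjugate opposite construction on algebras, bimodules \emph{and} intertwiners in a symmetric monoidal way, which requires some care with complex conjugation conventions for super structures in the spirit of Appendix~\ref{App: Super alg}.
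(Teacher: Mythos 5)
Your overall frame (cobordism hypothesis, then an iterated homotopy fixed point computation for a single combined 2-group action) matches the paper, but two points in your sketch are genuine problems rather than bookkeeping.

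First, fully dualizable objects of $\sAlg$ are \emph{not} Frobenius superalgebras; they are the finite-dimensional semisimple superalgebras. The Frobenius form is not part of dualizability but is precisely the homotopy fixed point datum for the circle direction, namely a trivialization $A \to A^* \otimes_A A_{(-1)^F}$ of the Serre automorphism twisted by fermion parity. Starting from "fully dualizable $=$ Frobenius" would make you lose track of where $\lambda$ comes from and of the ungraded-symmetry condition $\lambda(ab)=\lambda(ba)$ that the spin-statistics datum forces.

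Second, and more seriously, the plan to analyse the $H_2$-structure, the reflection $\Z_2^R$, and the spin-statistics $B\Z_2^F$ "separately and only combine them at the end" does not go through. If you first compute $H_2$-fixed points (or even just $\Spin_2$-fixed points) you land in the situation of Gunningham's computation, which the paper explicitly flags as admitting no clean algebraic description; the reflection and spin-statistics data cannot then be bolted on afterwards as conditions, because they must be mixed \emph{into} the circle and reflection directions before taking fixed points. The key structural step of the paper is the non-obvious splitting $H_2\rtimes(\Z_2^R\times B\Z_2^F)\cong O_2\times G_b$, in which the $O_2$-factor is a \emph{diagonal} sub-2-group combining $\Spin_2$ with $R$ and $(-1)^F$ (so that $R$ acts by $A\mapsto\overline{A}^{\op}$ and the generating loop by $A\mapsto A^*\otimes_A A_{(-1)^F}$, which is not the cobordism-hypothesis $O_2$-action). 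Fixed points for this diagonal $O_2$ are the stellar Frobenius algebras, and only the residual $G_b$-action, twisted by the extension data $(\Gamma,\Xi)$ of $1\to\Z_2^c\to G\to G_b\to 1$, produces the $\Pi_1(G)$-grading. Relatedly, the total acting 2-group is $O_2\times\Pi_{\leq 1}G_b$, not $\Pi_1(G)$ as you write: $\Pi_1(G)$ appears only as the grading 2-group of the answer, reassembled from $G_b$-fixed-point data via the fermionically skeletal model. Without identifying this splitting your proof has no way to arrive at a tractable intermediate fixed point category, so the coherence bookkeeping you defer to the end is not merely tedious but has no starting point.
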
     
 
\subsection{The cobordism hypothesis and general structure of the proof of the main result}   
Both the proof in dimension 1 and 2 follow the same line of reasoning and are based
on the cobordism hypothesis~\cite{baezdolan,lurietft}. The cobordism hypothesis is
a classification result for topological field theories with arbitrary target and 
tangential structure. We recall the concrete statement in Section~\ref{Sec:TFT}, but roughly it identifies fully extended d-dimensional framed field theories 
with target $\mathcal{S}$ with the full subgroupoid of $d$-dualisable objects $\mathcal{S}^{\operatorname{f.d.}}$. 
This implies that there is an $O_d$-action on $\mathcal{S}^{\operatorname{f.d.}}$ and topological field theories with
tangential structure $H\to \O_d$ are classified by homotopy fixed points for the induced $H$-action. 
A super algebra is 2-dualisable if and only if it is finite-dimensional and semi-simple. 
We describe the $O_2$-action in detail in Appendix \ref{Sec:salgdual}.      

The proof of our main theorem proceeds in the following steps
\begin{itemize}
	\item We use the cobordism hypothesis to identify $H_2$-topological field
	theories with the category of homotopy $H_2$ fixed points $H_2\operatorname{-TFT}\cong (\sAlg^{\operatorname{f.d.}})^{H_2}$. 
	$\Z_2\times B\Z_2$-equivariant functors can be identified with homotopy fixed
	points for the conjugation action of $\Z_2\times B\Z_2$ on the functor 
	category $\Bord_{2,0}^{H_2}\to \sAlg$. This means we want to compute 
	the bicategory of homotopy fixed points $(H_2\operatorname{-TFT})^{\Z_2\times B\Z_2} \cong ((\sAlg^{\operatorname{f.d.}})^{H_2})^{\Z_2\times B\Z_2} $. 
	\item Since we work with bicategories and compact Lie groups we can replace the actions of topological groups by actions of 2-groups. In Appendix~\ref{App: 2-Group} we collect some details on 2-groups. 
	In particular, we prove that iterative homotopy fixed points $(H_2\operatorname{-TFT})^{\Z_2\times B\Z_2} \cong ((\sAlg^{f.d.})^{H_2})^{\Z_2\times B\Z_2} $ are equivalent to fixed points for an action of the 2-group $H_2\rtimes (\Z_2\times B\Z_2)$ constructed as a twisted semi-direct product of 2-groups which sits in a short exact sequence of 2-groups 
	\begin{align}
	1 \to H_2\to H_2\rtimes (\Z_2\times B\Z_2) \to \Z_2\times B\Z_2	\to 1 \ \ . 
	\end{align}      
	\item The 2-group $H_2\rtimes (\Z_2\times B\Z_2)$ is non-trivially isomorphic to $O_2\times G_b$. This implies that if we want to compute $H_2\rtimes (\Z_2\times B\Z_2)$ fixed points we can also
	compute $G_b$-fixed points in $O_2$-fixed points. Here it is important to
	note that the $O_2$-action differs from the action appearing in the cobordism
	hypothesis. 
	\item In Section~\ref{Sec:O2 FP} we identify the bicategory of $O_2$-fixed
	points in $\sAlg^{\operatorname{f.d.}}$ with the bigroupoid of (antilinear) stellar algebras. 
	\item In Theorem~\ref{th:actionstfrob} we show that the induced $G_b$-action on stellar algebras takes a simple form. 
	\item In Theorem~\ref{th:main} we compute fixed points for this $G_b$-action finishing the proof of the classification.    
\end{itemize}
The 1-dimensional proof follows essentially the same line of reasoning, but at 
the level of 1-categories. This reduces the technical difficulties significantly. We present the 1-categorical proof in Section~\ref{Sec:1D}. 

Our results rely on the cobordism hypothesis for which no full published proof exists (however see the preprint~\cite{Grady2021TheGC}). The cobordism hypothesis is proven in dimension 1~\cite{Harpaz2012TheCH}. Hence, all our one dimensional results are theorems. 
In the bicategorical 2-dimensional setting the cobordism hypothesis has been proven for many tangential structures including framings~\cite{piotr}, orientations~\cite{schommerpriesthesis}, $G\times \SO_2$-structures~\cite{Sozer} and $r$-spin structures~\cite{LorantNils}. Our approach to the problem is 
to define the bicategory $\Bord_{2,0}^{H_2}$ through the universal property it satisfies by the cobordism hypothesis. Hence for the bordism bicategories 
considered in this paper the cobordism hypothesis holds by construction. However,
this comes at the cost that except in the cases mentioned above we cannot be sure
without proving the cobordism hypothesis that the bordism category used in this
paper agrees with the one constructed in~\cite{schommerpriesthesis,calaquescheimbauer}. 

\subsection{Outlook}

We conclude the introduction with a few speculations about possible extensions and 
consequences of our work.  

\subsubsection*{Comments on extended reflection positivity}
\label{Sec:positivity}
The work of Freed and Hopkins~\cite{freedhopkins} defines reflection positivity only for non-extended
field theories and fully extended invertible field theories. It is an interesting
question how to extend the definition to extended non-invertible theories. Our work suggests the following answer for once extended field theories: A consequence
of our work should be that the value of any once extended topological field theory
$\mathcal{Z}\colon \Bord_{d,d-2}^{H_d}\to \sAlg$ with reflection structure on a $d-2$ dimensional manifold $S$ is canonically a stellar algebra (this should follow via dimensional reduction). 
One source of stellar algebras are finite dimensional $C^*$-algebras. 
We suggest to define a \emph{positivity structure} on a given stellar algebra $A$
to be an isomorphism to a stellar algebra arising from a fixed $C^*$-algebra. 
Furthermore, the bimodules assigned by $\mathcal{Z}$ to cobordisms have the 
structure of stellar bimodules. 
Under the identifications with $C^*$-algebras these stellar bimodules come equipped with the structure of a $C^*$-algebra-valued hermitian inner product.
We can ask these bimodules to be positive, i.e. to be Hilbert $C^*$-bimodules.
To us it seems natural to now define a \emph{positivity structure} on
a once extended topological field theory with reflection structure as a coherent choice of positivity structures on all the values of the field theory, such that the bimodules associated to all bordisms are positive. 
One important subtlety in this definition is that there exist $*$-algebras that are not $C^*$, but are still isomorphic to a $C^*$-algebra as a stellar algebra, see Example \ref{ex:C*notmoritainv}.
A careful development of these ideas will most like also involve a better understanding `hermitian pairings' on objects in a bicategory and dagger bicategories.     

\subsubsection*{Smooth setting} 

Our classification only depends on the homotopy type of the classifying space of the internal symmetry group $G$ (even its 2-truncation). 
If one wants to access the geometric structure of $G$, one should work in the
setting of smooth field theories~\cite{Stolz2011SupersymmetricFT,BerwickEvans2015SmoothOT, Ludewig2020AFF}.
There the bordism category is in addition a smooth category, i.e.\ a sheaf of 
symmetric monoidal (higher) categories on the site of manifolds. 
We expect the
analogous smooth 1-dimensional classification to be in terms of smooth unitary fermionic representations of $G$ instead of $\pi_0(G)$. 
If one in addition includes a Riemannian metric we expect that the representations
are also allowed to be infinite dimensional and that one has to specify additional
data in terms of a Hamiltonian which commutes with the action of $G$. 

The concept of a $G$-graded algebra also has a smooth analogue, which we expect
to be related to 2-dimensional smooth field theories. 
It is given by a super vector bundle $\mathcal{A}\to G$ over $\C$ together with vector bundle
maps 
\begin{align}
\mu_A \colon \pr_1^*\mathcal{A}\otimes \pr_2^*\mathcal{A} \to \mu_G^* \mathcal{A} 
\end{align} 
 where $\mu_G$ is the multiplication in $G$, which is associative and admits a 
 unit. 
 It is straightforward to adapt this to fermionicaly graded algebras by requiring
 the fibers over odd points to anti-commute with $i$ and $\mathcal{A}_c=(\mathcal{A}_e)_{(-1)^F}$. 
 In addition we expect that $\mathcal{A}$ should be equipped with a connection which is compatible with all the other structures.  
 It would be interesting to define and study ``smooth fermionically graded algebras" as (higher) geometric objects. 
   
\subsubsection*{Categorification to fusion categories}
It would be interesting to extend our classification result to higher dimensions.
Unfortunately, with the current knowledge and tools this seems to be out of reach.  
There is ongoing work by Douglas, Schommer-Pries, and  Synder started in~\cite{douglasSPsnyder} 
which is supposed to be a first step into this direction.  
However, it seems to us that one should be able to come up with an educated guess for a partial answer and give explicit constructions of the 3-dimensional field 
theories. The answer should be an appropriate categorification of the structure
we found for 2-dimensional topological field theories.
A good replacement for $C^*$-algebras seem to be given by unitary 
super fusion categories. 
These allow for the construction of reflection positive oriented 3-dimensional field theories $\Bord_3^{\SO_3}\to \sVect$. 
In the discrete oriented bosonic case a good algebraic input seems to be 
that of a spherical $G$-graded fusion category~\cite{turaev20123}. This is a $G$-graded category
\begin{align}
	\cat{C}_G= \bigoplus_{g\in G} \cat{C}_g
\end{align}  
together with a monoidal product $\cat{C}_g\boxtimes \cat{C}_{g'} \to \cat{C}_{gg'}$. We think of this as a categorification of a graded algebra. 
The symmetric Frobenius structure is usually replaced by a pivotal structure for $\cat{C}$ which is required to be spherical.\footnote{Note that this is not the most straightforward categorification of a symmetric Frobenius algebra, which would be a cyclic associative algebra in linear categories. For a detailed discussion of cyclic associative algebras we refer to~\cite{MW}.} 
Recently, there has also appeared the definition of spherical fusion category graded 
by a 2-group in the literature together with a construction of the corresponding 3-dimensional topological field theory~\cite{Sozer2022MonoidalCG}.   

A pivotal structure on a fusion category $\cat{C}$ is given by a 
trivialisation of the monoidal functor $(-)^{\vee\vee}\colon \cat{C}\to \cat{C}$.
A result of~\cite{douglasSPsnyder} shows that the bimodule corresponding to $(-)^{\vee\vee}$ is the Serre automorphism in the 3-category of fusion categories. 
A pivotal structure hence gives rise to a trivialisation of the Serre automorphism.
Our result suggest that in situations where fermions are present instead of trivializing $(-)^{\vee\vee}$ we should identify it with an automorphism 
$(-1)^F\colon \cat{C}\to \cat{C}$. 
The functor $(-1)^F$ is the identity on objects and even morphisms and multiplication by $-1$ on odd morphisms. 
These considerations lead exactly to the notion of a super pivotal fusion category from~\cite{Aasen:2017ubm}, namely a $\sVect$-enriched fusion category together
with a monoidal natural isomorphism $(-)^{\vee \vee}\Longrightarrow (-1)^F$.
Following~\cite{douglasSPsnyder} it seems natural to call such a pivotal structure \emph{spherical} if the induced trivialisation of $(-)^{\vee \vee \vee \vee}$ agrees with the Radford isomorphism.
Time reversing symmetries might be encoded by the condition that the functor corresponding to tensoring with an odd element is $\C$-anti linear.

Based on these observations, we expect that there is an interesting concept of a unitary spherical fusion category graded by a fermionic 2-group which reduces to the 
examples above, plays an important role in 2+1 dimensional symmetry protected phases, and allows for the construction of 3-dimensional topological field theories with structure group $H_3$. We leave a precise definition and construction of the field theory for further work.    

\subsubsection*{Arbitrary dimensions}

The main strategy used in this paper to compute topological field theories with fermionic symmetries, reflection structure and spin-statitstics is amenable to generalizations to arbitrary dimensions.
To illustrate this, suppose $\mathcal{C}$ is a symmetric monoidal $(\infty,d)$-category with symmetric monoidal $\Z_2 \times B\Z_2$-action.
Let $T = \mathcal{C}^{\fd}$ be the space of framed TFTs with target $\mathcal{C}$.
By the cobordism hypothesis, $T$ comes equipped with a homotopy $O_d \times \Z_2 \times B\Z_2$-action.
Note that we now have two actions of $O_d$ on $T$; the restriction of the above to $O_d$ and the restriction to the second factor after the map $O_d \to O_\infty \to \pi_{\leq 1} O_\infty \cong \Z_2 \times B\Z_2$.
There is also the corresponding `diagonal action' which we will call the stellar action:
\[
BO_d \xrightarrow{\Delta} BO_d \times BO_d \xrightarrow{\id \times (w_1,w_2)} BO_d \times B\Z_2 \times B^2\Z_2.
\]

Let $G_b$ be a Lie group and $(\theta, \omega): BG_b \to B\Z_2 \times B^2 \Z_2$ a map\footnote{$BG_b$ can be replaced by any space, but we want to agree with the notation in the main text.}.
Define a space $B\hat{H}_d$ by the homotopy fiber
\[
 B\hat{H}_d \to BG_b \times BO_d \xrightarrow{w_2 + \omega + \theta \cup w_1} B^2\Z_2
\]
and a space $BH_d$ by its further fiber
\[
BH_d \to BG_b \times BO_d \xrightarrow{(w_1 + \theta, w_2 + \omega + \theta \cup w_1)} B\Z_2 \times B^2\Z_2.
\]
This is consistent with the definition of the groups $\hat{H}_d$ and $H_d$ in Section \ref{Sec:spacetime group} as the fermionic tensor product and its even part respectively.
We therefore obtain a homotopy action of $B\Z_2 \times B^2 \Z_2$ on $BH_d$, which agrees with the action used in the main text for $d = 1,2$.
We now define an action of the middle factor $BG_b \times BO_d$ on $T$ by letting $G_b$ act through $\Z_2 \times B\Z_2$ and $O_d$ by the stellar action.
This implies the restricted action of $H_d$ is given by the projection to $BO_d$ followed by the action of the cobordism hypothesis. 
Hence $H_d$-fixed points of this restricted action compute $H_d$-TFTs.
The space of TFTs with reflection structure and spin-statistics is then the iterated fixed point 
\[
(T^{h H_d})^{h(\Z_2 \times B\Z_2)} \cong T^{h(G_b \times O_d)} \cong (T^{hO_d})^{hG_b}.
\]
This paper works this out explicitly for the cases $d = 1,2$ where we found that $T^{hO_d}$ were Hermitian super vector spaces and stellar Frobenius algebras, respectively.

\subsubsection*{Conventions}
Here we summarise our notation and conventions for the convenience of the reader. 
Linear categories such as $\sVect, \sAlg$ etc are by default over $\C$.
An $(A,B)$-bimodule $M$ is regarded as a $1$-morphism $B \to A$ in $\sAlg$ so that composition of bimodules is given by $N \circ M := N \otimes_B M$.
If $\mathcal{C}$ is a bicategory, $\mathcal{C}^{\fd}$ denotes the maximal sub-bigroupoid on the fully dualizable objects.
The group $\Z_2=\Z/2\Z$ appears in many different situations throughout the paper.
Therefore we decided to adopt the physics convention to label the groups and generators by specific letters:
\begin{itemize}
\item the group $\Z_2^c$ is generated by an element $c$ of which the $+1$-eigenspaces are particles with integer spin and $-1$-eigenvectors have half-integer spin;
\item the group $\Z_2^F$ is generated by an element $(-1)^F$ given by the fermion parity, which is identified with $c$ in case spin-statistics holds; 
\item the group $\Z_2^R$ is generated by an element $R$ implementing the reflection action $A \mapsto \overline{A}^{\op}$. 
So fixed points for this action are typically vector spaces with Hermitian form, stellar algebras or TFTs with reflection structure;
\item the group $O_1 \subseteq O_2$ is generated by a single reflection in $\R^2$ denoted $s$ or simply $(-)$, which acts by the dual through the cobordism hypothesis;
\item the group $\Z_2^B$ is generated by an element $B$ implementing complex conjugation (`bar') so that $s = RB$;
\item the group $\Z_2^T$ denotes a time-reversal $T$ of square one, which depending on context either acts as $\Z_2^B$ or as $O_1$ (and these actions are identified in case a reflection structure is present). We also use the notation $\Z_4^T$ for a time reversal with square $c$.
\end{itemize}

\subparagraph{Acknowledgements.} We are grateful to Stephan Stolz and Peter Teichner for introducing us to many concepts appearing in this article and many discussions on reflection structures which had a great influence on this manuscript.
We thank
Bertram Arnold, Nils Carqueville, Theo Johnson-Freyd,
David Reutter, Chris Schommer-Pries, and Lukas Woike
for insightful discussions
and helpful comments.	
Both authors gratefully acknowledge support by the Max Planck Institute for Mathematics in
Bonn.
LM gratefully acknowledges support by the Simons Collaboration on Global Categorical Symmetries.
Research at Perimeter Institute is supported in part by the Government of Canada through the Department of Innovation, Science and Economic Development and by the Province of Ontario through the Ministry of Colleges and Universities. The Perimeter Institute is in the Haldimand Tract, land promised to the Six Nations.

\section{Topological field theories}\label{Sec:TFT} 
Topological field theories as defined by Atiyah~\cite{atiyah1988topological} are a mathematical axiomatisation of the formal
properties of the path integral formulation of metric independent quantum field theories under cutting and gluing of spacetime manifolds.
In this section we will briefly review the concept of topological field theories and their classification
through the cobordism hypothesis. After sketching the general case we will specialise to dimension two, which will be the most relevant for our paper.

\subsection{Review of the definition} 
Informally speaking a (oriented) $d$-dimensional 
topological field theory $\mathcal{Z}$ is: 
\begin{itemize}
\item 
an assignment of vector space to all closed $d-1$-dimensional manifolds $\Sigma$; the state spaces
of the theory. 
\item Furthermore, $\mathcal{Z}$ assigns to every bordism $M\colon \Sigma_1 \to \Sigma_2$ (i.e. a compact
manifold $M$ with boundary and an identification of its boundary $\partial M$ with $ -\Sigma_1\sqcup
\Sigma_2 $) a linear map $\mathcal{Z}(M)\colon \mathcal{Z}(\Sigma_1) \to \mathcal{Z}(\Sigma_2)$; the `time evolution operator'. 
\end{itemize}
Here $- \Sigma$ denotes the manifold $\Sigma$ with reversed orientation.
Gluing bordisms along boundaries is required to correspond to the composition of linear maps. Physical systems
stack by taking the tensor product of their state spaces, which is implemented in the definition 
by requiring $\mathcal{Z}(\Sigma_1 \sqcup \Sigma_2) \cong \mathcal{Z}(\Sigma_1)\otimes \mathcal{Z}(\Sigma_2)$. 

Before giving the formal definition we need to be more careful about the type of manifolds we consider. For 
example in the present of a time reversal symmetry we should work with unoriented manifolds and if there
are fermions all manifolds should be equipped with a $\Spin$-structure. A convenient way of treating
all types of possible structures relevant for us are tangential structures. 
\begin{definition}\label{Def: Tangential structure}
Let $H$ be a Lie group, $\rho \colon H\to \O_d$ a Lie group homomorphism and $M$ a $d$-dimensional 
manifold. A \emph{tangential $(H,\rho)$-structure} on $M$ is a homotopy commutative diagram 
\begin{equation}
\begin{tikzcd}
 \ar[rd, Rightarrow, "h" near end, shorten <= 25] &  BH \ar[d, "\rho"] \\ 
 M \ar[r, "TM", swap] \ar[ru, "\psi", bend left=10] & B\O_d	
\end{tikzcd}
\end{equation} 
where $TM\colon M \to B\O_d \sim BGL_d(\R)$ is the classifying map for the tangent bundle of $M$. 
The \emph{space of tangential structures} is the (derived) mapping space from $TM\colon M\to B\O_d$ to
$\rho \colon BG\to B\O_d$ in the category $\Top/B\O_d$ of topological spaces over $B\O_d$.  
A path in this 
space is a \emph{morphism of tangential structures}.
\end{definition} 
\begin{remark}\label{Rem: Tangential structures in terms of bundles}
There is an equivalent definition in terms of differential geometry. A tangential $(H,\rho)$-structure on 
$M$ can equivalently be described by a principal $H$-bundle $P\to M$ together with an isomorphism 
$P\times_\rho \R^d \cong TM $ of vector bundles on $M$. When working with the geometric definition, a morphism of tangential structures would be a gauge transformation $P\to P'$ 
such that the composition
\[
P \times_\rho \R^d \to P' \times_\rho \R^d \to TM
\]
agrees with the other identification with $TM$. However, this does not agree with morphisms as defined in 
Definition~\ref{Def: Tangential structure}, which is more homotopical in nature.  
Instead, we require a vector bundle homotopy between the two compositions $P \times_\rho \R^d \to TM$.
For example, a $O_d$-structure on a manifold is equivalent to a metric on it.
Geometrically, two $O_d$-structures are isomorphic if and only if they are isometric. 
However, topologically all $O_d$-structures are isomorphic, because the map $BO_d \to BGL_d(\R)$ is a homotopy equivalence.

\end{remark}     
\begin{example}
In this paper we are mostly interested in the structure groups $H_d$ associated to an internal symmetry group defined in Section~\ref{Sec:spacetime group}. Some standard examples of 
tangential structures appearing as special examples for $H_d$ are
\begin{itemize}
	\item A manifold equipped with a tangential structure for the map $*\to \O_d$ is the
	same as a framed manifold, i.e. a manifold equipped with a trivialisation of its tangent bundle. 
	\item A manifold equipped with a tangential structure for the identity $\O_d\to \O_d$ is the
	same as an unoriented manifold. 
	\item A manifold equipped with a tangential structure for the inclusion $\SO_d\to \O_d$ is 
	the same as an oriented manifold.
	\item A manifold equipped with a tangential structure for $\Spin_d \to \SO_d\to \O_d$ is 
	the same as a spin manifold. 
	\item Let $G$ be a Lie group. A manifold equipped with a tangential structure for the map $G\times \SO_d \xrightarrow{\pr_{\SO_d}} \SO_d\to \O_d$ is the same as an oriented manifold equipped with a principal $G$-bundle. 
\end{itemize} 
\end{example}
To extend Atiyah's definition of a topological field theory to manifolds with 
$(H,\rho)$-structures one defines a tangential structure on a $d-1$-dimensional manifold $\Sigma$ by a $H$-structure on the once stabilized tangent bundle.
In other words, we postcompose the map classifying the tangent bundle with the map $B\O_{d-1}\to B\O_d$, i.e. as a homotopy commutative diagram 
\begin{equation}
\begin{tikzcd}
\ar[rrd, Rightarrow, "h" near end, shorten <= 55 , shorten >= 5] &  & BG \ar[d, "\rho"] \\ 
M \ar[r, "TM", swap] \ar[rru, "\psi", bend left=10] & B\O_{d-1} \ar[r] & B\O_d	
\end{tikzcd}
\end{equation}   
Topological field theories have a concise definition using the language of category theory. 
For this one introduces a symmetric monoidal category $\Bord^{(H,\rho)}_d$~\cite{lurietft, calaquescheimbauer} where the objects are $d-1$-dimensional
closed manifold $\Sigma$ equipped with a tangential $(H,\rho)$-structure and morphism are
diffeomorphism classes of bordisms with tangential structure. The composition is defined by
gluing manifolds along their boundaries and the monoidal structure is given by the disjoint union of manifolds. 

The definition of a topological field theory makes sense in any symmetric monoidal category 
$\cat{C}$. 
\begin{definition}
Let $(H,\rho)$ be a tangential $d$-type and $\cat{C}$ a symmetric monoidal category.  
A \emph{topological $(H,\rho)$-field theory with values in $\cat{C}$} is a symmetric monoidal
functor 
\begin{align}
Z \colon \Bord_d^{(H,\rho)} \to \cat{C} \ \ .
\end{align}
\end{definition} 
For physical applications the category of (super) vector spaces is the most important
example to which we will now restrict our attention.
The value on the object given by the empty manifold of a topological field theory $\mathcal{Z}\colon \Bord_d^{(H,\rho)} \to \sVect$ with target complex super vector spaces comes with a preferred isomorphism to the complex numbers $\mathcal{Z}(\emptyset)\cong \C $, because $\mathcal{Z}$ is symmetric monoidal. A closed 
$(H,\rho)$-manifold $M$ defines a bordism from the empty set to itself. The value of $\mathcal{Z}$ on 
$M$ is a linear map $\mathcal{Z}(M)\colon \mathcal{Z}(\emptyset)\cong \C \to \C \cong \mathcal{Z}(\emptyset)$ and hence can 
canonically be identified with a complex number; the partition function of the field theory
on $M$.    
 
In the decades after Atiyah's work this definition has been extend in at least two important 
ways~\cite{lurietft}. The first one is not going to play an important role in this paper and hence
we only mention it in passing: In the definition of $\Bord_d^{(H,\rho)}$ we have identified
bordisms which are diffeomorphic relative to boundary. It is more natural to keep track of 
these diffeomorphisms and their isotopies in terms of higher morphisms. This leads to an $(\infty,1)$-category of bordisms whose homotopy category is $\Bord_d^{(H,\rho)}$.

The second one is of crucial importance to this paper: The definition of a topological field theory allows us to compute the partition function
by cutting spacetime manifolds into simpler pieces. This can be understood as an 
implementation of `locality'. However, the $d-1$-dimensional manifolds appearing as the objects of $\Bord^{(H,\rho)}_d$ cannot be decomposed. Allowing the decomposition of lower dimensional manifolds leads to a symmetric monoidal $d$-category of bordisms $\Bord^{(H,\rho)}_{d,0}$ with informally speaking $0$-dimensional closed manifolds with $(H,\rho)$-structure as objects, 1-dimensional 
bordisms between those as 1-morphisms, bordisms of bordisms (certain 2-dimensional manifolds with corners) as 2-morphisms, and so on up to $d$-dimensional manifolds with corners as $d$-morphisms. Composition is defined via gluing and the symmetric monoidal structure is defined by disjoint union. Defining the higher category $\Bord^{(H,\rho)}_{d,0}$ rigorously is quite involved and we refer to~\cite{calaquescheimbauer} for a definition using complete $d$-fold Segal space and to~\cite{schommerpriesthesis} for a definition of the symmetric monoidal bicategory $\Bord^{(H,\rho)}_{2,0}$. 
We will not need a detailed definition of the fully extended bordism 
category, since we assume the cobordism hypothesis explained in more detail in the 
next section.
This provides us with a description of the bordism category in terms of a universal property. 
\begin{definition}
Let $\cat{C}$ be a symmetric monoidal $d$-category. A \emph{fully extended $d$-dimensional
topological $(H,\rho)$-field theory with values in $\cat{C}$} is a symmetric monoidal functor
\begin{align}
\mathcal{Z}\colon \Bord^{(H,\rho)}_{d,0} \to \cat{C} . 
\end{align}
\end{definition}        
For any symmetric monoidal $d$-category $\cat{C}$ one can define a symmetric monoidal 
$(d-1)$-category $\Omega \cat{C} \coloneqq \End_{\cat{C}} (1)$ as the category of 
endomorphism of the symmetric monoidal unit of $\cat{C}$. Iteratively we can also define 
$\Omega^k \cat{C}$ for all $0<k<d$. We introduce the notation $\Bord^{(H,\rho)}_{d,k}\coloneqq \Omega^k \Bord^{(H,\rho)}_{d,0}$. Note that $\Bord^{(H,\rho)}_{d,d-1}$ is equal to $\Bord^{(H,\rho)}_{d}$. For $0<n<d$ a \emph{$n+1$-layered topological field theory with values in a symmetric monoidal $n$-category $\cat{D}$} is a symmetric monoidal functor between $n$-categories $\Bord^{(H,\rho)}_{d,d-n}\to \cat{D}$. 
Any fully extended field theory $\mathcal{Z}\colon \Bord^{(H,\rho)}_{d,0} \to \cat{C}$ can be restricted to a partially extended field theories
$\mathcal{Z}\colon \Bord^{(H,\rho)}_{d,k} \to \Omega^k \cat{C}$ for $0 \leq k \leq d$. We will be mostly interested in fully extended 2-dimensional field theories and hence have to pick a symmetric monoidal 2-category as target.\footnote{In this document, we will use the terms $2$-category and bicategory interchangably for the weak notion.}
 We will exclusively work with $\cat{C}=\sAlg$ the symmetric monoidal $2$-category of super algebras, bimodules, and even bimodule intertwiners. In Appendix~\ref{App: Super alg} we provide the needed details on this bicategory. Note that $\Omega \sAlg = \sVect$ is the category of super vector spaces.      

\subsection{The cobordism hypothesis}
The cobordism hypothesis~\cite{lurietft, baezdolan} is a classification statement for fully extended topological field
theories. Even though there does not exist a complete published proof of it in the literature, it has 
been proven in various special cases~\cite{piotr, schommerpriesthesis, HSV, Sozer} and there are detailed outlines of proofs~\cite{lurietft,AFCH}. 
The formulation of it which is most useful for us requires a small generalization of tangential structures 
from group homomorphisms $H\to \O_d$ to arbitrary topological spaces over $B\O_d$. For a space $\rho 
\colon X\to B\O_d$ over $B\O_d$ the construction of the bordism category $\Bord_{d,0}^{(X,\rho)}$ works 
exactly as in the previous section. The cobordism hypothesis is equivalent to the following two properties 
of the functor $\Bord_{d,0}^{-}\colon \Top/ BO_d \to d\text{-}\catf{Cat}^\otimes $
\begin{itemize}
	\item $\Bord_{d,0}^{-}$ preserves colimits. 
	\item The framed bordism category $\Bord_{d,0}^{X = *}$ is the free symmetric monoidal $d$-category on
	one fully dualisable object.  
\end{itemize}  
When talking about colimits we always mean the appropriate higher categorical concept, for example for $d = 2$ we mean what is usually called a pseudolimit \cite[Chapter 5]{Johnson2021}.
The general definition of a fully dualisable will not matter much for us and hence we refer to any of the 
following references for it~\cite{lurietft, ClaudiaOwen, CSPdualizability}. 
The second point implies that the category of framed
$d$-dimensional fully extended topological field theories with values in $\cat{C}$ is equivalent to the 
core of the full subcategory on all fully dualisable objects $\cat{C}^{\text{f.d.}}$. This 
space has an action by the group of automorphisms of $*\to B\O_d$ in $\Top/B\O_d$ which is homotopy 
equivalent to $\O_d$. Concretely, an element in $\O_d$ acts on the framed bordism category by rotating the 
framing which induces an action on the space of framed field theories.  

\begin{example}\label{Ex: Action on sVect}
A super vector space $V\in \sVect$ is 1-dualisable if and only if it is finite dimensional. The 
$\O_1=\Z_2$ action sends a finite dimensional vector space to its dual $V^*=\Hom(V_0,\C)\oplus 
\Hom(V_1,\C)$. An isomorphism $f\colon V\to V'$ is send to the linear map ${f^*}^{-1}\colon V^*\to 
{V'}^*$. This basic example already shows that the action is only defined on invertible dualisable morphisms. 

A super algebra $A\in \sAlg$ over $\C$ is 2-dualisable if and only if it is finite-dimensional and semi-simple~\cite[Example 2.5]{gunningham2016spin}. 
We 
describe the $\O_2$-action explicitly in Appendix~\ref{Sec:salgdual}.    
\end{example}

The space of topological field theories with a general tangential structure $G\to \O_d$ can be computed 
using that  $\Bord_{d,0}^{-}$ preserves colimits. For this note that we have 
$BH\to B\O_d = \colim_{BH} (*\to B\O_d)$ as
objects of $\Top/B\O_d$. The functor featuring in the colimit is induced by the map $H\to \O_d$ and can be 
informally described by saying that it sends an object $h\in H$ to the automorphism 
\begin{equation}
\begin{tikzcd}
	* \ar[rr, "*"] \ar[rdd] & & * \ar[ldd] \ar[lldd, Rightarrow, "\rho(h)" near start, shorten <= 5, shorten >=35,swap] \\
	 & & \\  
\  & B\O_d & 	
\end{tikzcd}
\end{equation} of $*\to B\O_d$    
where we denote by $\rho(h)$ the path in $B\O_d$ corresponding to $\rho(h)\in \O_d$. Having written 
$BH\to B\O_d$ as a colimit we can use that $\Bord^{(-)}_{d,0}$ preserves colimits
to conclude that
\begin{align}
	\Bord^{(H,\rho)}_{d,0} \simeq \colim_{BH} \Bord_{d,0}^* \ \ . 
	\label{Eq: Bord as colimit}
\end{align} 
A formal consequence of Equation~\ref{Eq: Bord as colimit} is a classification result for topological 
field theories (compare \cite[Theorem 2.4.18]{lurietft})
\begin{align}
\Cat_d^\otimes (\Bord_{d,0}^{(H,\rho)}, \cat{C} ) \simeq \Cat_d^\otimes (\colim_{BH} \Bord_{d,0}^*, \cat{C} ) \simeq \lim_{BH} \Cat_d^\otimes ( \Bord_{d,0}^*, \cat{C} ) \simeq \lim_{BH} \cat{C}^{f.d.} \ \  
\end{align} 
where the $H$-action on $\cat{C}^{f.d.}$ is the pullback of the $\O_d$-action along the group
homomorphism $\rho(h)\colon H \to \O_d$. 
The limit $\lim_{BH} \cat{C}^{f.d.}$ is the category 
of homotopy fixed points for this action. 

As a simple example we look at 1-dimensional topological field theories. 
A tangential structure in 1-dimension is given by a group homomorphism $\theta \colon H\to \Z_2=O_1$, i.e. a $\Z_2$-graded group. 
	\begin{proposition}\label{Prop: 1D TFT} 
	A one-dimensional topological field theory with tangential structure $\theta: H \to \Z_2$ is classified by a representation $(R,V)$ of $H_{0}$ together with a collection of non-degenerate bilinear forms $_g\langle .,. \rangle: V \otimes V \to \C$ for every $g \in H \setminus H_0$ such that for all $h \in H_0$, all $g,g' \in H \setminus H_0$ and all homogeneous vectors $v,w \in V$
	\begin{align}
		\label{eq: condition1}
		{}_{gh} \langle v,w \rangle &= {}_g \langle R(h) v,w \rangle 
		\\
		\label{eq: condition2}
		{}_{hg} \langle v,w \rangle &= {}_g \langle v, R(h^{-1}) w \rangle 
		\\
		\label{eq: condition3}
		{}_{g^{-1}}\langle v,w \rangle &= (-1)^{|v||w|} {}_{g'^{-1}} \langle w, R(gg') v \rangle \ \ .
	\end{align}
\end{proposition}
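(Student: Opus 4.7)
The plan is to apply the cobordism hypothesis as described in Equation~\eqref{Eq: Bord as colimit} and the discussion thereafter. For a one-dimensional tangential type $\theta\colon H \to \Z_2 = O_1$, the space of TFTs with values in $\sVect$ is equivalent to the homotopy fixed points $(\sVect^{\text{f.d.}})^{H}$, where $H$ acts on $\sVect^{\text{f.d.}}$ by pulling back the $O_1$-action from Example~\ref{Ex: Action on sVect} along $\theta$. Concretely, even elements $h \in H_0$ act trivially on the underlying object (via the identity functor) and odd elements $g \in H \setminus H_0$ act by sending $V \mapsto V^{*}$ and an isomorphism $f$ to $(f^{*})^{-1}$.

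Next I would unpack the definition of a homotopy fixed point for an action of a discrete group on a groupoid. Such a structure on an object $V \in \sVect^{\text{f.d.}}$ consists of, for every $k \in H$, an isomorphism $\phi_k \colon k\cdot V \to V$, subject to the cocycle condition $\phi_{k_1 k_2} = \phi_{k_1}\circ (k_1 \cdot \phi_{k_2})$ and $\phi_e = \id_V$. For $h \in H_0$, the datum $\phi_h$ is an automorphism $R(h) \colon V \to V$; the cocycle condition for two even elements is precisely the statement that $R$ is a representation of $H_0$. For $g \in H \setminus H_0$, the datum $\phi_g$ is an isomorphism $V^{*} \to V$ which, by non-degeneracy, is equivalent data to a non-degenerate bilinear pairing $_{g}\langle -,-\rangle \colon V \otimes V \to \C$ via $_{g}\langle v,w\rangle \coloneqq \langle \phi_g^{-1}(v), w \rangle_{V^*,V}$.

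It then remains to translate the cocycle condition into the three relations. Multiplication $gh$ with $h$ even and $g$ odd gives $\phi_{gh} = \phi_g \circ (g \cdot \phi_h)$; unwinding the action, $g\cdot R(h) = (R(h)^{*})^{-1}$, and tracing this through the bijection between $\phi_g$ and $_{g}\langle -,-\rangle$ yields Equation~\eqref{eq: condition1}. Similarly, $hg$ with $h$ even produces Equation~\eqref{eq: condition2}. Finally, the product $g\cdot g'^{-1}$ of two odd elements is even, and its cocycle condition involves the canonical double-dualisation isomorphism $V \to V^{**}$ in $\sVect$, which is precisely the source of the Koszul sign $(-1)^{|v||w|}$ in Equation~\eqref{eq: condition3}. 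Conversely, given data $(R,V,{}_{g}\langle-,-\rangle)$ satisfying (1)--(3), one reconstructs a coherent choice of $\phi_k$, so the classification is bijective on equivalence classes and in fact an equivalence of groupoids.

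The main subtlety will be bookkeeping the Koszul signs arising from the symmetry isomorphism of $\sVect$, both in defining the action on dual objects (so that $g \cdot V = V^{*}$ functorially, with the correct signs on morphisms of odd-graded components) and in the double-dual identification that produces the $(-1)^{|v||w|}$ factor in~\eqref{eq: condition3}. Everything else is essentially formal manipulation of the cocycle relation once the translation between $\phi_g$ and $_{g}\langle -,-\rangle$ is fixed.
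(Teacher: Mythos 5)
Your proposal follows essentially the same route as the paper: both invoke the cobordism hypothesis to identify $(H,\theta)$-TFTs with homotopy fixed points for the pulled-back $O_1$-action of Example~\ref{Ex: Action on sVect}, unpack the fixed-point data as $R(h)\colon V\to V$ for even $h$ and $\phi_g\colon V^*\to V$ for odd $g$, convert the latter into bilinear forms, and read off the three relations from the cocycle diagrams, with the Koszul sign in~\eqref{eq: condition3} coming from the evaluation isomorphism $V\cong V^{**}$. The only discrepancy is conventional: the paper defines ${}_g\langle v,w\rangle := F(g^{-1})^{-1}(v)(w)$ (using $g^{-1}$) precisely so that the relations come out in the stated form, whereas your definition via $\phi_g$ directly would produce equivalent relations with the indices placed slightly differently — a bookkeeping adjustment you already flag as needing to be fixed.
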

\begin{proof}
We have to compute homotopy fixed for the action induced from Example~\ref{Ex: Action on sVect}. We refer the reader to Remark~\ref{Rem: FP on Cat} for the definition. These consist of a family of maps 
\begin{align}
	\begin{cases}
		F(g) \colon V \longrightarrow V \ , \ \text{for $\theta(g)=1$} \\
		F(g) \colon V^* \longrightarrow V \ , \ \text{for $\theta(g)=-1$} 
	\end{cases}
\end{align}   
for $g\in H$ which has to be compatible with the group multiplication in $H$. 
There are four versions of the relevant diagram of homotopy fixed points given by the choices $\theta(g) = \pm 1$ and $\theta(h) = \pm 1$.
The first diagram tells us that $R:=F|_{G_0}$ is a representation of $G_0$ on $V$ that is even in the supergrading.	
For $g \in G_1 = G \setminus G_0$ we translate the fixed point structure $F(g): V^* \to V$ to the bilinear form $_g \langle.,. \rangle : V \times V \to \C$ by
	\[
	_g \langle v,w \rangle := F(g^{-1})^{-1}(v)(w).
	\]
	Our convention to choose $g^{-1}$ in the definition is to get nicer formulas.
	The fact that the $F(g)$ are even translates to the fact that $V_0$ and $V_1$ are orthogonal.
	The second diagram for $g_0 \in H_0$ and $g_1 \in H_1$ looks like
	\[
	\begin{tikzcd}
		V & V \arrow[l, "F(g_0)",swap]
		\\
		& V^* \arrow[ul, "F(g_0 g_1)"] \arrow[u, "F(g_1)", swap]
	\end{tikzcd}
	\]
	Written out in bilinear form notation we get 
	\[
	_{g_1^{-1}} \langle v, w \rangle = {}_{g_1^{-1} g_0^{-1}}\langle R(g_0) v,w \rangle
	\]
	which is equivalent to the first point.
	The third diagram looks like 
	\[
	\begin{tikzcd}
		V & V^* \arrow[l, "F(g_1)", swap]
		\\
		& V^* \arrow[ul, "F(g_1 g_0)"] \arrow[u, "F(g_0)^{* -1}", swap]
	\end{tikzcd}
	\]
	This gives us using the result of the last diagram
	\[
	_{g_1^{-1}}\langle v, R(g_0) w \rangle = {}_{g_0^{-1} g_1^{-1}} \langle v,w \rangle = {}_{g_1^{-1}} \langle R(g_1 g_0^{-1} g_1^{-1}) v, w \rangle
	\]
	or in other words $R(g_1 g_0^{-1} g_1^{-1})^T = R(g_0)^{-1}$ where the transpose is with respect to $_{g_1} \langle .,. \rangle$.
	The transpose of an even linear map $T: V \to V$ is defined with respect to a nondegenerate bilinear form $\langle.,. \rangle$ as
	\[
	\langle T v, w \rangle = \langle v, T^T w \rangle.
	\]
	Note that the order matters if the bilinear form is not symmetric or antisymmetric. In particular, $T^{TT} \neq T$ in general. 
	The final diagram for $g,g' \in H_1$ is
	\[
	\begin{tikzcd}
		V & V^* \arrow[l, "F(g)", swap]
		\\
		V \arrow[u, "F(g g')"] \arrow[r, "\sim",swap] & V^{**}  \ar[u, "F(g')^{* -1}", swap]
	\end{tikzcd}
	\]
	where the isomorphism $\operatorname{ev}: V \cong V^{**}$ maps $v \in V$ to the evaluation map $\operatorname{ev}_v : V^* \to \C$ given by $\operatorname{ev}_v(f) = (-1)^{|f| |v|} f(v)$.
	Now if $T: V \to V^*$ is an even linear map with corresponding bilinear form $_T \langle v,w \rangle := T(v)(w)$, then under this isomorphism $V \cong V^{**}$, the dual map $T^*: V \to V^*$ satisfies $_{T^*} \langle v,w \rangle = (-1)^{|v||w|} {}_T \langle w,v \rangle$ because
	\begin{align}
		[T^*(\operatorname{ev}_v)](w) = \operatorname{ev}_v(Tw) = (-1)^{|Tw||v|} (Tw)(v).
	\end{align}
	Using this the diagram gives
	\[
	{}_{g^{-1}}\langle v,w \rangle = (-1)^{|v||w|} {}_{g'{-1}} \langle w, R(gg') v \rangle \ \ .
	\]
\end{proof}
Mathematically, it is not clear how to simplify these conditions without making unnatural choices.
Note however that all other bilinear forms are determined by a single one $\langle .,. \rangle := _{g^{-1}}\langle.,. \rangle$ using condition \ref{eq: condition1} after an unnatural reference element $g \in H \setminus H_0$, i.e. a set-theoretic section of the short exact sequence
\[
1 \to H_0 \to H \to \Z_2 \to 1.
\]
is chosen.

When moving up to 2-dimensional topological field theories the concepts of a group
action on a bicategory and homotopy fixed points get more involved~\cite{HSV}. We
review the necessary definitions in Appendix~\ref{App: 2-Group}. 
There already exist a few computations of the relevant bicategory of homotopy 
fixed points for field theories with target a Morita 2-category:
\begin{itemize}
	\item In~\cite{HSV,janserre,janthesis} $SO_2$-homotopy fixed points
	in the Morita bicategory $\Alg^{f.d. }$ have been identified with
	symmetric semi-simple Frobenius algebras. 
	\item Without using the cobordism hypothesis, unoriented (i.e. $O_2$) field theories with values in $\Alg$ have been classified in terms of ($\C$-linear) stellar symmetric Frobenius algebras in~\cite{schommerpriesthesis}. These are a Morita invariant version of $*$-algebras, see Section~\ref{Sec:stellar} for more
	details on the $\C$-antilinear variant.   
	\item For a finite group $G$, let $H:=G\times SO_2$ with the structure map being projection onto the second factor. Then $H$-homotopy fixed points in $\Alg^{f.d.}$ have been identified with strongly $G$-graded symmetric Frobenius algebras~\cite{oritthesis}. These results have been extended to 
	$G\times O_2$-fixed points in~\cite{Sozer}. His results only admit a good algebraic description for certain groups.  
	\item In~\cite{gunningham2016spin} $\Spin_2$-homotopy fixed points in $\sAlg$ have been computed. His answer is not given in terms of a nice algebraic structure as in the cases above. 	     
\end{itemize}

The conclusion we draw from these examples is that there seems to be no good description of homotopy fixed points in terms of algebraic objects as soon as Spin-structures (i.e. fermions) or too many orientation reversing elements are involved in the tangential structure. One of the main points of this article will be that if one includes spin-statistics and reflection structures it turns out that there are nice classifications in these cases. 
         
\subsection{Pictorial interpretation of the cobordism hypothesis in dimension 2} 
In this section we focus on 2-dimensional fully extended topological field theories with values in $\sAlg$.
The goal is to give a more hands on description of the cobordism hypothesis in this setting. As explained
in the previous section,
framed topological field theories $\Bord^*_{2,0} \longrightarrow \sAlg$ are the starting point for the 
classification of other tangential structures. The cobordism hypothesis states that the 2-groupoid of
framed topological field theories is equivalent to the subgroupoid $\sAlg^{\text{f.d.}}$ of fully dualisable algebras. We will not explain this geometrically here, but refer to the
following accounts in the literature~\cite{lurietft,ClaudiaOwen,tanaka2020lectures}. The fully dualisable super algebras over $\C$ 
are exactly the finite-dimensional semi-simple ones.

For every tangential structure $\rho \colon H\longrightarrow O_2$  there is a canonical functor $\iota \colon \Bord_{2,0}^* \longrightarrow \Bord_{2,0}^{H,\rho}$ sending a framed manifold
\begin{equation}
\begin{tikzcd}
&  * \ar[d] \\
M \ar[ru,""{name=U, below},bend left=30] \ar[r,"{TM}",""{name=V, above}, swap] & BO_2
\ar[from=U, to=V, "h", Rightarrow]
\end{tikzcd}
\end{equation}   
to
\begin{equation}
\begin{tikzcd}
&  * \ar[d] \ar[r] & BH \ar[ld] \\
M \ar[ru,""{name=U, below},bend left=30] \ar[r,"{TM}",""{name=V, above}, swap] & BO_2 
\ar[from=U, to=V, "h", Rightarrow]
\end{tikzcd}
\end{equation}
where the new triangle is strictly commutative. Hence, every topological $(H,\rho)$-field theory $\mathcal{Z}\colon
\Bord_{2,0}^{H,\rho} \longrightarrow \sAlg$ has an underlying framed topological field theory $\mathcal{Z}_{\text{fr}} 
\coloneqq \iota^* \mathcal{Z}$ which is completely determined by its value on the positively framed point, i.e.
a fully dualizable algebra $A_\mathcal{Z}\in \sAlg^{\text{fd}}$ . We describe the additional data the 
cobordism hypothesis tells us to consider to fully describe $\mathcal{Z}$ in terms of 
$\mathcal{Z}_{\text{fr}}$. We write $+$ for the positively framed point and $-$ for the negatively framed point
as well as for their images in $\Bord_{2,0}^{H,\rho}$. 
Every element $g\in H$ corresponds to a path $\gamma_g \colon [0,1] \longrightarrow BH $. The classifying 
map for the tangent bundle of $[0,1]$ is trivial and hence we can upgrade this to a $(H,\rho)$-bordism  
\begin{equation}
\begin{tikzcd}
&  BH \ar[d] \\
{	[0,1]} \ar[ru,"\gamma_g",""{name=U, below},bend left=30] \ar[r,"*",""{name=V, above}, swap] & BO_2
\ar[from=U, to=V, "h", Rightarrow]
\end{tikzcd} \ \ .
\end{equation}
where $h$ is the homotopy which continues the loop in $B\O_2$ to its end point. Explicitly $h$ is given
by the map 
\begin{align}
h\colon I\times I & \to B\O_2 \\ 
(t,x) & \longmapsto \rho(\gamma_g(x+(1-x)t)) \ \ .
\end{align}
This defines a 1-morphism from $+$ to $+$ if $\rho(g)\in \O_2$ is in the connected component of the 
identity and from $-$ to $+$ otherwise. We draw these 1-morphisms as shown in Figure~\ref{fig:1-mor}.   
\begin{figure}[h!]
	\begin{center}
		\begin{overpic}[scale=1
			,tics=10]
			{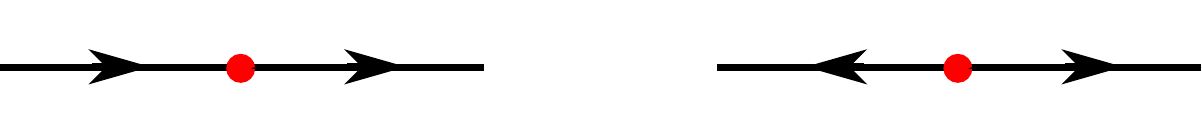}
			\put(19,8){{\Large$g$}}
			\put(79,8){\Large$g$}	
			\put(-3,3){\Large$+$}	
			\put(41,3){\Large$+$}	
			\put(56,3){\Large$-$}	
			\put(101,3){\Large$+$}	
		\end{overpic}  
		\vspace{0.5cm}
		\caption{Some 1-morphisms in $\Bord_{2,0}^{H,\rho}$.}
		\label{fig:1-mor}
	\end{center}
\end{figure}
The field theory $\mathcal{Z}$ assigns to them either an $\mathcal{A}_{\mathcal{Z}}$-$\mathcal{A}_{\mathcal{Z}}$-bimodules $\mathcal{A}_g$ in the case $\rho(g)\in SO_2$ or an $\mathcal{A}_{\mathcal{Z}}$-$\mathcal{A}_{\mathcal{Z}}^{\text{op}}$ bimodule otherwise. For 
both morphisms, we also have their orientation reversal, which is constructed by composing the homotopy for
the tangential structure with the non-trivial loop in $BO_2$. We denote these by reversed arrows. The 
bimodules associated to those are $\mathcal{A}_g^{\op,-1}$ where $^{-1}$ denotes a chosen adjoint inverse. Let
$g$ and $g'$ be elements of $H$. There is a canonical homotopy $\psi\colon \gamma_{g'} \circ \gamma_g \rightarrow \gamma_{g'g}$, which when combined with a scaling gives rise to 2-morphisms in $\Bord_{2,0}^{H,\rho}$ which we draw as indicated in Figure~\ref{fig:2-mor}.  
\begin{figure}[h!]
	\begin{center}
		\begin{overpic}[scale=0.9
			,tics=10]
			{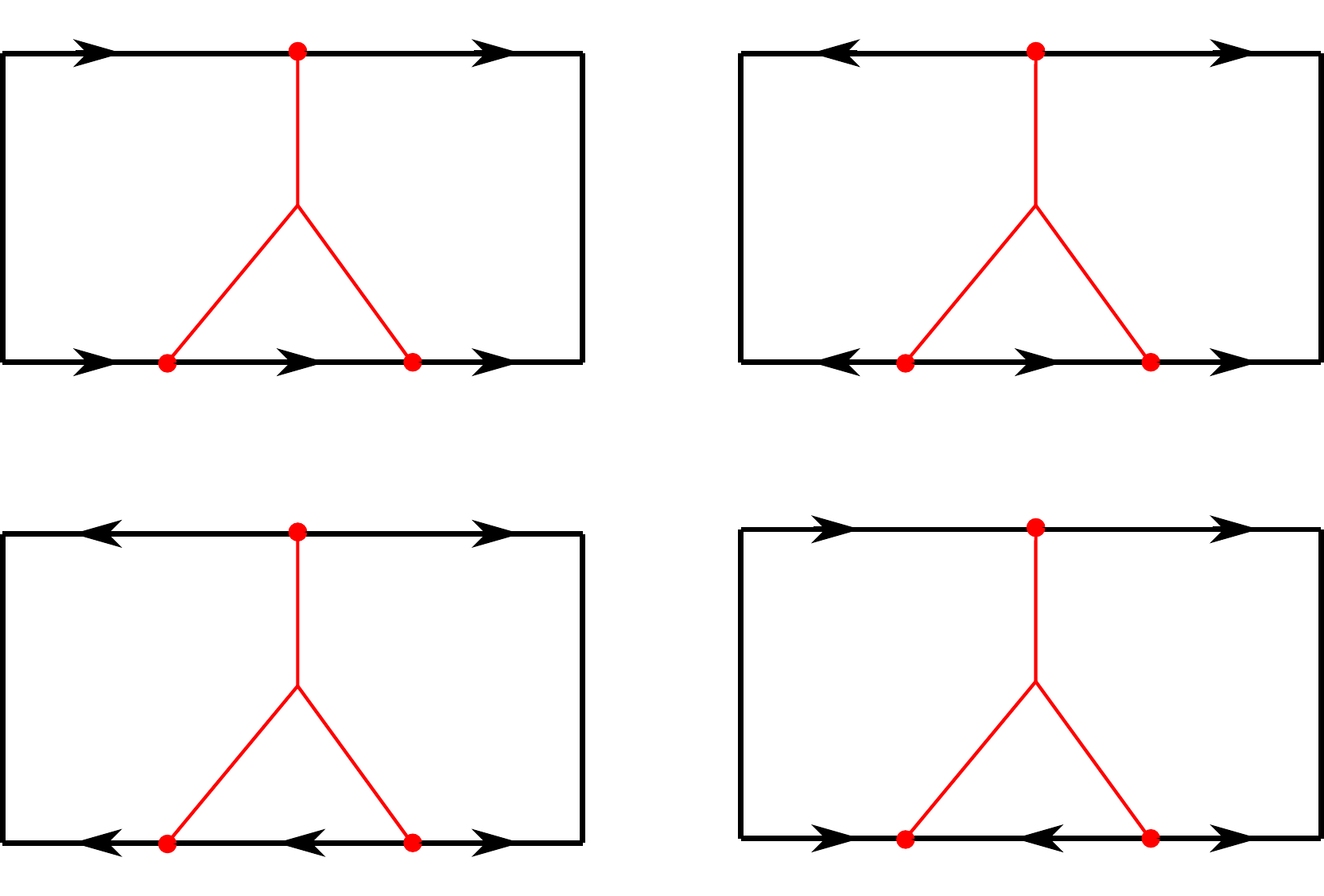}
			\put(12,42){{\Large$g$}}
			\put(31,42){{\Large$g'$}}
			\put(20.5,66){\Large$g'g$}	
			\put(12,6){{\Large$g$}}
			\put(31,6){{\Large$g'$}}
			\put(20.5,30){\Large$g'g$}
			\put(68,6){{\Large$g$}}
			\put(87,6){{\Large$g'$}}
			\put(76.5,30){\Large$g'g$}			
			\put(68,42){{\Large$g$}}
			\put(87,42){{\Large$g'$}}
			\put(76.5,66){\Large$g'g$}
			\put(9,36){\Large$\text{for } \rho(g),\rho(g')\in \SO_2$}
			\put(56,36){\Large$\text{for } \rho(g)\notin \SO_2\text{ and } \rho(g')\in \SO_2$}
			\put(0,0){\Large$\text{for } \rho(g)\in \SO_2\text{ and } \rho(g')\notin \SO_2$}	
	\put(64,0){\Large$\text{for } \rho(g),\rho(g')\notin \SO_2$}	
	\end{overpic}  
		\vspace{0.5cm}
		\caption{Some 2-morphisms in $\Bord_{2,0}^{H,\rho}$.}
		\label{fig:2-mor}
	\end{center}
\end{figure}
Note that these only describe the part of the 2-morphism given by a map $I^2\to BH$. The homotopy which is
part of the tangential structure can be constructed as before by following the homotopy $\psi$ to the end. Concretely, it is given by  
\begin{align}\label{Eq: Def h}
h\colon I\times I^2 & \to B\O_2 \\ 
(t,x,y) & \longmapsto \rho(\psi(x+(1-x)t,y+(1-y)t)) \ \ .
\end{align}
Evaluating $\mathcal{Z}$ on those morphism gives 
rise to bimodule intertwiners $\phi_{g,g'}\colon \mathcal{A}_{g'} \otimes_{\mathcal{A}_{\mathcal{Z}}} \mathcal{A}_g \to 
\mathcal{A}_{g'g}$ for $\rho(g),\rho(g')\in \SO_2$, $\phi_{g,g'}\colon \mathcal{A}_{g'} \otimes_{\mathcal{A}_{\mathcal{Z}} 
} \mathcal{A}_g \to \mathcal{A}_{g'g}$ for $\rho(g)\in \SO_2$ and $\rho(g')\notin \SO_2$, $\phi_{g,g'}\colon 
{\mathcal{A}_{g'}^{\op}}^{-1} \otimes_{\mathcal{A}_{\mathcal{Z}}} \mathcal{A}_g \to \mathcal{A}_{g'g}$ for 
$\rho(g)\notin \SO_2$ and $\rho(g')\in \SO_2$, and $\phi_{g,g'}\colon \mathcal{A}_{g'} 
\otimes_{\mathcal{A}^\op_{\mathcal{Z}} } {\mathcal{A}_g^{\op}}^{-1}  \to \mathcal{A}_{g'g}$ for $\rho(g),\rho(g')\notin \SO_2$, where we have used 
implicitly the canonical identification ${\mathcal{A}_{\mathcal{Z}}^{\op}}^{\op}\cong \mathcal{A}_{\mathcal{Z}}$.     

The morphisms defined above satisfy relations of the type sketched in Figure~\ref{fig:ass}
\begin{figure}[h!]
	\begin{center} 
		\begin{overpic}[scale=0.7
			,tics=10]
			{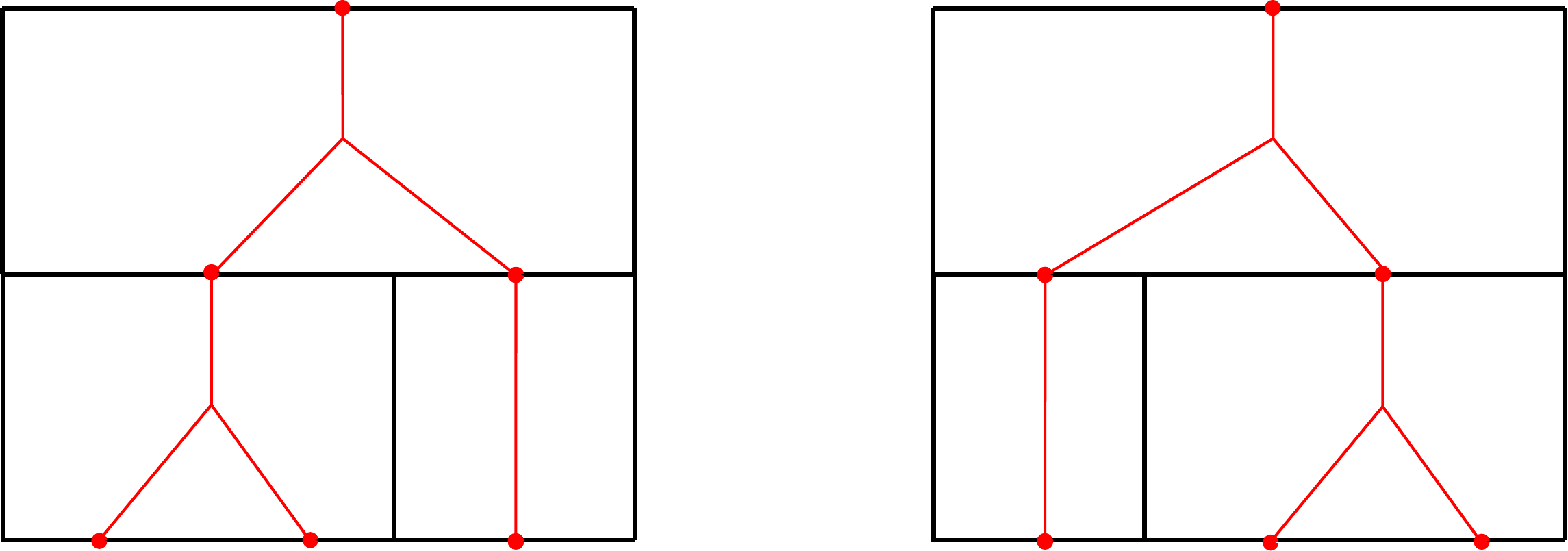}
			\put(48,16){{\huge$=$}}
		\end{overpic}  
		\vspace{0.5cm}
		\caption{A type of relation in $\Bord_{2,0}^{H,\rho}$.}
		\label{fig:ass}
	\end{center}
\end{figure}
inside $\Bord_{2,0}^{H,\rho}$, where we use a straight red line to denote the identity 2-morphism.  

For every path $\Gamma \colon g \longrightarrow g'$ in $H$ there is a homotopy 
$\Gamma \colon I^{2} \longrightarrow BG $. When building a 2-morphism from $\Gamma$ there is one important
subtlety: We defined 2-morphism to be constant along the vertical boundary, but if one naively uses the 
deformation to the right top corner to define the tangential structure as in Equation~\eqref{Eq: Def h} this is not the case. The change of
the tangential structure along the left vertical edge corresponds to the path $\rho(\Gamma) \in O_2$. We 
can solve this by including the path into the top line of the morphism. We draw 
\begin{figure}[h!]
	\begin{center}
		\begin{overpic}[scale=1
			,tics=10]
			{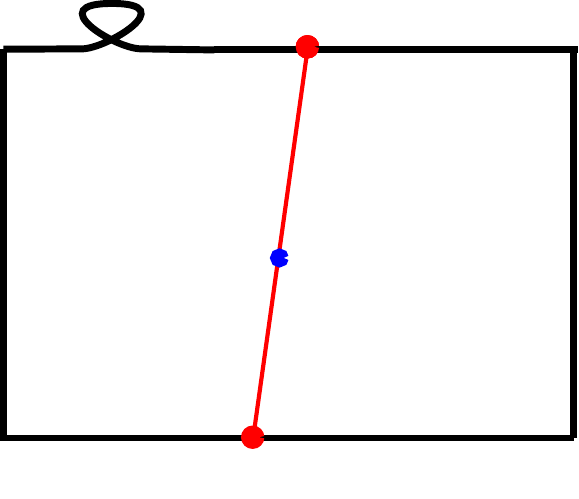}
			\put(52,39){{\Large$\Gamma$}}
			\put(16,88){{\Large$n$}}
		\end{overpic}  
		\vspace{0.5cm}
		\caption{More 2-morphisms in $\Bord_{2,0}^{H,\rho}$.}
		\label{fig:Gamma}
	\end{center}
\end{figure}
these as shown in Figure~\ref{fig:Gamma}, where $n$ is an integer indicating the loop $\rho(\Gamma)$ in $O_2$ used to define the morphism at the top. Note that the target of the 2-morphism just defined is the composition of the image of a morphism in $\Bord^*_{2,0}$ with the morphism corresponding to $g'$. The automorphism of the plus point corresponding to $n=1$ in $\Bord^*_{2,0}$ is called
the \emph{Serre automorphism}. All other values of $n$ can be constructed from it by composition and 
taking the inverse. The Serre automorphism can be defined in any fully dualisable bicategory and is part 
of a natural transformation from the identity to itself~\cite[Proposition 3.2.]{LorantNils}. It is part
of the $\O_2$-action mentioned in the previous section, see Appendix \ref{Sec:serre} and \ref{Sec:O2-action} for details. 

Evaluating the field theory on the 2-morphism from Figure~\ref{fig:Gamma} gives rise to an intertwiner 
$\lambda_{\Gamma}\colon \mathcal{A}_g \to \mathcal{A}_{g'}\circ S^n$ where $S$ is the Serre automorphism of $\mathcal{Z}$. 
These 2-morphisms satisfy relations of the form shown in Figure~\ref{fig:Rel1} and \ref{fig:Rel2}.
\begin{figure}[h!]
	\begin{center}
		\begin{overpic}[scale=1
			,tics=10]
			{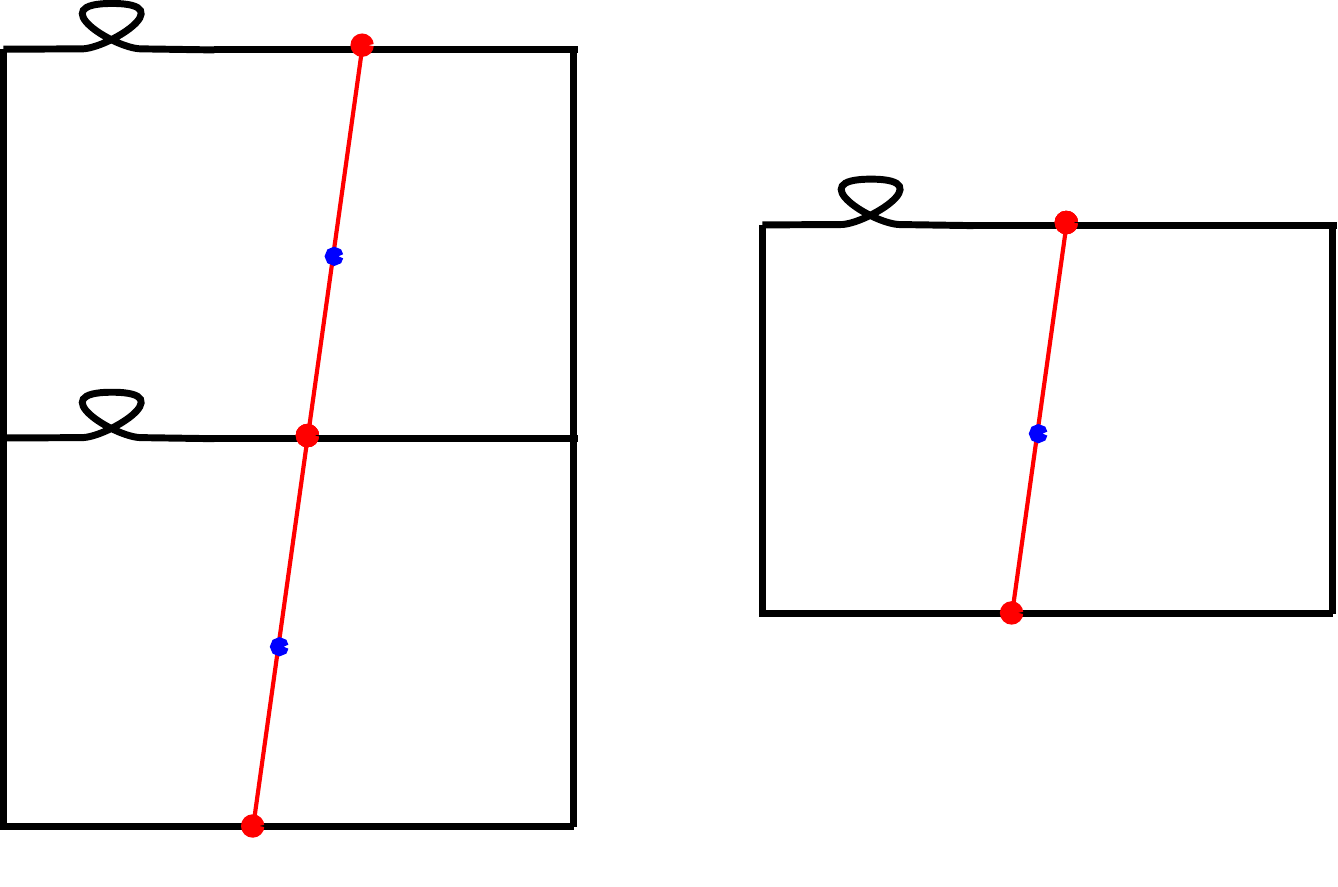}
			\put(22,16.5){{\Large$\Gamma$}}
			\put(26,46){{\Large$\Gamma'$}}
			\put(7.5,38){{\Large$n$}}
			\put(4.5,67){{\Large$n+m$}}
			\put(61,53.5){{\Large$n+m$}}
			\put(48,35){{\Large$=$}} 
			\put(79,32.5){{\Large$\Gamma' \circ \Gamma$}}
		\end{overpic}  
		\vspace{0.5cm}
		\caption{More relations in $\Bord_{2,0}^{H,\rho}$.}
		\label{fig:Rel1}
	\end{center}
\end{figure}
\begin{figure}[h!]
	\begin{center}
		\begin{overpic}[scale=0.5
			,tics=10]
			{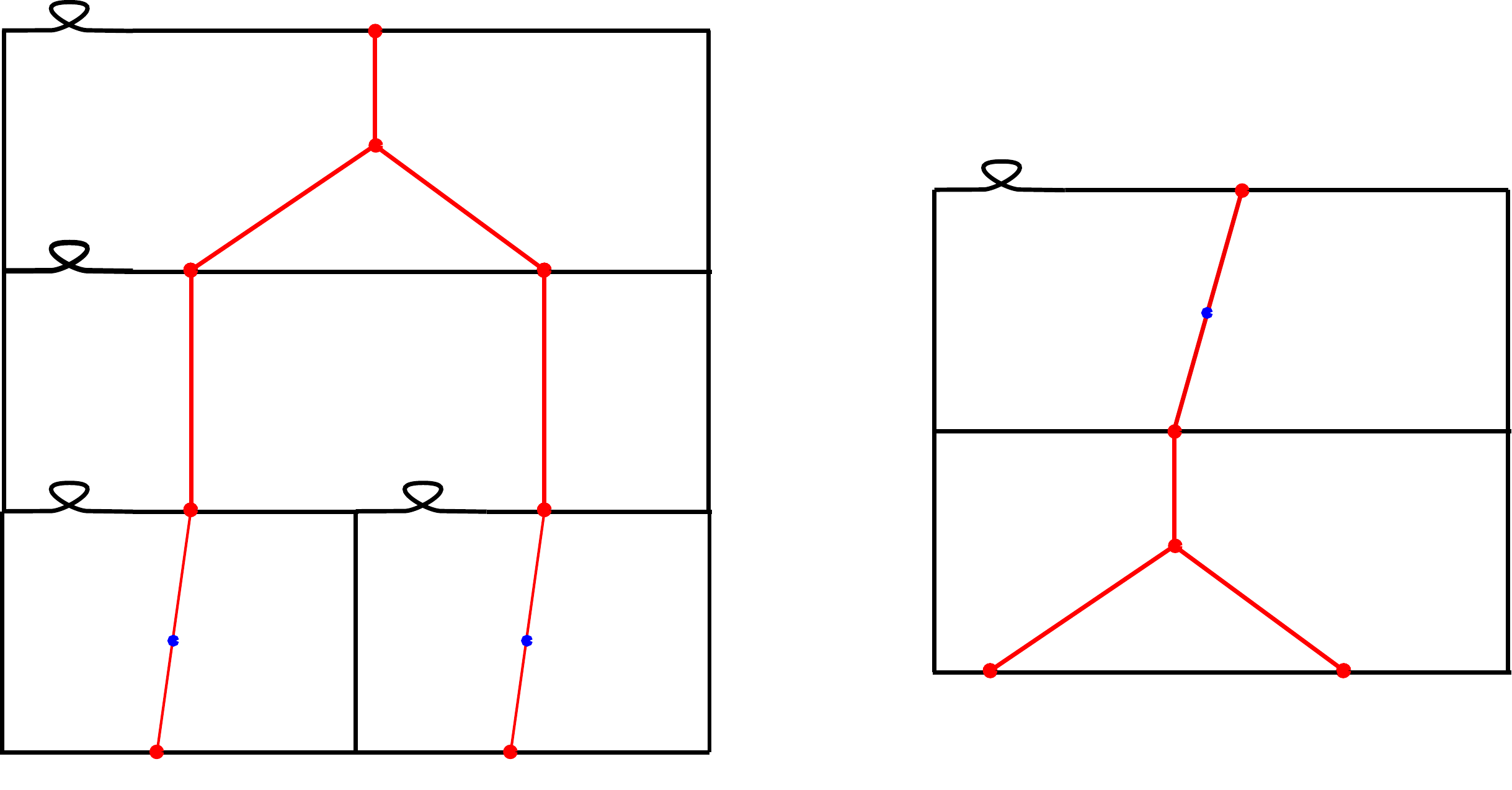}
			\put(12,8.5){\Large$\Gamma$}
			\put(36,8.5){\Large$\Gamma'$} 
			\put(3.5,20.7){\Large $n$} 
			\put(26,20.7){\Large $m$} 
			\put(0,37){\Large $m+n$} 
			\put(0,52.5){\Large $m+n$} 
			\put(60.5,42){\Large $m+n$} 
			\put(52,25){\Large $=$}
			\put(81,29.5){\Large$\Gamma'\otimes \Gamma$} 
		\end{overpic}  
		\vspace{0.5cm}
		\caption{Even more relations in $\Bord_{2,0}^{H,\rho}$.}
		\label{fig:Rel2}
	\end{center}
\end{figure}
For the first relation we suppress the isomorphism between $m \circ n$ and $n+m$ and the second involves
the naturality of the Serre isomorphism. 
The statement of the cobordism hypothesis in 2-dimensions can be understood as saying that these morphisms and relations generate 
$\Bord_{2,0}^{H,\rho}$ as a fully dualisable symmetric monoidal bicategory. Hence giving a fully 
extended topological $(H,\rho)$-field theory is equivalent to giving a fully dualisable algebra $A$ 
bimodules $A_g$, interwiners $\phi_{g,g'}$, and interwiners $\lambda_\Gamma$ satisfying relations of 
the type shown in Figure~\ref{fig:ass}, \ref{fig:Rel1}, and \ref{fig:Rel2}. For a proof of this statement in the case of $G\times \SO_2$-tangential structure
we refer to~\cite{Sozer}. 
As we explain in Appendix~\ref{App: 2-Group} this is exactly the same data as a homotopy fixed point for the action of $G$ on $\sAlg$.  

\begin{remark}
A similar, but significantly simpler pictorial description also works in dimension one. There one has morphisms as in Figure~\ref{fig:1-mor} for every element $g\in G$. They compose in accordance with the group law of $G$. The orientations can be 
taken into account exactly as in Figure~\ref{fig:2-mor}. More concretely instead
of having a 2-isomorphism from the bottom to the top, in 1-dimension, the bottom
and top are equal.

This description should generalise to arbitrary dimensions by using cubes $I^d$ instead of squares $I^2$. 
\end{remark}

\section{Fermionic groups}\label{Sec:FG}
In this section we introduce fermionic groups as a model for symmetry groups appearing in physics and study some of there basic properties. We also define the spacetime structure groups associated to an internal symmetries described by a fermionic group.
  
\subsection{Fermionic groups as internal symmetries}
We start by introducing an abstract definition for the internal symmetry group of a quantum
system which can contain both bosons and fermions. 
In this context it is important to distinguish between symmetries which act unitarily and 
anti-unitarily on the Hilbert space of the theory. For example, time 
reversal symmetry acts anti-unitarily.\footnote{From a relativistic perspective, time-reversal is not an internal symmetry. We will instead adopt the common convention to call time-reversing symmetries internal, even though they are only internal from a spatial perspective.}
We will encode this information by a $\Z_2^T$-grading.
Fermion parity usually denoted by $(-1)^F$ is 
a special symmetry of any quantum system, which we encode 
by an even central element $c$ of square $1$. We abstract these two pieces of information in the following
definition (which we learned from Peter Teichner) for the structure of an internal symmetry group:
\begin{definition}
A \emph{fermionic group} $G$ is a Lie group together with
\begin{itemize}
\item a continuous homomorphism $\theta: G \to \Z_2^T$, which we interpret as a $\Z_2^T$-grading $G = G_0 \sqcup G_1$;
\item an even central element $c \in G$ of square 1, i.e. $\theta(c) = 0$. 
\end{itemize}
A \emph{homomorphism of fermionic groups} $f\colon (G,c_G,\theta_G) \to (H,c_H,\theta_H)$ is a smooth group homomorphism $f\colon G \to H$ which preserves the grading and satisfies $f(c_G)=c_H$.
\end{definition}

Let $(G,c,\theta)$ be a fermoinic group. 
We denote by $G^{\op}$ the \emph{opposite fermionic group} which has the same underlying manifold, $c$, and $\theta$ as $G$, but multiplication

\begin{align}
g_1*^{\op}g_2= c^{\theta(g_1)\theta(g_2)}g_2 g_1 \ \ . 
\end{align}

If $c$ is equal to $1$ we call a fermionic group \emph{bosonic}.  
For a fermionic group $(G,c,\theta)$ with $c\neq 1$ we denote the central subgroup generated by $c$ by 
$\Z_2^c \subset G$ and define the \emph{underlying bosonic group} as the quotient $G_b \coloneqq G/\Z_2^c$.
For a fermionic group $G$ with non-trivial $c$ and non-trivial grading, both $G_0$ and $G_b$ are again fermionic groups (the former with trivial grading, the latter with $c=1$) fitting into a commutative diagram
\begin{equation}\label{Eq: subgroups}
	\begin{tikzcd} 
		\Z_2^c \ar[r, equals] \ar[d, hook] & \Z_2^c \ar[d, hook] & \\
		G_0  \arrow[d,two heads] \arrow[r, hook] 
		& G \arrow[d, two heads] \arrow[r, two heads] & \Z_2^T \ar[d, equals] \\
		G_{0,b} \arrow[r, hook] 
		& G_b \arrow[r, two heads] & \Z_2^T
	\end{tikzcd}
\end{equation} 
of short exact sequences of fermionic groups. 
\begin{example}\label{Ex: SA FG}
Let $A=A_0\oplus A_1$ be a real super algebra. We refer to Appendix~\ref{App: Super alg} for basics on super algebras. The group of homogenous units $A^\times$ is a fermionic group with $\Z_2^T$-grading 
induced by the grading of $A$ and $c=-1$. Furthermore, the sphere $S(A)= A^\times / \R_{> 0}$ is 
a compact fermionic group where the positive real numbers act by scalar multiplication. 
\end{example}

\begin{example}
For $p,q$ non-negative integers, the Pin-group $\Pin_{p,q}$ (i.e. the group generated by elements of $\R^{p,q}$ of 
norm $\pm 1$ in the Clifford algebra $\Cl_{p,q}$ of signature $(p,q)$) is a fermionic group.
 In this case the Diagram~\eqref{Eq: subgroups} becomes 
\begin{equation}
	 \begin{tikzcd}
	 \Spin_{p,q}  \arrow[d,two heads] \arrow[r, hook]
	 & \Pin_{p,q} \arrow[d, two heads] \\
	 \SO_{p,q} \arrow[r, hook]
	 & \O_{p,q},
	 \end{tikzcd}
\end{equation}
and hence defines canonical fermionic group structures on $\Spin_{p,q}$, $\SO_{p,q}$, and $\O_{p,q}$, where the latter two are bosonic. 
The special central element $c \in \Pin_{p,q}$ in the kernel of the map to $O_{p,q}$ is related to the fermion parity operator $(-1)^F$ through the spin-statistics theorem, see Section \ref{spin-st} for further motivation. 
The grading $\theta$ is given by the determinant $\theta: \Pin_{p,q} \to O_{p,q} \xrightarrow{\det} \{\pm 1\} \cong \Z_2^T$, which agrees with the $\Z_2$-grading on the Clifford algebra.
\end{example} 

\begin{definition}\label{Def: Fermionic tensor product}
Let $(G,c_G,\theta_G)$ and $(H,c_H,\theta_H)$ be fermionic groups. The \emph{fermionic tensor product} $G 
\otimes H$ is the set $(G \times H)/ \langle (c_G,c_H) \rangle$ with the operation
\[
(g_1 \otimes h_1) (g_2 \otimes h_2) = c_G^{\theta_G(g_2)\theta_H(h_1)} g_1 g_2 \otimes h_1 h_2,
\]
the central element $1 \otimes c_H = c_G \otimes 1 \in G \otimes H$ and the grading $\theta_{G\otimes H}(g \otimes h) = \theta_G(g) + \theta_H(h)$.
When it should be clear from the context we omit the subscripts $G$ and $H$ to improve readability.
\end{definition}

It is straightforward to show that the fermionic tensor product is indeed a group.
Note that $G \otimes H$ 
is naturally bigraded; there is a homomorphism $\theta_G \otimes \theta_H: G \otimes H \to \Z^T_2 \times 
\Z^T_2$.
The grading from Definition~\ref{Def: Fermionic tensor product} that we use to make $G \otimes H$ into a 
fermionic group is the bigrading composed with the sum operation $\Z^T_2 \times \Z^T_2 \to \Z_2^T$.
We will often consider the even part $(G \otimes H)_{0}$ under this grading.
Note that the other part of the bigrading still gives a grading $(G \otimes H)_{0}  = G_0 \otimes H_0 \sqcup G_1 \otimes H_1$ and so forms another fermionic group.

\begin{example}
	As a concrete special case of the Example~\ref{Ex: SA FG}, consider the Clifford algebra $D = \Cl_{-1}$ with one odd generator $f$ of square $-1$. 
	Then $S(D) = \{[1],[-1],[f],[-f]\} \cong \Z/4$.
	The fermionic group structure is expressed through the exact sequence
	\[
	1 \to \{\pm 1\} \to S(D) \overset{(-)^2}{\to} \{\pm 1\} \to 1.
	\]
	This fermionic group can be identified with $S(D)\cong \operatorname{Pin}^-_1 = \operatorname{Pin}_{0,1}$.
	
	Consider the fermionic tensor product $G:= S(D) \otimes S(D)$.
	It consists of eight elements $\pm a \otimes b$, where $a$ is either $1$ or $f$.
	It is generated by the two elements $f \otimes 1$ and $1 \otimes f$, which both square to $1$ and anticommute in the sense that
	\[
	(f \otimes 1) (1 \otimes f) = f \otimes f = (-1 \otimes f)(f \otimes 1).
	\]
	We conclude that $G$ is isomorphic to the quaternion group $\{\pm 1, \pm i, \pm j, \pm k\}$.
	The fermionic structure is given by the central element $-1$ and the unique  $\Z/2$-grading such that $i$ and $j$ are odd (so that $k$ is even).
	In particular note that the even part $G_0$ is isomorphic to $S(D)$.
	Also clearly $S(D) \otimes S(D) \neq S(D \otimes_\R D)$, even though the former is a subgroup of the latter.
\end{example}

\begin{remark}\label{Rem: fermionic groups as maps}
The definition of a fermionic group with non-trivial $c$ has the following topological interpretation. It 
is equivalent to giving a Lie group $G_b$ together with a continuous map $BG_b\to B\Z_2^T \times  B^2\Z_2^c $. 
Indeed, since $\Z_2^c$ is discrete, a map $\omega: BG_b \to B^2\Z_2^c$ is equivalent to an extension of topological groups
\[
1 \to \Z_2^c \to G \to G_b \to 1.
\]
Note that the map $\Z_2^c \to G$ being injective implies $c \neq 1$, but otherwise having such an exact sequence is equivalent to having a central square one element in $G$.
We emphasize that taking the map $BG_b \to B^2 \Z_2^c$ to be trivial does not lead to a fermionic group with $c =1$, but to the trivially split fermionic group $G = \Z_2^c \times G_b$ instead.
The map $BG_b \to B\Z_2^T$ is equivalent to a homomorphism $\pi_0(G_b) \to \Z_2^T$ and so equivalent to a continuous homomorphism $G_b \to \Z_2^T$.
This in turn is equivalent to a homomorphism $G \to \Z_2^T$ with the additional property that $c$ is even.

The opposite fermionic group also has an interpretation in this language.
Let $\Z_4^{Tc}$ denote the group in which $T^2 = c$.
Recall that the Steenrod square can be considered as the map $Sq^1: B\Z_2^T \to B^2\Z_2^c$ classifying the fibration
\[
B\Z_2^c \to B\Z_4^{T} \to B\Z_2^{T}
\]
induced by the corresponding short exact sequence of groups.
Let $\phi: B\Z_2^T \times B^2 \Z_2^c \to B\Z_2^T \times B^2 \Z_2^c$ be the map given by projection onto the first factor and the composition
\[
B \Z_2^T \times B^2 \Z_2^c \xrightarrow{Sq^1 \times \id } B^2 \Z_2^c \times B^2 \Z_2^c \xrightarrow{+} B^2 \Z_2^c
\]
on the second factor.
Then postcomposing $BG_b\to B\Z_2^T \times  B^2\Z_2^c$ with $\phi$ maps the cohomology classes $(\theta, \omega) \in H^1(BG_b; \Z_2^T) \times H^2(BG_b; \Z_2^c)$ to 
\[
(\theta, \omega + Sq^1(\theta)) =  (\theta, \omega + \theta^2) \in H^1(BG_b; \Z_2^T) \times H^2(BG_b; \Z_2^c)
\]
where $\theta^2 = \theta \cup \theta$.
The extension of $G_b$ by $\Z_2^c$ corresponding to $\theta^2$ is the set $\Z_2^c \times G_b$ with multiplication $g_1 * g_2 = c^{\theta(g_1) \theta(g_2)} g_1 g_2$ for $g_1,g_2 \in G_b$.
The sum in $H^2(G_b, \Z_2^c)$ corresponds to the Baer sum of extensions of groups.
A short computation now shows that $\phi(\theta, \omega)$ is the topological data classifying $G^{\op}$.
\end{remark}

\subsection{Representations of fermionic groups}
\label{Sec:fermrep}

Let $V=V_0\oplus V_1$ be a complex super vector space. We define a fermionic group $\Aut^f(V)= \Aut^f(V)_0 \sqcup \Aut^f(V)_1$ 
of linear and antilinear automorphisms of $V$. 
Its even component $\Aut^f(V)_0$ is the set of $\C$-linear 
automorphisms of $V$. The odd component $\Aut^f(V)_1$ is given by the set of $\C$-antilinear 
automorphisms of $V$. The preferred element $c\in \Aut^f(V)$ is the grading involution $(-1)^F_V$ 
which is the identity on $V_0$ and minus the identity on $V_1$. 
Physical intuition tells us that the following is the correct notion for representation of a fermionic internal symmetry group on the state space of its quantum theory. 
This intuition will be confirmed by Proposition \ref{prop:1d classification}.

\begin{definition}\label{Def: F Rep}
Let $(G,c, \theta_G)$ be a fermionic group and $V$ a complex super vector space. A \emph{representation} of 
$(G,c, \theta_G)$ on $V$ is a homomomorphism of fermionic groups $\rho \colon G\to \Aut^f(V)$. 

Let $\rho \colon G\to \Aut^f(V)$ and $\rho' \colon G\to \Aut^f(V')$ be representations of $G$.
A \emph{morphism of $G$-representations} is a $\C$-linear map $f\colon V\to V'$ which commutes with the 
action of $G$.    
\end{definition}      

Concretely, this means that the even elements $g\in G_0$ act via $\C$-linear maps $\rho(g)\colon V \to V$,
the odd elements act via $\C$-antilinear maps $\rho(g)\colon \overline{V} \to V$, and $\rho(c)=(-1)^F_V$. A 
\emph{morphism of representations} is a $\C$-linear map which satisfies $f(\rho(g)[v])=\rho'(g)[f(v)]$. The condition 
$\rho(c)=(-1)^F_V$ can be understood as an implementation of the spin-statistics relation, see Section \ref{spin-st}.
Note that any action of a fermionic group on a purely even vector space $V=V_0$ factors through the group 
$G_b$ and that any fermionic group with $c=1$ only admits representations on even vector spaces. 

\begin{remark}\label{Rem: action on sVect}
We give a more abstract perspective on the action of a discrete fermionic groups on super vector spaces which will be useful later on (but can be safely ignored for the moment). The 2-group $\Z_2^T \times B\Z_2^{F}$ has an action on the category $\sVect$ of 
complex super vector spaces (we refer to Appendix \ref{Sec:action on sVect} for details on this action) 
where $T$ acts by sending a vector space $V$ to its complex conjugated vector space $\overline{V}$ and 
$(-1)^F$ acts by the natural isomorphism $(-1)^F_- \colon \id_{\sVect} \Longrightarrow \id_{\sVect} $ with
component at a super vector space $V$ given by the grading automorphism $(-1)^F_V: V \to V$ defined by $v \mapsto (-1)^{|v|} v$. Recall from 
Remark~\ref{Rem: fermionic groups as maps} that a fermionic group can equivalently be described by 
a homomorphism $G_b \to \Z_2^T\times B\Z_2^{F} $ of $\infty$-groups. This implies that $G_b$ acts on 
$\sVect$ in the following way 
\begin{align}
\psi(g) \colon \sVect & \to \sVect  \\ 
V & \longmapsto 
\begin{cases}
V & \text{if } \theta(g)=0 \\
\overline{V} & \text{if } \theta(g)=1.
\end{cases} 
\end{align}   
In addition the action comes with coherence isomorphisms $\psi_{g,g'}\colon \psi(g') \circ \psi(g) 
\Longrightarrow \psi(g'g)$ given by $(-1)^F_-$ if the 2-cocycle $\omega(-,-)$ classifying the 
extension
\begin{align}
1\longrightarrow \Z_2^c \longrightarrow G \longrightarrow G_b \longrightarrow 1
\end{align}      
is non-trivial when evaluated on $g,g'$ and the identity otherwise. A homotopy fixed point (see Remark~\ref{Rem: FP on Cat} for the definition) 
for this action consists of a vector space $V$ and linear maps $\rho(g)\colon \psi(g)[V]\to V$ for all 
$g\in G_b$. This means that the even elements of $G_b$ act linearly and the odd ones act antilinearly.
These maps have to satisfy the twisted composition law $ \rho(g') \circ \rho(g) = {\psi_{g,g'}}_V \circ
\rho(g'g)$. Such a twisted action of $G_b$ is equivalent to an action of the central extension $G$ where
$c$ acts by $(-1)^F_V$. This shows that we can identify the category of homotopy fixed points for the
action of $G_b$ on $\sVect$ with the category of fermionic representations of $G$.  
\end{remark}   

In physics unitary representations play an important role. 
Before defining unitary representations of fermionic groups we need to explain our conventions related to super hermitian vector spaces. We restrict our attention
to finite dimensional super vector spaces, since these will be most prominent in
our paper. However, the generalisations to infinite dimensional situations is 
straightforward. 

\begin{definition}
\label{def:hermvect}
Given a super vector space $H\in \sVect$, define its \emph{complex conjugate} $\overline{H} \in \sVect$ to be the same as an abelian group, but with complex-conjugated scalar multiplication.
A \emph{hermitian super vector space} is a super vector space $H\in \sVect$ together with an isomorphism
 $h_H \colon  \overline{H}^* \to H$ such that 
\begin{align}
	H \cong ({\overline{({\overline{H}})^*}})^* \xrightarrow{\overline{h_H}^*} \overline{H}^* \xrightarrow{h_H} H
\end{align}    
is the identity on $H$.
A \emph{unitary morphism} $f\colon (H,h_H) \to (H',h_{H'})$ is given by a linear map $f\colon H\to H' $ such that the adjoint $f^\dagger$ defined by
\begin{equation}
	\begin{tikzcd}
		\overline{H}^* \ar[r, "h_H"]   & H  \\
		\overline{H'}^* \ar[u, "\overline{f}^*"]  & \ar[l,"h_{H'}^{-1}"] H' \ar[u, "f^\dagger",swap]
	\end{tikzcd}
\end{equation}
is inverse to $f$. 
\end{definition} 

Here the canonical isomorphism $({\overline{({\overline{H}})^*}})^* \cong H$ is given by the evaluation map with the appropriate Koszul sign rule
\[
v \mapsto \ev_v: \overline{\overline{H}^*} \to \C \quad \ev_v(f) = (-1)^{|v||f|} f(v) \quad v \in \overline{H}, f \in \overline{\overline{H}^*}.
\]
A hermitian structure on a super vector space can equivalently be defined in terms of an inner product 
$\langle -, - \rangle \colon \overline{H}\otimes H \longrightarrow H^*\otimes H \longrightarrow \C $ such that $H_0$ and $H_1$ are orthogonal and $\langle v, w \rangle = (-1)^{|v||w|} \overline{\langle w, v \rangle} $. This implies that the
pairing of two even elements is a real number as usual, but the pairing of two odd elements is purely
imaginary. 
A \emph{super Hilbert space} is a hermitian super vector space $H$ such that $\langle v_0,v_0 
\rangle \geq 0$ for all even elements $v_0\in H_0$ and $\tfrac{\langle v_1,v_1 \rangle}{i} \geq 0$ for 
all odd elements $v_1\in H_1$. 
	Our notions of super Hermitian vector space and super Hilbert space are equivalent to the more common convention which we will call $\Z_2$-graded Hermitian vector spaces and $\Z_2$-graded Hilbert space where the pairing $	\langle - | - \rangle$ instead satisfies
	\[
		\langle v | w\rangle 	= \overline{\langle w | v\rangle}.
	\]
The correspondence is given by defining
	\begin{equation}
	\label{eq:Z2Hilb}
	\langle v | w\rangle := 
	\begin{cases}
	\langle v,w \rangle & v,w \text{ even,}
	\\
	 \frac{\langle v,w \rangle}{i} & v,w \text{ odd}
	\end{cases}
	\end{equation}
	and zero for vectors of different degree.
	The sign choice of $i$ and not $-i$ in the above is consistent with the convention that when $v \in H$ is odd, then
	\[
	\langle v, v \rangle \in i \R_{\geq 0}.
	\]
	In other words, the correspondence maps super Hilbert spaces to the usual notion of $\Z_2$-graded Hilbert space in which $\langle v |v \rangle \in \R_{\geq 0}$ also for odd $v$.

\begin{remark}
For a super vector space $H$ the existence of an isomorphism $\overline{H}^* \cong H$ implies that $H$ is finite-dimensional.
Therefore to obtain infinite-dimensional Hilbert spaces we have to change Definition \ref{def:hermvect} appropriately.
As this article only considers topological field theories, state spaces will be finite-dimensional and so our definition suffices.
\end{remark}

Let $(H,h_H)$ be a hermitian super vector space. We call a $\C$-antilinear map $H \to H$ (equivalently a $\C$-linear map $\overline{H}\to H$ )
\emph{anti-unitary} if it is unitary with respect to the hermitian structure $(-1)^F_{\overline{H}} \circ 
\overline{h_H}$ on $\overline{H}$. To motivate the factor $(-1)^F$ appearing in the definition we remark
that in our conventions, if $(H,h_H)$ is a super Hilbert space so is $(\overline{H},(-1)^F_{\overline{H}} \circ 
\overline{h_H})$. However, a brief computation shows that $(\overline{H}, 
\overline{h_H})$ is only a Hilbert space if $H$ is bosonic. 
Explicitly this means that an even antilinear $T: H \to H$ is anti-unitary when
\[
\langle Tv, Tw \rangle = (-1)^{|v|} \overline{\langle v, w \rangle}
\]
for all homogeneous $v$ and $w$. 

We denote by $U^f(H)$ the fermionic subgroup of $\Aut^f(H)$ consisting of unitary and anti-unitary automorphisms.

\begin{definition}
Let $H$ be a hermitian vector space and $G$ a fermionic group. A \emph{unitary representation of $G$ on
$H$} is a homomorphism of fermionic groups $\varphi \colon G\to U^f(H)$.
\end{definition}    


\begin{example}
Let $G = \Pin^-_1 = \Z_4^{FT}$ be the fermionic group consisting of a single time-rerversal symmetry with square $(-1)^F$.
A fermionic representation of $G$ on a super vector space $H = H_0 \oplus H_1$ is given by a complex antilinear even map $\phi(T): H \to H$ which squares to the grading automorphism.
Therefore it consists of a real structure on $H_0$ and a quaternionic structure on $H_1$.
In particular, note that such a representation exists if and only if $H_1$ is even-dimensional.
If $H$ is additionally a super Hilbert space and the representation is unitary, we have to additionally require $\phi(T)$ to be anti-unitary.
\end{example}
  
\subsection{Spacetime structure groups}\label{Sec:spacetime group}
In this section we let $(G,\theta,c)$ be an arbitrary internal symmetry group. Associated to
$G$ there is for any spacetime dimension $d$ a Euclidean spacetime structure group $H_d$~\cite{freedhopkins}. The 
construction of $H_d$ involves combining the internal symmetries in $G$ with local symmetries of 
spacetime and Wick rotation from Lorenzian to Euclidean signature.   
The passage from internal symmetry groups to spacetime structure groups has a concise formulation in
the language of fermionic groups in terms of the fermionic tensor products
\[
H_d:= (\operatorname{Pin}_d^+ \otimes G)_{0}.
\]
Projecting down to the first component induces a map $\rho_d: H_d \to  \Pin^+_d / \Z_2^c \cong O_d$. This allows us to consider 
$d$-dimensional manifolds with a tangential $H_d\xrightarrow{\rho_d} O_d$-structure.
The group
\[
\hat{H}_d := \operatorname{Pin}_d^+ \otimes G.
\]
will play an important role in our formulation of reflection structures. 

\begin{example}
Note that $H_1 = (G \otimes \Pin^+_1)_{\operatorname{ev}}$ is in general not isomorphic to $G$.
Instead, a short computation shows it is isomorphic to $G^{\op}$ with the corresponding canonical map to $O_1$.
\end{example}

\begin{example}
Let $D$ be a superdivision algebra over $\R$. 
Recall from Example~\ref{Ex: SA FG} that the sphere $S(D) = \frac{D^\times}{\R_{>0}}$
is a fermionic group with central element $[-1] \in S(D)$ and grading induced by the grading on $D$.
As $D$ ranges over the ten nonisomorphic superdivision algebras \cite{wallgradedbrauer}, $S(D)$ will range over the ten `internal symmetry groups' $I$ in Freed-Hopkins' ten-fold way~\cite{freedhopkins}.
The fermionic groups 
\[
H_n(D) = (S(D) \otimes \operatorname{Pin}^+_n)_{0}
\]
will give the ten spacetime structure groups of Freed-Hopkins, see table \ref{tab:ten-fold way}.
\end{example}

\begin{table}[h!]
	\centering
	\resizebox{\columnwidth}{!}{%
		\renewcommand{\arraystretch}{1.5}
		\begin{tabular}{l|l|l||l|l|l|l|l|l|l|l}
			FH label $s$ & $0$ & $1$ & $0$ & $1$ & $2$ & $3$ & $4$ & $-3$ & $-2$ & $-1$ 
			\\
			\hline
			Superdivision Algebra $D$ & $\C$ & $\C \operatorname{l}_1$ & $\R$ & $\Cl_{+1}$ & $\Cl_{+2}$ & $\Cl_{+3}$ & $\mathbb{H}$ & $\Cl_{-3}$ & $\Cl_{-2}$ &$\Cl_{-1}$
			\\
			\hline
			Internal fermionic symmetry $S(D)$ & $\Spin_1^c$ & $\Pin_1^c$ & $\Spin_1$ & $\Pin_1^+$  & $\Pin_2^+$ & $\Pin_3^+$ & $\Spin_3$ & $\Pin_3^-$ & $\Pin_2^-$ & $\Pin_1^-$ 
			\\
			\hline
			Spacetime structure group $H_n(D)$ & $\Spin^c_d$ & $\Pin^c_d$ & $\Spin_d$ & $\Pin^-_d$ & $\Pin^{\tilde{c}-}_d$ & $G^+_d$ & $G^0_d$ & $G^-_d$ & $\Pin^{\tilde{c}+}_d$ & $\Pin^+_d$ 
			\\
			\hline
			Symmetry class & A & AIII & D & BDI & AI & CI &  C & CII & AII & DIII 
			\\
			\hline 
			Symmetries & $Q$ & $Q, K$ & - & $K$ & $Q, T'$ & $S, K$ & $S$ & $S, T$ & $Q, T$ & $T$
		\end{tabular}%
	}
	\caption{Abstract mathematical relations between the ten-fold way internal symmetry groups, spacetime structure groups and superdivision algebras.
	The notation for the spacetime structure groups is given in \cite{freedhopkins}.
	In the symmetries row $Q$ denotes $U_1$-charge, $K$ denotes an anti-unitary particle-hole symmetry such as a sublattice symmetry, $S$ denotes an internal $\Spin_3$-symmetry, $T$ denotes a time reversal with square $(-1)^F$ and $T'$ denotes a time-reversing symmetry with square $1$.}
	\label{tab:ten-fold way}
\end{table}

We conclude this section by deriving an exact sequence featuring $H_d$. 
This follows from the following general proposition of which the proof is straightforward. 
\begin{proposition}
		\label{prop: exact sequences}
		Let $G,H$ be fermionic groups. 
		If $H$ has a nontrivial central element $c \in H$, then there is an exact sequence of topological groups
		\[
		1 \to G \overset{i}{\to} G \otimes H \overset{\pi}{\to} H_b \to 1,
		\]
		where the maps are given by the obvious inclusion and projection.
		If $G$ additionally has a nontrivial grading homomorphism, then this restricts to an exact sequence
		\[
		1 \to G_0 \overset{i}{\to} (G \otimes H)_0 \overset{\pi}{\to} H_b \to 1.
		\]
	\end{proposition}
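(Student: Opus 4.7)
The plan is to verify, in turn, well-definedness of the maps, injectivity of $i$, surjectivity of $\pi$, and exactness at the middle, then to redo the surjectivity and exactness checks for the restricted sequence where the grading hypothesis is actually used.

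First, I would verify that $i\colon G\to G\otimes H$, $g\mapsto g\otimes 1$, and $\pi\colon G\otimes H\to H_b$, $g\otimes h\mapsto [h]$, are well-defined continuous homomorphisms. For $i$, the twisted multiplication in Definition~\ref{Def: Fermionic tensor product} reduces to componentwise multiplication whenever one of the $H$-factors is $1$, since the cocycle $c_G^{\theta(g_2)\theta(h_1)}$ is trivial when $h_1=1$. For $\pi$, I would check that $(g,h)\sim (c_G g,c_H h)$ is mapped consistently (since $[c_H h]=[h]$ in $H_b$), and that the twist $c_G^{\theta(g_2)\theta(h_1)}$ disappears in $H_b$ (it lies in the kernel of $G\otimes H\to H_b$), so $\pi$ is a homomorphism.

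Next, I would handle exactness of the first sequence. Injectivity of $i$: if $g\otimes 1=1\otimes 1$, then $(g,1)$ and $(1,1)$ differ in $G\times H$ by a power of $(c_G,c_H)$; since $c_H$ has order $2$ and is nontrivial, only the zeroth power realizes this, forcing $g=1$. Surjectivity of $\pi$: any $[h]\in H_b$ has $\pi(1\otimes h)=[h]$. Exactness in the middle: $\pi\circ i$ sends $g$ to $[1]$, so $\operatorname{im}(i)\subseteq \ker(\pi)$; conversely, if $\pi(g\otimes h)=1$ then $h\in\{1,c_H\}$, and using the defining equivalence $g\otimes c_H=(gc_G)\otimes 1$, the element lies in $i(G)$ in both cases.

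For the restricted sequence, injectivity of $i|_{G_0}$ and the inclusion $\operatorname{im}(i|_{G_0})\subseteq\ker(\pi|_{(G\otimes H)_0})$ follow from the unrestricted statement together with the fact that $i(G_0)\subseteq (G\otimes H)_0$ and that $[c_H]=[1]$ with $c_H$ even. The nontrivial point, and where the hypothesis on $G$ enters, is surjectivity of $\pi|_{(G\otimes H)_0}$: given $[h]\in H_b$, lift to $h\in H$. If $\theta_H(h)=0$, then $1\otimes h\in (G\otimes H)_0$ is a preimage; if $\theta_H(h)=1$, pick any $g_1\in G_1$ (which exists because $\theta_G$ is nontrivial), and then $g_1\otimes h$ has total grading $0$ and still projects to $[h]$. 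For the reverse inclusion of kernels, if $g\otimes h\in (G\otimes H)_0$ lies in $\ker\pi$, then $h\in\{1,c_H\}$ is even, so $g$ is even, and the same computation $g\otimes c_H=(gc_G)\otimes 1$ shows the element is in $i(G_0)$.

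The main subtlety, rather than a genuine obstacle, is keeping track of the equivalence relation $(g,h)\sim (c_G g,c_H h)$ simultaneously with the grading: one must notice that $c_H$ being even forces $gc_G$ to have the same grading as $g$, so the identification $g\otimes c_H=(gc_G)\otimes 1$ respects the even subgroup. Continuity is automatic since all operations in sight are smooth.
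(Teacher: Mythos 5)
Your proof is correct and is exactly the direct verification the paper has in mind (the paper omits the argument, calling it straightforward): injectivity of $i$ from the nontriviality of $c_H$, surjectivity via the lifts $1\otimes h$ (resp.\ $g_1\otimes h$ with $g_1$ odd), and kernel computation via the identification $g\otimes c_H=(gc_G)\otimes 1$. The only blemish is a harmless misattribution in your final paragraph: it is $c_G$ being even that makes $gc_G$ have the same grading as $g$, while $c_H$ being even is what forces $g$ to be even when $g\otimes h\in(G\otimes H)_0$ with $h\in\{1,c_H\}$; both facts hold by the definition of a fermionic group, so the argument stands.
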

%
%
	
	\begin{corollary}
		\label{cor: exact sequence}
	Let $G$ be a fermionic group with nontrivial central element and induced spacetime structure group $H_d$. 
	The sequence
	\[
	1 \to \Spin_d \to H_d \to G_b \to 1
	\]
	is exact.
	\end{corollary}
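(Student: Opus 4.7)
The plan is to obtain this sequence as an immediate special case of the preceding Proposition~\ref{prop: exact sequences}, applied with the roles of the two fermionic groups swapped relative to the statement there. Unpacking the definition, $H_d = (\Pin_d^+ \otimes G)_0$, so the target sequence is
\[
1 \to (\Pin_d^+)_0 \to (\Pin_d^+ \otimes G)_0 \to G_b \to 1,
\]
after identifying $(\Pin_d^+)_0 = \Spin_d$.

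To apply Proposition~\ref{prop: exact sequences} I take the proposition's ``$G$'' to be $\Pin_d^+$ and the proposition's ``$H$'' to be the corollary's $G$. The hypothesis ``$H$ has nontrivial central element'' is exactly the standing hypothesis of the corollary on $G$, and the additional hypothesis ``$G$ has nontrivial grading'' is satisfied by $\Pin_d^+$ via the determinant homomorphism $\theta\colon \Pin_d^+ \to O_d \xrightarrow{\det} \Z_2^T$ for $d \geq 1$. The second half of the proposition then yields precisely the desired exact sequence, with inclusion $i\colon \Spin_d \hookrightarrow H_d$ given by $s \mapsto s \otimes 1$ and projection $\pi\colon H_d \to G_b$ given on representatives by $s \otimes g \mapsto [g]$.

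The one small thing to check is that the map to $G_b$ coming out of the proposition agrees with the projection one naturally associates to $H_d$: this is immediate since the fermionic tensor product identifies $(-1)_{\Pin} \otimes 1 \sim 1 \otimes c_G$, so that under projection onto the $G$-factor modulo $\Z_2^c$ the element $s \otimes g$ is well-defined and lands in $G_b$; evenness in the fermionic tensor product grading $\theta_{\Pin}(s) + \theta_G(g) = 0$ imposes no constraint on the induced class in $G_b$ because every class in $G_b$ has representatives of both gradings in $G$, so the map to $G_b$ is surjective. Exactness at $H_d$ is then the content of the Proposition. No genuine obstacle arises; the work has already been done in Proposition~\ref{prop: exact sequences}, and the corollary is essentially a bookkeeping statement relating the abstract fermionic tensor product construction to the classical identification $(\Pin_d^+)_0 = \Spin_d$.
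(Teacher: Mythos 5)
Your proposal matches the paper's intended argument exactly: the corollary is presented as an immediate consequence of Proposition~\ref{prop: exact sequences}, applied with the proposition's ``$G$'' equal to $\Pin_d^+$ and its ``$H$'' equal to the corollary's $G$, together with the identification $(\Pin_d^+)_0 = \Spin_d$ and $(\Pin_d^+\otimes G)_0 = H_d$. One small correction to your side remark on surjectivity: it is not true that ``every class in $G_b$ has representatives of both gradings in $G$'' (the grading $\theta$ descends to $G_b$ since $c$ is even, so all representatives of a given class have the same grading); surjectivity instead holds because $\Pin_d^+$ has elements of both gradings, so for any $g\in G$ one can pick $e_1^{\theta(g)}\otimes g$ to land in the even part — which is precisely why the proposition requires its ``$G$'' to have nontrivial grading.
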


\subsection{2-group models for fermionic groups}
\label{Sec:fermskeletal}

One goal of this article is to give explicit algebraic characterizations of topological field theories.
To do this nicely in terms of a finite amount of concrete information, it is convenient to use combinatorial models that are as small as possible for all homotopic information involved.
For example, the space $SO_2$ can be made into a simplicial set by dividing the circle into $m$ points and $m$ line segments.
Algebraically we could describe this as the monoidal category $\mathcal{C}_m$ which has $\Z_m$ as its group of objects and 
\[
\Hom(n_1, n_2) = \{ k \in \Z: n_2 - n_1 \equiv k \pmod m \}
\]
in which composition is given by addition.
Of course, for most applications it is easiest to use $m = 1$.
To this description corresponds the simplicial set coming from the fact that $SO_2 = B\Z$ is the nerve of $\mathcal{C}_1 = * \DS \Z$, the category with one object and a $\Z$ amount of morphisms.
There is a monoidal equivalence $\mathcal{C}_1 \cong \mathcal{C}_m$ which maps the unique object to $0 \in \Z_m = \operatorname{ob} \mathcal{C}_m$ and the morphism $k \in \Z$ to the automorphism $mk \in \Hom_{\mathcal{C}_m}(0,0)$.
Note that this is not a strict isomorphism, but it is fully faithful and essentially surjective.
Since in the category $* \DS \Z$ all isomorphic objects are equal, we call it a \emph{skeletal model} for $\mathcal{C}_m$. 
A more abstract way to think of the category $\mathcal{C}_m$ is as a homotopy quotient or semidirect product of $2$-groups 
\[
\mathcal{C}_m = \Z_m \rtimes (* \DS \Z)
\]
where the action of $* \DS \Z$ on $\Z_m$ is induced by the canonical surjective homomorphism $\Z \to \Z_m = Z(\Z_m)$.
We refer the reader to Appendix \ref{App: 2-Group} for the basic theory of $2$-groups and to Appendix \ref{App: Exact} for their semidirect products and short exact sequences.

From the introduction above we can conclude that skeletal models are often easiest to describe.
However, for applications to fermionic groups in this paper, slightly larger models are more convenient.
For example, from the model $* \DS \Z$, we can easily describe the exact sequence of $2$-groups
\[
1 \to * \DS \Z \xrightarrow{2 \cdot } * \DS \Z \to * \DS \Z^c_2 \to 1
\]
corresponding to the fibration of topological spaces
\[
B\Spin_2 \to BSO_2 \to B^2 \Z^c_2.
\]
However, if we instead want to describe the exact sequence
\[
1 \to \Z^c_2 \to \Spin_2 \to SO_2 \to 1
\]
in terms of $2$-groups, we prefer to use the model $\mathcal{C}_2 = \Z_2^c \rtimes (* \DS \Z)$ with two objects $1$ and $c$ for $\Spin_2$.
We call this the \emph{fermionically skeletal model} of $\Spin_2$.
Then the map to $SO_2 = * \DS \Z$ is given by mapping a morphism $k \in \Hom(n_1, n_2)$ for $n_1,n_2 \in \Z_2$ to $k \in \Z$ itself.
Under the equivalence $* \DS \Z \cong \Z_2^c \rtimes(* \DS \Z)$ described in last paragraph (specializing to the case that $m := 2$), the map $* \DS \Z \cong \Spin_2 \to SO_2 = * \DS \Z$ is indeed given by multiplication by $2$ as above.

We generalize the above discussion from $\Spin_2$ to a general fermionic group $G$ that is not bosonic, i.e. $c \neq 1$.
In this paper, it will turn out to suffice to look at the truncation $\mathcal{G} := \pi_{\leq 1} G$ in which we killed all higher homotopy groups, the fundamental $2$-group of $BG$.
Therefore we will adapt the definition of a fermionic group to fermionic $2$-groups:

\begin{definition}
A \emph{fermionic 2-group} is a $\Z^T_2$-graded 2-group $\mathcal{G}= \mathcal{G}_0 \sqcup \mathcal{G}_1$ together with a braided monoidal functor $\Z_2^c\to Z(\mathcal{G})$ where $Z(\mathcal{G})$ is the Drinfeld center of $\mathcal{G}$ such that the image lies in the degree zero component of $\mathcal{G}$
\end{definition}

\begin{remark}
When unpacking the definition of a braided functor from $\Z_2^c$ to $Z(\mathcal{G})$ we find that it is equivalent to specifying an element $c\in \mathcal{G}_0$ together with a half braiding $\sigma_{c,g}\colon c\otimes g \to g \otimes c $ satisfying $\sigma_{c,c}=\id$ and an isomorphism $c\otimes c \to 1$ compatible with the half braiding.	 
This makes precise the encoding of a central element $c$ of square $1$.
\end{remark}

Given a fermionic group $G$, the fundamental $2$-group is a fermionic $2$-group in which most coherence data is strict.
However, the canonical $2$-group model for the fundamental $2$-group is usually to big for practical purposes.  
Therefore we want to work with skeletal models, for which the coherence data typically becomes more interesting.
However, just like for the fermionic group $\Spin_2$ above, we will not use a skeletal model for $\pi_{\leq 1} G$ but start with a skeletal model $\mathcal{G}_b$ for $\pi_{\leq 1} G_b$ instead.

Recall from Remark \ref{Rem: fermionic groups as maps} that the extension 
\[
1 \to \Z_2^c \to G \to G_b \to 1
\]
can be equivalently described as a map $\omega: B G_b \to B^2 \Z_2^c$.
This induces a map $B\pi_{\leq 1} G_b \to BG_b \to B^2 \Z_2^c$, which we can equivalently give as a map between skeletal $2$-groups $F: \mathcal{G}_b \to * \DS \Z_2^c$.

\begin{definition}
Let $G$ be a fermionic group and let $\mathcal{G}_b$ denote the skeletal model for the $2$-group $ \pi_{\leq 1} G_b$.
The \emph{fermionically skeletal model} of $G$ is the fermionic $2$-group 
\[
\Z_2^c \rtimes \mathcal{G}_b
\]
where $\mathcal{G}_b$ acts on $\Z_2^c$ via the map $\mathcal{G}_b \to  * \DS \Z_2^c$.\footnote{If $A$ is an abelian group, then the $2$-group $*\DS A$ acts on $A$ by the identity map $*\DS A \to *\DS A$.
Explicitly this means letting a path $a: * \to *$ in $*\DS A$ corresponding to $a \in A$ be mapped to the natural automorphism of the identity functor on the bicategory $* \DS A$ mapping $*$ to $a$.
In monoidal category language, $\rho(a)$ is an unpointed natural transformation from $A$ to $A$ with $\rho(a)_* = a$ and otherwise trivial.}
\end{definition}

Unlike in the canonical model of $\pi_{\leq 1} G$, the associator of the fermionically skeletal model is in general nontrivial.
However, the element $c \in \mathcal{G}$ will still strictly square to 1 and commute with other objects.
By the classification of maps between skeletal $2$-groups given in Lemma \ref{Lem:skeletalmap} in Appendix \ref{App: 2-Group}, we get an explicit description of $2$-group homomorphisms $F: \mathcal{G}_b \to  * \DS \Z_2^c$ as follows.
The data consists of a map $\Xi: \pi_0(G_b) \times \pi_0(G_b) \to \Z_2^c$ (the monoidality data of the functor $F$) and a map $\Gamma: \pi_1(G_b) \to \Z_2^c$ (the functor $F$ on morphisms $1 \to 1$).
The functor condition is equivalent to the equalities
\[
\label{eq:functor cond}
\Gamma(\gamma \delta) = \Gamma(\gamma) \Gamma(\delta) \qquad \Gamma(g \gamma g^{-1}) = \Gamma(\gamma)
\]
for $\gamma, \delta \in \pi_1(G_b)$ and $g \in \pi_0(G_b)$.
The pentagon is equivalent to the equality 
\[
\Gamma \circ \alpha_{\mathcal{G}_b} = d R
\]
of maps $\pi_0 (G_b) \times \pi_0(G_b) \times \pi_0(G_b) \to \Z_2^c$, where $\alpha_{\mathcal{G}_b}: \pi_0 (G_b) \times \pi_0(G_b) \times \pi_0(G_b) \to \pi_1(G_b)$ is the associator of $\mathcal{G}_b$.
We derive the condition that $\Gamma \circ \alpha_{\mathcal{G}_b} \in H^3(\pi_0(G_b), \Z_2^c)$ is trivial in cohomology and moreover is uniquely determined by $\Xi$.

$2$-morphisms of $2$-groups are unpointed monoidal natural transformations between two monoidal functors.
From Lemma \ref{Lem:skeletalmap} it follows that in this case they are given by maps $\sigma: \pi_0(G_b) \to \Z_2^c$ such that $\Xi_1 = \Xi_2 \cdot d \sigma$.
In other words, $\Xi_1 \Xi_2^{-1}$ measures the failure of $\sigma$ to be a homomorphism.
In the special case where $\Gamma \circ \alpha_{\mathcal{G}_b} = 0$ we derive that $\Xi \in Z^2(\pi_0(G_b), \Z_2^c)$ is a cocyle and so up to $2$-isomorphism only the class of $\Xi$ in $H^2(\pi_0(G_b), \Z_2^c)$ matters.
Otherwise, when we fix one reference $\Xi_0$ trivializing $\Gamma \circ \alpha_{\mathcal{G}_b}$ in cohomology, all other $\Xi$ are up to $2$-isomorphism still classified by $H^2(\pi_0(G_b), \Z_2^c)$.
In other words, if we fix $\Gamma$ such that equations \ref{eq:functor cond} hold and $\Gamma \circ \alpha_{\mathcal{G}_b}$ is zero in cohomology, then ways to make it into a homomorphism $\mathcal{G}_b \to * \DS \Z_2^c$ are classified up to $2$-isomorphism by an $H^2(\pi_0(G_b), \Z_2^c)$-torsor.
Summarizing the above, we have

\begin{proposition}
Let $\mathcal{G}_b$ be a skeletal $2$-group. Then equivalence classes of homomorphisms 
\[
\mathcal{G}_b \to * \DS \Z_2^c
\]
fit in the short exact sequence
\begin{align}
1 &\to H^2(\pi_0(G_b), \Z_2^c) \to \pi_0 \Hom(\mathcal{G}_b , * \DS \, \Z_2^c) 
\\
&\to  \{\Gamma \in \Hom(\pi_1(\mathcal{G}_b), \Z_2^c): \Gamma(g \gamma g^{-1}) = \Gamma(\gamma), \Gamma \circ \alpha_{\mathcal{G}_b} = 0 \in H^3(\pi_0(\mathcal{G}_b), \Z_2^c)\} \to 1 \ \ . \label{ses maps to Z2}
\end{align}
\end{proposition}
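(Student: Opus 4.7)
The plan is to treat this proposition as the formal packaging of the case analysis already carried out in the preceding paragraphs, so the work is mostly to check exactness at each spot in the sequence. The starting point is Lemma \ref{Lem:skeletalmap}, which tells us a homomorphism $F\colon \mathcal{G}_b \to *\DS\Z_2^c$ is specified by a pair $(\Xi,\Gamma)$, with $\Xi\colon \pi_0(G_b)\times \pi_0(G_b)\to \Z_2^c$ the monoidality constraint and $\Gamma\colon \pi_1(G_b)\to \Z_2^c$ the value of $F$ on $1$-morphisms, subject to: $\Gamma$ is a homomorphism, $\Gamma$ is conjugation-invariant, and the pentagon $\Gamma\circ \alpha_{\mathcal{G}_b} = d\Xi$ holds. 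A $2$-isomorphism between $(\Xi_1,\Gamma_1)$ and $(\Xi_2,\Gamma_2)$ exists only if $\Gamma_1 = \Gamma_2$, and it is given by a cochain $\sigma\colon \pi_0(G_b)\to \Z_2^c$ with $\Xi_1 = \Xi_2 \cdot d\sigma$.

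With this input I would define the rightmost map in \eqref{ses maps to Z2} as the assignment $[F]\mapsto \Gamma$. The three stated properties of $\Gamma$ are immediate: two come from functoriality and the third follows because the pentagon identity exhibits $\Gamma\circ \alpha_{\mathcal{G}_b}$ as a coboundary, hence as zero in $H^3(\pi_0(G_b),\Z_2^c)$. Surjectivity is the only real content: given any $\Gamma$ in the target set, we must construct some $\Xi$ with $d\Xi = \Gamma\circ \alpha_{\mathcal{G}_b}$, and this is exactly what the vanishing of $[\Gamma\circ\alpha_{\mathcal{G}_b}]$ in $H^3$ provides.

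To compute the kernel, fix one lift $(\Xi_0,\Gamma)$ realizing a given $\Gamma$. Any other lift $(\Xi,\Gamma)$ satisfies $d\Xi = d\Xi_0$, so $\Xi \Xi_0^{-1} \in Z^2(\pi_0(G_b),\Z_2^c)$ is a $2$-cocycle. By the description of $2$-isomorphisms above, $(\Xi,\Gamma)$ and $(\Xi_0,\Gamma)$ are equivalent precisely when $\Xi\Xi_0^{-1} = d\sigma$ for some $\sigma$, i.e.\ when this class vanishes in $H^2(\pi_0(G_b),\Z_2^c)$. Thus the fiber over $\Gamma$ is either empty or an $H^2(\pi_0(G_b),\Z_2^c)$-torsor, and surjectivity shows the fibers are nonempty. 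Identifying the fiber over the trivial $\Gamma$ with the subgroup $H^2(\pi_0(G_b),\Z_2^c)\subset \pi_0\Hom(\mathcal{G}_b,*\DS\Z_2^c)$ via the chosen basepoint $\Xi_0$ (which we may take to be the trivial cocycle when $\Gamma=0$) gives the left exactness.

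The main subtlety I expect is book-keeping signs and conventions in the pentagon equation so that ``$d\Xi$'' really is the group cohomology differential of $\Xi$ valued in $\Z_2^c$; once that is settled, the short exact sequence is a direct translation of the classification of $2$-functors and $2$-natural transformations provided by Lemma \ref{Lem:skeletalmap}. No extra ingredients beyond standard group cohomology and the cited lemma are needed.
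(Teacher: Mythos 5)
Your proposal is correct and follows essentially the same route as the paper: both rest on Lemma \ref{Lem:skeletalmap} to identify homomorphisms with pairs $(\Xi,\Gamma)$ subject to the functoriality conditions on $\Gamma$ and the pentagon $\Gamma\circ\alpha_{\mathcal{G}_b}=d\Xi$, and both identify $2$-isomorphisms with cochains $\sigma$ witnessing $\Xi_1=\Xi_2\cdot d\sigma$, so that the fiber over a fixed $\Gamma$ is an $H^2(\pi_0(G_b),\Z_2^c)$-torsor. Your write-up is if anything slightly more explicit about exactness at each spot, but no new ideas are involved.
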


We can derive the above data from the original fermionic group as follows.
The short exact sequence of Lie groups
\[
1 \to \Z_2^c \to G \to G_b \to 1
\]
induces a long exact sequence of homotopy groups of which the tail looks like
\[
1 \to \pi_1(G) \to \pi_1(G_b) \xrightarrow{\Gamma} \Z_2^c \to \pi_0(G) \to \pi_0(G_b) \to 1.
\]
In particular notice how the homomorphism $\Gamma$ is nonzero if and only if $1$ and $c$ are connected by a path in $G$.
Note that $\Gamma$ is given by $\pi_1$ of the map $G_b \to B^2 \Z_2$, which under the homotopy hypothesis corresponds to what the monoidal functor does to automorphisms of the identity object.
Therefore it corresponds to the morphism we called $\Gamma$ before.
In the case where $\Gamma$ is trivial we are in the kernel of the map of sequence \ref{ses maps to Z2} and so we get a canonical class $\Xi \in H^2(\pi_0(G_b), \Z_2^c)$.
This is the class classifying the short exact sequence
\[
1 \to \Z_2^c \to \pi_0(G) \to \pi_0(G_b) \to 1.
\]
More generally $\Xi$ has a homotopical interpretation as providing the comparison data of the $k$-invariants of the domain and range of the map of $2$-types $\pi_{\leq 2} G_b \to B^2 \Z_2$.
In this special case where the $k$-invariant of the codomain is trivial, this boils down to a trivialization of the composite
\[
B\pi_0(G) \xrightarrow{k} B^3 \pi_1(G) \to  B^3 \Z_2^c
\]
at least when the action of $\pi_0(G)$ on $\pi_1(G)$ is trivial.
As the $k$-invariant corresponds to the associator under the homotopy hypothesis, this is the description of $R$ from before.

\begin{example}
\label{Ex: pin ferm gp}
Consider the case $\pi_0 G_b = \Z_2$ and $\pi_1 G_b = \Z$.
The two possible actions of $\pi_0G_b$ on $\pi_1 G_b$ are the trivial one and $n \mapsto -n$, which we denote $\Z_-$.
The groups classifying associators are $H^3(B\Z_2,\Z) = 0$ for the trivial action and $H^3(B\Z_2,\Z_-) = \Z_2$ for the nontrivial one.
Therefore there are three isomorphism classes of $2$-groups with $\pi_0 G_b = \Z_2$ and $\pi_1 G_b = \Z$.
They are the fundamental 2-groups of 
\[
SO_2 \times \Z_2, \quad O_2 \cong SO_2 \rtimes \Z_2 \text{ and } \Pin^-_2 \cong \frac{SO_2 \rtimes \Z_4}{\Z_2}
\]
For example, the nontrivial element of $H^3(B\Z_2,\Z_-) \cong H^2(B\Z_2,U(1)_-)$ corresponds to the nontrivial extension of Lie groups of $\Z_2$ by $U(1)$ with nontrivial action of $\Z_2$ on $U(1)$, which is $\Pin^-_2$.

We study the possible maps $\mathcal{G}_b \to *\DS \Z_2$ for the $2$-group corresponding to $O_2$.
The cohomology group $H^2(BO_2, \Z_2^c) \cong \Z_2^2$ is generated by $w_2$ and $w_1^2$ to which correspond four extensions of Lie groups
\[
1 \to \Z_2^c \to G \to O_2 \to 1.
\]
These are the trivial extension, $\Pin^+_2, \Pin^-_2$ and the extension in which multiplying $A_1, A_2 \in O_2$ results in an extra $c \in \Z_2^c$ if and only if $\det A_1 = \det A_2 = -1$.
In terms of the corresponding skeletal $2$-group $\mathcal{G}_b = \Z_2\rtimes * \DS \Z$, we can describe the four maps $\mathcal{G}_b \to * \DS \Z_2^c$ as follows.
There are two homomorphisms $\Gamma: \Z = \pi_1(\mathcal{G}_b) \to \Z_2^c$; the trivial one and the surjective one.
The extra condition $\Gamma(n) = \Gamma(-n)$ that appears when the $\pi_0(G_b)$-action on $\pi_1(G_b)$ is nontrivial is automatically satisfied.
The pentagon condition is empty, because the associator of $\mathcal{G}_b$ is trivial.
If $\Gamma$ is nontrivial, then there is a loop in $G_b$ that lifts to a path $1 \to c$ and so the extension is either $\Pin^{+}_2$ or $\Pin^-_2$.
Next there are two possibilities for $\Xi$ up to $2$-isomorphism since $H^2(B\Z_2,\Z_2^c) = \Z_2$.
\end{example}


We proceed to write out the definition of the fermionically skeletal model $\Z_2^c \rtimes \mathcal{G}_b$ as a monoidal category.
We use the above explicit description of the $2$-group homomorphism $\mathcal{G}_b \to * \DS \Z_2$ in terms of $\Gamma$ and apply the definition of the semidirect product from Appendix \ref{App: 2-Group}.
The objects are the set $\pi_0(G_b) \times \Z_2^c$ in which the tensor product is given by 
\[
g_1 \otimes g_2 = \Xi_{g_1,g_2} g_1 g_2 \quad c \otimes c = 1 \quad c \otimes g = g \otimes c.
\]
In other words, in case that $1 \neq c \in \pi_0(G)$, this is exactly $\pi_0(G)$.
However, in the other case when $\pi_0(G) \cong \pi_0(G_b)$, this collection is twice as large.
Given $\epsilon_1, \epsilon_2 \in \{\pm 1\}$ and $g_1,g_2 \in \pi_0(G)$ the hom-set $g_1 c^{\epsilon_1} \to g_2 c^{\epsilon_2}$ is empty unless $g_1 = g_2$.
In that case we have
\[
\Hom_{\Z_2^c \rtimes \mathcal{G}_b}(g c^{\epsilon_1}, g c^{\epsilon_2}) = \{g \gamma \in  \Hom_{\mathcal{G}_b}(g,g) : \Gamma(\gamma) = c^{\epsilon_1 + \epsilon_2}\}
\]
where we used that by definition of the skeletal model, all morphisms $g \to g$ in $\mathcal{G}_b$ are of the form $g \gamma$ for $\gamma \in \pi_1(\mathcal{G}_b)$.
Composition and tensor product of morphisms in $\Z_2^c \rtimes \mathcal{G}_b$ is given by composition and tensor product in $\mathcal{G}_b$, after appropriately changing the domain and codomain.
The associator is similarly induced from the associator of $\mathcal{G}_b$.
The reader should be warned that even though a skeletal $2$-group $\mathcal{G}_b$ with trivial associator yields a fermionically skeletal $2$-group with trivial associator, the associator of the skeletal model of $\Z_2^c \rtimes \mathcal{G}_b$ might be nontrivial.

To make compositions in the fermionically skeletal model more intuitive, we introduce the following notation.
Given $\gamma \in \pi_1(G_b)$, we denote by the same symbol the morphism $\gamma: 1 \to \Gamma(\gamma)$ in $\Z_2^c \rtimes \mathcal{G}_b$.
The other morphism in $\Z_2^c \rtimes \mathcal{G}_b$ which projects to $\gamma \in \pi_1(G_b)$ under the map $G \to G_b$ is then given by $c \gamma = \id_c \otimes \gamma: c \to c \Gamma(\gamma)$.
We can then compose morphisms in $\mathcal{G}_b$ keeping in mind that $g_1 \otimes g_2 = \Xi_{g_1,g_2} g_1 g_2$ and of course $\gamma g$ possibly is not equal to $g \gamma$ already in $\mathcal{G}_b$.
So for example, if $\Gamma(\gamma) = c$, then $c\gamma \circ \gamma$ is a loop in $\mathcal{G}$.

We briefly turn to describing the opposite fermionic group in terms of $2$-group language.
Note that there is a single non-trivial $2$-group isomorphism
\[
\phi: \Z^T_2 \times *\DS \Z^c_2 \to \Z^T_2 \times * \DS \Z^c_2
\]
given by the identity functor together with nontrivial monoidality data over the object pair $(T, T)$.
Since this is the only nontrivial $2$-group isomorphism up to $2$-isomorphism, it corresponds under delooping to the (not nullhomotopic) map $\phi$ defined in Remark \ref{Rem: fermionic groups as maps}. 
Therefore, if $\mathcal{G}_b \to \Z_2^T \times *\DS \Z_2^c$ defines a fermionic $2$-group, to describe its opposite we have to compute the composition
\[
\mathcal{G}_b \to \Z_2^T \times *\DS \Z_2^c \xrightarrow{\phi} \Z_2^T \times *\DS \Z_2^c \to * \DS \Z_2^c.
\]
Since $\phi$ is the identity as a functor, the opposite has the same $\Gamma$.
A short computation shows that its $\Xi$ is changed to $\Xi^{\op}_{g_1,g_2} := c^{\theta(g_1) \theta(g_2)} \Xi_{g_1, g_2}$.

We now turn to spacetime structure groups.
Recall the exact sequence from Corollary \ref{cor: exact sequence}
\[
1 \to \Spin_d \to H_d \to G_b \to 1
\]
which we want to restrict to an exact sequence of $2$-groups
\[
1 \to \pi_{\leq 1} \Spin_d \to \pi_{\leq 1} H_d \to \pi_{\leq 1} G_b \to 1
\]
for the cases $d = 1,2$.
As before we will work with a skeletal model of $\pi_{\leq 1} G_b$ and a fermionically skeletal model of $\Spin_d$.
The idea is to take a set-theoretic section of the projection map and study the exact sequence through the behaviour under multiplication.
In other words we want to compute the action of the latter $2$-group on the first and realize the middle $2$-group as a semidirect product as explained in Appendix~\ref{App: Exact}.

Let $e_1 \in \Pin^+_d$ denote the generator corresponding to a vector in $\R^d$ of unit length and let $\Xi: \pi_0(G_b) \times \pi_0(G_b) \to \Z_2^c$ and $\Gamma: \pi_1(G_b) \to \Z_2^c$ be classifying $\mathcal{G}_b \to * \DS \Z_2^c$ as before.
Let $s: G_b \to G$ be a section which in general is neither a group homomorphism nor continuous. 
Then we can define a set-theoretic section of $H_d \to G_b$ by $s(g) \otimes e_1^{\theta(g)} \in H_d$ which induces a set theoretic section $\Obj \pi_{\leq 1} G_b \to \pi_{\leq 1} H_d$ which is the main ingredient in the computation of the action resulting from a short exact sequence at the beginning of Appendix~\ref{App: Exact}. The action of $g$ on $\pi_{\leq 1} \Spin_d$ is by conjugation with this element which turns out to be trivial for $d=1$. 
The short computation
\[
(s(g_1) \otimes e_1^{\theta(g_1)} ) (s(g_2) \otimes e_1^{\theta(g_2)}) = c^{\theta(g_1) \theta(g_2)} s(g_1) s(g_2) \otimes e_1^{\theta(g_1) + \theta(g_2)} = c^{\theta(g_1) \theta(g_2)} \omega_{g_1,g_2} s(g_1 g_2) \otimes  e_1^{\theta(g_1 g_2)}
\]
shows that the failure of the section to be multiplicative is measured by the $2$-cocycle $\omega^{\op}: G_b \times G_b \to \Z_2$ corresponding to the extension
\[
1 \to \Z_2^c \to G^{\op} \to G_b \to 1.
\]
Since this $2$-cocycle induces $\Xi^{op}$ on the level of $2$-groups, we can conclude that for $d =1$ the action of $\mathcal{G}_b$ on $\Spin_1 = \Z_2^c$ induced by the exact sequence of $2$-groups corresponds to the map $\mathcal{G}_b \to * \DS \Z_2^c$ classifying $G^{\op}$, so consisting of $\Xi^{\op}$ and $\Gamma$.

We now consider $d = 2$ and want to understand the induced action of $\mathcal{G}_b$ on $\Spin_2 = \Z_2^c \rtimes B\Z$.
Since we already computed the action of $\mathcal{G}_b$ on $\Z_2^c \subseteq \Spin_2$, we now consider the action on a generating morphism $\eta: 1 \to c$ in $\Spin_2$.
In explicit topological terms we express it as
\[
\eta(t) = \cos(\pi t) + e_1 e_2 \sin(\pi t),
\]
where $e_2 \in \Pin^+_2$ comes from a vector in $\R^2$ orthonormal to $e_1$.
We now compute the action of $g \in G_b$ on $\eta$ inside $H_2$ as
\begin{align*}
\eta(t) (s(g) \otimes e_1^{\theta(g)}) &= s(g) \otimes (e_1^{\theta(g)} \cos(\pi t) + (e_1 e_2) e_1^{\theta(g)} \sin(\pi t))
\\
&=  s(g) \otimes (e_1^{\theta(g)} \cos(\pi t) + c^{\theta(g)} e_1^{\theta(g)} (e_1 e_2) \sin(\pi t))
\\
&=  s(g) \otimes (e_1^{\theta(g)} \cos((-1)^{\theta(g)} \pi t) + e_1^{\theta(g)} (e_1 e_2) \sin((-1)^{\theta(g)}\pi t))
\\
&= (s(g) \otimes e_1^{\theta(g)}) \eta((-1)^{\theta(g)} t).
\end{align*}
Note that the inverse of $\eta$ is $\eta(1-t) = c\eta(-t)$, so that $t \mapsto \eta(-t)$ is $c \eta^{-1}$
We obtain that $\pi_0(G_b)$ maps $\eta$ to $\eta$ when $\theta(g) = 0$ and to the other generating morphism $c \eta^{-1}: 1 \to c$ when $\theta(g) = 1$.

There is more data involved in describing the action (in particular at the level of morphisms), but these are straightforward to compute following Appendix~\ref{App: Exact}.

\begin{remark}
For the purpose of exposition, we have been careful to distinguish for example the monoidal category $* \DS \Z$ seen as a $2$-group from the realization of its nerve $B\Z$.
In the rest of the paper we will often implicitly apply the homotopy hypothesis and use the notation $B\Z$ for both.
\end{remark}

\subsection{Fermionically graded algebras}
\label{Sec:fermgraded}

If $G$ is a discrete group, a strongly $G$-graded algebra is an algebra $\mathcal{A}$ together with a direct sum decomposition of vector spaces
\[
\mathcal{A} = \bigoplus_{g\in G} A_g
\]
with the property that the multiplication map restricts to vector space isomorphisms $A_g \otimes_A A_h \cong A_{gh}$.
In \cite{oritthesis}, it is shown that two-dimensional extended topological field theories with target $\operatorname{Alg}$ and structure group $SO_2 \times G$ are classified by strongly graded $G$-algebras with a symmetric Frobenius structure $\lambda: \mathcal{A} \to \C$ such that for all $g \in G$ not equal to $1$
\[
\lambda(a_g) = 0 \quad \forall a_g \in A_g.
\]
Since the next few sections in a certain sense generalize her results to the case where $G$ is a fermionic group, we briefly discuss analogues of strongly graded algebras to this setting.

So let $(G, c, \theta)$ be a discrete fermionic group.
The idea motivated by physics is that time-reversing elements of $G$ should anti-commute with $i$ in a similar way to how they are required to act antilinearly in Definition \ref{Def: F Rep}.
To include fermions properly, we will have to replace algebras over $\C$ with superalgebras over $\C$, which are defined to be $\Z_2$-graded algebras 
with the appropriate Koszul signs.
In Appendix~\ref{App: Super alg}, we summarize the basic definitions and technical results on superalgebras we need in the main text.
We start with a definition which does not take into account the fermion parity element $c \in G$:
\begin{definition}
A \emph{strongly $(G, \theta)$-graded algebra} is a complex super vector space $\mathcal{A}$ which is a superalgebra over $\R$ together with a direct sum decomposition of complex super vector spaces
\[
\mathcal{A} =  \bigoplus_{g\in G} A_g
\]
such that multiplication defines a real-linear isomorphism $A_g \otimes_{A_e} A_h \cong A_{gh}$ and $a_g i = (-1)^{\theta(g)} i a_g$ if $a_g \in A_g$.
\end{definition}

Note that the definition recovers the notion of a strongly $G$-graded algebra over $\C$ in the case $\theta$ is trivial.
We now additionally want to impose that $c \in G$ corresponds to the $\Z_2$-grading of the superalgebra, something that will be motivated by the spin-statistics connection explained later.
If $A$ is a superalgebra, we denote the $(A,A)$-bimodule associated to the grading isomorphism $(-1)^F_A: A \to A$ by $A_{(-1)^F}$, see Definition \ref{def:spin-statistics bimodule}.

\begin{definition}
A \emph{(strongly) fermionically graded algebra} is a $(G,\theta)$-graded algebra
\[
\mathcal{A} =  \bigoplus_{g\in G} A_g
\]
such that $A_c = A_{(-1)^F}$ compatibly with the multiplication in the sense that
\[
 (-1)^F \cdot (-1)^F = 1 \quad (-1)^F a_g = (-1)^{|a_g|} a_g (-1)^F \in A_{g c}
\]
for all $a_g \in A_g$.
In the case $G$ is bosonic, i.e. $c = 1$, we call $\mathcal{A}$ a \emph{bosonically graded algebra} instead and we still require $A_c = A = A_{(-1)^F}$.
\end{definition}

\begin{example}
Let $\theta: H \to \Z_2^T$ be a bosonic group and let $G = H \times \Z_2^c$ be the induced split fermionic group.
Let $\mathcal{B}$ be a strongly $(H,\theta)$-graded algebra.
Then 
\[
\mathcal{B} \oplus \mathcal{B}_{(-1)^F} = \frac{\mathcal{B}[x = (-1)^F]}{(x^2 -1,xb = (-1)^{|b|} bx)}
\]
 is a fermionically graded algebra.
 For example, if $H$ is trivial and $\mathcal{B} = Cl_{p,q}$ is a Clifford algebra, then 
 \[
 \mathcal{B} \oplus \mathcal{B}_{(-1)^F} \cong
 \begin{cases}
 Cl_{p,q} \oplus Cl_{p,q} & p-q \equiv 0 \pmod 4
   \\
  Cl_{p,q+1}  & p-q \equiv 1 \pmod 4
 \\
  \C l_{p,q} &  p-q \equiv 2 \pmod 4
  \\
  Cl_{p+1,q} & p-q \equiv 3 \pmod 4
 \end{cases}
 \]
To show this, let $\operatorname{vol} := e_1 \dots e_{p+q}$ be a volume element.
  Note that $a:= (-1)^F \operatorname{vol}$ is of degree $(-1)^{p+q} = (-1)^{p-q}$ and graded commutes with all $e_i$.
  Moreover, $a^2 = (-1)^{p+q} \operatorname{vol}^2$ and a short computation shows that $\operatorname{vol}^2 = 1$ if $p-q \equiv 0,1 \pmod 4$ and $-1$ otherwise.
  Therefore $a^2 = 1$ when $p-q \equiv 0,3 \pmod 4$ and $-1$ otherwise.
So we see that there are four cases as above.
\end{example}

\begin{example}
\label{ex:pin-}
Let $G = \Pin^+_1 = \Z_2^c \times \Z_2^T$ be the fermionic group with a single time-reversal of square $1$.
We look for fermionically $G$-graded algebras $\mathcal{A}$ with $A_1 = \C$. 
Let $x_T \in A_T$ denote a preferred basis element.
All of $\mathcal{A}$ is uniquely determined once we make the choice whether $A_T = \C$ or $\Pi \C$ and what the square of $x_T$ is.
Because of reasons that will become apparent when we will discuss this example in the setting of two-dimensional TFTs (Example \ref{ex:pin-tft}), we restrict to the cases $x_T^2 = \pm 1$.
Ignoring the $\Z_2^T$-grading, these choices result in the following real superalgebras
\[
A \oplus A_T \cong
\begin{cases}
M_2(\R) & x_T^2 = 1, |x_T| = 0
\\
\mathbb{H} & x_T^2 = -1, |x_T| = 0
\\
Cl_{+2}& x_T^2 = 1, |x_T| = 1
\\
Cl_{-2}& x_T^2 = -1, |x_T| = 1
\end{cases}
\]
From the last example we also obtain
\[
\mathcal{A} \cong
\begin{cases}
M_2(\R) \oplus M_2(\R)  & x_T^2 = 1, |x_T| = 0
\\
\mathbb{H} \oplus \mathbb{H} & x_T^2 = -1, |x_T| = 0
\\
\C l_{2} & x_T^2 = 1, |x_T| = 1
\\
\C l_{2}& x_T^2 = -1, |x_T| = 1
\end{cases}
\]
Note that even though the last two algebras are isomorphic as superalgebras, they are not isomorphic as $G$-graded algebras.

If we would have started with the other Morita-invertible complex superalgebra $A_1 =  \C l_1$ up to Morita equivalence, we would get a similar result as follows.
First of all, note that there are again two isomorphism classes of invertible $(A,A)$-bimodules given by $A$ and $\Pi A$.
If $|x_T| =0$, then $x_T$ satisfies $e x_T = x_T e$ and otherwise $e x_T = - x_T e$ without loss of generality because $\Pi \C l_1 \cong (\C l_1)_{(-1)^F}$.
In case $|x_T| = 0$, $x_T$ and $i$ generate an algebra $B$ isomorphic to $M_2(\R)$ if $x_T^2 = 1$ and $\mathbb{H}$ otherwise.
Because these generator commute with $e$ we obtain
\[
A \oplus A_T \cong B \otimes_\R \C l_1 \cong M_2(\C l_1)
\]
in both cases, even though they are not isomorphic as graded algebras.
Note that $\C l_1 \oplus (\C l_1)_{(-1)^F} \cong \C l_2$ because $ie(-1)^F$ and $e$ are anticommuting odd generators so that
\[
\mathcal{A} \cong M_2(\C l_1)
\]

In case $|x_T| = 1$, the three generators $T, e$ and $iT$ are odd and mutually anticommute, so that $A \oplus A_T$ is a Clifford algebra.
Computing the square $(iT)^2 = T^2$ results in
\[
A \oplus A_T \cong 
\begin{cases}
Cl_{+3} & T^2 = 1
\\
Cl_{1,2} & T^2 = -1
\end{cases}
\]
All in all we obtain
\[
\mathcal{A} \cong
\begin{cases}
M_2(\C l_1)  & x_T^2 = 1, |x_T| = 0
\\
M_2(\C l_1) & x_T^2 = -1, |x_T| = 0
\\
\C l_{4} & x_T^2 = 1, |x_T| = 1
\\
\C l_{2,2} & x_T^2 = -1, |x_T| = 1
\end{cases}
\]
We have now specified eight non-isomorphic fermionically $G$-graded algebras for which $A_1$ is Morita invertible. 
\end{example}

Later one we also need a generalisation of the definition of fermionically graded-algebras to 2-groups. 
We start with the most general definition

\begin{definition}\label{Def: 2-group graded alg}
Let $(\mathcal{G}=\mathcal{G}_0\sqcup \mathcal{G}_1 ,1, \alpha, c, \sigma_{c,-}, \omega\colon c\otimes c \to 1 )$ be a fermionic 2-group. A \emph{$\mathcal{G}$-graded algebra} is a complex super vector space
\begin{align}
\mathcal{A}= \bigoplus_{g\in \Obj(\mathcal{G})} A_g
\end{align} 
together with 
\begin{itemize}
	\item an element $1\in A_1$,
	\item $\R$-linear maps $\cdot \colon A_g\otimes A_{g'}\to A_{g\otimes g'}$, 
	\item and
$\C$-linear maps $F_\gamma\colon A_g\to A_{g'}$ for all morphisms $\gamma\colon g\to g'$ 
\end{itemize}
such that $1$ is a unit for the multiplication $\cdot$, and 
\begin{align}
F_{\alpha(g,g',g'')}[(a_g\cdot a_{g'})\cdot a_{g''}] & = a_g\cdot(a_{g'}\cdot a_{g''}) \\
 a_g i &= (-1)^{\theta(g)} i a_g \\ 
 F_{\gamma_1 \otimes \gamma_2} (a_{g_1}  a_{g_2}) &= F_{\gamma_1}(a_{g_1}) F_{\gamma_2}(a_{g_2}) \text{ for all morphisms } \gamma_1: g_1 \to h_1, \gamma_2: g_2 \to h_2
\end{align}
and $A_c=A_{(-1)^F}$ as a bimodule over $A_1$ such that the data of $\omega: c^2 \cong 1$ induces the multiplication in the sense that
\begin{align}
F_{\omega}[(-1)^F\cdot (-1)^F]=1 
\end{align}
holds and the commutation data $\sigma_{c,g}\colon g\otimes c \cong c \otimes g$ induces the naturality of $(-1)^F$ in the sense that
\begin{align}
F_{\sigma_{c,g}}[(-1)^F\cdot a_g] = (-1)^{|a_g|} a_g\cdot (-1)^F
\end{align}
holds.
\end{definition}

Note that in the case the fermionic 2-group is actually a 1-group the definition reduces to the definition of a fermionicaly graded algebra given above. 
We call a fermionically 2-group graded algebra \emph{strongly graded} if the multiplication induces isomorphisms $A_g\otimes_{A_e} A_{g'}\to A_{g\otimes g'}$.

\begin{remark}
It would be better to include the identification of the $(A_1,A_1)$-bimodules $A_{(-1)^F}$ and $A_c$ as data. However, both definitions lead to equivalent bicategories and hence we will not dwell on this minor adaptation further.
\end{remark}

\begin{remark}
Recall that the composition of morphisms in a $2$-group $\mathcal{G}$ is uniquely determined by the tensor product of morphisms by suitably translating the composable morphisms by tensoring with objects of the $2$-group.
Using such observations the condition an algebra graded by a fermionic $2$-group has to satisfy for morphisms is equivalent to the three conditions
\begin{align*}
F_\gamma\circ F_{\gamma'} &= F_{\gamma\circ \gamma'} \text{ for all composable morphisms $\gamma, \gamma'$} \\
a_{k'}\cdot F_\gamma(a_k) &=  F_{k' \otimes \gamma}(a_{k'}\cdot a_k) \text{ for all $a_k\in A_k$ and $a_{k'}\in A_{k'}$} \\ 
F_\gamma(a_k)\cdot a_{k'}  &=  F_{\gamma \otimes k'}(a_k \cdot a_{k'} ) \text{ for all $a_k\in A_k$ and $a_{k'}\in A_{k'}$} 
\end{align*}
\end{remark}

\begin{remark}
A lot of data in the definition is redundant. For example, it is enough
to specify $A_g$ for one representative in every isomorphism class of objects and $F_\gamma $ for morphisms of the form $\gamma\colon 1 \to g$.   
A very efficient presentation uses the fermionically skeletal model of the $2$-group.
\end{remark}

\section{Spin statistics and reflection structures for non-extended topological field theories} 
We have seen in Section~\ref{Sec:spacetime group} how to associate to an internal symmetry group $G$ 
a spacetime structure group $H_d$. In this section we define what it means for non-extended topological $H_d$-field theories with values in super vector spaces to satisfy spin statistics and admit a reflection structure.

\subsection{Physical motivation for the definition}\label{spin-st}

We start by motivating the definitions from a physical perspective. The mathematically inclined reader can
skip this section without problem.
	The spin-statistics connection tells us that for a particle we can determine its statistics (fermionic/bosonic) from its spin (integer/half-integer) and vice-versa.
The spin of a particle is determined by how it transforms under a Lorentz transformation.
Assuming Euclidean signature for simplicity, the Lorentz group becomes the basic spacetime structure group $SO_d$ under Wick-rotation, where $d$ is the dimension of spacetime. 
Bosonic particles live in some irreducible representation $(V,\rho)$ of $SO_d$.
These representations are classified by a single nonnegative integer $a \in \Z$ called spin.
If we want to define for example a particle with spin $a = 1/2$, we will need rotation with $2 \pi$ to give $-1$.
This should not be surprising, since particles with half-integer spin actually do not transform under a representation of $SO_d$, but instead under a representation of the spin group $\Spin_d$.
The spin group is a double cover $\Spin_d \to SO_d$; its kernel has two elements $1$ and $c$ with $c^2 = 1$.
Representations with integer spin satisfy $R(c) = 1$ and therefore give representations of $SO_d$ as well.
However, this is not true for representations with half-integral spin, which instead have $R(c) = -1$.
The spin-statistics theorem therefore can be reformulated as follows: particles on which $R(c)$ acts by $-1$ should satisfy Fermi statistics, while particles on which $R(c)$ acts by $1$ should satisfy Bose statistics.

Mathematically, the distinction between bosonic and fermionic statistics is formalized by assuming that the Hilbert space of states $\mathcal{H}$ is a superspace, i.e. has a $\Z_2$-grading $\mathcal{H} = \mathcal{H}_0 \oplus \mathcal{H}_1$.
If $\mathcal{H}$ is the Fock space of some single particle Hilbert space, then the grading operator is the parity $(-1)^F$ of the fermion number $F$.
In other words, $\mathcal{H}_0$ consists of multi-particle states of bosons and states consisting of an even amount of fermions.
Even if $\mathcal{H}$ is not a second quantized one particle space, we will abuse notation and still denote this operation by $(-1)^F$.
The actual Fermi-statistics is then implemented by requiring Koszul signs in the superspace $\mathcal{H}$.
In mathematical jargon one could say that we take the nontrivial (symmetric) braiding on the monoidal category $(\sVect_\C,\otimes)$.
We can now formulate the spin-statistics connection: the element $c \in \Spin_d$ of the (Euclidean) spin Lorentz group is required to act by $(-1)^F$ on $\mathcal{H}$. Recall that a general spacetime
structure group $H_d$ associated to an internal symmetry group $G$ has a preferred element $c\in H_d$ 
and hence the previous definition generalises to $H_d$. Note that if $G$ is bosonic, i.e. $c_G=1$ then
any representation satisfying the spin-statistics connection needs to be completely even. 

The classification result for one-dimensional topological field theories Proposition~\ref{Prop: 1D TFT} is not only mathematically but also physically unsatisfactory.
Indeed, one would expect a time-reversing symmetry $T$ to act antilinearly on $V$, i.e. to be a complex-linear map $\overline{V} \to V$.
Instead, the mathematical definition of a topological field theory gives a map $V^* \to V$.
Luckily, under the natural assumption that our state space is a Hilbert space, the Hilbert space inner product will provide a canonical choice of isomorphism $\overline{V} \cong V^*$.
Moreover, we expect time reversing symmetries to act (anti-)unitarily.
These requirements can be implemented mathematically by requiring that our field theory is unitary, which in the current Euclidean framework is called reflection positive. 
To allow for non-unitary field theories, we will however require the weaker notion of a reflection structure, which replaces the Hilbert space by a Hermitian structure that is not necessarily positive.

To illustrate the notion of a reflection structure, consider the case where our topological field theory is bosonic with structure group $H = SO_d$.
We have a notion of orientation-reversal on our manifolds and bordisms, which we suggestively write $Y \mapsto \overline{Y}$.
One might then be tempted for a TFT to postulate a relationship between $\mathcal{Z}(\overline{Y})$ and $\overline{\mathcal{Z}(Y)}$, for example one might require that the partition function satisfies $\mathcal{Z}(\overline{X}) = \overline{\mathcal{Z}(X)}$ for closed $d$-dimensional manifolds $X$.
On objects, such structure becomes data, which one can express mathematically using the observation that orientation-reversal equips $\Bord_{d}^{SO_d}$ with a symmetric monoidal $\Z_2$-action:

\begin{definition}
A \emph{reflection structure} on a $d$-dimensional (non-extended oriented) TFT $\mathcal{Z}: \Bord_{d}^{SO_d} \to \Vect$ is symmetric monoidal $\Z_2$-equivariance data for the actions $Y \mapsto \overline{Y}$ and $V \mapsto \overline{V}$.
\end{definition}

We now elaborate on the relationship between this definition and the structure of Hilbert spaces on the state spaces $\mathcal{Z}(Y^{d-1})$.
For this we remark that on objects, $\overline{Y}$ can be exhibited as the dual of $Y$ in the categorical sense.
In other words, there is a nondegenerate pairing $h_Y: \overline{Y} \sqcup Y \to \emptyset$.
In general, there can be multiple such pairings, but in this case there is a canonical choice induced by a choice of reflection along the spatial slice $Y$ in the surrounding spacetime $Y \times \{0\} \subset Y \times (- \epsilon, \epsilon)$.
This is also the historical reason this is called a reflection structure~\cite{jaffe2018reflection}.
Moreover, this choice of reflecting along the $d$th coordinate makes the pairing $h_Y: \overline{Y} \sqcup Y \to \emptyset$ symmetric in a certain sense.
If we happen to have a TFT $\mathcal{Z}$ with reflection structure, we obtain a nondegenerate sesquilinear pairing
\[
\overline{\mathcal{Z}(Y)} \otimes \mathcal{Z}(Y) \cong \mathcal{Z}(\overline{Y}) \otimes \mathcal{Z}(Y) \cong \mathcal{Z}(\overline{Y} \sqcup Y) \xrightarrow{\mathcal{Z}(h_Y)} \mathcal{Z}(\emptyset) = \C.
\]
Moreover, the symmetry property of $h_Y$ translates in the condition that this is a Hermitian pairing.
These considerations are crucial to understand reflection-positive TFTs, which are the TFTs with reflection structure for which the above pairing is positive definite.
However, in this article we content ourselves with the weaker notion and so we stick to TFTs equipped with $\Z_2$-equivariance data.

\begin{remark}
The notion of reflection structure discussed here is sometimes called Hermitian TFT.
In particular the definitions above agree with the notion of Hermitian TFT in the sense of Turaev \cite[III.5]{turaev2016quantum}.
\end{remark}

In the next section, we abstract this definition further in order to generalize to the fermionic and the once extended situation in which some subtleties arise.

\subsection{The $\Z^R_2\times B\Z^F_2$-action on bordism categories}\label{Sec:Z2}
Topological field theories with the properties motivated in the previous section have a convenient 
formulation in terms of $\Z^R_2\times B\Z^F_2$-equivariant functors. There is an action of $\Z_2^R\times 
B\Z_2^F$ on $BH_d\to BO_d$, we refer to Appendix~\ref{App: 2-Group} for a definition of an action of a higher 
group. The action on $BH_d$ is induced by an action on $H_d$ where the element 
$R$ acts by conjugation with $e=e_d \otimes 1 \in \widehat{H}_d=  \Pin_d^+ \otimes K $ in 
$\widehat{H}_d$ on $H_d$ and $(-1)^F$ acts through the central element $c\in H_d$. The action $\rho$ on
$BH_d$ does not directly extend to an action on $BH_d\to B\O_d$ because 
\begin{equation}
\begin{tikzcd}
	BH_d \ar[rdd] \ar[rr, "\rho(R)"] & & BH_d \ar[ldd] \\ 
	& & \\ 
	& BO_d &
\end{tikzcd}
\end{equation}   
does not commute. However, the square commutes up to a homotopy which induced by noting that
$ \rho_d \circ \rho(R) \colon H_d \to \O_d$ differs from $\rho_d$ by conjugation with the reflection
along $e_d$ in $\O_d$ which induces a homotopy filling the square. 

Using the functoriality of $\Bord_{d,0}^{-}$ there is an induced $\Z_2^R\times 
B\Z_2^F$-action on the $d$-category $\Bord_{d,0}^{(H_d,\rho_d)}$ and in particular on the 1-category 
$\Bord_{d}^{(H_d,\rho_d)}$. Note that the action involves all the structures of the fermionic group.  

\begin{remark}
We make this abstract perspective more geometric to connect with the discussion of reflection structures in the last section.
Recall from Remark~\ref{Rem: Tangential structures in terms of bundles} that we can describe a tangential $(H_d,\rho_d)$-structure also by a principal $H$ bundle $P\to M$ together with a vector bundle isomorphism $\psi \colon P\times_{\rho_d}\R^d \to TM$.
The new tangential structure given by postcomposing with $\rho(R)$ can be described by the pair $\overline{P},\overline{\psi}$ constructed as follows:
\begin{itemize}
	\item From $P$ we can form the principal $\widehat{H}_d$-bundle $\widehat{P}\coloneqq P\times_{H_d} \widehat{H}_d$. The principal bundle $\overline{P}$ can now be defined as $\widehat{P}\setminus P $. This bundle is isomorphic to the bundle constructed from $P$ by twisting the $H_d$ action with the automorphism $\rho(R)$. 
	\item The identification $\overline{\psi}$ is given by 
	\begin{align}
	\overline{\psi}\colon \overline{P}\times_{\rho_d} \R^d & \longrightarrow TM \\
	[\overline{p},x] \longmapsto \psi[\overline{p}\cdot (e_d\otimes 1) ,	R_{e_d} x] \ \ ,
	\end{align} 
where $R_{e_d}$ is the reflection along $e_d$.
\end{itemize}

\end{remark}
Recall from Remark~\ref{Rem: action on sVect} that there is also an $\Z^B_2\times B\Z^F_2$-action on $\sVect$. 
We can now define field theories with spin-statistics and reflection structure following~\cite{theospinstatistics,freedhopkins}
\begin{definition}\label{Def: non extended spin reflection TFT}
A topological \emph{reflection and spin statistics} $(H_d,\rho_{d})$-field theory is a $\Z^R_2\times 
B\Z^F_2$-equivariant functor
\begin{align}
\mathcal{Z}\colon \Bord_{d}^{(H_d,\rho_d)} \to \sVect \ \ .
\end{align} 
A functor which is only $\Z_2$ or $B\Z_2$-equivariant is called a \emph{reflection} or \emph{spin 
statistics} field theory, respectively. 
\end{definition}
Explicitly, a reflection structure on a topological field theory
\begin{align}
\mathcal{Z}\colon \Bord_{d}^{(H_d, \rho_d)}\to \sVect
\end{align} 
consists of a natural symmetric isomorphism $\omega\colon  \mathcal{Z} \circ \overline{(-)}  \Longrightarrow\overline{(-)} \circ \mathcal{Z}$ squaring to 1. 
The component of $\omega$ at an object $\Sigma$ gives 
a natural isomorphism $\omega_{\Sigma}\colon \mathcal{Z}(\overline{\Sigma})\longrightarrow \overline{\mathcal{Z}(\Sigma )}$. 
There are canonical isomorphisms $h_{\Sigma}$ from $\overline{\Sigma}$ to the categorical dual $\Sigma^\vee$ 
which define a hermitian structure on $\Sigma$~\cite{freedhopkins}. Similar to the discussion at the end of the previous section,
combining these isomorphisms with $\omega$ defines a hermitian pairing on the state space $\mathcal{Z}(\Sigma)$ 
\begin{align}
\overline{\mathcal{Z}(\Sigma)}\otimes \mathcal{Z}(\Sigma) 
\xrightarrow{\omega \otimes \id} \mathcal{Z}(\overline{\Sigma})\otimes \mathcal{Z}(\Sigma) \xrightarrow{\mathcal{Z}(h_{\Sigma})\otimes \id } \mathcal{Z}(\Sigma^\vee)\otimes \mathcal{Z}(\Sigma)\cong 
\mathcal{Z}(\Sigma)^*\otimes \mathcal{Z}(\Sigma) \xrightarrow{\ev } \C \ \ . 
\end{align}
The $B\Z^F_2$-equivariance which defines field theories with spin-statistics does not require the specification of additional data. 
It is given by the condition 
\begin{align}
\mathcal{Z}(c|_\Sigma) = (-1)^F_{\mathcal{Z}(\Sigma)} 
\end{align} 
for all $\Sigma\in \Bord_d^{H_d, \rho_d}$ where $c$ is the automorphism of the $H_d$-structured manifold $\Sigma$ induced by acting with $c$. 
For $H_d=\Spin_d$ this is exactly the spin-flip. 
\begin{example}
For $H_1=\Spin_1=\Z_2^c$ one-dimensional topological field theories $\mathcal{Z}$ are classified by a $\Z^c_2$-representation on the super vector space $\mathcal{Z}(+)$ assigned to the positively oriented point.  	
The only representations which give rise to spin-statistics field theories are those who satisfy the condition that $-1\in \Z_2$ act by the grading operator 
$(-1)^F$.
This shows that there are many topological field theories which do not satisfy spin-statistics. 
Interestingly every fully extended two-dimensional $\Spin_2$-field theory 
satisfies spin-statistics.  
\end{example}
\subsection{Classification in 1-dimension}\label{Sec:1D}
Let $G$ be an internal symmetry group and $H_1=(\Pin_1^+\otimes G)_0\cong G^{\operatorname{op}} $ the associated one dimensional 
spacetime symmetry group.
Our first result is to classify reflection and spin statistics $H_1$-field theories in dimension 1.
Recall that these are
$\Z_2^R\times B\Z_2^F$ equivariant symmetric monoidal functors $\mathcal{Z}\colon \Bord_1^{H_1,\rho_1} \to \sVect$, where 
the $\Z_2^R\times B\Z_2^F$ action on $\sVect$ is given by 
\begin{align}
	R\colon \sVect \to \sVect \ , \ \ V \longmapsto \overline{V} 
\end{align} 	
and
\begin{align}
	(-1)^F \colon \id_{\sVect} \Longrightarrow \id_{\sVect} \ , \ \ V \longmapsto {(-1)^F}_V 
\end{align} 
together with the canonical coherence isomorphisms spelled out in Appendix \ref{Sec:action on sVect}.

Instead of computing $\Z_2^R\times B\Z_2^F$-equivariant functors we can also compute fixed points for the 
combined action on the functor category. Using the cobordism hypothesis these can be computed as the 
homotopy fixed points
\begin{align}\label{Eq: RS-FP-1D}
	(([\Bord_1^{\fr}, \sVect])^{H_1})^{\Z_2^R\times B\Z_2^F} \simeq (( \sVect^\times)^{H_1})^{\Z_2^R\times B\Z_2^F}
	\simeq ( \sVect^\times)^{{H_1}\rtimes_\rho (\Z_2^R\times B\Z_2^F)}
\end{align} 
for the ${H_1} \rtimes_\rho (\Z_2^R\times B\Z_2^F)$ action $\psi$:
\begin{align}
	\psi(h) \colon \sVect^\times & \to \sVect^\times \ , \ \ V\longmapsto \begin{cases}
		V & \text{if } \rho_1(h)=1 \\
		V^* & \text{if } \rho_1(h)=-1
	\end{cases}  \ , \ \ f \longmapsto \begin{cases}
		f & \text{if } \rho_1(h)=1 \\
		(f^*)^{-1} & \text{if } \rho_1(h)=-1
	\end{cases} \\ 
	\psi(R) \colon \sVect^\times & \to \sVect^\times \ , \ \ V\longmapsto \overline{V}^* \ , \ \ f \longmapsto ({\overline{f}^*})^{-1} \\ 
	\psi((-1)^F) \colon \id_{\sVect^\times} & \Longrightarrow \id_{\sVect^\times} \ , \ \ V \longmapsto (-1)^F_V 
\end{align}    
with the canonical coherence isomorphisms. For the following computation it is enough to work with
a 2-group model for $H_1$, which we furthermore assume without loss of generality to be fermionically 
skeletal. 

The semi-direct product 2-group $H_1\rtimes (\Z_2^R\times B\Z_2^F)$ has the following concrete description (see Appendix~\ref{App: Exact}): 
Its objects are the set $H_1 \times \{1,R\}$ and their composition $\otimes $ is given by
\begin{align}
(h\rtimes R^{\epsilon})\otimes (h'\rtimes R^{\epsilon'}) = (h\otimes_{H_1}h'\otimes_{H_1} c^{\epsilon\cdot\theta(h')})\rtimes R^{\epsilon+\epsilon'}
\end{align} 
with $\epsilon =0,1$ and $\epsilon' =0,1$.
There are two types of morphisms. The first one is of the form $\gamma\rtimes \id \colon h\rtimes R^\epsilon \to h'\rtimes R^\epsilon$ where $\gamma$ is a morphism 
from $h$ to $h'$ in $H_1$. The other type is of the form $\gamma\rtimes (-1)^F \colon  h\rtimes R^\epsilon \to h'\rtimes R^\epsilon $ where $\gamma$ is now a morphism $h\to h'\otimes c$. 
All the unspecified data (such as the composition of morphisms) is given canonically in terms of the data of the $2$-group $H_1$.
We refer to the appendix for more details.

The first step in our computation is to show that $ H_1 \rtimes_\rho (\Z_2^R\times B\Z_2^F)$ is isomorphic to 
the direct product of 2-groups $G_b\times\Z_2^R $ in the case that $c\neq 1$     
which allows us to compute the fixed points appearing in~\eqref{Eq: RS-FP-1D} by first computing the $\Z_2^R $-fixed points and then the fixed points for the induced ${G}_b$-action. 
We comment on the simpler case $c=1$ in Remark~\ref{Rem: c=1 1D}.

To define the equivalence $\sigma\colon G_b\times \Z_2^R \to  H_1 \rtimes_\rho (\Z_2^R\times B\Z_2^F)
$ we recall from Corollary~\ref{cor: exact sequence} that there is a short exact sequence 
\begin{align}
1\to \Z_2^c \to H_1 \to G_b\to 1
\end{align} 
of Lie groups. We pick a section $s\colon G_b\to H_1$ satisfying $s(g)\otimes s(g')=\omega^{\operatorname{op}}(g,g') s(gg')$ with $\omega^{\operatorname{op}}(g,g') =\omega(g,g')+\theta(g)\theta(g') $ where
$\omega$ is the cocycle classifying the central extension
\begin{align}
	1\to \Z_2^c \to G \to G_b\to 1 \ \ .
\end{align} 
This is just another way of saying that $H_1=G^{\operatorname{op}}$. For every path $\gamma\colon g\to g' $ there is a canonical lift $s(\gamma)\colon s(g)\to c^{\Gamma(\gamma)}s(g)$ with $\Gamma(\gamma)\in \{0 , 1\}$.  
We can now define the 2-group isomorphism 
\begin{align}
\sigma \colon G_b\times \Z_2^R & \longrightarrow H_1\rtimes (\Z_2\times B\Z_2)  \\
g\times R^{\epsilon} & \longmapsto s(g)\rtimes R^{\theta(g)} R^\epsilon \\
\gamma\colon g\to g' & \longmapsto s(\gamma)\rtimes ((-1)^F)^{\Gamma(\gamma)}  \colon \sigma(g)\to \sigma(g')  \ \ .
\end{align} 
The functor $\sigma$ is not strictly compatible with the multiplication.
Instead there is a coherence isomorphism
\begin{align}
 \sigma(g\times R^\epsilon) \otimes \sigma(g'\times R^{\epsilon'}) & = s(g)\rtimes R^{\theta(g)} R^\epsilon \otimes s(g')\rtimes R^{\theta(g')} R^{\epsilon'} = s(g g')\otimes c^{\omega(g,g')+\epsilon\theta(g')}\rtimes R^{\theta(g)+\theta(g')+\epsilon+\epsilon'}  \\ 
 & \xrightarrow{1\rtimes {(-1)^F}^{\omega(g,g')+\epsilon\theta(g')}} s(gg')\rtimes R^{\theta(g)+\theta(g')+\epsilon+\epsilon'} =\sigma(gg'\times R^{\epsilon+\epsilon'}) 
\end{align} 
It is straightforward to verify that $\sigma$ is an equivalence of 2-groups.
The next step in our computation is to compute $\Z_2^R$-fixed points. 

\begin{proposition} \label{Prop: Action 1 D}
	For the actions introduced above the category of homotopy fixed points 
	\begin{align}
		(\sVect^\times)^{\Z_2^R} \simeq \hsVect^{\times u} 
	\end{align}
	is given by the groupoid of hermitian super vector spaces and compatible (i.e. even and unitary) isomorphisms.

\end{proposition}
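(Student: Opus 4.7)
The plan is to directly unpack the definition of homotopy fixed points for a $\Z_2^R$-action on a groupoid and match it with Definition~\ref{def:hermvect}. Because $\Z_2^R$ is a $1$-group, a homotopy fixed point for its action on the groupoid $\sVect^\times$ consists of an object $V \in \sVect^\times$ together with an isomorphism $h_V\colon \psi(R)(V) \to V$, i.e.\ an iso $h_V\colon \overline{V}^* \to V$, satisfying the cocycle condition dictated by the coherence data for the action $\psi$.

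First I would write down the cocycle condition explicitly. The $\Z_2^R$-action $\psi$ comes with a coherence $2$-cell $\psi(R)\circ \psi(R) \cong \id$, whose component at $V$ is the canonical isomorphism $V \cong \overline{(\overline{V}^*)}^*$ given by double evaluation with Koszul signs. The cocycle condition for $h_V$ then reads
\begin{equation}
\Bigl(V \xrightarrow{\sim} \overline{(\overline{V}^*)}^* \xrightarrow{\overline{h_V}^*} \overline{V}^* \xrightarrow{h_V} V\Bigr) = \id_V,
\end{equation}
which is precisely the axiom a hermitian structure on $V$ must satisfy in Definition~\ref{def:hermvect}. Hence on objects, homotopy fixed points are exactly hermitian super vector spaces.

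Next I would match morphisms. A morphism of fixed points $f\colon (V,h_V) \to (V',h_{V'})$ is an iso $f\colon V\to V'$ in $\sVect^\times$ making the square
\begin{equation}
\begin{tikzcd}
\overline{V}^* \ar[r, "h_V"] \ar[d, "(\overline{f}^*)^{-1}"'] & V \ar[d, "f"] \\
\overline{V'}^* \ar[r, "h_{V'}"'] & V'
\end{tikzcd}
\end{equation}
commute, where the left-hand arrow is the component of $\psi(R)$ on $f$. Chasing this diagram one obtains $f^{\dagger} := h_V\circ \overline{f}^*\circ h_{V'}^{-1} = f^{-1}$, which is exactly the condition that $f$ is unitary in the sense of Definition~\ref{def:hermvect}. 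Since $\sVect^\times$ is a groupoid, there are no nontrivial $2$-morphisms to match, and hermitian morphisms are evenness-preserving by construction (both $h_V$ and $h_{V'}$ are even).

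The main subtlety is bookkeeping with signs and the double-dual identification: one must use the Koszul-signed evaluation map $V \cong \overline{(\overline{V}^*)}^*$ and not the naive one, since otherwise the cocycle and the hermiticity axiom differ by a $(-1)^F$. Once this identification is fixed consistently, the equivalence of groupoids is established by the assignments $(V,h_V)\mapsto (V,h_V)$ on objects and $f\mapsto f$ on morphisms, with an obvious inverse. No essential surjectivity argument is needed beyond the observation that every hermitian super vector space $(V,h_V)$ trivially defines a fixed point, since the two notions carry identical data.
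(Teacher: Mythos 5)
Your proposal is correct and follows essentially the same route as the paper's proof: unpack the homotopy fixed point data for the $\Z_2^R$-action on the groupoid $\sVect^\times$, identify the coherence (cocycle) condition on $h_V\colon \overline{V}^*\to V$ with the hermiticity axiom of Definition~\ref{def:hermvect}, and read off unitarity of morphisms from the commuting square. Nothing further is needed.
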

\begin{proof} We use the definition of a homotopy fixed point spelled out in Remark~\ref{Rem: FP on Cat}. 
A fixed point consist of an object $V\in \sVect$ together with a morphism $h_V \colon \rho(R)(V)= \overline{V}^* \to 
	V$ such that 
	\begin{align}
		V \simeq ({\overline{({\overline{V}})^*}})^* \xrightarrow{\overline{h_V}^*} \overline{V}^* \xrightarrow{h_V} V = V \xrightarrow{\id_V} V \ \ . 
	\end{align}   
	It is straightforward to see that $h_V$ defines a hermitian structure on $V$. 
	A morphism of fixed points $(V,h_V) \to (V',h_{V'})$ is given by a map $f\colon V\to V' $ such that
	\begin{equation}
		\begin{tikzcd}
			\overline{V}^* \ar[r, "h_V"] \ar[d, "(\overline{f}^*)^{-1}",swap]  & V \ar[d, "f"] \\
			\overline{V'}^* \ar[r,"h_{V'}", swap] & V'
		\end{tikzcd}
	\end{equation}
	commutes. This implies that the adjoint of $f$ agrees with its inverse and hence $f$ must be unitary.
\end{proof}

\begin{remark}\label{Rem: Spin}
The 2-group $\Spin_1\rtimes (\Z_2^R \times B \Z_2^F)$ is equal to $(\Z_2^c \rtimes B\Z_2^F)\times \Z_2^R$, since $\Z_2^R$ acts trivially on $\Spin_1=\Z_2^c$. 
Furthermore, from the definition of the semi-direct product, we see directly that $\Z_2^c \rtimes B\Z_2^F$ is contractible (equivalent to the 2-group with one object and one 1-morphism). This shows that the previous proposition also classifies $\Spin_1$ reflection and spin-statistics field theories in terms of super hermitian vector spaces.      
\end{remark}
The next proposition describes the induced $G_b$-action on $\hsVect^{\times u}$.

\begin{proposition}
The induced ${G}_b$-action on $\hsVect^{\times u}$ is trivial for even elements $g\in G_b$ and given by 
\begin{align}
\gamma(g) \colon \hsVect^{\times u}  & \to \hsVect^{\times u}  \\ 
V & \mapsto (\overline{V}, (-1)^F_{\overline{V}} \circ \overline{h_V} ) \\ 
f\colon V\to V & \mapsto \overline{f}
\end{align}
for $g\in G_b$ odd with coherence data for $g$ and $g'$ 
\begin{align}
		\gamma_{g',g} = ((-1)^F)^{\omega(g',g)} 
\end{align}
combined with the canonical identification of the double bar with the original
vector space, where $\omega(g',g)$ is the $\Z_2^c$-valued 2-cocycle classifying the
central extension $1\to \Z_2^c \to G \to {G}_b \to 1$. Furthermore, a morphisms $P \colon g\to 
g$ (note that by assumption all morphisms are loops) act trivially if $\Gamma(\gamma)=0$ and by $(-1)^F$ otherwise.
\end{proposition}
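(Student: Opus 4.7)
The plan is to transport the combined $H_1 \rtimes(\Z_2^R \times B\Z_2^F)$-action on $\sVect^\times$ through the $2$-group equivalence $\sigma\colon G_b \times \Z_2^R \to H_1\rtimes(\Z_2^R\times B\Z_2^F)$ constructed above, then restrict to the $\Z_2^R$-fixed points identified in Proposition \ref{Prop: Action 1 D}. Since the iterative fixed points compute the desired equivariance data, it suffices to compute the induced action on $\hsVect^{\times u}$ of the $G_b$-factor, i.e.\ to evaluate $\psi\circ\sigma$ on the objects $g\times 1$ and on morphisms $\gamma\rtimes\id$, and then to extract the residual coherence isomorphisms.

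For an object $g\in G_b$ we have $\sigma(g\times 1) = s(g)\rtimes R^{\theta(g)}$. If $\theta(g)=0$ then $\rho_1(s(g))=1$, so $\psi(s(g))=\id$, and no $R$-factor appears; hence $\gamma(g)$ acts as the identity on objects. If $\theta(g)=1$ then $\rho_1(s(g))=-1$, so $\psi(\sigma(g\times 1))$ is the composite $V\mapsto \overline{(V^*)^*} = \overline{V}^{**}$, which, under the canonical evaluation isomorphism $\overline{V}^{**}\cong \overline{V}$ of Definition~\ref{def:hermvect} (including the Koszul sign), sends $V$ to $\overline{V}$. To determine the induced hermitian structure on $\overline{V}$, I would chase the $\Z_2^R$-fixed point diagram: the datum $h_V\colon\overline{V}^*\to V$ produces, after applying $\psi(\sigma(g))$ and using the coherence isomorphism relating $\psi(s(g))\circ\psi(R)$ with $\psi(R)\circ\psi(s(g))$, a map $\overline{\overline{V}}^*=V^*\to\overline{V}$. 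A careful bookkeeping of Koszul signs in the double-dual identification (exactly the same sign that distinguishes hermitian from $\Z_2$-graded hermitian structures, cf.\ formula~\eqref{eq:Z2Hilb}) shows that this induced map equals $(-1)^F_{\overline{V}}\circ \overline{h_V}$, exactly as stated. On morphisms, both cases are immediate: a unitary $f\colon V\to V'$ is sent to $f$ (resp.\ $\overline{f}$), so the description on $1$-morphisms follows.

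For the coherence data, the key observation is that $\sigma$ is not strictly monoidal: for $g,g'\in G_b$ with $\epsilon=\epsilon'=0$ the coherence isomorphism $\sigma(g\times 1)\otimes\sigma(g'\times 1) \Longrightarrow \sigma(gg'\times 1)$ is $1\rtimes ((-1)^F)^{\omega(g,g')}$. Transporting this through $\psi$ and using that $\psi((-1)^F)_V=(-1)^F_V$, one obtains $\gamma_{g',g}=((-1)^F)^{\omega(g',g)}$ as claimed. For a loop $\gamma\colon g\to g$ in $\pi_1(G_b)$ the image $\sigma(\gamma)=s(\gamma)\rtimes ((-1)^F)^{\Gamma(\gamma)}$ has $s(\gamma)$ a morphism in $H_1$ acting trivially on $\sVect^\times$; only the $((-1)^F)^{\Gamma(\gamma)}$-part contributes, yielding the identity when $\Gamma(\gamma)=0$ and $(-1)^F$ when $\Gamma(\gamma)=1$.

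The main obstacle is the bookkeeping in the second paragraph: keeping the canonical isomorphism $\overline{V}^{**}\cong\overline{V}$ straight with the correct Koszul signs, and verifying that the prefactor $(-1)^F_{\overline{V}}$ really arises from this double-dual identification rather than from the coherence isomorphism between $\psi(R)$ and $\psi(s(g))$. All other steps reduce to unwinding the definition of $\sigma$ and applying $\psi$, and the verification of the pentagon/hexagon axioms for the coherence data $\gamma_{g',g}$ follows formally from the fact that $\omega$ is a $2$-cocycle for the extension $1\to\Z_2^c\to G\to G_b\to 1$.
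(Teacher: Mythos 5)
Your overall strategy coincides with the paper's: transport the $H_1\rtimes(\Z_2^R\times B\Z_2^F)$-action through the splitting $\sigma\colon G_b\times\Z_2^R\to H_1\rtimes(\Z_2^R\times B\Z_2^F)$ and read off the induced $G_b$-action on the $\Z_2^R$-fixed points. Your identification of the underlying functor on objects and morphisms, of the coherence data $\gamma_{g',g}=((-1)^F)^{\omega(g',g)}$ as coming from the monoidality data of $\sigma$, and of the action of loops via $\Gamma$ all match the paper's argument.

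The gap sits exactly where you flag your own uncertainty, and it is the one step that carries the actual content of the proposition: the origin of the prefactor $(-1)^F_{\overline{V}}$ in the induced hermitian structure. You assert that it falls out of ``Koszul signs in the double-dual identification'' and then concede you have not checked whether it instead comes from the coherence isomorphism relating $\psi(R)$ and $\psi(\sigma(g))$. The decisive contribution is the latter. The new hermitian structure on $\psi(\sigma(g))[V]$ is obtained by comparing $\psi(\sigma(R))\circ\psi(\sigma(g))$ with $\psi(\sigma(g))\circ\psi(\sigma(R))$ through $\sigma(R\otimes g)=\sigma(g\otimes R)$, and the two monoidality isomorphisms of $\sigma$ involved are asymmetric: the coherence datum for $\sigma(1\times R)\otimes\sigma(g\times 1)$ carries the exponent $\omega(1,g)+1\cdot\theta(g)=\theta(g)$, hence an $(-1)^F$ precisely when $g$ is odd, while the one for $\sigma(g\times 1)\otimes\sigma(1\times R)$ carries $\omega(g,1)+0\cdot\theta(1)=0$ and is trivial. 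This asymmetry --- ultimately the $\epsilon\,\theta(g')$ term in the monoidality data of $\sigma$, which records that $H_1\cong G^{\op}$ --- is what turns $\overline{h_V}$ into $(-1)^F_{\overline{V}}\circ\overline{h_V}$. Since a wrong resolution here would yield $\overline{h_V}$ instead (which, as discussed after Definition~\ref{def:hermvect}, is not even a super Hilbert space structure when $V$ is one), this is a computation you must actually perform rather than defer; once it is done, the rest of your argument goes through as written.
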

\begin{proof}
An element $g\in G_b$ acts on a fixed point $(V,h_V)$ by sending it to 
\begin{align}
	\psi(\sigma(g))[V]= \psi\left(s(g)\rtimes R^{\theta(g)} \right) V =  \begin{cases}
		V & \text{if } \theta(g)=0 \\
		\overline{V^*}^*\cong \overline{V} & \text{if } \theta(g)=0
	\end{cases}
\end{align}      
equipped with the new hermitian structure 
\begin{align}
\psi(R)[\psi(\sigma(g))[V]]= \psi(\sigma(R)\otimes \sigma(g) )[V] \cong 
\psi(\sigma(R\otimes g) )[V] &= \psi(\sigma(g\otimes R) )[V] \cong \psi(\sigma(g))[\psi(R)[V]] \\ &\xrightarrow{\psi(\sigma(g))[h_V]} \psi(\sigma(g))[V]  
\end{align}
which when spelled out is just $h_V$ for even elements and $(-1)^F_{\overline{V}} \circ \overline{h_V}$ for odd elements. The coherence isomorphisms are induced by those of $\sigma$ leading directly to the formula stated in the proposition. 

The action of morphisms is given by the corresponding component of $\psi(\sigma(\gamma)) = \psi \left(s(\gamma)\rtimes ((-1)^F)^{\Gamma(\gamma)}\right) $ as claimed. 
\end{proof}

\begin{remark}
Note that the Hermitian structure on the complex conjugate of a Hermitian vector space we obtained here is the same as the one discussed in Section \ref{Sec:fermrep} obtained by a different argument.
\end{remark}

To finish the classification we have to compute fixed points of the induced ${G}_b$-action. We have already done this in Remark~\ref{Rem: action on sVect}
in terms of unitary representations of $G$ as defined in Definition~\ref{Def: F Rep} leading to the following proposition. 

\begin{proposition}
\label{prop:1d classification}
	Let $G$ be a fermionic group. The groupoid of 1-dimensional reflection and spin-statistics field 
	theories with internal symmetry group $G$ is equivalent to the core of the category of unitary 
	representations of the fermionic group $\pi_0(G)$.  
\end{proposition}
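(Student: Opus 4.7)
The plan is to chain together the two preceding propositions. Using the $2$-group equivalence $\sigma\colon G_b\times \Z_2^R \to H_1\rtimes (\Z_2^R\times B\Z_2^F)$ and the iterative formation of homotopy fixed points, the classification reduces to computing $(\hsVect^{\times u})^{G_b}$ for the $G_b$-action described in the previous proposition. It therefore suffices to identify this groupoid with the core of unitary fermionic representations of $\pi_0(G)$, following essentially the pattern of Remark~\ref{Rem: action on sVect} but one categorical level up, with Hermitian structures included.

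First, I would unpack what a homotopy fixed point is in this situation: a Hermitian super vector space $(V,h_V)$ together with, for each $g\in G_b$, a unitary morphism $\rho(g)\colon \gamma(g)[V]\to V$, and, for each $\gamma\colon g\to g'$ in the $2$-group presenting $\pi_{\leq 1}G_b$, a compatibility with the natural isomorphisms $\gamma(\gamma)$. Using the explicit description of the action from the previous proposition, $\gamma(g)[V]=V$ for even $g$ while $\gamma(g)[V]=(\overline{V},(-1)^F_{\overline{V}}\circ\overline{h_V})$ for odd $g$. Hence the collection $\{\rho(g)\}_{g\in G_b}$ consists of $\C$-linear unitary maps $V\to V$ for even $g$ and $\C$-antilinear anti-unitary maps $\overline{V}\to V$ for odd $g$, which is precisely the datum of a unitary map from $G_b$ to $U^f(V)$ at the set-theoretic level.

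Next, the coherence data $\gamma_{g',g}=((-1)^F)^{\omega(g',g)}$ enters the fixed-point condition as the twisted multiplicativity relation $\rho(g')\rho(g)=((-1)^F_V)^{\omega(g',g)}\,\rho(g'g)$. Since $\omega$ is the $2$-cocycle classifying the central extension $1\to \Z_2^c\to G\to G_b\to 1$, this is equivalent to $\rho$ lifting to a homomorphism $G\to U^f(V)$ sending $c$ to $(-1)^F_V$. The additional constraint coming from morphisms $\gamma\in \pi_1(G_b)$, which act by $((-1)^F_V)^{\Gamma(\gamma)}$, then forces this lift to be trivial on $\ker \Gamma=\pi_1(G)$ (via the long exact sequence of Section~\ref{Sec:fermskeletal}), so that $\rho$ descends to a unitary fermionic representation of $\pi_0(G)$. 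A direct check that morphisms of homotopy fixed points correspond exactly to $\C$-linear $G$-intertwiners then finishes the identification.

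The main obstacle is bookkeeping: correctly translating the associator and coherence data of the semidirect $2$-group $H_1\rtimes(\Z_2^R\times B\Z_2^F)$, transported along $\sigma$, into the cocycle $\omega$ and the homomorphism $\Gamma$ that present $G$ and $\pi_0(G)$. This has essentially been prepared in Section~\ref{Sec:fermskeletal}. The bosonic case $c=1$, in which $G_b=G$ and all representations are forced to be purely even, is handled by the same argument with trivial $\omega$ and $\Gamma$, as explained in Remark~\ref{Rem: c=1 1D}.
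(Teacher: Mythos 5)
Your proposal is correct and follows essentially the same route as the paper: reduce via the splitting $\sigma$ to iterated fixed points, identify $\Z_2^R$-fixed points with Hermitian super vector spaces, then compute $G_b$-fixed points, reading the cocycle $\omega$ as the central extension defining $G$ and using $\Gamma$ (equivalently, the paper's case distinction on whether a morphism $1\to c$ exists) to descend to $\pi_0(G)$. The paper's own proof is just a terser version that offloads the twisted-representation identification to Remark~\ref{Rem: action on sVect}.
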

\begin{proof}
We will only discuss the case $c\neq 1$ here.  We comment on the simpler case $c=1$ in the next remark. 
To see that we can restrict to the fermionic group $\pi_0(G)$ instead of the 2-group model we need to distinguish two additional cases. The first one is the case that there are no morphisms from $1$ to $c$ in $G$. 
In this case  $\Gamma(\gamma)$ will always be $1$ and hence all morphisms in $G_b$ act trivial and we can restrict to $\pi_0(G_b)$. 
In the case that there is a morphism $\gamma\colon 1\to c$ in $G$ there is a morphism $\gamma'\colon 1\to 1 $ in $G_b$ with $\Gamma(\gamma')=c$. This implies
that for every homotopy fixed point $V$, the operator $(-1)^F_V $ must be equal to the identity
on $V$ and hence $V$ is completely even. This implies that we can restrict to the subcategory of even vector spaces where all morphisms act again trivial.    
Now the statement of the identification of the fixed points with the category of $\pi_0(G)$-representation is exactly the content of Remark~\ref{Rem: action on sVect}
\end{proof}

\begin{remark}\label{Rem: c=1 1D}
We conclude the section with the promised remark on the case $c=1$, i.e. $G=G_b$. Then the semi-direct product $H_1\rtimes (\Z_2\times B\Z_2)$ is actually directly equal to
$G\times \Z_2^R\times B\Z_2^F$. Hence when computing homotopy fixed points we can start by computing $B\Z_2^F$-homotopy fixed points. These are given by even vector space and the induced $G\times \Z_2^R$ is the restriction of the action to $G\times \Z_2^R$. Now the computation of fixed points is exactly the same as above, but 
all of the $(-1)^F$-factors drop out.   
\end{remark}	
\section{Once extended field theories}\label{Sec:2D}
We extend the results and definitions from the previous section to fully extended
2-dimensional topological field theories with values in the Morita bicategory of super algebras $\sAlg$.
\subsection{Spin-statistics and reflection structures for extended topological field theories}
We now extend the definition of spin-statistics and reflection structures to once extended
field theories 
\begin{align}
\mathcal{Z}\colon \Bord_{d,d-2}^{(H_d,\rho_d)} \to \sAlg
\end{align}
with values in super algebras. The $\Z_2^R\times B\Z_2^F$-action on 
$\Bord_{d,0}^{(H_d,\rho_d)}$ constructed in Section~\ref{Sec:Z2} restricts to an action 
on the bicategory $\Bord_{d,d-2}^{(H_d,\rho_d)}$. To generalise \ref{Def: non extended spin reflection TFT} we 
need to extend the $\Z_2^R\times B\Z_2^F$-action on $\sVect$ to an action on $\sAlg$. The extension can
abstractly be constructed from the fact that the functor sending a symmetric monoidal category to the Morita
category of $E_1$-algebras in it is functorial. Concretely the generator of $\Z_2^R$ acts by 
sending a super algebra $A$ to the complex conjugated algebra $\overline{A}$, see Definition \ref{def:C-conj algebra}. 
The non-trivial 1-morphism $(-1)^F$ of $B\Z_2^F$ acts by the natural transformation 
\begin{align}
(-1)^F \colon \id_{\sAlg} &\Longrightarrow \id_{\sAlg} 
\end{align} 
whose value at a super algebra $A\in \sAlg$ is given by the bimodule $A_{(-1)^F}$ from 
Definition~\ref{def:spin-statistics bimodule}
and at a 1-morphism $M\colon A \longrightarrow  B$ by the filling
\begin{equation}
\begin{tikzcd}
A \ar[dd,"M",swap] \ar[rr, "{A_{(-1)^F}}"] & & A  \ar[dd,"M"] \ar[lldd, Rightarrow, "(-1)^F_M",swap, shorten <=10, shorten >=10] \\ 
& & \\
B \ar[rr, "{B_{(-1)^F}}",swap] & & B
\end{tikzcd} \ \ 
\end{equation} 
given by the formula 
\[
m \otimes a (-1)^F \mapsto 	(-1)^F (-1)^{|m|+|a|} \otimes ma
\]
which is a bimodule isomorphism.
Part of the data for the $B \Z^F_2$-action is also the trivialization of the square of $(-1)^F$.
In this case it is given by the $(A,A)$-bimodule isomorphism $A_{(-1)^F} \otimes_A A_{(-1)^F} \cong A$ that is induced by the composition of algebra automorphisms $(-1)^F \circ (-1)^F = \id_A$.
Explicitly, this means that 
\[
a_1 (-1)^F \otimes a_2 (-1)^F \mapsto (-1)^{|a_2|} a_1 a_2.
\]
Intuitively, we can think of $(-1)^F$ as a formal even variable squaring to one such athat $(-1)^F a = (-1)^{|a|} a (-1)^F$.
With these preparations we can give the following definition

\begin{definition}\label{Def: extended spin reflection TFT}
	An extended topological \emph{reflection and spin statistics} $(H_d,\rho_{d})$-field theory is a $\Z^R_2\times 
	B\Z^F_2$-equivariant functor
	\begin{align}
	\mathcal{Z}\colon \Bord_{d,d-2}^{(H_d,\rho_d)} \to \sAlg \ \ .
	\end{align} 
	A functor which is only $\Z^R_2$ or $B\Z^F_2$-equivariant is called a \emph{reflection} or \emph{spin 
		statistics} field theory, respectively. 
\end{definition}

\begin{remark}
Note that an extended topological reflection and spin statistics field theory gives rise to an non-extended reflection and spin statistics field theory by restriction to the endomorphisms of the monoidal unit.
This follows because the endomorphisms of the unit in $\sAlg$ are $\sVect$ and the induced $\Z_2^R \times B\Z_2^F$-action is the same as considered in the last section.
\end{remark}

A reflection structure on a given topological field theory 
\begin{align}
	\mathcal{Z}\colon \Bord_{d,d-2}^{(H_d, \rho_d)}\longrightarrow \sAlg
\end{align}
consist of a lot of data, which we only partially spell out. 
For every object $S\in \Bord_{d,d-2}^{(H_d, \rho_d)}$ we get an invertible bimodule $(\mathcal{Z}(\overline{S}), \overline{\mathcal{Z}(S)})$ as part of the natural isomorphism 
\begin{align}
\mathcal{Z}\circ \overline{(-)} \Longrightarrow  \overline{(-)} \circ \mathcal{Z} \ \ . 
\end{align} 
Furthermore, the data involves a modification from the square of this natural isomorphism (combined with the coherence morphisms for $\overline{(-)}$) to the identity. 
When combining these bimodules with an isomorphism $\overline{S}\longrightarrow S^\vee$ in $\Bord_{d,d-2}^{(H_d,\rho_d)}$, we get a categorification of the concept of a hermitian structure on a vector space. 
These are stellar algebras which we discuss in detail in the next section. 

Spin-statistics also becomes data in an
extended field theory. Namely for every, object $S\in \Bord_{d,d-2}^{(H_d,\rho_d)}$ a bimodule isomorphism $\mathcal{Z}(c|_S)\longrightarrow \mathcal{Z}(S)_{(-1)^F}$ satisfying various coherence conditions.

\begin{remark}
Note that given a once extended theory $\mathcal{Z}$ with both $\Z_2^R$- as well as $B\Z_2^F$-equivariance data, there is an extra condition it has to satisfy for it to be $\Z_2^R \times B\Z_2^F$-equivariant.
Explicitly, it is saying that for every $d-2$-dimensional closed manifold $Y$ the diagram of $2$-morphisms
\[
\begin{tikzcd}
\mathcal{Z}(c|_{\overline{Y}}) \arrow[r] \arrow[d] & \mathcal{Z}(\overline{c|_{Y}}) \arrow[r] & \overline{\mathcal{Z}(c|_Y)}  \arrow[d]
\\
\mathcal{Z}(\overline{Y})_{(-1)^F} \arrow[r] & \overline{\mathcal{Z}(Y)_{(-1)^F}} \arrow[r] & \overline{\mathcal{Z}(Y)_{(-1)^F}}
\end{tikzcd}
\]
commutes.
Therefore having a reflection and spin statistics theory is not equivalent to separately specifying a spin-statistics relation and a reflection structure on $\mathcal{Z}$.
\end{remark}

The general approach to the classification in spacetime dimension 2 follows the 1-dimensional case we discussed in 
Section~\ref{Sec:1D}. 
Equivariant functors are equivalent to fixed points for the induced $\Z_2^R\times B\Z_2^F$-action on the functor category $[\Bord_{2,0}^{(H_2,\rho_2)}, \sAlg]$. Combined with the cobordism hypotheses this allows us to compute the reflection and spin statistics $(H_2,\rho_{2})$-field theories as homotopy fixed points 
\begin{align}\label{Eq: RS-FP-2D}
	(([\Bord_2^{\fr}, \sAlg])^{H_2})^{\Z_2^R\times B\Z_2^F} \simeq (( \sAlg^{\text{fd}})^{H_2})^{\Z_2^R\times B\Z_2^F}
	\simeq ( \sAlg^{\text{fd}})^{{H_2} \rtimes (\Z_2^R\times B\Z_2^F)} \ \ .
\end{align} 
Similar to the 1-dimensional case the group ${H_2} \rtimes (\Z_2^R\times B\Z_2^F)$ turns out to be equivalent to the direct product $O_2\times G_b$ (see Section~\ref{Sec:Computation 2D}). Fixed points for the $O_2$ part can be described by a generalization of hermitian structures to super algebras which we introduce next.    

\subsection{Stellar algebras}
\label{Sec:stellar}

In one-dimensional fermionic theories, we have seen that reflection structures are intimately connected to super Hermitian vector spaces.
Before diving into full depth for two-dimensional theories, we will start with a purely algebraic treatment of a two-dimensional analogue of this notion.
Following \cite{schommerpriesthesis}, we call these stellar algebras, which are a certain modification of the notion of a $*$-algebra.\footnote{In \cite{schommerpriesthesis} stellar algebras are introduced as an analogue of linear $*$-algebras, which are linear maps $*:A \to A$ such that $(ab)^* = b^* a^*$. In this paper we will need the complex-antilinear (super) analogue instead, where $*$ is a complex-antilinear map.}
In Appendix \ref{Sec:*-algs} we treat super $*$-algebras in detail.

To motivate stellar algebras as being the categorification of Hermitian vector spaces, first consider the symmetric monoidal functor $\End: \sVect^\times  \to \sAlg_1$ from the core of complex supervector spaces into the $1$-category of algebras and isomorphisms.
On morphisms it maps a linear operator to conjugation with that operator.
It sends the dual vector space $V^*$ to the (graded) opposite algebra $(\End V)^{\op}$ and the complex conjugate vector space $\overline{V}$ to the complex conjugate algebra $\overline{\End V}$. 
We refer the reader to Appendix \ref{App: Super alg} for the relevant definitions.
If we restrict to finite-dimensional vector spaces, we see that the $\Z^R_2$-action $V \mapsto \overline{V}^*$ is mapped to the $\Z_2^R$-action $A \mapsto \overline{A}^{\op}$ under $\End$.
As a Hermitian vector space is a $\Z_2^R$-fixed point in finite-dimensional vector spaces, it is mapped to a $\Z_2^R$-fixed point in $\sAlg^{\fd}_1$.
Explicitly, a fixed point for $A \mapsto \overline{A}^{\op}$ consists of a complex-antilinear even map $*: A \to A$ such that $a^{**} = a$ and $(ab)^* = (-1)^{|a||b|} b^* a^*$. \footnote{Our choices for natural isomorphisms $\overline{\overline{A}} \cong A, A^{\op \op} \cong A$ and $\overline{A}^{\op} \cong \overline{A^{\op}}$ are the obvious equalities, as specified in Appendix \ref{App: Super alg}. They are compatible with the functor $\End$.}
This is known as a (complex-antilinear) super $*$-algebra.
We deduce that for every Hermitian vector space $(V,h)$, the algebra $\End V$ is canonically a $*$-algebra.
The star is given by the graded adjoint with respect to the Hermitian form
\[
\langle Tv, w \rangle = (-1)^{|T| |v|} \langle v, T^* w \rangle \quad T \in \End V, v,w \in V.
\]
The fact that $*$-algebras are analogous to Hermitian vector spaces is also not surprising from the perspective of algebraic quantum mechanics, where $*$-algebras replace Hilbert spaces.

In extended two-dimensional topological field theory however, we instead work with the Morita $2$-category $\sAlg^{\text{fd}}$ of finite-dimensional semisimple complex superalgebras with invertible $1$- and $2$-morphisms.
We generalize the discussion to that situation by analogy. 
There is still a symmetric monoidal $\Z_2^R$-action $A \mapsto \overline{A}^{\op}$, see Appendix \ref{App: Super alg} for details.
On $1$-morphisms $R$ acts by $M \mapsto \overline{M}^{\op -1}$ to make it covariant, while on $2$-morphisms $\phi$ it acts by $\overline{\phi}^{\op}$.
We will now consider fixed points for this action as a two-dimensional version of Hermitian forms.
This results in the following explicit definition.

\begin{definition}
\label{Def:stAlg}
The $2$-category of \emph{($\C$-antilinear) stellar algebras} has 
\begin{itemize}
\item \textbf{Objects} called \emph{stellar algebras} are triples $(A,M,\sigma)$ consisting of a super algebra $A$ together with an invertible $(A, \overline{A}{}^\op)$ bimodule $M$ and a bimodule isomorphism $\sigma \colon M \to \overline{M}^\op $ such that 
\[
M \xrightarrow{\sigma} \overline{M}^{\op} \xrightarrow{\overline{\sigma}^{\op}} \overline{\overline{M}^{\op}}^{\op} = M
\]
is the identity.
\item \textbf{1-morphisms} from $(A_1,M_1,\sigma_1)$ to $(A_2,M_2,\sigma_2)$ called $(A_2,A_1)$-\emph{stellar bimodules} are pairs $(N,\phi)$ consisting of an invertible $(A_2,A_1)$-bimodule $N$ together with a \emph{unitarity datum}, which is a bimodule isomorphism
	\[
	\phi: N \otimes_{A_1}  M_1 \otimes_{\overline{A_1}^{\op}} \overline{N}^{\op}\to M_2
	\]
	satisfying the \emph{Hermiticity condition} which says that the compositions
\[
N \otimes_{A_1} \otimes M_1 \otimes_{\overline{A}_1^{\op}} \overline{N}^{\op} \xrightarrow{\id \otimes \sigma_1 \otimes \id}  N \otimes_{A_1} \otimes \overline{M}_1^{\op} \otimes_{\overline{A}_1^{\op}} \overline{N}^{\op} \cong \overline{N \otimes_{A_1} \otimes M_1 \otimes_{\overline{A}_1^{\op}} \overline{N}^{\op}}^{\op} \xrightarrow{\overline{\phi}^{\op}} \overline{M}_1^{\op}
\]
and $\sigma_2 \circ \phi$ are equal.
\item \textbf{2-morphisms} from $(N,\phi)$ to $(N',\phi')$ called \emph{unitary bimodule maps} are even invertible bimodule maps $\psi: N \to N'$ satisfying the unitarity condition that 
\[
N \otimes_A M_1 \otimes_{\overline{A_1}^{\op}} \overline{N}^{\op} \xrightarrow{\psi \otimes \id_{M_1} \otimes \overline{\psi}^{\op}} N' \otimes_{A_1} M_1 \otimes_{\overline{A_1}^{\op}} \overline{N'}^{\op} \xrightarrow{\phi'} M_2
\]
is equal to $\phi$.
\end{itemize}
The  composition of a $(B,A)$-stellar module $(N_1,\phi_1)$ and a $(C,B)$-stellar module $(N_2,\phi_2)$ is $(N_2 \otimes_B N_1, \phi_2 \circ \phi_1)$, where $\phi_2 \circ \phi_1$ is defined using the isomorphism of Lemma \ref{eq:optensor} as
\begin{align*}
(N_2 \otimes_B N_1) \otimes_A M_1 \otimes_{\overline{A}^{\op}} \overline{(N_2 \otimes_B N_1)}^{\op} &\cong N_2 \otimes_B ( N_1  \otimes_A M_1  \otimes_{\overline{A}^{\op}} \overline{N}_1^{\op}) \otimes_{\overline{B}^{\op}} \overline{N}_2^{\op}  
\\
&\xrightarrow{\phi_1} N_2 \otimes_{B} M_2 \otimes_{\overline{B}^{\op}} \overline{N}_2^{\op} \xrightarrow{\phi_2} M_3.
\end{align*}
\end{definition}

Before comparing this definition with $\Z_2^R$-fixed points in the $2$-category of algebras, we provide some intuition by comparing stellar algebras with $*$-algebras.
Every ($\C$-antilinear) super $*$-algebra $*: \overline{A}^{\op} \to A$ gives rise to a $\C$-linear stellar structure as follows.
Define $M := A_*$ to be the $(A,\overline{A}^{\op})$-bimodule induced by the homomorphism $*$ and $\sigma: \overline{M}^{\op} \to M$ the bimodule map $\sigma(\overline{a_*}^{\op}) = (a^*)_*$.
It is easy to check that $\overline{\sigma}^{\op} \sigma = \id$ using that $*$ is an involution.
A unitarity datum on a bimodule between $*$-algebras is equivalent to a kind of algebra-valued inner product.
Similar inner products are often considered in the $C^*$-algebra literature, where they are called Hilbert $C^*$-modules.
	
	\begin{proposition}
	\label{prop:Hilbert module}
Let $A,B$ be $\C$-antilinear super $*$-algebras considered as stellar algebras and $N$ an invertible $(B,A)$-bimodule.
Then a unitarity datum on $N$ is equivalent to a sequilinear pairing $\langle.,.\rangle: N \times N \to B$ complex-linear in the left argument and complex antilinear in the right argument
such that 
\begin{align*}
    \langle n_2, b n_1  \rangle &= (-1)^{|b||n_1|} \langle n_2, n_1 \rangle b^*
    \\
    \langle n_2 a, n_1 \rangle &= (-1)^{ |a| |n_1|}  \langle n_2, n_1 a^* \rangle 
    \\
    \langle b n_2, n_1 \rangle &= b \langle n_2,  n_1 \rangle
    \\
    \langle n_2, n_1 \rangle^* &= (-1)^{|n_1 | |n_2|} \langle n_1, n_2 \rangle.
\end{align*}
The unitarity condition on an invertible bimodule map $\psi: N \to N'$ between stellar $(B,A)$-modules is equivalent to
\[
\langle \psi(n_1), \psi(n_2) \rangle_{N'} = \langle n_1 , n_2 \rangle_{N}.
\]
for all $n_1,n_2 \in N$.
For a third $*$-algebra $C$, stellar $(B,A)$-bimodule $N_1$ and stellar $(C,B)$-bimodule $N_2$, the pairing on the tensor product $N_2 \otimes_B N_1$ is
\[
\langle n_2 \otimes n_1, n_2' \otimes n_1' \rangle_{N_2 \otimes_B N_1} = (-1)^{|n_2'| |n_1'|} \langle n_2 \langle n_1, n_1' \rangle_{N_1}, n_2 \rangle_{N_2}.
\]
\end{proposition}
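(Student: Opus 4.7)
The plan is to work through all four claims by unpacking the bimodules explicitly on generators and tracking super signs carefully. The main geometric ingredient is that when $M = A_*$ comes from a $*$-algebra, the composite $N \otimes_A A_* \otimes_{\overline{A}^{\op}} \overline{N}^{\op}$ simplifies dramatically: using $N \otimes_A A \cong N$, the first factor collapses to $N$ with a twisted right $\overline{A}^{\op}$-action $n \cdot \overline{a}^{\op} = n a^*$, and tensoring with $\overline{N}^{\op}$ over $\overline{A}^{\op}$ yields a $(B,\overline{B}^{\op})$-bimodule generated by symbols $n \otimes \overline{n'}^{\op}$ modulo the single balancing relation $na^* \otimes \overline{n'}^{\op} = (-1)^{|a||n'|} n \otimes \overline{n'a}^{\op}$. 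Composing with $B_* \cong B$ then lets us define $\langle n,n' \rangle := \phi(n \otimes 1 \otimes \overline{n'}^{\op}) \in B$, and this correspondence is visibly a bijection between bimodule maps to $B_*$ and $\C$-sesquilinear maps $N \times N \to B$.

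First I would derive the four sesquilinearity identities from the bimodule structure of $\phi$. Left $B$-linearity immediately gives $\langle bn_2, n_1 \rangle = b\langle n_2, n_1 \rangle$. Right $\overline{B}^{\op}$-linearity, combined with the fact that the right $\overline{B}^{\op}$-action on $\overline{N}^{\op}$ is $\overline{n_1}^{\op} \cdot \overline{b}^{\op} = (-1)^{|b||n_1|} \overline{bn_1}^{\op}$ and on $B_*$ is $y \cdot \overline{b}^{\op} = yb^*$, delivers $\langle n_2, bn_1 \rangle = (-1)^{|b||n_1|} \langle n_2, n_1 \rangle b^*$. The $\overline{A}^{\op}$-balancing relation above directly yields $\langle n_2 a, n_1 \rangle = (-1)^{|a||n_1|}\langle n_2, n_1 a^* \rangle$ after substituting $a^* \mapsto a$ and using $a^{**} = a$.

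Next I would verify that the Hermiticity condition on $\phi$ exactly translates to $\langle n_2, n_1 \rangle^* = (-1)^{|n_1||n_2|} \langle n_1, n_2 \rangle$. For this I would use the explicit description of $\sigma$ attached to a $*$-algebra: $\sigma_A(\overline{a_*}^{\op}) = (a^*)_*$. Chasing the generator $n_2 \otimes 1 \otimes \overline{n_1}^{\op}$ around the Hermiticity diagram, $\sigma_2 \circ \phi$ sends it to $\langle n_2, n_1 \rangle^*$ in $B_*$, while the path through $\sigma_1$ first rewrites $1_* \mapsto \overline{1_*}^{\op}$ and then, applying $\overline{\phi}^{\op}$ under the canonical swap identifying $\overline{N \otimes_A M_1 \otimes_{\overline{A}^{\op}} \overline{N}^{\op}}^{\op}$ with itself, produces the Koszul-signed value $(-1)^{|n_1||n_2|}\langle n_1, n_2 \rangle$ in $B_*$; equating the two paths gives the claim. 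The unitarity condition on $\psi$ is a direct application of $\phi' \circ (\psi \otimes \id \otimes \overline{\psi}^{\op}) = \phi$ to $n_1 \otimes 1 \otimes \overline{n_2}^{\op}$.

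Finally, for the composition formula I would apply the isomorphism of Lemma~\ref{eq:optensor} to the generator $(n_2 \otimes n_1) \otimes 1 \otimes \overline{(n_2' \otimes n_1')}^{\op}$ of $(N_2 \otimes_B N_1) \otimes_A M_1 \otimes_{\overline{A}^{\op}} \overline{N_2 \otimes_B N_1}^{\op}$, producing $n_2 \otimes (n_1 \otimes 1 \otimes \overline{n_1'}^{\op}) \otimes \overline{n_2'}^{\op}$ up to the Koszul sign $(-1)^{|n_2'||n_1'|}$ arising from swapping $\overline{n_1'}^{\op}$ past $\overline{n_2'}^{\op}$; then apply $\phi_1$ in the middle slot and $\phi_2$ outside. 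The main obstacle throughout is bookkeeping: reconciling the conventions of Appendix~\ref{App: Super alg} for $\overline{(-)}^{\op}$, $(-)^{\op\op}$ and the comparison isomorphisms with the super $*$-algebra identity $(ab)^* = (-1)^{|a||b|} b^* a^*$ from Appendix~\ref{Sec:*-algs}, so that all Koszul signs line up exactly as in the statement. Once these conventions are fixed, every verification reduces to a direct computation on a single generator.
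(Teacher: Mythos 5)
Your proposal is correct and follows exactly the route the paper intends: the paper states this proposition without a written proof, but the remark immediately following it confirms that the first three identities are precisely the unpacking of the bimodule map $\phi$ on generators, with the fourth identity coming from the Hermiticity condition and the composition formula from the Koszul sign in Lemma~\ref{eq:optensor} — all of which your generator-chasing argument supplies. The sign bookkeeping you describe (twisted right $\overline{A}^{\op}$-action via $*$, the action $\overline{n}^{\op}\cdot\overline{b}^{\op}=(-1)^{|b||n|}\overline{bn}^{\op}$, and the reversal sign $(-1)^{|n_1||n_2|}$ from $\overline{(\cdot)}^{\op}$ applied to the triple tensor product) checks out against the conventions of Appendix~\ref{App: Super alg}.
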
	

\begin{remark}
The first condition
\[
    \langle n_2, b n_1  \rangle = (-1)^{|b||n_1|} \langle n_2, n_1 \rangle b^*
\]
is redundant as it can be derived from the third and the fourth. 
But we include it for completeness as the first three conditions are equivalent to a bimodule map
\[
	\phi: N \otimes_A  M_1 \otimes_{\overline{A}^{\op}} \overline{N}^{\op}\to M_2.
	\]
\end{remark}

\begin{remark}
Because of the focus of this article on the cobordism hypothesis, we only defined the $2$-groupoid of stellar algebras.
However, the definition above generalizes to give a $2$-category of stellar algebras in which bimodules and bimodule maps need not be invertible.
This is straightforward up to the subtlety what to require from the unitarity datum $\phi$ on a noninvertible $(A_2,A_1)$-bimodule $N$ between stellar algebras $(A_1,M_1,\sigma_1)$ and $(A_2,M_2,\sigma_2)$.
For example, in the case that $A_1 = A_2 =\C$ with the trivial stellar structure, $N$ is a vector space and we want $\phi$ to define a nondegenerate Hermitian form on $N$.
In particular, requiring that $\phi$ is an isomorphism is certainly too strict, because it will imply $N$ is invertible and hence one-dimensional.
Therefore we instead need to require a nondegeneracy condition on $\phi$.
More precisely, it realizes $M_1 \otimes_{\overline{A_1}^{\op}} \overline{N}^{\op}$ as the right adjoint of $(M_2)^{-1} \otimes_{A_2} N$ in the $2$-category $\sAlg$, see Appendix \ref{Sec:duals}.
\end{remark}

\begin{remark}
The definition of the bicategory of stellar algebras above makes the pairings antilinear in the right argument and valued in the target $*$-algebra. 
The reason for our choice of convention comes from how we defined the directions of $\Z_2$-fixed points morphisms such as $*:\overline{A}^{\op} \to A$ in Definition \ref{Def: Hfixed point}.
Unfortunately, our convention not agree with the common convention of the operator algebra literature.
We briefly make the appropriate translation for readers familiar with the $C^*$-algebra literature.

Our conventions stellar modules are analogous to `left Hilbert modules with adjointable operators acting on the right'.
	We can get the more common convention (up to signs) by taking $*$ to map from $\overline{A}^{\op}$ to $A$.
	This gives a sequilinear pairing $\langle\cdot ,\cdot \rangle: N \times N \to A$ complex-linear in the right argument and complex antilinear in the left argument such that 
\begin{align*}
    \langle n_1, n_2 a \rangle &= \langle n_1, n_2 \rangle a
    \\
    \langle n_1 a, n_2 \rangle &= (-1)^{|n_1| |a|} a^* \langle n_1, n_2 \rangle
    \\
    \langle b n_1, n_2 \rangle &= (-1)^{|b| |n_1|} \langle n_1, b^* n_2 \rangle.
\end{align*}
The first two of the above three conditions say that we have a Hilbert $C^*$-right $A$-module, while the last condition says that $N$ is a $B$-representation acting by adjointable operators.

Note separately that the $C^*$-algebra literature uses the conventions where a $\Z_2$-graded $C^*$-algebra satisfies $(ab)^* = b^* a^*$ and a Hilbert module satisfies $\langle n_2, n_1 \rangle = \langle n_1, n_2 \rangle^*$, while we prefer to work with the appropriate Koszul signs.
See Appendix \ref{Sec:*-algs} for the relationship between our conventions on graded $*$-algebras which we call super $C^*$-algebras and the more common $\Z_2$-graded $C^*$-algebras. 
Under this correspondence there is also a bijection between $\Z_2$-graded Hilbert bimodules in the usual sense and Hilbert bimodules with appropriate Koszul signs, also see Lemma \ref{Lem:construct}.

A third difference is that we do not require any positivity conditions; $A$ and $B$ need not be $C^*$-algebras and the pairing on $N$ need not be positive.
Such considerations are important for defining unitary topological field theory and generalizations to non-topological field theories, see the discussion at \ref{Sec:positivity}.
In particular, we do not obtain a norm on the Hilbert bimodule and so we do not consider completeness properties.
\end{remark}

	\begin{example}
	We show that there are exactly two $\C$-antilinear stellar structures on $\C$ that are not Morita equivalent.
	We identify $\overline{\C}$ with $\C$ through the canonical algebra isomorphism $\overline{\C} \cong \C$ given by complex conjugation.
	Note that two stellar algebras $(\C,M_1,\sigma_1), (\C,M_2,\sigma_2)$ with underlying modules $M_1 = \C$ and $M_2 = \Pi \C$ will never be Morita equivalent. 
	Indeed, if $N$ is such a Morita equivalence, then $N = \C$ or $N= \Pi \C$ and so $\overline{N} \otimes M_1 \otimes N$ can never be isomorphic to $M_2$, independently of the definitions of $\sigma_1$ and $\sigma_2$.
	
	Now we show that if $(\C,M_1,\sigma_1), (\C,M_2,\sigma_2)$ are stellar structures such that $M_1 = M_2$, then the two stellar structures are Morita-equivalent.
	Given $M_1$ is one of $\C$ or $\Pi \C$, an intertwiner $\sigma_1: \overline{M_1} \to M_1^{\op}$ such that $\sigma_1^{\op} \circ \overline{\sigma_1} = \id$ is a real structure on the one-dimensional vector space $\C$ or $\Pi \C$.
	These are given by $\sigma_1: z \mapsto a_1 \overline{z}$ with $a_1 \in U(1)$.
	Similarly we obtain $\sigma_2(z) = a_2 z$ for $z \in M_2$.
	Now take $N := \C$ to be the trivial bimodule.
	To make $N$ into a stellar bimodule, we have to equip it with a unitarity datum $\phi: M_1 = N \otimes_\C M_1 \otimes_{\overline{\C}^{\op}} \overline{N}^{\op} \to M_2 = M_1$.
	Such an invertible bimodule map is necessarily given by multiplication by some $b \in \C^\times$.
	We will proceed by finding the suitable $b$ that makes the map $\phi$ fulfil the Hermiticity condition required to make it a $1$-morphism of stellar algebras.
	Working out this condition gives the equation
	\[
	\overline{b} a_1 = a_2 b \iff a_1 = a_2 \frac{b^2}{|b|^2}
	\]
	Pick $b \in U(1)$ to be a square root of $a_1/a_2$.
	Then the displayed equation holds since $a_1,a_2$ have unit norm and so $\phi$ defines a unitarity datum on the Morita equivalence $N$.
	This makes $(N,\phi)$ into a $1$-isomorphism in the $2$-category of stellar algebras.
	\end{example}
	
	\begin{example}
	\label{ex:C*notmoritainv}
	Since stellar algebras are a Morita-invariant notion, the last example also shows that there are two stellar structures on $M_n(\C).$
	Also note that stellar algebra structures on $M_n(\C)$ induced by a $*$-algebra always have $M \cong M_n(\C)$ as a bimodule and so are all Morita equivalent, therefore giving equivalent stellar structures.
	Since matrix algebras admit $*$-structures that are not $C^*$, this in particular shows that being a $C^*$-algebra is not a Morita-invariant notion.
	\end{example}

\begin{example}
If $(A,M,\sigma)$ is a stellar algebra, then $(A,\Pi M,\Pi \sigma)$ is a stellar algebra.
If we start with $*$-algebras $A,B$, then a stellar bimodule between $A_*$ and $\Pi B_*$ is a pairing similar to Proposition \ref{prop:Hilbert module}, the only difference being that the pairing now has odd degree.
A similar statement holds for stellar bimodule between $\Pi A_*$ and $B_*$.
Stellar bimodules between $\Pi A_*$ and $\Pi B_*$ are in bijection with stellar bimodules between $A_*$ and $B_*$.
\end{example}

\begin{example}
\label{ex:stellar A_c}
Let $(A,M,\sigma)$ be a $\C$-antilinear stellar algebra.
Then the $(A,A)$-bimodule $A_{(-1)^F}$ becomes a stellar module as follows.
Note first that $(\overline{A}^{\op})_{(-1)^F}$ is an inverse of $\overline{A_{(-1)^F}}^{\op}$ by 
\[
(\overline{A}^{\op})_{(-1)^F} \otimes_{\overline{A}^{\op}} \overline{A_{(-1)^F}}^{\op} \to \overline{A}^{\op} \quad \overline{a}_1^{\op} (-1)^F \otimes \overline{a_2 (-1)^F}^{\op} \mapsto \overline{a}_1^{\op} \overline{a}_2^{\op} = (-1)^{|a_1||a_2|} \overline{a_2 a_1}^{\op}
\]
and a similar map in the other direction.
Note that in contrast to what one might expect, there is no sign $(-1)^{|a_2|}$ coming from exchanging the $a_2$ and the $(-1)^F$ in the above equation.
This makes us able to fill the desired square
\[
\begin{tikzcd}
A \arrow[rr,"A_{(-1)^F}"]& \ & A
\\
\overline{A}^{\op} \arrow[rr, "(\overline{A}^{\op})_{(-1)^F}"] \arrow[u,"M"] & \ & \overline{A}^{\op} \arrow[u,"M"] \arrow[ll,"\overline{A_{(-1)^F}}^{\op}", bend left]
\end{tikzcd}.
\]
For the middle square we used the fact that $A \mapsto A_{(-1)^F}$ is a natural transformation.
This results in the unitarity datum
\[
a (-1)^F \otimes_A m \otimes_{\overline{A}^{\op}} \overline{ b(-1)^F}^{\op} \mapsto (-1)^{|m|} am \overline{b}^{\op}.
\]
Note that confusingly there is again no sign with respect to $b$.
For example, if $(A,M,\sigma)$ comes from a $*$-algebra, then the corresponding $A$-valued inner product on $A_{(-1)^F}$ is
\[
\langle a (-1)^F , b (-1)^F \rangle = a b^*.
\]
This is a sesquilinear, nondegenerate Hermitian pairing satisfying the desired equations.
For example
\begin{align*}
\langle a (-1)^F a_0, b (-1)^F \rangle &= (-1)^{|a_0|} a a_0 b^* = (-1)^{|a_0| + |b| |a_0|} a (b a_0^*)^* = (-1)^{|a_0| + |b| |a_0|} \langle a (-1)^F, b a_0^* (-1)^F \rangle 
\\
&= (-1)^{|b| |a_0|} \langle a (-1)^F, b (-1)^F a_0^* \rangle.
\end{align*}
\end{example}

	\begin{example}
	A brief computation shows that there are exactly two complex antilinear super $*$-structures $*_{\pm}$ on $\C l_1$ defined by $e^{*_\pm} = \pm i e$.
	The two complex-antilinear stellar structures induced by these are not Morita equivalent.
	Suppose they were. Let $M$ be an invertible stellar $(\C l_1,\C l_1)$-bimodule between the stellar algebras coming from the $*$-structures $*_+$ and  $*_-$ respectively.
We then compute the $(\C l_1,*_- )$-valued inner product
\begin{align*}
e \langle 1,1 \rangle = \langle e,1 \rangle = \langle 1, e^{*_+} \rangle = \langle 1, ie \rangle = \langle 1, i \rangle e^{*_-} = - \langle 1,i \rangle i e = - \langle 1,1 \rangle e
\end{align*}
Since $\langle 1,1 \rangle$ is even, it is a multiple of $1$ and so the computation implies it is zero.
This contradicts the fact that the pairing should be nondegenerate.
	\end{example}

\begin{example}
Given a stellar algebra $A$. the multiplication map 
\[
A \otimes_A A \to A
\]
is unitary.
For example, for $A$ a $*$-algebra, this follows from the computation
\begin{align*}
\langle a_1 \otimes a_2, b_1 \otimes b_2 \rangle_{A \otimes_A A} &= (-1)^{|b_1||b_2|} \langle a_1 \langle a_2, b_2 \rangle_A, b_1 \rangle_A = (-1)^{|b_1||b_2|} a_1 a_2 b_2^* b_1^{*} =  \langle a_1 a_2, b_1 b_2 \rangle_A
\end{align*}
\end{example}

The following theorem is relevant for classifying two-dimensional TFTs with reflection structures that do not necessarily satisfy spin-statistics.

\begin{theorem}
\label{th: stellar algs}
There is an equivalence of bicategories between the core of antilinear stellar algebras $\stAlg$ and $2$-category of $\Z_2^R$-fixed points $(\sAlg^{\text{fd}})^{\Z_2^R}$ under the action $A \mapsto \overline{A}^{\op}$. 
\end{theorem}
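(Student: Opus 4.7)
The plan is to directly compare the explicit definition of the bicategory $\stAlg$ with the general formula for homotopy $\Z_2$-fixed points applied to the action $A\mapsto \overline{A}^{\op}$ on $\sAlg^{\text{fd}}$, as given in Appendix~\ref{App: 2-Group}. Since Definition~\ref{Def:stAlg} was engineered to be precisely the unpacking of this fixed-point data, the proof reduces to verifying that the assignment which sends a fixed point $(A,M,\sigma)$ to itself is an equivalence of bicategories at the level of objects, 1-morphisms, and 2-morphisms.

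First, I would unpack the data of a $\Z_2^R$-homotopy fixed point on an object of $\sAlg^{\text{fd}}$. A fixed point consists of an object $A$, a 1-morphism $M$ realizing the equivalence $A\simeq \overline{A}^{\op}$ (which is an invertible $(A,\overline{A}^{\op})$-bimodule), and a 2-isomorphism implementing the condition that the action squares to the identity. Using the canonical isomorphism $\overline{\overline{A}^{\op}}^{\op}\cong A$ supplied in Appendix~\ref{App: Super alg}, the square condition translates into the existence of an isomorphism $\sigma\colon M\to \overline{M}^{\op}$ satisfying $\overline{\sigma}^{\op}\circ \sigma = \id_M$. This is literally an object of $\stAlg$.

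Next, I would unpack 1-morphisms of fixed points between $(A_1,M_1,\sigma_1)$ and $(A_2,M_2,\sigma_2)$. Such a 1-morphism consists of an invertible $(A_2,A_1)$-bimodule $N$ together with a 2-isomorphism $\phi$ intertwining $N$ with its image under the $\Z_2^R$-action, namely an isomorphism
\[
\phi\colon N\otimes_{A_1} M_1 \otimes_{\overline{A_1}^{\op}}\overline{N}^{\op}\longrightarrow M_2.
\]
The coherence condition that $\phi$ is compatible with $\sigma_1,\sigma_2$ and with the chosen trivialization of the squared action is exactly the Hermiticity condition in Definition~\ref{Def:stAlg}. A similar unpacking shows that 2-morphisms of fixed points correspond precisely to unitary bimodule maps, and that the composition of stellar bimodules matches the composition of 1-morphisms of fixed points (the associator combining the tensor product with the isomorphism of Lemma~\ref{eq:optensor} appears in both).

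Finally, I would verify that this dictionary is strictly compatible with composition, units, and associators in both bicategories; this is essentially automatic since the stellar composition was defined by transporting the composition of fixed points along the identifications above. The main obstacle will not be any deep mathematical content but rather bookkeeping: one must carefully track the canonical natural isomorphisms $\overline{\overline{(-)}}\cong \id$, $(-)^{\op\op}\cong \id$, and $\overline{(-)}^{\op}\cong \overline{(-)^{\op}}$ together with the Koszul signs they introduce, and check that the signs appearing in the Hermiticity condition match those produced by the coherence data of the $\Z_2^R$-action on $\sAlg$ described in Appendix~\ref{App: Super alg}. Once this sign bookkeeping is consistent, fully faithfulness and essential surjectivity follow directly from the construction, giving the claimed equivalence of bicategories.
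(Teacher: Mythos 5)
Your proposal is correct and follows essentially the same route as the paper: the paper's proof likewise directly unpacks Definition~\ref{Def: Hfixed point} and Definition~\ref{def:1-mor of hfps} for the $\Z_2^R$-action $A\mapsto\overline{A}^{\op}$, identifying the fixed-point datum $\phi_{R,R}$ with $\sigma$, the condition from Diagram~\eqref{Diagram A} with $\overline{\sigma}^{\op}\circ\sigma=\id$, the 1-morphism filler $f_R$ with the unitarity datum, and its coherence condition with Hermiticity. The sign and coherence bookkeeping you flag is handled in the paper by the conventions fixed in Appendix~\ref{App: Super alg}, so no further argument is needed.
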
 
\begin{proof}
A $\Z_2^R$-fixed point consists of a super algebra $A$ together with an invertible $(A,\overline{A}{}^\op)$ bimodule $M$ and a $2$-isomorphism \[
\sigma := \phi_{R,R}: M \to \overline{M}^{\op},
\]
see Definition~\ref{Def: Hfixed point}.
The only nontrivial condition is Diagram \ref{Diagram A} in the case where $g = g' = g'' = R$.
This gives the desired condition $\sigma^{\op} \sigma = \id$.

Looking at Definition \ref{def:1-mor of hfps}, a 1-morphism between fixed points $(A,M_1, \sigma_1)$ and $(B,M_2, \sigma_2)$ consists of an $B$-$A$ 
bimodule $N$ together with a 2-isomorphism $\phi := f_R$ filling the square 
\[
\begin{tikzcd}
A \arrow[r,"N"]  & B 
\\
\overline{A}^{\op} \arrow[u,"M_1"] & \overline{B}^{\op} \arrow[l,"\overline{N}^{\op}"] \arrow[u,"M_2", swap]
\end{tikzcd}
\]
Writing out what this entails gives the unitarity datum with the desired domain and target.
The only condition of being a $1$-morphism of fixed points is the equality
\[
\begin{tikzcd}
\arrow[dddd,equals, bend right=50] A \arrow[rr,"N"] & \ & B \arrow[dddd,equals, bend left=50]
\\
\arrow[rr,Rightarrow,"{\phi}", shorten <=15, shorten >=15] \arrow[dd, bend right=50, Rightarrow, "{\sigma_1}", swap, shorten >= 10, shorten <= 10]& \ & \ \arrow[dd, bend left=50, Rightarrow, "{\sigma_2}", shorten >= 10, shorten <= 10]
\\
\overline{A}^{\op} \arrow[uu,"M_1"] \arrow[dd,"\overline{M}_1^{\op}"] & \ & \overline{B}^{\op} \arrow[ll,"\overline{N}^{\op}"] \arrow[uu,"M_2"]  \arrow[dd,"\overline{M}_2^{\op}"]
\\
\arrow[rr,Rightarrow,"\overline{\phi}^{\op}", shorten >=15,  shorten <=15] & \ & \
\\
\overline{\overline{B}^{\op}}^{\op} \arrow[rr, "\overline{\overline{N}^{\op}}^{\op}"]& \ & \overline{\overline{B}^{\op}}^{\op}
\end{tikzcd}
\]
This is equivalent to the Hermiticity condition.
Composition of $1$-morphisms is given by the composition of stellar bimodules as in Definition \ref{Def:stAlg}.

A 2-morphism of fixed points $(N, \phi) \Longrightarrow (N',\phi')$ is a bimodule map $\psi: N \to N'$ such that 
\[
\begin{tikzcd}
A \arrow[rr,"N'"] & \ & B
\\
\arrow[rr,Rightarrow,"\phi' ", shorten >=10,  shorten <=10] & \ & \
\\
\overline{A}^{\op} \arrow[uu,"M_1"] & \ & \overline{B}^{\op} \arrow[ll, "\overline{N'}^{\op}"] \arrow[uu, "M_2"] \arrow[ll, bend left, "\overline{N}^{\op}"]
\end{tikzcd}
= 
\begin{tikzcd}
A \arrow[rr,"N"] \arrow[rr, bend left, "N '"]& \ & B
\\
\arrow[rr,Rightarrow,"\phi", shorten >=10] & \ & \
\\
\overline{A}^{\op} \arrow[uu,"M_1"] & \ & \overline{B}^{\op} \arrow[ll, "\overline{N}^{\op}"] \arrow[uu, "M_2"]
\end{tikzcd}
\]
where we fill $N \Longrightarrow N '$ with $\psi$ and $\overline{N}^{\op} \Longrightarrow \overline{N'}^{\op}$ with $\overline{\psi}^{\op}$.
This is equivalent to the unitarity condition for $\psi$.
\end{proof}


We finish the section by defining the complex conjugate of a stellar algebra in a similar spirit to how we defined the complex conjugate for Hermitian super vector spaces in Section \ref{Sec:fermrep}.
Indeed if $(A,M,\sigma)$ is a stellar algebra, then using the commutation data of the $\Z_2^R$ and $\Z_2^B$ actions on $\sAlg^{\text{fd}}$ and the above theorem, we obtain a canonical structure of a stellar algebra on $\overline{A}$.
However, similarly to one spacetime dimension, we will need to change this stellar structure by $(-1)^F$ to get the correct classification of two-dimensional reflection theories in the coming sections.

\begin{definition}
\label{def:barstellar}
Define the stellar structure on $\overline{A}$ to be 
\[
\overline{M} \otimes_{A^{\op}} A^{\op}_{(-1)^F} \cong \overline{A}_{(-1)^F} \otimes_{\overline{A}} \overline{M}
\]
 with
\[
\overline{M} \otimes_{\overline{\overline{A}}^{\op}} \overline{\overline{A}}^{\op}_{(-1)^F} \xrightarrow{\overline{\sigma} \otimes \id}  \overline{\overline{M}}^{\op} \otimes_{\overline{\overline{A}}^{\op}} \overline{\overline{A}}^{\op}_{(-1)^F} 
\cong \overline{(\overline{A}_{(-1)^F} \otimes_{\overline{A}} \overline{M})}^{\op} 
\cong \overline{\overline{M} \otimes_{\overline{\overline{A}}^{\op}} \overline{\overline{A}}^{\op}_{(-1)^F}  }^{\op}
\]
\end{definition}

In a concrete schematic formula the bimodule map does
\[
\overline{m} (-1)^F \mapsto (-1)^{|m|} \overline{\sigma(m)} (-1)^F.
\]
In case the stellar structure comes from a $*$-algebra structure, this formula agrees with the $*$-algebra structure on $\overline{A}$ explained in Appendix \ref{Sec:*-algs}.

If $(A_1,M_1,\sigma_1), (A_2,M_2,\sigma_2)$ are two stellar algebras and $N$ is an $(A_2,A_1)$-bimodule, we make the $(\overline{A_2},\overline{A_1})$ bimodule $\overline{N}$ into a stellar bimodule with respect to the stellar structures on $\overline{A_i}$ as above by filling the diagram
\[
\begin{tikzcd}
\overline{\overline{A_1}}^{\op} \arrow[d,"{\overline{M_1}}"]  &  \overline{\overline{A_2}}^{\op} \arrow[l,"{\overline{\overline{N}}^{\op}}"] \arrow[d,"{\overline{M_2}}"]
\\
\overline{A_1} \arrow[d,"{\overline{A_1}_{(-1)^F}}"] \arrow[r,"\overline{N}"] & \overline{A_2} \arrow[d,"{\overline{A_2}_{(-1)^F}}"]
\\
\overline{A_1} \arrow[r,"\overline{N}"] & \overline{A_2}
\end{tikzcd}
\]
The lower square is filled by naturality of $(.)_{(-1)^F}$ and the upper by functorialty data of the contravariant functor $\overline{(.)}^{\op}$.
We can describe this explicitly in the case where $A_1$ and $A_2$ are $*$-algebras.
Recall from Appendix \ref{Sec:*-algs} that the $*$-algebra structure on $\overline{A_i}$ is defined as $\overline{a}^* = (-1)^{|a|} \overline{a^*}$ in agreement with Definition \ref{def:barstellar}.
Let $\langle .,. \rangle$ denote the $A_2$-valued inner product on $N$.
The resulting inner product on $\overline{N}$ is 
\[
\langle \overline{n}_1, \overline{n}_2 \rangle = (-1)^{|n_2|}\overline{\langle n_1, n_2 \rangle}
\]
This strange sign indeed makes it into a Hilbert bimodule for the $*$-structures on the $\overline{A_i}$.
For example
\begin{align*}
\langle \overline{n}_1, \overline{b n_2} \rangle &= (-1)^{|b| + |n_2|} \overline{\langle n_1, b n_2 \rangle}
= (-1)^{|b| + |n_2| + |b| |n_2|} \overline{\langle n_1, n_2 \rangle b^*}
=(-1)^{|n_2| + |b| |n_2|} \overline{\langle n_1, n_2 \rangle} \cdot \overline{b}^*
\\
&= \langle \overline{n}_1, \overline{n}_2 \rangle \overline{b}^*
\end{align*}
for $b \in A_2$.

\subsection{$\Spin_2 \rtimes (\Z_2^R\times B\Z_2^F)$ fixed points and stellar Frobenius algebras}
\label{Sec:O2 FP}

In Section~\ref{Sec:1D} we computed the category of $\Z_2^R \cong \Spin_1 \rtimes (\Z_2^R\times B\Z_2^F)$-fixed points in finite-dimensional complex super vector spaces (see Remark~\ref{Rem: Spin}).
The result was the category of super Hermitian vector spaces.
We used this to obtain the classification for general fermionic symmetry group by computing further $G_b$-fixed points.
In this section, we will study the two-dimensional analogue.
Therefore we will compute two-dimensional $\Spin$-TFTs as the bicategory of $\Spin_2 \rtimes (\Z_2^R\times B\Z_2^F)$-fixed points in finite-dimensional semisimple superalgebras.
The resulting bicategory has objects finite-dimensional semisimple Frobenius algebras that are symmetric in the ungraded sense together with a compatible stellar structure.
In the next section, we compute the $G_b$-action on this bicategory and its fixed points.

Recall the existence of a symmetric monoidal $\Z_2^B \times B\Z_2^F$-action on $\sAlg$ shown in Appendix \ref{Sec:Z2xBZ2 action on sAlg}.
The symmetric monoidality together with the cobordism hypothesis leads to a $\Z_2^B \times B\Z_2^F \times O_2$-action on $\sAlg^{\text{fd}}$ explicitly spelled out in Appendix \ref{Sec:total action}.
We will slightly readjust this action to make it into the $\Spin_2 \rtimes (\Z_2^R\times B\Z_2^F)$-action that we will need.
First of all, the reflection action $\Z_2^R$ is the composition of the bar $\Z_2^B$ and the action of $s \in O_1 \subseteq O_2$ by the dual functor (defined to be the dual inverse on $1$-morphisms).
Therefore we will focus on the diagonal element $R := B s$ as a generator.
Because $B$ commutes with the $O_2$-factor, $R$ will satisfy the same nontrivial commutation-relations with the $SO_2$-factor as $s$, i.e. $R \gamma = \gamma^{-1} R$ for $\gamma$ a $1$-morphism in the $2$-group $SO_2$.
We thus obtain a $\Z_2^B \times (SO_2 \rtimes (\Z_2^R \times B\Z_2^F))$-action on $\sAlg^{\text{fd}}$, where the action of $B\Z_2^F$ on $SO_2$ is trivial while the action of $\Z_2^R$ is given by $\gamma \mapsto \gamma^{-1}$ on morphisms.
We now want to lift this to a $\Z_2^B \times (\Spin_2 \rtimes (\Z_2^R \times B\Z_2^F))$-action compatibly with the double cover map $\Spin_2 \to SO_2$.
For this we have to make a choice of how $B \Z_2^F$ acts on $\Spin_2$.
To implement the spin-statistics connection correctly, we take this action to be given by the inclusion of $c  \in \Spin_2$, which is indeed central and squares to one.
Therefore we have a $\Z_2^B \times (\Spin_2 \rtimes (\Z_2^R\times B\Z_2^F))$-action on $\sAlg^{\text{fd}}$.
To compute 2d $\Spin$-TFTs with reflection structure and spin-statistics connection, we will restrict this action and only compute $\Spin_2 \rtimes (\Z_2^R\times B\Z_2^F)$-fixed points.
For symmetry groups with time-reversal symmetries however, it turns out that we will need the $\Z_2^B$-action, so it is useful to keep it around.

To warm up for the general case, we first restrict $\Spin_2 \rtimes \Z_2^F$-fixed points, so spin theories satisfying spin-statistics but without reflection structures.
This computation is also of interest for the classification of 2d spin-statistics theories without reflection structure with a general fermionic symmetry group.
Recall that we use the fermionically skeletal model for $\Spin_2$ in which there are two objects $1, c$ and a generating morphism $\eta: 1 \to c$.
Under the double cover to $SO_2$, $\eta$ is mapped to a generator of $\pi_1(SO_2)$.
This generator maps $A$ to the Serre automorphism of $A$, which is the $(A,A)$-bimodule given by the linear dual $A^* = \operatorname{Lin}(A,\C)$ in $\sAlg$.
We emphasize that is has no relationship with the dual of the object $A \in \sAlg$, which is given by $A^{op}$.
See Section \ref{Sec:serre} for the definition of the Serre automorphism in a general bicategory and Section \ref{Sec:salgdual} for information specifically on $\sAlg$.
The naturality of this automorphism of $A$ as a natural transformation from the identity functor to itself contains more data, as discussed in Appendix \ref{Sec:salgdual}.
The $B \Z_2^F$ acts by $A \mapsto A_{(-1)^F}$ and we have a $\Spin_2\rtimes B\Z_2^F$-action because the two fillers of the square
\[
\begin{tikzcd}
A \arrow[r, "A_{(-1)^F}"] \arrow[d,"A^*",swap] & A \arrow[d,"A^*"]
\\
A \arrow[r,"A_{(-1)^F}",swap]& A
\end{tikzcd}
\]
corresponding to naturality of $A \mapsto A_{(-1)^F}$ and naturality of $A \mapsto A^*$ are equal.
This is already a consequence of $(-1)^F$ being a symmetric monoidal natural transformation so that it preserves the $SO_2$-action, but we also check it directly in Example \ref{ex: (-1)^F Serre nat}.

\begin{lemma}
The bicategory of two-dimensional $\Spin_2$-TFTs i.e. $\Spin_2$-fixed points in the core of finite-dimensonal semisimple superalgebras is equivalent to the bicategory with 
\begin{itemize}
\item \textbf{Objects:} strongly $\Z_2^c$-graded finite-dimensional semisimple superalgebras $A \oplus A_c$ together with an invertible $(A,A)$-bimodule isomorphism $A_\nu: A \to A^* \otimes_A A_c$ such that
\[
\begin{tikzcd}
A^* \otimes_A A_c \otimes_A A_c \arrow[r]
&
 A_c \otimes_A A^* \otimes_A A_c
\\
A \otimes_A A_c \arrow[u, "A_\nu"] \arrow[r]
&
A_c \otimes_A A \arrow[u, "A_\nu"]
\end{tikzcd}
\]
commutes, where the upper horizontal arrow is given by Serre naturality $S_{A_c}$.
\item \textbf{1-morphisms:} from $(A \oplus A_c, A_\nu)$ to $(B \oplus B_c, B_\nu)$ consist of a $(B,A)$-bimodule $N$ and a bimodule isomorphism $f_c: N \otimes_A A_c \to B_c \otimes_B N$ such that the diagrams
\[
\begin{tikzcd}
N \otimes_A A_c \otimes_A A_c \arrow[d,"{f_c \otimes \id}",swap] \arrow[r] & N
\\
B_c \otimes_B N \otimes_A A_c \arrow[r, "{\id \otimes f_c}",swap] & B_c \otimes_B B_c \otimes_B N \arrow[u]
\end{tikzcd}
\]
and 
\begin{equation}
\label{eq: compatible Frobenius}
\begin{tikzcd}
A^* \otimes_A A_c \otimes_A N \arrow[r]  & N \otimes_B B^* \otimes_B B_c 
\\
N\arrow[u,"A_\nu"] \arrow[ur,"B_\nu",swap]& \
\end{tikzcd}
\end{equation}
commute.
\item \textbf{2-morphisms:} from $(N,f_c)$ to $(N',f_c')$ are bimodule isomorphisms $\chi: N \to N'$ such that
\[
\begin{tikzcd}
N \otimes_A A_c \otimes_A A_c \arrow[d,"{f_c \otimes \id}",swap] \arrow[r] & N
\\
B_c \otimes_B N \otimes_A A_c \arrow[r, "{\id \otimes f_c}",swap] & B_c \otimes_B B_c \otimes_B N \arrow[u]
\end{tikzcd}
\]
commutes.
\end{itemize}
\end{lemma}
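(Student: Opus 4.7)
The plan is to apply the general framework for computing homotopy fixed points of $2$-group actions on bicategories developed in Appendix~\ref{App: 2-Group} to the fermionically skeletal model of $\Spin_2$ described in Section~\ref{Sec:fermskeletal}. Recall that this model has two objects $\{1,c\}$ with $c \otimes c = 1$ strictly and a generating morphism $\eta: 1 \to c$ which projects under the double cover $\Spin_2 \to SO_2$ to a generator of $\pi_1(SO_2)$. Since the action of $\Spin_2$ on $\sAlg^{\fd}$ factors through the cobordism-hypothesis $SO_2$-action, the object $c$ acts by the identity functor, whereas $\eta$ acts at $A$ via the Serre component $S_A = A^*$.

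First I would unpack the object-level fixed-point data. The $c$-datum yields an invertible $(A,A)$-bimodule $A_c$ realising the fixed-point structure over the identity functor, and the monoidality cell attached to $c \otimes c = 1$ gives a coherent trivialisation $A_c \otimes_A A_c \cong A$; taken together, these assemble $A \oplus A_c$ into a strongly $\Z_2^c$-graded semisimple superalgebra. The $\eta$-datum then contributes a $2$-cell comparing $\rho(\eta)(A) = A^*$ with the $c$-fixed-point bimodule, which I read off as a bimodule isomorphism $A_\nu: A \to A^* \otimes_A A_c$.

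The main obstacle will be to verify that the stated square involving Serre naturality $S_{A_c}: A^* \otimes_A A_c \to A_c \otimes_A A^*$ is precisely the single nontrivial coherence condition imposed by the fermionically skeletal model. My strategy is to compose $\eta$ with its translate $c \otimes \eta: c \to c \otimes c = 1$ inside the semidirect-product model; the resulting loop in $\Spin_2$ projects to twice the generator of $\pi_1(SO_2)$ and hence acts by $S_A \otimes_A S_A$ up to the canonical monoidality data of the Serre natural transformation. Chasing the general fixed-point coherence diagram, while applying Serre naturality at $A_c$ as described in Appendix~\ref{Sec:salgdual}, should yield exactly the displayed square. All other pentagon conditions either trivialise because the associator of the fermionically skeletal model is the identity in this case, or follow formally from semisimplicity.

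The translation on $1$- and $2$-morphisms is essentially formal given Definition~\ref{Def: Hfixed point}. A $1$-morphism of $\Spin_2$-fixed points $(A, A_c, A_\nu) \to (B, B_c, B_\nu)$ consists of a bimodule $N$ together with a $2$-cell per generator: the $c$-cell gives $f_c: N \otimes_A A_c \to B_c \otimes_B N$, its compatibility with $c \otimes c = 1$ produces the first square in the statement, and the $\eta$-cell combined with Serre naturality of $N$ produces Equation~\ref{eq: compatible Frobenius}. A $2$-morphism is then just a bimodule map intertwining the $f_c$'s. Packaging this data yields the claimed bicategory, and the resulting assignment is the equivalence predicted by the general homotopy-fixed-point formalism.
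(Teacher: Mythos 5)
Your proposal is correct in substance but organizes the computation differently from the paper. The paper does not unpack the homotopy fixed-point definition for the skeletal model of $\Spin_2$ all at once; instead it applies the decomposition theorem (Proposition~\ref{Prop: Fixed points}) to the exact sequence $1 \to \Z_2^c \to \Spin_2 \to \SO_2 = B\Z \to 1$: it first identifies $\Z_2^c$-fixed points with strongly $\Z_2^c$-graded superalgebras $(A, A_c, \phi_{c,c})$ together with the $(N, f_c)$-morphisms, then computes the induced $B\Z$-action on that bicategory (the generator acts by $A^* \otimes_A A_c$, with $f_c$-data given by Serre naturality), and finally takes $B\Z$-fixed points, which produces $A_\nu$ and the displayed square as the condition that $A_\nu$ is a $2$-morphism of $\Z_2^c$-fixed points. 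Your direct route reaches the same answer, and the data you extract from the $c$- and $\eta$-generators matches the paper's. What the staged approach buys is reusability: the intermediate bicategory of $\Z_2^c$-fixed points and the induced circle action on it are used again verbatim in the subsequent computation of spin-statistics theories (Proposition~\ref{prop:spinstatisticsTFTs}), where $B\Z_2^F$ acts through the same $c$-structure. One small caveat on your sketch: the Serre-naturality square is not most naturally extracted from the composite loop $(\id_c \otimes \eta)\circ \eta$ (which governs the alternative packaging of the fixed point as a trivialization of $S_A^{\otimes 2}$, cf.\ the remark after the lemma); it comes from the interchange condition, Diagram~\eqref{Diagram B}, comparing $\eta \otimes \id_c$ with $\id_c \otimes \eta$, where the modification $R_{\gamma_2,\gamma_1}$ feeds in the Serre naturality $2$-cell at $A_c$. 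With that diagram identified, your argument goes through and agrees with the paper's conclusion.
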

\begin{proof}
We use the decomposition theorem (see Proposition~\ref{Prop: Fixed points}) on the defining exact sequence
\[
1 \to \Z_2^c \to \Spin_2 \to SO_2 = B\Z \to 1.
\]
So the first step is computing $\Z_2^c$-fixed points for the trivial $\Z_2^c$-action.
In this bicategory objects are triples $(A,A_c, \phi_{c,c}: A_c \otimes_A A_c \to A)$ consisting of a finite-dimensional semisimple superalgebra $A$, an invertible $(A,A)$-bimodule $A_c$ and an even bimodule isomorphism $\phi_{c,c}: A_c \otimes_A A_c \to A$.
These satisfy the condition that the two maps $A_c \otimes_A A_c \otimes_A A_c \to A$ defined by $\phi_{c,c}$ are equal.
This is equivalent to $A \oplus A_c$ being a strongly $\Z_2^c$-graded superalgebra.
$1$-morphisms from $(A, A_c, \phi_{c,c})$ to $(B, B_c, \psi_{c,c})$ in this bicategory consist of an invertible $(B,A)$-bimodule $N$ and an even bimodule isomorphism $f_c: N \otimes_A A_c \to B_c \otimes_B N$ such that the diagram
\[
\begin{tikzcd}
N \otimes_A A_c \otimes_A A_c \arrow[d,"{f_c \otimes \id}",swap] \arrow[r,"{\phi_{c,c}}"] & N
\\
B_c \otimes_B N \otimes_A A_c \arrow[r, "{\id \otimes f_c}",swap] & B_c \otimes_B B_c \otimes_B N \arrow[u,"{\psi_{c,c}}",swap]
\end{tikzcd}
\]
commutes.
$2$-morphisms from $(N,f_c)$ to $(N',f_c')$ are even invertible bimodule maps $\chi: N \to N'$ such that the diagram
\[
\label{eq: 2-morph condition}
\begin{tikzcd}
N \otimes_A A_c \arrow[d,"f_c", swap] \arrow[r,"\chi"] & N' \otimes_A A_c \arrow[d,"f_c'"]
\\
B_c \otimes_B N \arrow[r,"\chi",swap] & B_c \otimes_B N'
\end{tikzcd}
\]
commutes.

Next we compute the $B\Z$-action on $\Z_2^c$-fixed points.
The action is uniquely specified once we give the action of a generator of $B\Z$.
Pick the generator which lifts to the morphism $\nu: 1 \to c$ in $\Spin_2$ and note it is not a loop.
In the original action of $\Spin_2$ on the core of fully dualizable superalgebras, $\rho(\nu)$ was a natural transformation from the identity to $\rho(c)$.
Even though $\rho(c)$ is the identity too, the induced action on $\Z_2^c$-fixed points coming from the decomposition theorem requires us to use the fixed point data $A_c$ to get $\rho(c) \Longrightarrow \id_{\sAlg^{\text{fd}}}$.
Therefore the action of $SO_2$ on $\Z_2^c$-fixed points maps $(A,A_c, \phi_{c,c})$ to the $1$-morphism of fixed points $(A,A_c, \phi_{c,c}) \to (A,A_c, \phi_{c,c})$ given by $A^* \otimes_A A_c$ as a $1$-morphism in $\sAlg$. (Equivalently we could have chosen to work with $A_c \otimes_A A^*$, which gives an equivalent action using Serre naturality applied to $A_c$)
To make it into a $1$-morphism in $(\sAlg^{\text{fd}})^{\Z_2^c}$ we also have to give an isomorphism
\[
(A^* \otimes_A A_c) \otimes_A A_c \to A_c \otimes_A (A^* \otimes_A A_c)
\]
which is given by Serre naturality.

The final step to compute the data of an arbitrary $\Spin_2$-TFTs is to compute $B\Z$-fixed points on the $\Z_2^c$-fixed point category.
The fixed point is uniquely specified by the fixed point on the generating morphism $\nu$, which is a single $2$-morphism $A_\nu: A \to A^* \otimes_A A_c$ in the bicategory of $\Z_2^c$ fixed points.
In other words, it is an even invertible bimodule isomorphism such that the diagram
\[
\begin{tikzcd}
A^* \otimes_A A_c \otimes_A A_c \arrow[r]
&
 A_c \otimes_A A^* \otimes_A A_c
\\
A \otimes_A A_c \arrow[u, "A_\nu"] \arrow[r]
&
A_c \otimes_A A \arrow[u, "A_\nu"]
\end{tikzcd}
\]
commutes.
So an object in the $\Spin_2$-fixed point bicategory is given by a quadruple $(A, A_c, \phi_{c,c},A_\nu)$.

A $1$-morphism $(A, A_c, \phi_{c,c},A_\nu) \to (B, B_c, \psi_{c,c},B_\nu)$ of fixed points is then given by an invertible $(B,A)$-bimodule $N$ and an invertible even bimodule map $f_c: N \otimes_A A_c \to B_c \otimes_A N$ satisfying the condition from before plus the extra condition that the diagram
\begin{equation}
\begin{tikzcd}
A^* \otimes_A A_c \otimes_A N \arrow[r]  & N \otimes_B B^* \otimes_B B_c 
\\
N\arrow[u,"A_\nu"] \arrow[ur,"B_\nu",swap]& \
\end{tikzcd}
\end{equation}
commutes.
The horizontal arrow is given by the composition of $f_c$ and the naturality of the Serre.
A $2$-morphism $(N,f_c) \Longrightarrow (N',f_c')$ of $\Spin_2$-fixed points is given by an even bimodule isomorphism $\chi: N \to N'$ such that the diagram \ref{eq: 2-morph condition} commutes as before.
\end{proof}

\begin{remark}
Our description of $\Spin_2$-TFTs is equivalent to the a priori simpler description in which we take fixed points for the $B\Z$-action generated by the Serre automorphism squared $A \mapsto A^* \otimes A^*$.
In that approach $\Spin_2$-fixed points are described by the single datum of a bimodule isomorphisms $A \mapsto A^* \otimes_A A^*$ instead of the more complicated quadruple $(A,A_c, \phi_{c,c},A_\nu)$.
The relationship between the two is that a quadruple can be mapped to
\[
A \xrightarrow{A_\nu \otimes A_\nu} A^* \otimes_A A_c \otimes_A A^* \otimes_A A_c \to A^* \otimes_A A^* \otimes_A A_c \otimes_A A_c \xrightarrow{\phi_{c,c}} A^* \otimes_A A^*
\]
where we used Serre naturality.

Our approach is preferred in this document for two reasons: when working with fermionic internal symmetry groups $G$ with nontrivial special central element $c \in G$, fixed points for the subgroup $\Z_2^c \subseteq G$ can be identified with the $\Z_2$-graded superalgebra $A \oplus A_c$.
This data fits very naturally in the rest of the fixed point structure.
Secondly, we are mainly interested in topological field theories with a spin-statistics connection, in which $A \oplus A_c$ can be canonically identified with the $\Z_2$-graded superalgebra $A \oplus A_{(-1)^F}$ as we will show in Proposition \ref{prop:spinstatisticsTFTs}.
\end{remark}

\begin{remark}
\label{Rem:O2iso}
There is an isomorphism of $2$-groups $\Spin_2 \rtimes B\Z_2^F \cong B\Z$ sending $c$ to $1$, $(-1)^F$ to the identity and $\gamma: 1 \to c$ to a generator.
Under this isomorphism, the $B\Z$-action on $\sAlg^{\text{fd}}$ is $A \mapsto A_{(-1)^F} \otimes_A A^*$. 
\end{remark}

\begin{proposition}
\label{prop:spinstatisticsTFTs}
The bicategory of $\Spin_2$-TFTs satisfying spin-statistics, i.e. $\Spin_2 \rtimes \Z_2^F$-fixed points, is equivalent to the bicategory $\catf{ungrFrob}$ of ungraded-symmetric Frobenius algebras which is defined as
\begin{itemize}
\item \textbf{Objects:} Finite-dimensional semisimple superalgebras $A$ together with a functional $\lambda: A \to \C$ vanishing on odd elements such that the pairing $(a,b) \mapsto \lambda(ab)$ is nondegenerate and $\lambda(ab) = \lambda(ba)$
\item \textbf{1-morphisms:} from $(A,\lambda)$ to $(A', \lambda')$ are invertible $(A',A)$-bimodules $M$ such that the Serre naturality isomorphism
\[
S_M: A'^* \otimes_{A'} M \to M \otimes_A A^*
\]
satisfies
\[
S_M(\lambda' \otimes m) = (-1)^{|m|} m \otimes \lambda.
\]
\item \textbf{2-morphisms:} from $M$ to $N$ are arbitrary even bimodule isomorphisms.
\end{itemize}
\end{proposition}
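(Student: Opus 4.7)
The plan is to apply Remark~\ref{Rem:O2iso} immediately to reduce the problem to computing $B\Z$-fixed points in $\sAlg^{\text{fd}}$ for the action given by the Serre-type $(A,A)$-bimodule $A_{(-1)^F} \otimes_A A^*$. Because $B\Z$ is the free $2$-group on one invertible $1$-morphism (with $\pi_2 = 0$), a homotopy fixed point on an object $A$ is just an invertible $(A,A)$-bimodule map
\[
\phi \colon A \longrightarrow A_{(-1)^F} \otimes_A A^*,
\]
with no higher coherence to verify. Since $A_{(-1)^F}$ is self-inverse, tensoring $\phi$ with $A_{(-1)^F}$ on the left produces an equivalent datum, namely an even bimodule isomorphism $\widetilde{\phi} \colon A_{(-1)^F} \to A^*$, and such isomorphisms are determined by $\lambda := \widetilde{\phi}(1) \in A^*$.

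Next I would show that the bimodule conditions on $\widetilde{\phi}$ force exactly the two defining properties of an ungraded-symmetric Frobenius form. Left $A$-linearity gives $\widetilde{\phi}(a) = a \cdot \lambda$, while right $A$-linearity (keeping in mind that the right action on $A_{(-1)^F}$ is twisted by the grading automorphism) gives $\widetilde{\phi}(a) = (-1)^{|a|} \lambda \cdot a$. Combining these yields the central condition $a \cdot \lambda = (-1)^{|a|} \lambda \cdot a$ in $A^*$. Evaluating at $x = 1$ and using the usual formulas for the bimodule structure of $A^*$ shows that $\lambda$ must vanish on odd elements of $A$; rewriting the central condition using this vanishing then gives $\lambda(ab) = \lambda(ba)$ ungradedly. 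Invertibility of $\phi$ translates to nondegeneracy of the pairing $(a,b) \mapsto \lambda(ab)$. The essential input here is the extra factor of $(-1)^{|a|}$ supplied by the $A_{(-1)^F}$-twist, which is precisely what converts the graded symmetry one would obtain in the $\Spin_2$-only case into ungraded symmetry; this is where the spin-statistics connection is encoded.

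For $1$-morphisms, a homotopy fixed point structure on an invertible $(B,A)$-bimodule $M$ is a filler of the naturality square for the $B\Z$-action, i.e.\ an even bimodule isomorphism
\[
M \otimes_A (A_{(-1)^F} \otimes_A A^*) \longrightarrow (B_{(-1)^F} \otimes_B B^*) \otimes_B M
\]
compatible with $\phi$ and $\phi'$. Using the naturality of both $(-)_{(-1)^F}$ (Example~\ref{ex: (-1)^F Serre nat}) and the Serre automorphism $S_M \colon B^* \otimes_B M \to M \otimes_A A^*$, this filler is automatically determined, so the only data is the choice of $M$ itself; what remains is the single coherence equation arising from the interaction of $\phi$, $\phi'$, $S_M$ and the $(-1)^F$-twist, and a direct computation converts it into the stated identity $S_M(\lambda' \otimes m) = (-1)^{|m|} m \otimes \lambda$. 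Finally, $2$-morphisms: since $B\Z$ has no nontrivial $2$-morphisms, a $2$-morphism of fixed points between $(M,f)$ and $(M',f')$ is simply an even bimodule isomorphism $M \to M'$ with no further conditions, matching the definition of $\catf{ungrFrob}$.

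The functor from $\Spin_2 \rtimes B\Z_2^F$-fixed points to $\catf{ungrFrob}$ is $(A,\phi) \mapsto (A,\lambda)$ on objects and the identity on the underlying bimodules and bimodule maps; the construction above provides an explicit inverse on each level, proving the equivalence. The main obstacle is the careful sign-tracking in the computation turning $a \cdot \lambda = (-1)^{|a|} \lambda \cdot a$ into the ungraded symmetry condition, together with checking that the $1$-morphism compatibility reduces cleanly to the stated Serre-naturality identity; this is where one must carefully reconcile our conventions on the bimodule $A_{(-1)^F}$ (Definition~\ref{def:spin-statistics bimodule}), on the $(A,A)$-bimodule structure of $A^*$ (Appendix~\ref{App: Super alg}), and on the naturality square for $(-1)^F$ used in passing from $A_{(-1)^F} \otimes_A A^*$ to the description in $\catf{ungrFrob}$.
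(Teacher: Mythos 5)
Your proof is correct, but it takes a different route from the one the paper actually writes out. You go through Remark~\ref{Rem:O2iso}, identifying $\Spin_2 \rtimes B\Z_2^F \cong B\Z$ acting by $A \mapsto A_{(-1)^F}\otimes_A A^*$, so that a fixed point is the single datum of a trivialization $\phi\colon A \to A_{(-1)^F}\otimes_A A^*$ with no residual coherence, and the Frobenius form drops out of the bimodule-map conditions on $\widetilde{\phi}\colon A_{(-1)^F}\to A^*$. The paper explicitly acknowledges this as ``the most straightforward proof'' in its opening sentence but deliberately does not carry it out; instead it applies the decomposition theorem to $1 \to \Spin_2 \to \Spin_2\rtimes B\Z_2^F \to B\Z_2^F \to 1$, first packaging $\Spin_2$-fixed points as quadruples $(A,A_c,\phi_{c,c},A_\nu)$, then computing the induced $B\Z_2^F$-action and its fixed points $(A,A_c,\phi_{c,c},A_\nu,A_\gamma)$, and finally collapsing $A_c\cong A_{(-1)^F}$ via $A_\gamma$ to land in $\catf{ungrFrob}$. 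What your route buys is brevity and a cleaner conceptual source for the ungraded (rather than graded) symmetry: the extra $(-1)^{|a|}$ from the $A_{(-1)^F}$-twist. What the paper's route buys is a rehearsal of the decomposition-theorem machinery that is indispensable in the subsequent, harder fixed-point computations, and it keeps the intermediate $\Z_2^c$-graded algebra $A\oplus A_c$ visible, which is the structure that reappears in the $\mathcal{G}$-graded algebras of the main theorem. Two small points to tighten: the vanishing of $\lambda$ on odd elements follows most directly from $\widetilde{\phi}$ being an \emph{even} bimodule map (all $2$-morphisms in $\sAlg$ are even), rather than from evaluating the centrality relation at $1$; and the assertion that $B\Z$-fixed points carry no coherence beyond the generator deserves the one-line justification that $\pi_1(B\Z)=\Z$ is free, so Diagram~\eqref{Diagram B} for composites is automatic once the generator's datum is fixed.
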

\begin{proof}
The most straightforward proof directly applies Remark \ref{Rem:O2iso}, also see the remark directly after the proof.
In general, the machinery of the decomposition theorem is useful and so we illustrate it in the case at hand.
Thus we consider the exact sequence of $2$-groups
\[
1 \to \Spin_2 \to \Spin_2 \rtimes B\Z_2^F \to B\Z_2^F \to 1.
\]
Recall that the corresponding action of $B \Z_2^F$ is given by picking out the nontrivial object $c$ in $\Spin_2$ which is central.
In other words, in the $2$-group $\Spin_2 \rtimes B\Z_2^F$ it defines a morphism $1 \to c$.
Just as before this implies that to obtain the $B\Z_2^F$-action on $\Spin_2$-fixed points, we have to compose the natural transformation $(-1)^F: \id \to \rho(c)$ with the fixed point data for $\Z_2^c$.
Using the notation from the proof of the last lemma, the result is that a $\Spin_2$-fixed point $(A, A_c, \phi_{c,c},A_\nu)$ is mapped to the $1$-morphism $A_{(-1)^F} \otimes_A A_c$ in $\sAlg^{\text{fd}}$.
The data of making it into a $1$-morphism of $\Spin_2$-fixed points still requires what we called $f_c$.
In this case it is given by the naturality of $(-1)^F$
\[
A_{(-1)^F} \otimes_A A_c \otimes_A A_c \to A_c \otimes_A A_{(-1)^F} \otimes_A A_c
\]
The remaining data of this defining a $B\Z_2^F$-action is the modification $(-1)^F \circ (-1)^F \Rrightarrow \id$ where the natural transformations are between functors on the bicategory of $\Spin_2$-fixed points.
The left natural tranformation is given by mapping a $\Spin_2$-fixed point $(A, A_c, \phi_{c,c}, A_\nu)$ to the bimodule $A_{(-1)^F} \otimes_A A_c \otimes_A A_{(-1)^F} \otimes_A A_c$ together with the bimodule isomorphism
\[
(A_{(-1)^F} \otimes_A A_c \otimes_A A_{(-1)^F} \otimes_A A_c) \otimes_A A_c \cong A_c \otimes_A (A_{(-1)^F} \otimes_A A_c \otimes_A A_{(-1)^F} \otimes_A A_c)
\]
given by applying the map from before twice.
The data of $(-1)^F$ squaring to one is given by composing the naturality of $A \mapsto A_{(-1)^F}$ with the multiplication maps $ A_{(-1)^F} \otimes_A A_{(-1)^F} \to A$ and $\phi_{c,c}$:
\begin{equation}
\label{eq: squaring condition}
A_{(-1)^F} \otimes_A A_c \otimes_A A_{(-1)^F} \otimes_A A_c \to A_{(-1)^F} \otimes_A A_{(-1)^F} \otimes_A A_c \otimes A_c \to A.
\end{equation}
We have thus computed the $B\Z_2^F$-action on $\Spin_2$-fixed points.

We now compute fixed points for this action to find the bicategory of $\Spin_2$-theories with spin-statistics connection.
Objects consist of a $\Spin_2$-fixed point $(A,A_c,\phi_{c,c},A_\nu)$ and a bimodule isomorphism $A_\gamma: A \to A_{(-1)^F} \otimes_A A_c$ satisfying several conditions, where we denoted the morphism $(-1)^F \in B\Z_2^F$ acting on $\Spin_2$-fixed points by $\gamma$.
First of all $A_\gamma$ has to be a $2$-morphism of $\Spin_2$-fixed points, leading to the commutative diagram
\[
\begin{tikzcd}
A_c \arrow[r, "A_\gamma"] \arrow[d,equals] & (A_{(-1)^F} \otimes_A A_c) \otimes_A A_c \arrow[d]
\\
A_c \arrow[r, "A_\gamma"] & A_c \otimes_A (A_{(-1)^F} \otimes_A A_c)
\end{tikzcd}
\]
Let $c \in A_c$ be the unique element such that $A_\gamma(1) = (-1)^F \otimes c$.
Then the above commutative diagram is equivalent to the commutation relation $ac = (-1)^{|a|}ca$ in the $(A,A)$-bimodule $A_c$.
Next, the fact that this is not just a $B\Z$ but a $B\Z_2$-fixed point gives the condition that the composition
\[
 A \xrightarrow{A_\gamma \otimes A_\gamma} A_{(-1)^F} \otimes_A A_c \otimes_A A_{(-1)^F} \otimes_A A_c \to A
\]
is the identity, where the last morphism is given in Equation \ref{eq: squaring condition}.
This condition on $A_\gamma$ is equivalent to $\phi_{c,c}(c,c) = 1$, i.e. $c^2 = 1$ in the $\Z_2^c$-graded superalgebra $A \oplus A_c$.

We turn to $1$-morphisms
\[
\underline{A} := (A,A_c,\phi_{c,c}, A_\nu, A_\gamma) \to  \underline{B} := (B,B_c,\psi_{c,c}, B_\nu, B_\gamma)
\]
between $\Spin_2$-TFTs with spin-statistics connection.
Being a $1$-morphism in $\Spin_2$-fixed points, they consist of an invertible $(B,A)$-bimodule $N$ and an invertible even bimodule map $f_c: N \otimes_A A_c \to B_c \otimes N$ which intertwine $\phi_{c,c}$ with $\psi_{c,c}$ and $A_\nu$ with $B_\nu$ as in Diagram \ref{eq: compatible Frobenius}.
The extra condition on $(N,f_c)$ being a morphism between $\Spin_2 \rtimes B\Z_2^F$-fixed points is
\begin{equation}
\begin{tikzcd}
A_{(-1)^F} \otimes_A A_c \otimes_A N \arrow[r]  & N \otimes_B B_{(-1)^F} \otimes_B B_c 
\\
N\arrow[u,"A_\gamma"] \arrow[ur,"B_\gamma"]& \
\end{tikzcd}
\end{equation}
where we used $f_c$ and the naturality of $A \mapsto A_{(-1)^F}$ in the horizontal arrow.
A $2$-morphism $\chi: (N,f_c) \to (N',f_c')$ of $\Spin_2 \rtimes B\Z_2^F$-fixed points is simply a $2$-morphism of $\Spin_2$-TFTs.

We claim that this bicategory is equivalent to the bicategory $\catf{ungFrob}$ of ungraded-symmetric Frobenius algebras.
The key idea is that $A_\nu$ provides a canonical isomorphism between $A_c$ and $A_{(-1)^F}$, allowing us to effectively identify $A_c$ with $A_{(-1)^F}$.
Note that there is no reason to be careful about distinguishing $A_c$ and its inverse, because we have been given a canonical way to identify $A_c$ as its own inverse through $\phi_{c,c}$.
Since the isomorphism preserves all relevant data, such as mapping the isomorphism $\phi_{c,c}$ to the data of the $B\Z_2^F$-action squaring to one $A_{(-1)^F} \otimes_A A_{(-1)^F} \to A$, this allows us to forget the information $A_c, \phi_{c,c}$ and $A_\gamma$.
More precisely, let $A_{\gamma \nu}: A \to A^* \otimes_A A_{(-1)^F}$ denote the composition of $A_\nu$ with the identification $A_c \cong A_{(-1)^F}$ given by $A_\gamma$.
Such a bimodule isomorphism is equivalent to an ungraded-symmetric Frobenius structure $\lambda$ as follows.
It is uniquely specified by $A_{\gamma \nu}(1) \in A^* \otimes_A A_{(-1)^F}$, which without loss of generality is an element of the form $ \lambda \otimes (-1)^F \in A_{(-1)^F} \otimes A^*$ for some element $\lambda \in A^*$.
The condition that $A_{\gamma \nu}$ is a bimodule map is equivalent to $\lambda(ab) = \lambda(ba)$.
Its invertibility is equivalent to the pairing $(a,b) \mapsto \lambda(ab)$ being nondegenerate.
The condition that $A_{\gamma \nu}$ is even is equivalent yo $\lambda$ being even, which means it vanishes on odd elements of $A$.
We have thus provided a map $F$ from objects of $(\sAlg^{\text{fd}})^{\Spin_2 \rtimes B\Z_2^F}$ to objects of $\catf{ungFrob}$.
We will now extend $F$ to a functor.

On $1$-morphisms we map 
\[
\underline{A} \xrightarrow{(N,f_c)} \underline{B}
\]
to the $1$-morphism $N$.
Consider the following diagram
\[
\begin{tikzcd}
N \arrow[rr, bend left, "{A_{\gamma \nu}}"] \arrow[dd, bend right=80, "{B_{\gamma \nu}}", swap] \arrow[r, "{A_\nu}"] \arrow[d, "{B_\nu}"]  & N \otimes_A A^* \otimes_A A_c \arrow[r,"{A_\gamma}"] & N \otimes_A A^* \otimes_A A_{(-1)^F} \arrow[dd]
\\
B^* \otimes_{B} B_c \otimes_{B} N \arrow[d,"{B_\gamma}"] \arrow[ur] & \ & \
\\
B^* \otimes_{B} B_{(-1)^F} \otimes_{B} N \arrow[rr] \arrow[uurr] & \ & B^* \otimes_{B} N \otimes_A A_{(-1)^F}
\end{tikzcd}
\]
The left-upper triangle commutes because $N$ is a $1$-morphism of $\Spin_2$-fixed points, see Diagram \ref{eq: compatible Frobenius}.
The middle part commutes because $(N,f_c)$ is a $1$-morphism of $B\Z_2^F$-fixed points.
The right-lower triangle commutes on the nose.
The diagram evaluated on an arbitrary $n \in N$ yields the desired equation
\[
S_N(\lambda_B \otimes n) = (-1)^{|n|} n \otimes \lambda_A.
\]
On $2$-morphisms, the functor is the identity.
This clearly defines a functor between morphism categories
\[
\Hom_{(\sAlg^{\text{fd}})^{\Spin_2 \rtimes B\Z_2^F}}(\underline{A}, \underline{B}) \to \Hom_{\catf{ungrFrob}}((A,\lambda_A), (B,\lambda_B)).
\]
We show it is fully faithful and essentially surjective by showing that the condition on an even bimodule isomorphism $\chi: N \to N'$ that it is is a $2$-morphism of $\Spin_2$-fixed points is automatic.
For this, consider the diagram
\[
\begin{tikzcd}
N \otimes_A A_c \arrow[rrr, "\chi"] \arrow[dr,"A_\gamma"] \arrow[ddd,"f_c"]
& \ 
& \
& N' \otimes_A A_c \arrow[dl,"A_\gamma"] \arrow[ddd,"f_c'"]
\\
\ 
& N \otimes_A A_{(-1)^F} \arrow[r,"\chi"] \arrow[d] 
& N' \otimes_A A_{(-1)^F} \arrow[d]
& \
\\
\
& B_{(-1)^F} \otimes_B N \arrow[r,"\chi"] 
& B_{(-1)^F} \otimes_B N'
& \
\\
 B_c \otimes_B N \arrow[rrr, "\chi"] \arrow[ur,"B_\gamma"]
& \ 
& \
& B_c \otimes_B N' \arrow[lu,"B_\gamma"]
\end{tikzcd}
\]
The middle square commutes by naturality of $A_{(-1)^F}$.
The left quadrilateral commute because $(N,f_c): \underline{A} \to \underline{B}$ is a $1$-morphism of $\Spin_2 \rtimes B \Z_2^F$-fixed points and the right quadrilateral because $(N',f_c')$ is.
The upper and lower quadrilateral commute by definition.
The resulting outer diagram shows that $\chi$ is a $2$-morphism of $\Spin_2 \rtimes B \Z_2^F$-fixed points.
We conclude that 
\[
\Hom_{(\sAlg^{\text{fd}})^{\Spin_2 \rtimes B\Z_2^F}}(\underline{A}, \underline{B}) \to \Hom_{\catf{ungrFrob}}((A,\lambda_A), (B,\lambda_B)).
\]
is an equivalence of categories. 

Composition of $1$-morphisms in $(\sAlg^{\text{fd}})^{\Spin_2 \rtimes B\Z_2^F}$ is preserved on the nose by the functor $F$ to ungraded-symmetric Frobenius algebras.
It also preserves identities.
We show that $F$ is essentially surjective.
Given an ungraded-symmetric Frobenius algebra $(A,\lambda)$, the algebra $A$ assembles into a $\Spin_2 \rtimes B\Z_2^F$-fixed point using  $A_c := A_{(-1)^F}$, trivial $A_{\gamma}$, $\phi_{c,c}$ the multiplication map $A_{(-1)^F} \otimes_A A_{(-1)^F} \to A$ and $A_\nu: A \to A^* \otimes_A A_{(-1)^F}$ is the bimodule map satisfying $A_\nu(1) = \lambda \otimes (-1)^F$.
Indeed, because $A_{(-1)^F}$ is a $B\Z_2^F$-action, this defines a $\Z_2^c$-fixed point.

We now show this is a $\Spin_2$-fixed point.
First note that because the naturality data of $(-1)^F$ applied to the bimodule $A_{(-1)^F}$ is the identity, the two bimodule isomorphisms
\[
A^* \otimes_A A_{(-1)^F} \otimes_A A_{(-1)^F} \to A_{(-1)^F} \otimes_A A^* \otimes_A A_{(-1)^F}
\]
given by applying the naturality to the bimodule $A^* \oplus_A A_{(-1)^F}$ and applying the naturality to $A^*$ and then tensoring with $\id_{A_{(-1)^F}}$ are equal.
Moreover, for any even bimodule map $\psi: N \to N'$ between $(B,A)$-bimodules, the diagram
\[
\begin{tikzcd}
N' \otimes_A A_{(-1)^F} \arrow[r] & B_{(-1)^F} \otimes_B N'
\\
N \otimes_A A_{(-1)^F} \arrow[r] \arrow[u,"\psi"] & B_{(-1)^F} \otimes_B N \arrow[u,"\psi"]
\end{tikzcd}
\]
commutes by naturality of the $B\Z_2^F$-action (or more concretely, because $\psi$ is even).
Now, for showing $A_\nu: A \to A^* \otimes_A A_{(-1)^F}$ is a $2$-morphism of $\Z_2^c$-fixed points, we have to show that the diagram
\[
\begin{tikzcd}
(A^* \otimes_A A_{(-1)^F}) \otimes_A A_{(-1)^F} \arrow[r]
&
 A_{(-1)^F} \otimes_A (A^* \otimes_A A_{(-1)^F})
\\
A \otimes_A A_{(-1)^F} \arrow[u, "A_\nu"] \arrow[r]
&
A_{(-1)^F} \otimes_A A \arrow[u, "A_\nu"]
\end{tikzcd}
\]
commutes, where we used the naturality of the Serre automorphism on the top. 
Since the naturality data of the Serre automorphism applied to the $(A,A)$-bimodule $A_{(-1)^F}$ agrees with the naturality data of the $B\Z_2^F$-action on $\sAlg^{\text{fd}}$ applied to $A^*$ by Example \ref{ex: (-1)^F Serre nat}, we can replace Serre naturality by $(-1)^F$-naturality.
Now the above remarks together with the above diagram for the case $B = A, N = A, N' = A^* \otimes_A A_{(-1)^F}$ and $\psi = A_\nu$ shows that the last diagram commutes.
So we have defined a $\Spin_2$-fixed point.
Finally, the $(A,A)$-bimodule $A_{(-1)^F}$ satisfies $a (-1)^F = (-1)^{|a|} (-1)^F a$ and $\phi_{c,c}$ satisfies $((-1)^F)^2 = 1$ so that $A_\gamma$ is a $2$-morphism of $\Spin_2$-fixed points.
We have shown $(A, A_{(-1)^F}, \phi_{c,c}, A_\nu, A_\gamma)$ is a $\Spin_2 \rtimes B\Z_2^F$-fixed point.
As it clearly maps to $(A,\lambda)$ under $F$, we conclude $F$ is essentially surjective and so it is an equivalence of $2$-categories.
\end{proof}

\begin{remark}
The condition for an $(A',A)$-bimodule $N$ between ungraded-symmetric Frobenius algebras to be a $1$-morphism in $\catf{ungFrob}$ can be alternatatively expressed by saying that the diagram
\[
\begin{tikzcd}
N \arrow[r,"{\id_N \otimes \lambda}"] \arrow[d,"{\lambda' \otimes \id_{N}}",swap]& N \otimes_A A_{(-1)^F} \otimes_A A^* \arrow[d] 
\\
A'_{(-1)^F} \otimes_{A'} A'^* \otimes_{A'} N \arrow[r] & A'_{(-1)^F} \otimes_A N \otimes_A A^*
\end{tikzcd}
\]
commutes, where the two unlabeled arrows are the naturality of the Serre automorphism and $A_{(-1)^F}$.
\end{remark}

We now turn to the full $\Spin_2\rtimes (\Z_2^R\times B\Z_2^F)$-action on $\sAlg^{\text{fd}}$ and so we will provide some of the remaining explicit coherence data, details are explained in Appendix \ref{App: Super alg}.
The fact that the $\Z_2^R$-action $A \mapsto \overline{A}^{\op}$ and the $B\Z_2^F$-action combine to a $\Z_2^R \times B\Z_2^F$-action is given by the $2$-isomorphism
\[
(\overline{A}^{\op})_{(-1)^F} \cong \overline{{}_{(-1)^F} A}^{\op} = \overline{A_{(-1)^F}}^{\op -1}.
\]
The remaining data of the $\Spin_2$- and $\Z_2^R$-actions combining to a $\Spin_2 \rtimes \Z_2^R$-action is the $2$-isomorphism 
\[
(\overline{A}^{\op})^* \cong \overline{A^*}^{\op}.
\]
It corresponds to the equality $\gamma R = R c \gamma^{-1}$.
Indeed, note that there is no inverse on the right hand side of the above equation and $R$ acts by dual inverse on $1$-morphisms.

We now introduce the concept of a stellar Frobenius algebra with the goal of proving Theorem \ref{Th:stfrob}.

\begin{definition}
\label{Def:stfrob}
The bicategory of \emph{stellar Frobenius algebras} $\catf{stFrob}$ is the bicategory in which 
\begin{itemize}
\item \textbf{Objects:} are quadruples $(A,M,\sigma, \lambda)$ consisting of a finite-dimensional semisimple ungraded-symmetric Frobenius superalgebra $(A, \lambda) \in \catf{ungrFrob}$ and a complex-antilinear stellar structure $(M \in \mathbf{Bim}(A,\overline{A}^{\op}), \sigma: \overline{M}^{\op} \to M)$.
The Frobenius structure and the stellar structure are compatible in the sense that the following diagram in the category of $(A, \overline{A}^{\op})$-bimodules commutes
\[
\begin{tikzcd}
M \arrow[r,"{\overline{\lambda}^{\op}}"] \arrow[d,"\lambda"]& 
M \otimes_{\overline{A}^{\op}} \overline{A^* \otimes_A A_{(-1)^F}}^{\op} \arrow[r] &
M \otimes_{\overline{A}^{\op}} (\overline{A}^{\op})^* \otimes_{\overline{A}^{\op}} (\overline{A}^{\op})_{(-1)^F}
\\
A^* \otimes_A A_{(-1)^F} \otimes_A M \arrow[rr] & 
\ &
 A^* \otimes_A M \otimes_{\overline{A}^{\op}} (\overline{A}^{\op})_{(-1)^F} \arrow[u]
\end{tikzcd}
\]


\item \textbf{1-morphisms:} $(A,M,\sigma, \lambda) \to (A', M', \sigma', \lambda')$ are stellar $(A',A)$-bimodules $(N,h)$ which intertwine the Frobenius structures in the sense that $N$ is a $1$-morphism in $\catf{ungFrob}$.

\item \textbf{2-morphisms:} $(N_1, h_1) \Longrightarrow (N_2, h_2)$ are unitary even bimodule maps.
\end{itemize}
\end{definition}

\begin{theorem}
\label{Th:stfrob} The bicategory of $\Spin_2$ spin statistics and reflection field theories is equivalent
to the bicategory of stellar Frobenius algebras $\stFrob$. 
\end{theorem}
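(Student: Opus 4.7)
The plan is to reduce this to the two classification results established just before, namely Proposition \ref{prop:spinstatisticsTFTs} identifying $\Spin_2 \rtimes B\Z_2^F$-fixed points with $\catf{ungrFrob}$ and Theorem \ref{th: stellar algs} identifying $\Z_2^R$-fixed points with $\stAlg$. By the cobordism hypothesis and the reformulation in Equation \eqref{Eq: RS-FP-2D}, the bicategory in question is equivalent to $(\sAlg^{\fd})^{\Spin_2 \rtimes (\Z_2^R \times B\Z_2^F)}$. Since $B\Z_2^F$ acts centrally and the $\Z_2^R$-action on the $B\Z_2^F$-factor is trivial, $\Spin_2 \rtimes B\Z_2^F$ is a normal sub-$2$-group with quotient $\Z_2^R$, so the short exact sequence
\[
1 \to \Spin_2 \rtimes B\Z_2^F \to \Spin_2 \rtimes (\Z_2^R \times B\Z_2^F) \to \Z_2^R \to 1
\]
allows us to apply the iterated fixed-point result of Proposition \ref{Prop: Fixed points}. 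In particular, it suffices to describe the induced $\Z_2^R$-action on the bicategory $\catf{ungrFrob}$ and identify its fixed points with $\stFrob$.

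First I would compute the induced $\Z_2^R$-action on $\catf{ungrFrob}$. Under the equivalence from Proposition \ref{prop:spinstatisticsTFTs}, the functor $A \mapsto \overline{A}^{\op}$ lifts to sending $(A,\lambda)$ to $(\overline{A}^{\op}, \overline{\lambda}^{\op})$, since the commutation datum between the $\Z_2^R$-action and the Serre automorphism provides a canonical $2$-isomorphism $(\overline{A}^{\op})^* \cong \overline{A^*}^{\op}$, and the commutation datum with $B\Z_2^F$ provides $(\overline{A}^{\op})_{(-1)^F} \cong \overline{A_{(-1)^F}}^{\op -1}$. Tracking these through the correspondence between the bimodule isomorphism $A_\nu$ and the ungraded-symmetric functional $\lambda$ gives the dualized Frobenius structure. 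On 1-morphisms, $\Z_2^R$ sends $N$ to $\overline{N}^{\op -1}$; on 2-morphisms it sends $\psi$ to $\overline{\psi}^{\op}$.

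Next I would unwind the definition of a $\Z_2^R$-fixed point in $\catf{ungrFrob}$ using Definition \ref{Def: Hfixed point}. An object consists of $(A,\lambda) \in \catf{ungrFrob}$ together with a $1$-isomorphism $M: A \to \overline{A}^{\op}$ in $\catf{ungrFrob}$ and a $2$-isomorphism $\sigma: M \to \overline{M}^{\op}$ satisfying the involutivity condition. By Theorem \ref{th: stellar algs}, forgetting the Frobenius compatibility, the pair $(M,\sigma)$ is exactly a stellar structure on $A$ in the sense of Definition \ref{Def:stAlg}. The extra condition that $M$ be a $1$-morphism of $\catf{ungrFrob}$, when spelled out using the description in Proposition \ref{prop:spinstatisticsTFTs} relating the Serre naturality square to $\lambda$, is precisely the commutative pentagon in Definition \ref{Def:stfrob} relating $\lambda$, $\overline{\lambda}^{\op}$, and $M$. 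A $1$-morphism of fixed points is a 1-morphism in $\catf{ungrFrob}$ equipped with a unitarity datum compatible with the Frobenius structure, which is exactly a stellar bimodule that is a 1-morphism in $\catf{ungrFrob}$; and $2$-morphisms are unitary bimodule maps. This matches $\stFrob$ on the nose.

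The main obstacle will be carefully tracking the commutation data between $R$, the generator $\gamma \in \Spin_2$ (which satisfies $R\gamma R^{-1} = \gamma^{-1}$ up to the central element $c$), and $(-1)^F$ inside the semidirect product $2$-group. Concretely, after the reductions above, the compatibility diagram in Definition \ref{Def:stfrob} arises from a coherence square whose four sides encode (i) the Serre-naturality of the $\Spin_2$-action, (ii) the $B\Z_2^F$-action by $(-)_{(-1)^F}$, (iii) the reflection action by $\overline{(-)}^{\op}$, and (iv) the $\Z_2^R$-twisting on $\gamma$. I expect that verifying this square is equivalent to the stated pentagon will require a careful computation with the explicit bimodule isomorphisms $(\overline{A}^{\op})_{(-1)^F} \cong \overline{A_{(-1)^F}}^{\op -1}$ and $(\overline{A}^{\op})^* \cong \overline{A^*}^{\op}$ described in Appendix \ref{App: Super alg}, together with the identification $A_c \cong A_{(-1)^F}$ provided by the spin-statistics datum $A_\gamma$. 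Once this coherence computation is done the essential surjectivity, fully faithfulness on $1$-morphisms, and the $2$-morphism identification proceed exactly as in the proof of Proposition \ref{prop:spinstatisticsTFTs}.
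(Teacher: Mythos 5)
Your proposal is correct and follows essentially the same route as the paper: apply the decomposition theorem to the exact sequence with kernel $\Spin_2 \rtimes B\Z_2^F$ and quotient $\Z_2^R$, reduce to $\catf{ungrFrob}$ via Proposition \ref{prop:spinstatisticsTFTs}, compute the induced $\Z_2^R$-action using the commutation data $(\overline{A}^{\op})^* \cong \overline{A^*}^{\op}$ and $(\overline{A}^{\op})_{(-1)^F} \cong \overline{A_{(-1)^F}}^{\op-1}$, and identify the fixed points with $\stFrob$ by reusing the proof of Theorem \ref{th: stellar algs}. The only cosmetic discrepancy is the direction of the stellar $1$-morphism (the paper's convention gives $M$ as a $1$-morphism $\overline{A}^{\op} \to A$), which does not affect the argument since $M$ is invertible.
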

\begin{proof}
We apply the decomposition theorem to the exact sequence
\[
1 \to B \Z_2^F \rtimes \Spin_2 \to (\Z_2^R \times B \Z_2^F) \rtimes \Spin_2 \to \Z_2^R \to 1
\]
We know that $(\sAlg^{\text{fd}})^{B \Z_2^F \rtimes \Spin_2}$ is equivalent to the slightly smaller bicategory $\catf{ungFrob}$ by Proposition \ref{prop:spinstatisticsTFTs}.
Therefore it suffices to compute fixed points for the induced $\Z_2^R$-action on $\catf{ungFrob}$.

We start by computing this $\Z_2^R$-action on $\catf{ungFrob}$ using the decomposition theorem.
Recall from the proof of Proposition \ref{prop:spinstatisticsTFTs} that the Frobenius structure $\lambda: A \to A^* \otimes_{A} A_{(-1)^F}$ corresponds under the equivalence $(\sAlg^{\text{fd}})^{B \Z_2^F \rtimes \Spin_2} \cong \catf{ungFrob}$ to the fixed point with respect to the path $c \gamma \nu: 1 \to 1$ in $B \Z_2^F \rtimes \Spin_2$.
Therefore the functor $\rho(R): \catf{ungFrob} \to \catf{ungFrob}$ sends an ungraded Frobenius algebra $(A,\lambda)$ to $\overline{A}^{\op}$ with Frobenius structure
\[
\overline{A}^{\op} \xrightarrow{\overline{\lambda}^{\op}} \overline{A^* \otimes_{A} A_{(-1)^F}}^{\op} \cong (\overline{A}^{\op})^* \otimes_{\overline{A}^{\op}} (\overline{A}^{\op})_{(-1)^F}.
\]
In the last line we used the natural isomorphism $\rho(R) \rho(c \gamma \nu) \cong \rho(\gamma^{-1}\nu) \rho(R)$ between functors $\sAlg^{\text{fd}} \to \sAlg^{\text{fd}}$ provided by the corresponding commutation relation in $\Spin_2\rtimes (\Z_2^R\times B\Z_2^F)$.
Since $1$- and $2$-morphisms in $\catf{ungFrob}$ are $1$- and $2$-morphisms in $\sAlg^{\text{fd}}$ with extra conditions, the functor $\rho(R)$ is equal to the action of $R$ on $\sAlg^{\text{fd}}$ on these.
In particular the data of preservation of composition of $1$-morphisms $M: (A, \lambda_A) \to (B,\lambda_B), N: (B, \lambda_B) \to (C,\lambda_C)$ is given by the corresponding natural isomorphism
\[
\overline{N \otimes_B M}^{\op -1} \cong \overline{N}^{\op -1} \otimes_{\overline{B}^{\op-1}} \overline{M}^{\op -1}
\]
in $\sAlg^{\text{fd}}$.
Similarly given $(A,\lambda_A) \in \catf{ungFrob}$, the natural isomorphism $\overline{\overline{A}^{\op-1}}^{\op -1} \cong A$ is simply the corresponding natural isomorphism in $\sAlg^{\text{fd}}$.

We turn to fixed points.
Objects in $\catf{ungFrob}^{\Z_2^R}$ consists of an ungraded-symmetric Frobenius algebra $(A,\lambda)$ together with a $1$-morphism $\rho(R)(A, \lambda) \to (A, \lambda)$ in $\catf{ungFrob}$ and a certain $2$-morphism in $\catf{ungFrob}$.
Forgetting the fact that these are $1$- and $2$-morphisms in $\catf{ungFrob}$ and viewing them in $\sAlg^{\text{fd}}$ shows that these two data assemble into a stellar structure on $A$ in the same way as in the proof of Theorem \ref{th: stellar algs}.
Now, the above $1$-morphism in $\catf{ungFrob}$ is an invertible $(A,\overline{A}^{\op})$-bimodule $M$ satisfying the condition that the Frobenius structures of $A$ and $\overline{A}^{\op}$ are compatible.
Looking at the definition of the Frobenius structure on $\overline{A}^{\op}$, this is exactly the condition on objects given in the statement of the desired theorem.

So we turn to $1$-morphisms of $\Z_2^R$-fixed points.
They consist of a $1$-morphism $N: (A,\lambda) \to (A',\lambda')$ in $\catf{ungFrob}$ together with a $2$-morphism $h$ in $\catf{ungFrob}$ satisfying a certain condition.
$2$-morphisms in $\catf{ungFrob}$ are simply $2$-morphisms in $\catf{ungFrob}$ and the condition on $h$ is the same as the condition of it being a morphism of $\Z_2^R$-fixed points in $\sAlg^{\text{fd}}$.
We can now repeat the relevant part of the proof of Theorem \ref{th: stellar algs} to conclude that $h$ is a unitarity datum on $N$.
Similarly $2$-morphisms in $\catf{ungFrob}^{\Z_2^R}$ are simply $2$-morphisms in $\sAlg^{\text{fd}}$ satisfying a compatibility condition between the two stellar modules and so we can reduce back to Theorem \ref{th: stellar algs} again to conclude.

\end{proof}

\begin{remark}
Let $(A, M, \sigma)$ be a finite-dimensional semisimple stellar algebra.
Note that $A_{(-1)^F}$ becomes a stellar bimodule through the naturality of $A \mapsto A_{(-1)^F}$.
This coincides with the stellar bimodule structure discussed in Example \ref{ex:stellar A_c}.
However, because of the twisted commutation relation between $R$ and $\nu$, $A^*$ does not become a stellar bimodule through the naturality of the Serre automorphism.
Instead, it gives a filling of the slightly different diagram
\[
\begin{tikzcd}
A \arrow[r,"A^*"] & A
\\
\overline{A}^{\op} \arrow[u,"M"] \arrow[r,"(\overline{A}^{\op})^*",swap] & \overline{A}^{\op} \arrow[u,"M",swap]
\end{tikzcd}
\]
The square is filled by Serre naturality and the other two compositions by the isomorphism $(\overline{A}^{\op})^* \cong \overline{A^*}^{\op}$ expressing the nontrivial commutation relation between $R$ and $SO_2$.
In particular, we can not express the condition of compatibility of the Stellar algebra with an ungraded-symmetric Frobenius structure $\lambda$ as unitarity of $\lambda: A_{(-1)^F} \to A^*$ .
\end{remark}

\begin{example}
We provide some intuition for the condition saying the stellar structure and the Frobenius structure are compatible by looking at the case where the stellar structure comes from a $*$-algebra.
We will now show that in that case the condition is equivalent to
\[
\lambda(a^*) = (-1)^{|a|} \overline{\lambda(a)} = \overline{\lambda(a)}
\]
where the last equation follows from the fact that $\lambda$ vanishes for odd $a$.
For the stellar algebra coming from a $*$-algebra we have $M = A_*$.
Stellar modules in that case correspond to $A$-valued inner products called Hilbert modules as we saw in Section \ref{Sec:stellar}.
The stellar module $A_{(-1)^F}$ is described in this fashion in Example \ref{ex:stellar A_c}.
Describing $A^*$ explicitly as a Hilbert module is mildly inconvenient as it involves picking an inverse for $A^*$, so we will refrain from doing that here.
However, the naturality of the Serre isomorphism in this case is explained in the Appendix in example \ref{ex:*-alg Serre nat}
\end{example}

\subsection{Computing 2d TFTs with spin-statistics and reflection structure}

In this section we will compute the bicategory of two-dimensional extended TFTs with fermionic symmetry $G$, reflection structure and spin-statistics. 
Without loss of generality we will from now on assume $G$ is a fermionic $2$-group.
Similar to the computation in Section~\ref{Sec:1D} we will start with the case $c\neq 1$ and 
comment on the simpler case later on in Remark~\ref{Rem: c=1 in 2 D}. 
Recall that by the cobordism hypothesis, this means we have to compute fixed points for the action of
\[
H_2 \rtimes (\Z_2^R \times B \Z_2^F)
\]
on the bicategory of finite-dimensional semisimple superalgebras over $\C$.
Here the $\Z_2^R \times B \Z_2^F$-action is as in the last section and $H_2$ acts by the cobordism hypothesis through the map to $O_2$.
With the decomposition theorem for iterated fixed points Proposition~\ref{Prop: Fixed points} in mind and our understanding of stellar Frobenius algebras, we turn to a study of the exact sequence
\[
1 \to \Spin_2 \rtimes (\Z_2^R \times B\Z_2^F) \to H_2 \rtimes (\Z_2^R \times B\Z_2^F) \to G_b \to 1.
\]
It turns out this exact sequence is split, as we will soon show.

However, we first recall from the last section that there is an isomorphism of $2$-groups
\[
\Spin_2 \rtimes (\Z_2^R \times B\Z_2^F) \cong (\Z_2^c \rtimes B\Z) \rtimes (\Z_2^R \times B\Z_2^F) \to B\Z \rtimes \Z_2^R \cong O_2
\]
given by killing $c$ and $(-1)^F$ and sending $\eta$ to a generating loop.
We reiterate the fact that the induced $O_2$-action on $\sAlg^{\text{fd}}$ is \emph{not} given by the cobordism hypothesis.
Instead, the nontrivial object acts by $\overline{(.)}^{\op}$ while the loop acts by $A \mapsto A^* \otimes_A A_{(-1)^F}$, see Remark \ref{Rem:O2iso} and the proof above it.
This also agrees with Theorem \ref{Th:stfrob}, which tells us that fixed points for this $O_2$-action are stellar Frobenius algebras.

Next we specify an inverse 
\[
\sigma: O_2 \to \Spin_2 \rtimes (\Z_2^R \times B\Z_2^F)
\]
by sending $R$ to itself and the preferred generating loop in $B\Z$ to 
\[
1 \xrightarrow{\eta} c \xrightarrow{(\id_c \rtimes (-1)^F)} 1.
\]
Here we recall that $(-1)^F: 1 \to 1$ has its codomain changed in the semidirect product to $1 \rtimes (-1)^F: 1 \to c$ due to the fact that $\rho((-1)^F)_* = c$ in the action $\rho$ of $B\Z_2^F$ on $\Z_2^c$, see Definition \ref{Def:semidirectproduct}.

To show the splitting of the exact sequence, we want to construct a commutative diagram
\[
\begin{tikzcd}
1 \arrow[r] & \Spin_2 \rtimes (\Z_2^R \times B\Z_2^F) \arrow[r] & H_2 \rtimes (\Z_2^R \times B\Z_2^F) \arrow[r] & G_b \arrow[r] & 1 
\\
1 \arrow[r] & O_2 \arrow[u, "\sigma"] \arrow[r] & O_2 \times G_b \arrow[u] \arrow[r] & G_b \arrow[r] \arrow[u, equals] & 1
\end{tikzcd}
\]
in which all arrows going up are isomorphisms.
Hence we have to extend the map $\sigma$ we specified above to the domain $O_2 \times G_b$.
It will then follow that the resulting $\sigma$ is still an isomorphism of $2$-groups by the $5$-lemma.  Indeed, the two horizontal exact sequences above induce fibrations on classifying spaces and $\sigma$ defines a map between these fibrations. The long exact sequence of homotopy groups of a fibration is functorial. By the $5$-lemma when then obtain that $\sigma$ is a weak homotopy equivalence.

Let $s$ denote the canonical section of the exact sequence
\[
1 \to \Spin_2 \to H_2 \to G_b \to 1
\]
given by the fact that we use the fermionically skeletal model
\[
H_2 = \Spin_2 \rtimes G_b.
\]
Recall from the end of Section \ref{Sec:fermskeletal} how this semidirect product and section behave:
firstly we have $s(g) = 1 \rtimes g$ and the failure of section to preserve the multiplication is measured by
\[
\Xi^{\op}(g_1,g_2) := \Xi(g_1,g_2) + \theta(g_1) \theta(g_2)
\]
in the sense that
\[
s(g_1) \otimes s(g_2) = \Xi^{\op}(g_1, g_2) s(g_1 g_2).
\]
We warn the reader that in general $\Xi$ is not a $2$-cocycle on $\pi_0(G_b)$.
Since $G_b$ is skeletal, we are only interested in lifting automorphisms $\gamma: g \to g$ and they lift to morphisms in which the codomain is possibly changed by $c$:
\begin{align*}
s(\gamma) = 1 \rtimes \gamma: g \to g \otimes \Gamma(\gamma).
\end{align*}
Note that these morphisms compose well in the sense that
\[
s(\gamma_2 \circ \gamma_1) = (s(\gamma_2: g \to g) \otimes \id_{\Gamma(\gamma_1)}) \circ s(\gamma_1: g \to g)
\]
Moreover, in Section \ref{Sec:fermskeletal} it was shown that 
\[
(\eta \rtimes 1) \otimes \id_{s(g)} = 
\begin{cases}
\id_{s(g)} \otimes (\eta \rtimes 1) & \theta(g) = 0
\\
\id_{s(g)} \otimes (c \eta^{-1} \rtimes 1) & \theta(g) = 1
\end{cases}
\]

We proceed to extend $\sigma$ to $G_b$.
The main trick is to perturb the section $s$ into a $2$-group morphism by moving from $1$ to $c$ using the morphism $1 \rtimes (-1)^F$.
Define 
\begin{align*}
\sigma(g) = s(g) \rtimes R^{\theta(g)}
\end{align*}
on objects and for morphisms $\gamma: g \to g$ we define the morphism $\sigma(g) \to \sigma(g)$ by
\begin{align*}
\sigma(\gamma) = 
\begin{cases}
s(\gamma) \rtimes \id_{R^{\theta(g)}} & \Gamma(\gamma) = 1
\\
(\id_{c} \rtimes (\id_{R^{\theta(g)}}\otimes (-1)^F )) (s(\gamma) \rtimes \id_{R^{\theta(g)}}) & \Gamma(\gamma) = c.
\end{cases}
\end{align*}
Note that even in the latter case this is an automorphism as $s(\gamma) \rtimes \id_{R^{\theta(g)}}$ is a morphism from $R^{\theta(g)} $ to $R^{\theta(g)} c$.
Recall that the $\gamma$'s compose well under $s$ and $\Gamma$ is a group homomorphism $\Aut(g) \to \Z_2^c$.
Also using that $B\Z_2^F$ acts trivially on the $G_b$-part, we have by definition of horizontal composition in the semidirect product
\[
(1 \rtimes (-1)^F) (\id_k \rtimes 1) =  (\id_{k c^{\theta(k)}} \rtimes 1) (1 \rtimes (-1)^F).
\]
Therefore $s$ defines a functor:
Using that $R$ acts on objects of $H$ by multiplication with $c$ through its grading homomorphism $\theta$ on the $G_b$-part, we compute 
\begin{align*}
\sigma(g_1) \sigma(g_2) &= (s(g_1) \rtimes R^{\theta(g_1)}) (s(g_2) \rtimes R^{\theta(g_2)})
 = s(g_1) s(g_2) c^{\theta(g_2) \theta(g_1)} \rtimes R^{\theta(g_1) + \theta(g_2)} 
 \\
 &= s(g_1 g_2) \Xi(g_1, g_2) \rtimes R^{\theta(g_1 g_2)}
 = \Xi(g_1,g_2) \sigma(g_1 g_2).
\end{align*}
So for the monoidal data 
\[
s(g_1 g_2) \cong s(g_1) s(g_2)
\]
we take the isomorphism $1 \rtimes ((-1)^F)^{\Xi(g_1,g_2)}$.
Similarly, there can be an off by $c$ error in the commutation relation between $R$ and $G_b$.
Indeed, define $\sigma(g R) := \sigma(g) \sigma(R)$, then multiplying in the other direction gives
\begin{align*}
\sigma(Rg) &= \sigma(gR) = \sigma(g) \sigma(R) = s(g) \rtimes R^{\theta(g) + 1}
\\
\sigma(R) \sigma(g) &= c^{\theta(g)} s(g) \rtimes R^{\theta(g) + 1}
\end{align*}
So we define the monoidal data $\sigma(Rg) \cong \sigma(R) \sigma(g)$ to be $1 \rtimes (-1)^F$ if $\theta(g) = 1$.
The monoidal data in which at least one of the elements is of the form $Rg$ is similarly defined.
We now have to show the monoidal data satisfies the desired associativity condition.
For three elements of $G_b$ this follows by the relationship between $\Xi$, $\Gamma$ and the associator of $G_b$, see Section \ref{Sec:fermskeletal}.
For cases with $R$ we additionally have to commute $1 \rtimes (-1)^F$ through some elements of $G_b$.
Note that $1 \rtimes R$ and $\sigma(g) \rtimes 1$ for $\theta(g) = 1$ have the same commutation relation with $\sigma(\gamma) = s(\gamma) \rtimes 1$ so that
\[
\sigma(\gamma) \otimes \sigma(\id_g) = \sigma(\id_g) \otimes \sigma(\gamma)
\]
The naturality of the monoidality isomorphism follows by by the commutation properties between $(-1)^F$ and $c$ and $\eta$.

Let $\psi$ denote the full $H_2 \rtimes (\Z_2^R \times B\Z_2^F)$-action on $\sAlg^{\text{fd}}$.
We make use of the splitting $\sigma$: we first take $O_2$-fixed points and the consequent $G_b$-fixed points are now relatively simple to compute as a special case of the decomposition theorem (Proposition~\ref{Prop: Fixed points}) in which the semidirect product is a direct product.
The only thing to be careful about is to record the amount of $(-1)^F$s we have introduced at several points above to make $\sigma$ into a $2$-group homomorphism.
The equivalent $O_2 \times G_b$-action is given as follows
\begin{enumerate}
\item $\psi(\sigma(R)) = \overline{(\cdot)}^{\op}$;
\item $\psi(\sigma(g)) = \psi(s(g) \rtimes R^{\theta(g)}) = \psi(s(g) \rtimes 1) \psi(1 \rtimes R^{\theta(g)})$ is the identity if $\theta(g) = 0$ and $\overline{(\cdot)}^{\op \op} = \overline{(\cdot)}$ if $\theta(g) = 1$;
\item we have $\psi(\sigma(g)) \psi( \sigma(R)) = \psi(\sigma(g R))$ but $\psi(\sigma(R)) \psi( \sigma(g)) \cong \psi(\sigma(R g))$ has an extra $(-1)^F$ if $\theta(g) = 1$.
\item $\psi(\sigma(\eta: 1 \to 1))$ is given by the horizontal composition of the Serre and $(\cdot)_{(-1)^F}$, e.g. $\psi(\sigma(\eta))[A] = A_{(-1)^F} \otimes_A A^*$;
\item $\psi(\sigma(\gamma: g \to g))$ is given by the horizontal composition of $\id_{\psi(s(g))}$ and possibly $(\cdot)_{(-1)^F}$ if $\Gamma(\gamma) = c$, e.g.
\[
\psi(\sigma(\gamma))[A] = 
\begin{cases}
A & \theta(g) = 0, \Gamma(\gamma) = 1
\\
\overline{A} & \theta(g) = 1, \Gamma(\gamma) = 1
\\
A_{(-1)^F} & \theta(g) = 0, \Gamma(\gamma) = c
\\
\overline{A}_{(-1)^F} & \theta(g) = 1, \Gamma(\gamma) = c
\end{cases}
\]
where it is understood that we consider for example $\overline{A}$ as an $(\overline{A}, \overline{A})$-bimodule;
\item looking at the monoidality data of $\sigma$, the natural transformation $\psi(\sigma(g_1)) \psi( \sigma(g_2)) \cong \psi(\sigma(g_1 g_2))$ is the identity when $\Xi(g_1,g_2) = 1$. 
If $\Xi(g_1,g_2) = c$ it is given by $(\cdot)_{(-1)^F}: \id_{\sAlg^{\text{fd}}} \Longrightarrow \id_{\sAlg^{\text{fd}}}$ if $\theta(g_1 g_2) = 0$ and whiskered with $\overline{(\cdot)}$ otherwise;
\item corresponding to the relation $R \eta = \eta^{-1} R$ we have the modification 
\[
\overline{A_{(-1)^F} \otimes_A A^*}^{\op -1} \cong \overline{A_{(-1)^F}}^{\op -1} \otimes_{\overline{A}^{\op -1}} \overline{A^*}^{\op}  \cong ((\overline{A}^{\op})_{(-1)^F})^{-1} \otimes_{\overline{A}^{\op}} ((\overline{A}^{\op})^*)^{-1}
\]
which already played an important role in the appearance of stellar algebras in the last section.
\end{enumerate}
All other data of the action are modifications obtained from combining three objects or two $1$-morphisms, which are either trivial or canonically defined.

We turn to the action $\xi$ of $G_b$ on $O_2$-fixed points, which are stellar Frobenius algebras by Theorem \ref{Th:stfrob}.
We start to compute $\xi(g)$ on objects for $g \in G_b$ an object.
Let $(A,M,\sigma, \lambda)$ be a stellar Frobenius algebra.
We describe the stellar Frobenius algebra $\xi(g)[A,M,\sigma, \lambda]$.
If $\theta(g) = 0$, this is the original object, so assume $\theta(g) = 1$.
The underlying algebra is $\overline{A}$.
The stellar module is computed as the fixed point data for $R$, where we have to use the interesting commutation relation between $R$ and $g$:
\[
\overline{\overline{A}}^{\op} =  \psi(\sigma(R)) \psi(\sigma(g)) A \cong \psi(\sigma(Rg)) A = \psi(\sigma(gR)) A \cong \psi(\sigma(g)) \psi(\sigma(R)) A \xrightarrow{\psi(\sigma(g))[M]} \psi(\sigma(g)) A = \overline{A}.
\]
So we see that the stellar structure on $\overline{A}$ is $\overline{M} \otimes_{\overline{\overline{A}}^{\op}} \overline{\overline{A}}^{\op}_{(-1)^F}$.
This is the stellar module on $\overline{A}$ we introduced at the end of Section \ref{Sec:stellar}.
We will now show that also $\sigma$ is as advertised there.

Let $c_{n,h}: \rho(n) \rho(h) \Longrightarrow \rho(n) \rho(h)$ denote the composition of natural isomorphisms $R_{n,h}^{-1}$ and $R_{h,n}$ using that $hn = nh$.
For defining $\xi(h)[\underline{F}]_{n_1,n_2}$ we fill the following
\[
\begin{tikzcd}[column sep = 2.1cm]
\rho(n_1 n_2) \rho(h)[ F] \arrow[dd,"{R_{n_1,n_2}[\rho(h)[F]]}",swap]  \arrow[r,"{c_{n_1 n_2, h}[F]}"] & \rho(h) \rho(n_1 n_2) [F] \arrow[r,"{\rho(h)[F_{n_1 n_2}]}"] \arrow[d,"{\rho(h)[R_{n_1,n_2}[F]]}",swap] & \rho(h)[F]
\\
\ & \rho(h) \rho(n_1) \rho(n_2) [F] \arrow[r,"{\rho(h) \rho(n_1)[F_{n_2}]}"] & \rho(h) \rho(n_1) [F] \arrow[u,"{\rho(g)[F_{n_1}]}",swap]
\\
\rho(n_1) \rho(n_2) \rho(h) [F] \arrow[r,"{\rho(n_1)[c_{n_2,h}[F]]}",swap] & \rho(n_1) \rho(h) \rho(n_2)[F] \arrow[r,"{\rho(n_1) \rho(h)[F_{n_2}]}",swap] \arrow[u,"{c_{n_1,h}[\rho(n_2)[F]]}"]& \rho(n_1) \rho(h)[F] \arrow[u,"{c_{n_1,h}[F]}",swap]
\end{tikzcd}
\]
The upper right square is filled by $\rho(h)[F_{n_1,n_2}]$.
The lower right square is naturality data of $c_{n_1,h}$.
The left rectangle can be expanded as follows
\[
\label{diagramexchange}
\begin{tikzcd}
\rho(n_1 n_2) \rho(h)\arrow[Rightarrow,ddd] & \arrow[Rightarrow, l]\rho(h n_1 n_2) \arrow[equals,d] \arrow[Rightarrow,r] & \rho(h) \rho(n_1 n_2) \arrow[Rightarrow,d]
\\
\ & \arrow[Rightarrow,dd] \rho(n_1 h n_2) \arrow[Rightarrow,dr] & \rho(h) \rho(n_1) \rho(n_2)
\\
\ & \ & \rho(h n_1) \rho(n_2) \arrow[Rightarrow,u] \arrow[Rightarrow,d]
\\
\rho(n_1) \rho(n_2) \rho(h) & \rho(n_1) \rho(h n_2) \arrow[Rightarrow,r] \arrow[Rightarrow,l] & \rho(n_1) \rho(h) \rho(n_2)
\end{tikzcd}
\]
The left rectangle can be filled by $\omega_{n_1,n_2,h}$, the right upper part by $\omega_{h,n_1,n_2}$ and the lower right part by $\omega_{n_1,h,n_2}$.

To see how $\sigma$ changes under the action of $g$, we have to take $n_1 = n_2 = R$ and $h = g$ in the above diagram.
 We get using naturality of $A_{(-1)^F}$
\[
\overline{M} \otimes_{\overline{\overline{A}}^{\op}} \overline{\overline{A}}^{\op}_{(-1)^F} \xrightarrow{\overline{\sigma} \otimes \id}  \overline{\overline{M}}^{\op} \otimes_{\overline{\overline{A}}^{\op}} \overline{\overline{A}}^{\op}_{(-1)^F} 
\cong \overline{(\overline{A}_{(-1)^F} \otimes_{\overline{A}} \overline{M})}^{\op} 
\cong \overline{\overline{M} \otimes_{\overline{\overline{A}}^{\op}} \overline{\overline{A}}^{\op}_{(-1)^F}  }^{\op}
\]
which is indeed the stellar structure we defined in Section \ref{Sec:stellar}.

To compute the Frobenius structure on $\overline{A}$, we have to commute the time-reversing $g$ with $\eta$.
This corresponds to the isomorphism $\overline{A^* \otimes_A A_{(-1)^F}} \cong \overline{A}^* \otimes_{\overline{A}} \overline{A}_{(-1)^F}$ so that the Frobenius structure is simply the composition
\[
\overline{A} \xrightarrow{\overline{\lambda}} \overline{A^* \otimes_A A_{(-1)^F}} \cong \overline{A}^* \otimes_{\overline{A}} \overline{A}_{(-1)^F}
\]
which after plugging in the relevant isomorphisms is simply
\[
\overline{a} \mapsto \overline{\lambda(a)}.
\]
We have now determined $\xi(g)$ on objects. 

Now let $(N,h): (A_1, M_1, \sigma_1, \lambda_1) \to (A_2, M_2, \sigma_2, \lambda_2)$ be a $1$-morphism of $O_2$-fixed points, i.e. a stellar bimodule intertwining the ungraded Frobenius structures as in Theorem \ref{Th:stfrob}.
We want to equip the $(\overline{A_2}, \overline{A_1})$-bimodule $\overline{N}$ with a unitarity datum for the stellar structures looking like $(\overline{A_1})_{(-1)^F} \otimes_{\overline{A_1}} \overline{M_1}$ as explained above.
The general diagram of $1$-morphisms between fixed points of a direct product of $2$-groups is 
\[
\begin{tikzcd}[column sep=1.8cm]
\rho(n) \rho(h) F \arrow[r,"\rho(n) \rho(h) f"] \arrow[d,"{c_{n,h}[F]}",swap] &  \rho(n) \rho(h) F' \arrow[d,"{c_{n,h}[F']}"]
\\
\rho(h) \rho(n) F \arrow[d,"{\rho(h)[F_n]}",swap] \arrow[r,"{\rho(h) \rho(n) f}"] & \rho(h) \rho(n) F' \arrow[d,"{\rho(h)[F_n']}"]
\\
\rho(h) F \arrow[r,"\rho(h) f"] & \rho(h) F'
\end{tikzcd}
\]
The upper square is filled by naturality of $c_{n,h}$ and the lower is filled by $\rho(h)[f_n]$.
Specializing to the case at hand and using $\overline{A}^{\op} = \overline{A^{\op}}$ recovers the diagram defining $\overline{N}$ as discussed at the end of Section \ref{Sec:stellar}:
\[
\begin{tikzcd}
\overline{\overline{A_1}}^{\op} \arrow[d,"{\overline{M_1}}",swap]  &  \overline{\overline{A_2}}^{\op} \arrow[l,"{\overline{\overline{N}}^{\op}}",swap] \arrow[d,"{\overline{M_2}}"]
\\
\overline{A_1} \arrow[d,"{\overline{A_1}_{(-1)^F}}",swap] \arrow[r,"\overline{N}"] & \overline{A_2} \arrow[d,"{\overline{A_2}_{(-1)^F}}"]
\\
\overline{A_1} \arrow[r,"\overline{N}",swap] & \overline{A_2}
\end{tikzcd}
\]
Since the functor $\xi(g)$ simply becomes $\overline{(.)}: \sAlg^{\text{fd}} \to \sAlg^{\text{fd}}$ when forgetting the stellar Frobenius structure and stellar Frobenius algebras only have extra conditions (not data) on $2$-morphisms, the rest of the data of $\xi(g)$ is now determined.

We proceed to compute the natural isomorphisms $\xi(g_1 g_2) \cong \xi(g_1) \xi(g_2)$.
Recall that depending on $\Xi(g_1,g_2)$ and $\theta(g_1 g_2)$, they are given by an extra $(-1)^F$ and/or $\overline{(.)}$.
The only thing that remains is computing the stellar bimodule structures on these.
The general diagram looks like
\[
\begin{tikzcd}[column sep=2.5cm]
\rho(n) \rho(h_1) \rho(h_2) F \arrow[d,"{c_{n,h_1}[\rho(h_2)[F]]}", swap] & \rho(n) \rho(h_1 h_2) F \arrow[l,"{\rho(n)[R_{h_1,h_2}[F]]}",swap] \arrow[dd,"{c_{n,h_1 h_2}[F]}"]
\\
\rho(h_1) \rho(n) \rho(h_2) F \arrow[d,"{\rho(h_1)[c_{n,h_2}[F]]}", swap] & \
\\
\rho(h_1) \rho(h_2) \rho(n) F \arrow[d,"{\rho(h_1) \rho( h_2)[F_n]}",swap] & \rho(h_1 h_2) \rho(n) F \arrow[d,"{\rho(h_1 h_2)[F_n]}"] \arrow[l,"{R_{h_1,h_2}[\rho(n)[F]]}"]
\\
\rho(h_1) \rho(h_2) F  & \rho(h_1 h_2) F \arrow[l,"{R_{h_1,h_2}[F]}"]
\end{tikzcd}
\]
The lower part is filled by naturality of $R_{h_1,h_2}$.
The upper part is filled by various $\omega$ in a similar way to Diagram \ref{diagramexchange}.
Plugging in $n = R$ gives the necessary stellar structure on the canonical $(\overline{\overline{A}^{\theta(g_1)}}^{\theta(g_2)}, \overline{A}^{\theta(g_1 g_2)})$-bimodule.
The vertical arrows can contain extra $(-1)^F$ depending on $\theta(g_1)$ and $\theta(g_2)$ and the horizontal arrows can contain extra $(-1)^F$ depending on $\Xi(g_1,g_2)$.
So one depending on this one might have to use $(-1)^F$-naturality data, but otherwise the unitarity data is obvious.
We summarise the computations so far in the following theorem.
\begin{theorem}
\label{th:actionstfrob}
The $G_b$-action on $\catf{stFrob}$ is given as follows.
\begin{enumerate}
\item for $g \in G_b$ the functor $\psi(g)$ is the identity if $\theta(g) = 0$.
Otherwise it maps a stellar Frobenius algebra $(A,M,\sigma, \lambda)$ to the stellar algebra $(\overline{A}, \overline{A}_{(-1)^F} \otimes_{\overline{A}} \overline{M}  , \overline{\sigma}, \overline{\lambda})$ and a stellar module $(N,h): (A_1, M_1, \sigma_1, \lambda_1) \to (A_2, M_2, \sigma_2, \lambda_2)$ to $(\overline{N},\overline{h})$.
The other data of $\psi(g)$ is identical to the functor $\overline{(\cdot)}$ on $\sAlg^{\text{fd}}$.
\item for $g_1,g_2 \in G_b$, the natural transformation $\rho(g_1 g_2) \Longrightarrow \rho(g_1) \circ \rho(g_2)$ is given by $(-1)^F$ if $\Xi(g_1,g_2)$ is nontrivial (after horizontally composing with $\overline{(\cdot)}$ appropriately if $\theta(g) = 1$). 
The stellar bimodule structure on this natural transformation is canonically given by the naturality data of $(-1)^F$ and $\overline{\overline{(\cdot)}} \cong (\cdot)$.
\item For a morphism $\gamma: g \to g$ in $G_b$ the natural transformation $\rho(\gamma): \rho(g) \Longrightarrow \rho(g)$ is given by $(-1)^F$ if $\Gamma(\gamma) =c$ and the identity otherwise (after horizontally composing with $\overline{(\cdot)}$ appropriately if $\theta(g) = 1$). 
\item All other data of the action ($R_{\gamma,\gamma'}, \alpha_{\gamma,\gamma'}, \omega_{g_1,g_2,g_3}$ etc.) is canonically induced by the naturality of the $\Z_2^B \times B\Z_2^F$-action on $\sAlg^{\operatorname{fd}}$.
\end{enumerate}
\end{theorem}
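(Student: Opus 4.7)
The plan is to apply the decomposition theorem (Proposition~\ref{Prop: Fixed points}) to the split exact sequence
\[
1 \to \Spin_2 \rtimes (\Z_2^R \times B\Z_2^F) \to H_2 \rtimes (\Z_2^R \times B\Z_2^F) \to G_b \to 1
\]
using the isomorphism $\sigma: O_2 \times G_b \xrightarrow{\sim} H_2 \rtimes (\Z_2^R \times B\Z_2^F)$ constructed above. Since Theorem~\ref{Th:stfrob} identifies $O_2$-fixed points in $\sAlg^{\operatorname{fd}}$ with stellar Frobenius algebras, it suffices to transport the induced $G_b$-action on $(\sAlg^{\operatorname{fd}})^{O_2}$ through this equivalence and to write out each coherence datum explicitly. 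The crucial observation that makes the answer simple is that, unlike the splitting $\sigma$ itself, the $G_b$-factor of $O_2\times G_b$ is genuinely a direct factor, so all nontriviality of the $G_b$-action comes from how $\sigma(g)$ fails to commute with $\sigma(R)$ and $\sigma(\eta)$ inside $H_2 \rtimes (\Z_2^R \times B\Z_2^F)$, and from the monoidality data of $\sigma$ itself.

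First I will compute $\psi(g)$ on an object $(A, M, \sigma, \lambda) \in \catf{stFrob}$. The underlying algebra is unambiguously $\psi(\sigma(g))[A]$, which is $A$ or $\overline{A}$ according to $\theta(g)$. To recover the stellar datum I pass the fixed-point structure $M: \rho(R)[A]\to A$ through the commutator $c_{R,g}: \rho(R)\rho(g) \Longrightarrow \rho(g)\rho(R)$. For $\theta(g)=1$ this commutator is \emph{not} the identity: the relation $R g = g R c$ in $O_2\times G_b$ (realized inside $H_2 \rtimes (\Z_2^R \times B\Z_2^F)$ after introducing $(-1)^F$) forces an extra factor of $\overline{A}_{(-1)^F}$, which is exactly the conjugated stellar module introduced in Definition~\ref{def:barstellar}. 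The analogous argument with $\sigma(\eta)$ instead of $\sigma(R)$, using $g\eta = \eta^{\pm 1} g$ in $\Spin_2\rtimes G_b$ and the ensuing canonical isomorphism $\overline{A^*\otimes_A A_{(-1)^F}} \cong \overline{A}^*\otimes_{\overline{A}} \overline{A}_{(-1)^F}$, transports the Frobenius datum $\lambda$ to $\overline{\lambda}$.

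For $1$-morphisms and $2$-morphisms of stellar Frobenius algebras there is nothing to add: these are $1$- and $2$-morphisms in $\sAlg^{\operatorname{fd}}$ carrying extra conditions (the unitarity datum and Hermiticity, respectively the intertwining-of-Frobenius and unitarity conditions) but no extra data, so $\psi(g)$ on a stellar bimodule $(N,h)$ must agree with $\overline{(\cdot)}$ on the underlying bimodule; writing out the pasting diagram in which the central square is filled by $c_{R,g}$ recovers the stellar structure on $\overline{N}$ described at the end of Section~\ref{Sec:stellar}. Similarly, the natural transformations $\psi(g_1 g_2) \Longrightarrow \psi(g_1)\psi(g_2)$ and $\psi(\gamma)$ are forced to be the ones induced by the monoidality data of $\sigma$, which by construction contains $(-1)^F$-factors precisely when $\Xi(g_1,g_2)$, respectively $\Gamma(\gamma)$, is nontrivial; the remaining $\overline{(\cdot)}$-whiskerings come from the $\theta$-components.

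The expected technical obstacle is not any single step but the bookkeeping of all coherence modifications ($R_{\gamma,\gamma'}$, $\alpha_{\gamma,\gamma'}$, $\omega_{g_1,g_2,g_3}$, $\ldots$). I will argue uniformly that each of these is canonically induced from the $\Z_2^B\times B\Z_2^F$-naturality on $\sAlg^{\operatorname{fd}}$, by expanding them as the pasting of $\omega$-cells depicted in the large diagram~\eqref{diagramexchange} above and observing that stellar Frobenius algebra structure imposes only conditions on such $2$- and $3$-cells. Thus no genuinely new data appears, and items (1)--(4) of the theorem exhaust the $G_b$-action; the subsequent fixed-point computation in Theorem~\ref{th:main} will only need this explicit description.
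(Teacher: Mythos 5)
Your proposal is correct and follows essentially the same route as the paper: split the exact sequence via the isomorphism $\sigma\colon O_2\times G_b \to H_2 \rtimes (\Z_2^R\times B\Z_2^F)$, transport the $G_b$-factor of the resulting $O_2\times G_b$-action through the identification of $O_2$-fixed points with stellar Frobenius algebras (Theorem \ref{Th:stfrob}), and read off the stellar and Frobenius data from the commutators of $g$ with $R$ and with $\eta$ together with the $(-1)^F$-factors built into the monoidality data of $\sigma$. One small wording point: the unitarity datum $h$ on a stellar bimodule is genuine data (only the $2$-morphisms carry conditions alone), but you do in fact compute it via the pasting diagram filled by $c_{R,g}$, exactly as the paper does, so nothing is missing.
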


To state our main theorem we need to combine the definition of algebras strongly graded over a fermionic $2$-group (Definition \ref{Def: 2-group graded alg}) with the definition of ungraded symmetric stellar Frobenius algebras:

\begin{definition}
Let $\mathcal{G}$ be a fermionic $2$-group.
A \emph{strongly $\mathcal{G}$-graded stellar Frobenius algebra} is a strongly $\mathcal{G}$-graded algebra 
\[
\mathcal{A} = \bigoplus_{g \in\operatorname{Obj} \mathcal{G}}  A_g
\]
such that 
\begin{enumerate}
\item $A:=A_1$ is equipped with the structure of a stellar Frobenius algebra $(M,\sigma, \lambda)$, in particular $A$ is finite-dimensional and semisimple.
\item For every object $g$ of $\mathcal{A}$ the $(A, \overline{A}^{\theta(g)})$-bimodules $A_g$ are $1$-morphisms in stellar Frobenius algebras. 
In other words, they come equipped with the structure of stellar bimodules, where the stellar structure on $\overline{A}$ is induced by $(M,\sigma)$ as in Definition \ref{def:barstellar} and the bimodule is compatible with the Frobenius structures on $A$ and $\overline{A}$.
\item For every morphism $\gamma:g \to g'$ the maps $F_\gamma: A_g \to A_{g'}$ are unitary and for every pair of objects $g_1,g_2$ the multiplication maps $A_{g_1} \otimes_{\overline{A}^{\theta(g)}} \overline{A_{g_2}}^{\theta(g_1)} \to A_{g_1 g_2}$ are unitary, where we equip the complex conjugation of a bimodule with the stellar structure explained at the end of Section \ref{Sec:stellar}.
\end{enumerate}
\end{definition}

Before continuing to prove the main theorem, we first make some of these conditions more explicit. 

\begin{proposition}
\label{Prop: ungrsym}
The condition for strongly $\mathcal{G}$-graded stellar Frobenius algebras that the $(A, \overline{A}^{\theta(g)})$-bimodules $A_g$ preserve the Frobenius structures $\lambda$ on $A$ and the one it induces on $\overline{A}^{\theta(g)}$ is equivalent to
\[
\lambda(a_g a_{g^{-1}}) = \overline{\lambda(a_{g^{-1}} a_g)}^{\theta(g)}
\]
for all $a_g \in A_g$ and $a_{g^{-1}} \in A_{g^{-1}}$.
\end{proposition}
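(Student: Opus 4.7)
The plan is to unpack the condition that $A_g$ is a $1$-morphism in $\catf{ungrFrob}$ using the explicit characterization from Proposition~\ref{prop:spinstatisticsTFTs}: an invertible $(B,A)$-bimodule $N$ intertwines ungraded-symmetric Frobenius structures $(A,\lambda_A)$ and $(B,\lambda_B)$ precisely when the Serre naturality isomorphism $S_N\colon B^*\otimes_B N\to N\otimes_A A^*$ satisfies $S_N(\lambda_B\otimes n)=(-1)^{|n|}n\otimes \lambda_A$. I apply this with $N=A_g$, $A=A$ and $B=\overline{A}{}^{\theta(g)}$.

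First I would identify the relevant Frobenius forms. By Theorem~\ref{th:actionstfrob} the induced Frobenius form on $\overline{A}{}^{\theta(g)}$ is $\lambda$ itself when $\theta(g)=0$ and $\overline{\lambda}\colon \overline{a}\mapsto \overline{\lambda(a)}$ when $\theta(g)=1$. Next, the strongly graded hypothesis provides bimodule isomorphisms
\[
A_g\otimes_{\overline{A}{}^{\theta(g)}} \overline{A_{g^{-1}}}{}^{\theta(g)}\;\cong\;A,\qquad \overline{A_{g^{-1}}}{}^{\theta(g)}\otimes_A A_g\;\cong\;\overline{A}{}^{\theta(g)},
\]
which exhibit $\overline{A_{g^{-1}}}{}^{\theta(g)}$ as an inverse of $A_g$ in $\sAlg$. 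In other words, the restricted multiplication yields a perfect pairing $A_g\times A_{g^{-1}}\to A$, $(a_g,a_{g^{-1}})\mapsto a_g\cdot a_{g^{-1}}$, and analogously in the opposite order.

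Then I would rewrite the Serre naturality condition in terms of these pairings. Using the tautological evaluation $A\otimes_A A^*\to \C$, the isomorphism $S_{A_g}$ is the unique map such that the two ways of producing a scalar from an element of $B^*\otimes_B A_g\otimes_A A_{g^{-1}}$ coincide: pair $B^*$ with the product $a_{g^{-1}}\cdot a_g\in \overline{A}{}^{\theta(g)}=B$, or transport through $S_{A_g}$ and pair $A^*$ with $a_g\cdot a_{g^{-1}}\in A$. Applied to the intertwiner equation $S_{A_g}(\lambda_B\otimes a_g)=(-1)^{|a_g|}a_g\otimes \lambda$, this yields the scalar identity
\[
\lambda_B(a_{g^{-1}}\cdot a_g)\;=\;\lambda(a_g\cdot a_{g^{-1}})
\]
for all $a_g\in A_g$ and $a_{g^{-1}}\in A_{g^{-1}}$; the Koszul sign $(-1)^{|a_g|}$ is absorbed because $\lambda$ (and hence $\lambda_B$) annihilates odd elements, so both sides only contribute when $|a_g|+|a_{g^{-1}}|$ is even. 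Substituting $\lambda_B=\lambda$ in the case $\theta(g)=0$ and $\lambda_B=\overline{\lambda}$ in the case $\theta(g)=1$ yields $\lambda(a_g a_{g^{-1}})=\overline{\lambda(a_{g^{-1}} a_g)}^{\theta(g)}$, as claimed. The converse direction follows by running the argument backwards: the identity forces $S_{A_g}$ to intertwine $\lambda_B$ and $\lambda$ on generators, hence everywhere by bilinearity.

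The main obstacle will be the explicit and sign-correct identification of the Serre naturality $S_{A_g}$ with the product pairing above. A clean way to handle it is to observe that the Serre naturality is determined up to a unique bimodule automorphism by the duality data, so that one may verify the formula on a convenient set of generators using only the strongly graded multiplication; this bypasses a direct computation with the abstract dual of $A_g$ and reduces all Koszul signs to the single factor $(-1)^{|a_g|}$ that disappears by evenness of $\lambda$.
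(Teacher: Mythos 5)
Your overall route is the same as the paper's: reduce the intertwining condition to the Serre-naturality identity $S_{A_g}(\lambda\otimes a_g)=(-1)^{|a_g|}\,a_g\otimes\overline{\lambda}^{\theta(g)}$ via Proposition~\ref{prop:spinstatisticsTFTs} and the description of the induced Frobenius form on $\overline{A}^{\theta(g)}$ from Theorem~\ref{th:actionstfrob}. The paper then delegates the translation of this identity into the scalar condition to Lemma~\ref{Lem:ungrsym}, which is where all the actual work happens — and that is exactly the step your proposal does not carry out correctly.

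The gap is your treatment of the sign $(-1)^{|a_g|}$. You claim it is ``absorbed because $\lambda$ annihilates odd elements, so both sides only contribute when $|a_g|+|a_{g^{-1}}|$ is even.'' That reasoning fails precisely in the interesting case $|a_g|=|a_{g^{-1}}|=1$: there $a_g a_{g^{-1}}$ is even, $\lambda$ does not vanish on it, and $(-1)^{|a_g|}=-1$. The sign does not disappear for the reason you give; it cancels against Koszul signs hidden in the bimodule action of $A_g$ and $A_{g^{-1}}$ on $A^*$ and in the explicit dual-basis formula $S_{A_g}(f\otimes a_g)=\sum_j m_j\otimes n_j f a_g$ with $\sum_j m_j n_j=1$ (Example~\ref{ex:Serre nat unitary}). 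Tracking these signs is the entire content of Lemma~\ref{Lem:ungrsym}, and it is what produces \emph{ungraded} symmetry $\lambda(a_g a_{g^{-1}})=\overline{\lambda(a_{g^{-1}}a_g)}^{\theta(g)}$ rather than the graded symmetry with the factor $(-1)^{|a_g||a_{g^{-1}}|}$ that arises in the bosonic case (Remark~\ref{Rem: c=1 in 2 D}). Your ``two pairings coincide'' characterization of $S_{A_g}$ is also not justified as stated — and as written the factors in $B^*\otimes_B A_g\otimes_A A_{g^{-1}}$ do not even compose, since you have the sides of the bimodule $A_g$ reversed relative to the proposition. Making that characterization precise forces you back into the same dual-basis computation the paper performs, so the skeleton of your argument is right but the step where the sign structure is actually decided is missing.
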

\begin{proof}
Recall from Proposition \ref{prop:spinstatisticsTFTs} that $A_g$ preserving the Frobenius structure is equivalent to the requirement that the Serre naturality isomorphism $S_g: A^* \otimes_A A_{g} \cong A_g \otimes_{\overline{A}^{\theta(g)}} (\overline{A}^{\theta(g)})^*$ satisfies
\[
S_g(\lambda \otimes_A a_g) = (-1)^{|a_g|} a_g \otimes_A \overline{\lambda}^{\theta(g)}
\]
where $\overline{\lambda}$ is the ungraded symmetric Frobenius structure on $\overline{A}$ given in Theorem \ref{th:actionstfrob}.
In Lemma \ref{Lem:ungrsym} we show this condition is equivalent to requiring the desired $\overline{\lambda(a_g a_{g^{-1}})}^{\theta(g)} = \lambda(a_{g^{-1}} a_g)$ for all $a_g \in A_g$ and $a_{g^{-1}} \in A_{g^{-1}}$.
\end{proof}

\begin{example}
To obtain more intuition, we consider the case where $A$ is a super $*$-algebra.
Then since $A_g$ is a stellar $(A, \overline{A}^{\theta(g)})$-bimodule, it comes equipped with an $A$-valued inner product satisfying Hilbert module type identities, where the only sign subtlety comes up in the $*$-structure $\overline{a}^* = (-1)^{|a|} \overline{a^*}$ on $\overline{A}$.
For example, this results in the case $\theta(g) = 1$ in an inner product $\langle \cdot ,\cdot \rangle$ on $A_g$ satisfying
\[
\langle a_g  \overline{a}, a_g' \rangle = (-1)^{|a_g'| |a| + |a|}  \langle a_g,  a_g'  a^* \rangle
\]
because in the expression on the right we had to compute $\overline{a}^* = (-1)^{|a|} \overline{a^*}$.
Using the definition of the stellar structure on the composition,
the multiplication maps $A_g \otimes_{\overline{A}^{\theta(g)}} \overline{A_h}^{\theta(g)} \to A_{gh}$ being unitary gives the requirement
\begin{align*}
\langle a_g a_{h}, b_g b_{h} \rangle_{A_{gh}} = (-1)^{|b_{g}| |b_h|} \langle a_g \langle  a_h, b_h \rangle_{A_h}, b_g \rangle_{A_g}
\end{align*}
for $\theta(g)$ trivial.
For $\theta(g)$ nontrivial the same computation results in
\[
\langle a_g a_{h}, b_g b_{h} \rangle_{A_{gh}} = (-1)^{|b_{g}| |b_h|} \langle a_g \langle \overline{a}_h, \overline{b}_h \rangle_{\overline{A_h}}, b_g \rangle_{A_g} = (-1)^{|b_{g}| |b_h| + |b_h|} \langle a_g \overline{\langle a_h, b_h \rangle_{A_h}}, b_g \rangle_{A_g}
\]
where the sign comes from the complex conjugation of the stellar module $A_h$ as discussed in the beginning of this section.
For example when $h = g^{-1}$ this gives
\begin{align*}
 a_g a_{g^{-1}} (b_g b_{g^{-1}})^* = (-1)^{|b_{g}| |b_{g^{-1}}| + \theta(g) |b_{g^{-1}}|} \langle a_g, \langle a_{g^{-1}}, b_{g^{-1}} \rangle_{A_{g^{-1}}} b_g \rangle_{A_g}.
\end{align*}
\end{example}

We now prove our main theorem.

\begin{theorem}
\label{th:main}
Let $(G,c,\theta)$ be a fermionic group and $\mathcal{G}$ the fundamental $2$-group of $G$.
Then two-dimensional TFTs with fermionic symmetry $G$, spin-statistics and reflection structure are classified by strongly $\mathcal{G}$-graded stellar Frobenius algebras.
\end{theorem}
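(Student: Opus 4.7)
The plan is to apply the decomposition theorem for iterated homotopy fixed points (Proposition~\ref{Prop: Fixed points}) one final time. Combined with Theorems~\ref{Th:stfrob} and~\ref{th:actionstfrob}, what remains is to compute $G_b$-fixed points in $\catf{stFrob}$ for the action $\xi$, and to identify the resulting bicategory with that of strongly $\mathcal{G}$-graded stellar Frobenius algebras. Since $\xi$ only depends on the $2$-truncation of $G_b$, I would replace $G_b$ by its fermionically skeletal model $\mathcal{G}_b$ from Section~\ref{Sec:fermskeletal}, so that all associator and monoidality twists are encoded by the explicit pair $(\Xi,\Gamma)$.

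First I would unpack what a homotopy fixed point for $\xi$ consists of, following Remark~\ref{Rem: FP on Cat} and Appendix~\ref{App: 2-Group}: a stellar Frobenius algebra $(A,M,\sigma,\lambda)$, together with, for each object $g\in\operatorname{Obj}\mathcal{G}_b$, an invertible stellar bimodule $A_g\colon \xi(g)[A]\to A$; for each morphism $\gamma\colon g\to g'$ in $\mathcal{G}_b$ a unitary bimodule isomorphism $F_\gamma\colon A_g\to A_{g'}$ (with a codomain twist by $A_{(-1)^F}$ when $\Gamma(\gamma)=c$, since $\xi(\gamma)$ is then $(\cdot)_{(-1)^F}$ by Theorem~\ref{th:actionstfrob}); and for each pair $g_1,g_2$ a multiplication $\phi_{g_1,g_2}\colon A_{g_1}\otimes_{\overline{A}^{\theta(g_1)}} \xi(g_1)[A_{g_2}]\to A_{g_1 g_2}$ twisted by $(-1)^F$ whenever $\Xi(g_1,g_2)=c$, subject to the coherence relations built from the modifications listed in Theorem~\ref{th:actionstfrob}. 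The central move is to set $A_c:=A_{(-1)^F}$ and absorb every stray $(-1)^F$ into this identification via the canonical $B\Z_2^F$-isomorphism $A_{(-1)^F}\otimes_A A_{(-1)^F}\cong A$: the $\Xi$-twists in $\phi_{g_1,g_2}$ then become precisely the associator data of a $\mathcal{G}$-graded algebra, and the $\Gamma$-twists in $F_\gamma$ match the codomain conventions of Definition~\ref{Def: 2-group graded alg}. Strong grading is equivalent to invertibility of the $A_g$ as $1$-morphisms in $\sAlg$; the stellar compatibility of the multiplication is exactly unitarity of $\phi_{g_1,g_2}$ together with the conjugated stellar structure on the second factor from Section~\ref{Sec:stellar}; and Frobenius compatibility is Proposition~\ref{Prop: ungrsym}.

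The hard part will be the coherence bookkeeping. In one direction I must verify that, once the $(-1)^F$-twists have been re-interpreted through $A_c$, the pentagon for $\phi_{g_1,g_2}$ reproduces the associator condition
\[
F_{\alpha(g,g',g'')}\bigl[(a_g\cdot a_{g'})\cdot a_{g''}\bigr] = a_g\cdot (a_{g'}\cdot a_{g''})
\]
of Definition~\ref{Def: 2-group graded alg}; this hinges on the identity $\Gamma\circ\alpha_{\mathcal{G}_b}=d\Xi$ recorded in Section~\ref{Sec:fermskeletal}, which is exactly what ensures that both sides admit a common lift. In the reverse direction I would reconstruct a fixed point from a strongly $\mathcal{G}$-graded stellar Frobenius algebra, checking the remaining homotopy-coherence conditions against the modifications $R_{\gamma,\gamma'},\alpha_{\gamma,\gamma'},\omega_{g_1,g_2,g_3}$ of Theorem~\ref{th:actionstfrob}. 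Functoriality on $1$- and $2$-morphisms follows by applying the same absorption trick uniformly: a $1$-morphism of fixed points is an invertible stellar bimodule equipped with a $\mathcal{G}$-equivariant structure compatible with the grading and the Frobenius pairing, which is precisely a $1$-morphism in the bicategory of strongly $\mathcal{G}$-graded stellar Frobenius algebras, and the unitarity conditions on $2$-morphisms propagate unchanged. Finally, the case $c=1$, in which no $(-1)^F$-twists ever appear and $A_c=A$ is forced to be purely even, collapses the argument to its bosonic analogue and I would record it separately in Remark~\ref{Rem: c=1 in 2 D}.
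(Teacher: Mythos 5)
Your proposal is correct and follows essentially the same route as the paper: reduce via the cobordism hypothesis and the splitting $H_2\rtimes(\Z_2^R\times B\Z_2^F)\cong O_2\times G_b$ to computing $G_b$-fixed points on $\catf{stFrob}$ for the action of Theorem~\ref{th:actionstfrob}, unpack the fixed-point data over the skeletal model with twists recorded by $(\Xi,\Gamma)$, and absorb all $(-1)^F$-twists by setting $A_c:=A_{(-1)^F}$ so that the data reassembles into a strongly $\mathcal{G}$-graded stellar Frobenius algebra, with Diagrams~\eqref{Diagram A} and~\eqref{Diagram B} supplying the twisted associativity and the multiplicativity of the $F_\gamma$. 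The paper carries out the same coherence bookkeeping you defer (passing through the forgetful functor to real superalgebras and repeatedly invoking naturality of $A_{(-1)^F}$), but the strategy and all key identifications match.
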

\begin{proof}
By the cobordism hypothesis we have to compute
\begin{align*}
\Fun(\Bord^{H_2}, \sAlg^{\text{fd}})^{\Z_2^R \times B\Z_2^F} &= ((\sAlg^{\text{fd}})^{H_2})^{\Z_2^R \times B\Z_2^F} = (\sAlg^{\text{fd}})^{H_2 \rtimes (\Z_2^R \times B\Z_2^F)} \cong (\sAlg^{\text{fd}})^{O_2 \times G_b} 
\\
&\cong ((\sAlg^{\text{fd}})^{O_2})^{G_b} \cong \stFrob^{G_b}.
\end{align*}
Let $\mathcal{G}_b$ denote the fundamental $2$-group of $G_b$.
Recall that the map of $2$-groups $\mathcal{G}_b \to B\Z_2^c$ classifying the extension to $\mathcal{G}$ is given by a pair $(\Gamma, \Xi)$ as explained in Section \ref{Sec:fermskeletal}.
Following the isomorphisms around, it suffices to compute $\mathcal{G}_b$-fixed points for the action on stellar algebras given in Theorem \ref{th:actionstfrob}.
We proceed to explicitly give all data using Definition \ref{Def: Hfixed point}.
The data consists of
\begin{enumerate}
\item a Frobenius stellar algebra $(A,M,\sigma, \lambda) \in \ob \stFrob$;
\item for every $g^b \in \operatorname{ob}\mathcal{G}_b$ a stellar $(A,\overline{A}^{\theta(g^b)})$-bimodule $A_{g^b}$ preserving the Frobenius structure where we take the $(-1)^F$-twisted stellar structure on $\overline{A}$ as explained at the end of Section \ref{Sec:stellar};
\item a unitary bimodule isomorphism 
\[
\phi_{g^b_1, g^b_2}: A_{g^b_1} \otimes_{\overline{A}^{\theta(g^b_1)}} \overline{A_{g^b_2}}^{\theta(g^b_1)} \otimes_{\overline{A}^{\theta(g_1^b g_2^b)}} R_{g_1^b, g_2^b}[A] \to A_{g^b_1 g^b_2}
\]
for every $g^b_1, g^b_2 \in \operatorname{ob}\mathcal{G}_b$ where 
\[
R_{g_1^b, g_2^b}[A] :=
\begin{cases}
\overline{A}^{\theta(g_1^b g_2^b)} & \Xi(g^b_1, g^b_2) = 1
\\
(\overline{A}^{\theta(g_1^b g_2^b)})_{(-1)^F} & \Xi(g^b_1, g_2^b) = c.
\end{cases}
\]
\item for every path $\gamma^b: g^b_1 \to g^b_2$ in $\mathcal{G}_b$ a unitary bimodule isomorphism 
\[
F_{\gamma^b}: A_{g_1^b} \to A_{g_2^b}
\]
when $\Gamma(\gamma^b) = 1$ and
\[
F_{\gamma^b}: A_{g_1^b} \to A_{g_2^b} \otimes_{\overline{A}^{\theta(g_2^b)}}  (\overline{A}^{\theta(g_2^b)})_{(-1)^F}
\]
when $\Gamma(\gamma^b) = c$.
\end{enumerate}

Before moving to the conditions these data have to satisfy, we introduce some notation to connect with the statement of the theorem.
We will work with the model $\mathcal{G} = \Z_2^c \rtimes \mathcal{G}_b$ of the fundamental groupoid of $G$, where the semidirect product is formed using $(\Gamma, \Xi)$.
This gives us a preferred splitting of the exact sequence of $2$-groups
\[
1 \to B\Z_2^c \to \mathcal{G} \to \mathcal{G}_b \to 1.
\]
Concretely this means objects of $\mathcal{G}$ are of the form $g = c^\epsilon \rtimes g^b$, where $\epsilon = 0$ or $\epsilon = 1$ and $g^b$ is an object of $\mathcal{G}_b$.
Given such an object, we define the stellar bimodule
\[
A_g := 
\begin{cases}
A_{g_b}  & \epsilon = 1
\\
A_{g_b} \otimes_{\overline{A}^{\theta(g^b)}} (\overline{A}^{\theta(g^b)})_{(-1)^F} & \epsilon = -1.
\end{cases}
\]
We extend the definition of $\phi$ to $\phi_{g_1,g_2}$ for $g_1, g_2 \in \operatorname{ob} \mathcal{G}$ by inserting the unitary isomorphisms $A_{g_b} \otimes_{\overline{A}^{\theta(g_b)}} (\overline{A}^{\theta(g_b)})_{(-1)^F} \cong A_{(-1)^F} \otimes_A A_{g_b}$ and $A_{(-1)^F} \otimes_A A_{(-1)^F} \to A$ wherever relevant.
These isomorphisms are induced by naturality of $(-1)^F$ and the data of $(-1)^F$ squaring to the identity respectively.
They are unitary because of the relevant compatibility data with the $\Z_2^R$-action $A \mapsto \overline{A}^{\op}$.
Recall that the natural isomorphisms $\rho(g_1^b g_2^b) \cong \rho(g_1^b) \circ \rho(g_2^b)$ between complex conjugation (or identity) functors obtain an extra $(-1)^F$ when $\Xi(g_1^b, g_2^b)$ is nontrivial.
So by definition of the tensor product on $\operatorname{ob} \mathcal{G}$, $\phi_{g_1,g_2}$ becomes a bimodule map $A_{g_1} \otimes_{\overline{A}^{\theta(g_1)}} \overline{A_{g_2}}^{\theta(g_1)}  \to A_{g_1 g_2}$.
Define the $\operatorname{ob} \mathcal{G}$-graded complex super vector space
\[
\mathcal{A} := \bigoplus_{g \in \operatorname{ob} \mathcal{G}} A_g.
\]
We first work in the image of the forgetful functor $\sAlg^{\text{fd}}_\C \to \sAlg^{\text{fd}}_\R$ under which the complex conjugation functor becomes the identity.
There $\phi$ defines a $\R$-linear multiplication map $\mathcal{A} \otimes \mathcal{A} \to \mathcal{A}$ which is strongly $\operatorname{ob} \mathcal{G}_b$-graded.
We will denote this multiplication $\cdot$ in the rest of the proof, even though we will see it is only associative up to the associator of $\mathcal{G}$. 

Given $\gamma^b: g^b \to h^b$ a morphism in $\mathcal{G}_b$, $F_{\gamma^b}$ becomes a unitary bimodule map $A_{g_b} \to A_{\Gamma(\gamma^b) \rtimes h_b }$.
Note that all morphisms $\gamma: g \to h$ in $\mathcal{G}$ are of the form $\gamma^b \rtimes c^\epsilon: g^b \rtimes c^\epsilon \to h^b \rtimes \Gamma(\gamma^b) c^\epsilon$ for some morphism $\gamma^b: g^b \to h^b$ in $\mathcal{G}_b$ and $\epsilon = 0,1$.
Define $F_\gamma: A_g \to A_{h}$ by $F_{\gamma^b}$ in case $\epsilon = 0$ and as the composition
\begin{align*}
A_g = A_{g_b} \otimes_{\overline{A}^{\theta(g^b)}} (\overline{A}^{\theta(g^b)})_{c^\epsilon} \xrightarrow{F_{\gamma^b} \otimes \id}& A_{h_b} \otimes_{\overline{A}^{\theta(h^b)}} (\overline{A}^{\theta(h^b)})_{\Gamma(\gamma^b)} \otimes_{\overline{A}^{\theta(g^b)}} (\overline{A}^{\theta(g^b)})_{c^\epsilon} 
\\
\to& A_{h_b} \otimes_{\overline{A}^{\theta(h^b)}} (\overline{A}^{\theta(g^b)})_{c^{\epsilon} \Gamma(\gamma^b)} 
\end{align*}
otherwise.
We have now supplied all the data promised in the statement of the theorem.

We turn to the $\alpha$-twisted associativity condition the $\phi_{g^b_1,g^b_2}$ have to satisfy when combining three elements given by Diagram~\eqref{Diagram A}.
The associativity diagram for $\phi$ exactly says that the two $2$-morphisms (with slightly different codomains)
\[
A_{g_1} \otimes_{\overline{A}^{\theta(g_1)}} \overline{A_{g_2}}^{\theta(g_1)}  \otimes_{\overline{A}^{\theta(g_1 g_2)}} \overline{A_{g_3}}^{\theta(g_1 g_2)} \to A_{(g_1 g_2) g_3}, A_{g_1 (g_2 g_3)}
\]
are related by $F_{\alpha(g_1,g_2,g_3)}$ for elements of the form $g_i = 1 \rtimes g^b_i \in \operatorname{ob} \mathcal{G}$.
To prove $\mathcal{A}$ satisfies the relevant associativity axiom, we additionally have to show twisted associativity for arbitrary elements $g_i = c^\epsilon_i \rtimes g^b_i \in \operatorname{ob} \mathcal{G}$ with $\epsilon_i \in \{\pm\}$.
For $g_i = 1$ this follows by associativity of the $\Z_2$-graded superalgebra $A \oplus A_{(-1)^F}$.
In general we have to apply the compatibility of naturality data of $A_{(-1)^F}$ with the composition of $2$-morphisms and multiplication in $A \oplus A_{(-1)^F}$ in the sense that
\[
\begin{tikzcd}
M \otimes_A A_{(-1)^F} \arrow[d] \arrow[r] & B_{(-1)^F} \otimes_B M \arrow[d]
\\
N \otimes_A A_{(-1)^F} \arrow[r] & B_{(-1)^F} \otimes_B N
\end{tikzcd}
\]
and
\[
\begin{tikzcd}
M \otimes_A A_{(-1)^F} \otimes_A A_{(-1)^F} \arrow[d]  \arrow[r] & B_{(-1)^F} \otimes_B M \otimes_A A_{(-1)^F} \arrow[d]
\\
M & B_{(-1)^F} \otimes_B B_{(-1)^F} \otimes_B M \arrow[l]
\end{tikzcd}
\]
commute for $(B,A)$-bimodules $M$ and $N$ and a bimodule map $M \to N$.
We obtain that $\mathcal{A}$ satisfies the twisted associativity condition as a superalgebra over $\R$.
The strongly $\Z_2^c$-graded superalgebra $A \oplus A_{(-1)^F} \subseteq \mathcal{A}$ satisfies $((-1)^F)^2 = 1$ by definition.
Also given that the naturality of $(-1)^F$ for $(B,A)$-bimodules $M$ is defined by
\[
M \otimes_A A_{(-1)^F} \to  B_{(-1)^F} \otimes_B M \quad m \otimes (-1)^F \mapsto (-1)^{|m|} (-1)^F \otimes m
\]
the definition of multiplications $A_{g^b} \otimes_{\overline{A}^{\theta(g^b)}} \overline{A_c}^{\theta(g^b)} \to A_{c \rtimes g^b}$ and $A_c \otimes_A A_{g^b} \to A_{c \rtimes g^b}$ imply that $(-1)^F a_{g^b} = (-1)^{|a_{g^b}|} a_{g^b} (-1)^F$ for all $a_{g^b} \in A_{g^b}$.
The same formula works for multiplication of $a_g \in A_g$ with $(-1)^F$ in $A_{gc} = A_{g^b}$ if $g = c \rtimes g^b$.
Requiring that the $(A,A)$-bimodules $A_{g^b}$ over the $\R$-superalgebra $A$ are $(A,\overline{A}^{\theta(g^b)})$-bimodules over the $\C$-super algebra $A$ is equivalent to requiring the condition that $i a_{g^b} = (-1)^{\theta(g^b)} a_{g^b} i$ for all $a_{g^b} \in A_{g^b}$.
This extends from $g^b \in  \operatorname{ob} G_b$ to arbitrary $g \in \operatorname{ob} \mathcal{G}$ as $A_{(-1)^F}$ is an $(A,A)$-bimodule and so $(-1)^F$ is `$\C$-linear'.
The multiplication maps $\phi_{g_1^b,g_2^b}$ being complex-linear is automatic by associativity of multiplying $i$ with elements $a_{g^b_1} \in A_{g^b_1}$ and $A_{g_2^b} \in A_{g^b_2}$.

We have now applied all conditions following from the associativity diagram of $\phi$ and proceed to the Diagram \eqref{Diagram B}. 
This telling us what the relationship is between $F_{\gamma_1^b}, F_{\gamma_2^b}$ and $F_{\gamma_1^b \otimes \gamma_2^b}$ for  $\gamma^b_i: g_i^b \to h_i^{b}$ two morphisms in $\mathcal{G}_b$.
We obtain morphisms $\gamma_i: 1 \rtimes g^b_i \to \Gamma(\gamma_i) \rtimes h^b_i$ in $\mathcal{G}$.
Working out the diagram in this case yields the equation
\[
F_{\gamma_2 \otimes \gamma_1}(a_{g^b_1} a_{g^b_2}) = F_{\gamma_1}( a_{g^b_1}) F_{\gamma_2}( a_{g^b_2})
\]
for all $a_{g^b_1} \in A_{g^b_1}$ and $a_{g^b_2} \in A_{g_2}$.
Note how this equation makes sense as $\Gamma(\gamma_2 \otimes \gamma_1) = \Gamma(\gamma_2) \Gamma(\gamma_1)$ and $c$ is central.
We are left with showing the same formula for arbitrary morphism of $\mathcal{G}$, which are of the form $c^{\epsilon_i} \gamma_i$ for $\epsilon_i =0,1$.
Using how $F_{c \gamma_i}$ is defined and the multiplication on elements, we obtain a large diagram we have to show commutes.
This will follow by the relationship between $F_{\gamma_1^b}, F_{\gamma_2^b}$ and $F_{\gamma_1^b \otimes \gamma_2^b}$ together with several applications of the naturality conditions for the $B\Z_2^F$-action expressed by the commutative squares above.
\end{proof}
\begin{remark}\label{Rem: c=1 in 2 D}
	Now we comment on the case where the symmetry group $G$ is bosonic, i.e. $c = 1$.
	In this case, analogous but slightly simplified proofs lead to a similar classification of spin-statistics $G$-TFTs with reflection structure after small adaptations given as follows. 
	We still have a strongly $\mathcal{G}:=\pi_{\leq 1} G$-graded algebra 
	\[
\mathcal{A} = 	\bigoplus_{g \in \operatorname{Obj} \mathcal{G}} A_g
	\]	
	which in particular has $A:= A_1 = A_c = A_{(-1)^F}$.
We still have that $A_1$ is a stellar algebra and $A_g$ have stellar structures such that multiplication is unitary.
Paths $\gamma: g \to g'$ will still give bimodule maps $F_\gamma: A_g \to A_{g'}$ satisfying the appropriate coherence condition, keeping in mind that $A_c = A$.
	The trivialization of the Serre gives an $(A,A)$-bimodule isomorphism $A \cong A^*$, which will now give a graded-symmetric Frobenius structure instead.
Moreover, in the analogous proof of Proposition \ref{Prop: ungrsym} and Lemma \ref{Lem:ungrsym}, Koszul signs will now show up.
The result is then that $A_g$ preserves the Frobenius structure if and only if 
	\[
	\overline{\lambda(a_g a_{g^{-1}})}^{\theta(g)} = (-1)^{|a_g||a_{g^{-1}}|} \lambda(a_{g^{-1}} a_g)
	\]
	for all $a_g \in A_g$ and $a_{g^{-1}} \in A_{g^{-1}}$.
\end{remark}

In the next section, we spell out some concrete algebraic consequences of the above characterization of spin-statistics and reflection TFTs.

\subsection{Examples and computations}

In practice, most topological field theories naturally arise in the form of $*$-algebras, not stellar algebras.
With the goal of fitting many examples in this framework, we therefore look at the special case where our stellar algebra is a $*$-algebra.
We thus prove a construction lemma for two-dimensial TFTs with reflection structure and spin-statistics below.
For simplicity, we restrict to discrete symmetry groups in this lemma.

Recall our convention for $\Z_2$-graded vs super $*$-algebras; super $\Z_2$-graded algebras $(A,\dagger)$ satisfy $(ab)^\dagger = b^\dagger a^\dagger$ without the Koszul sign, but super $*$-algebras $(A,*)$ satisfy $(ab)^* = (-1)^{|a||b||} b^* a^*$.
Over the complex numbers, super and $\Z_2$-graded $*$-algebras are equivalent under the one-to-one correspondence
\[
a^* =
\begin{cases}
a^\dagger & |a| = 0 
\\
i a^\dagger & |a| = 1.
\end{cases}
\]
We have already seen that in this setting the $A_g$ will come equipped with Hilbert bimodule structures and spelled out explicitly what it means for the multiplication maps to be unitary.
The maps $F_\gamma$ being unitary is equivalent to the element $a_\gamma$ being unitary in $A$, i.e. $a a^* = a^* a =1$.
For the case $\Gamma(\gamma) = 1$ this follows immediately from the fact that unitary bimodule maps $A \to A$ are exactly invertible elements $a \in A$ such that $aa^* = 1$. 
For the case $\Gamma(\gamma) = c$ it is also true; the subtle signs in the definition of the stellar module structure on $A_{(-1)^F}$ explained in Example \ref{ex:stellar A_c} are irrelevant because $a_\gamma$ is even.
Indeed, note that the map $A \to A_{(-1)^F}$ being unitary is equivalent to
\[
\langle 1, 1 \rangle_A = \langle a_\gamma, a_\gamma \rangle = (-1)^{|a_\gamma|} a_\gamma a_\gamma^* = a_\gamma a_\gamma^*.
\]
The following theorem constructs two-dimensional TFTs with reflection structure and spin-statistics for discrete symmetry groups $G$ starting from a $G$-graded $*$-algebra.

\begin{proposition}
\label{Lem:construct}
Let $(G,\theta, c)$ be a discrete fermionic group that is not bosonic.
Consider $G$ as a fermionic $2$-group with only trivial morphisms and let
\[
\mathcal{A} = \bigoplus_{g \in G} A_g
\]
be a fermionically graded algebra.
Suppose $\mathcal{A}$ is additionally a $\Z_2$-graded real $*$-algebra such that $i^\dagger = -i$ and $A_g^\dagger = A_{g^{-1}}$.
Define the $A$-valued inner product on $A_g$ by 
\[
\langle a_g, b_g \rangle := 
\begin{cases}
 a_g b_g^\dagger & |b_g| = 0
\\
i a_g b_g^\dagger & |b_g| = 1
\end{cases}
\]
Make $A$ into a super $*$-algebra by restricting the $\Z_2$-graded $*$-algebra structure on $\mathcal{A}$.
Then $A_g$ is a stellar $(A, \overline{A}^{\theta(g)})$-bimodule where we use the interesting stellar structure of Definition \ref{def:barstellar} on $A$ and the multiplication of $\mathcal{A}$ is unitary.
Moreover, if $\lambda$ is an ungraded symmetric Frobenius structure on $A$ such that $\lambda(a_g a_{g^{-1}}) = \overline{\lambda(a_{g^{-1}} a_g)}^{\theta(g)}$, it is compatible with the induced stellar structure if and only if
\[
\lambda(a^\dagger) = \overline{\lambda(a)}
\]
In that case $\mathcal{A}$ is a strongly $G$-graded stellar Frobenius algebra.
\end{proposition}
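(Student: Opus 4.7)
The plan is to apply Theorem \ref{th:main}: since $G$ is discrete (considered as a $2$-group with only identity morphisms), the classification reduces to verifying that $\mathcal{A}$ is a strongly $G$-graded stellar Frobenius algebra, with no $F_\gamma$-data to check. So the task decomposes into four verifications: (i) the induced $A$-valued inner product on each $A_g$ satisfies the Hilbert module axioms of Proposition \ref{prop:Hilbert module}; (ii) the multiplication maps of $\mathcal{A}$ are unitary with respect to these pairings; (iii) the Frobenius compatibility from Proposition \ref{Prop: ungrsym} is equivalent to $\lambda(a^\dagger)=\overline{\lambda(a)}$; and (iv) the definition is consistent with the passage between super and $\Z_2$-graded $*$-algebras.

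First I would carefully translate from the $\Z_2$-graded $*$-structure $\dagger$ to the super $*$-structure $*$ on $A$ via $a^* = i^{|a|} a^\dagger$, and verify this makes $A$ into a super $*$-algebra in the sense of Appendix \ref{Sec:*-algs}. Using this, I would show the inner product $\langle a_g, b_g \rangle$ defined with the factor $i$ for odd $b_g$ is exactly the natural pairing one obtains by identifying the Hilbert bimodule structure on $A_g$ with the sesquilinear form induced by $\dagger$. The four Hilbert module conditions of Proposition \ref{prop:Hilbert module} then reduce to $\dagger$-identities in $\mathcal{A}$, where the condition $(ab)^\dagger = b^\dagger a^\dagger$ (without Koszul sign) conspires with the insertion of $i$ for odd elements to produce the required Koszul signs. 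The Hermiticity condition $\langle b_g, a_g \rangle^* = (-1)^{|a_g||b_g|}\langle a_g, b_g \rangle$ is the only non-routine computation; the odd-odd case is where the factor of $i$ is essential, since $i^* = i$ by the definition of $*$ from $\dagger$, and this exactly compensates to give the correct sign.

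For the multiplication maps, I would compute both sides of the unitarity identity from Example~5.12, writing everything in terms of $\dagger$ via the translation, and observe that the identity $\langle a_{g_1}a_{g_2}, b_{g_1}b_{g_2}\rangle_{A_{g_1g_2}} = $ (the iterated pairing) reduces to associativity of $\dagger$ combined with $(g_1g_2)^{-1} = g_2^{-1}g_1^{-1}$. The subtle point here is the case $\theta(g_1)=1$, where one must apply the stellar structure on $\overline{A}$ given by $\overline{a}^*=(-1)^{|a|}\overline{a^*}$ and on $\overline{A_{g_2}}^{\theta(g_1)}$ as in the end of Section \ref{Sec:stellar}. The extra sign $(-1)^{|b_{g_2}|}$ appearing there cancels precisely the sign coming from $i$ in the inner product on $A_{g_2}$, producing a uniform formula.

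For Frobenius compatibility, I would invoke Proposition \ref{Prop: ungrsym}: the condition $\lambda(a_g a_{g^{-1}}) = \overline{\lambda(a_{g^{-1}} a_g)}^{\theta(g)}$ is assumed for the graded pieces, but to fit into the stellar framework one additionally needs that $\lambda\colon (A,M,\sigma) \to \C$ is compatible with the stellar structures. This reduces to checking $\lambda(a^\dagger)=\overline{\lambda(a)}$ on $A$ itself, which follows from the hypothesis applied to $g=1$ (where $A_1 = A$ and the condition gives $\lambda(ab)=\overline{\lambda(ba)}^0=\lambda(ba)$, i.e.\ ungraded symmetry) combined with the Hermiticity of the Frobenius pairing. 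The main obstacle I anticipate is bookkeeping the sign produced by the $(-1)^F$-twist in the stellar structure on $\overline{A}$ (Definition \ref{def:barstellar}); checking that the induced formula for the Frobenius pairing $\overline{\lambda}$ on $\overline{A}$ matches $\overline{a}\mapsto\overline{\lambda(a)}$ pointwise is the most error-prone calculation. Once these pieces are assembled, the hypotheses of Theorem \ref{th:main} are satisfied and $\mathcal{A}$ is a strongly $G$-graded stellar Frobenius algebra.
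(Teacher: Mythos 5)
Your overall strategy is essentially the paper's: the paper also reduces everything to the Hilbert-module conditions of Proposition \ref{prop:Hilbert module}, the unitarity of the multiplication maps, and the identification of stellar--Frobenius compatibility with a reality condition on $\lambda$. The only organisational difference is that the paper \emph{derives} the coefficient in front of $a_g b_g^\dagger$ (it postulates $\langle a_g,b_g\rangle = \alpha(a_g,b_g)\,a_g b_g^\dagger$ with $\alpha$ depending only on degrees and $\theta(g)$, and shows the Hermiticity and unitarity constraints force $\alpha$ to be exactly $1$ or $\pm i$ as in the statement), whereas you propose to \emph{verify} the given formula directly. Both are fine; the paper's version additionally yields essential uniqueness of the pairing, which is not needed for the statement.

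There is, however, one genuine error in your treatment of the Frobenius compatibility. You claim that $\lambda(a^\dagger)=\overline{\lambda(a)}$ ``follows from the hypothesis applied to $g=1$ \dots combined with the Hermiticity of the Frobenius pairing.'' It does not. The $g=1$ instance of the hypothesis only gives $\lambda(ab)=\lambda(ba)$, which is already part of $\lambda$ being ungraded symmetric, and carries no information about how $\lambda$ interacts with $\dagger$ or with complex conjugation. The condition $\lambda(a^\dagger)=\overline{\lambda(a)}$ is an independent reality constraint: for instance it forces $\lambda(1)\in\R^\times$ (as the remark after the proposition in the paper points out), and replacing $\lambda$ by $i\lambda$ when $\theta$ is trivial (e.g.\ $G=\Z_2^c$) preserves all the stated hypotheses while violating it. The content of the ``if and only if'' is precisely that stellar--Frobenius compatibility (which by the Example preceding Theorem \ref{Th:stfrob} amounts to $\lambda(a^*)=(-1)^{|a|}\overline{\lambda(a)}$, i.e.\ $\lambda(a^\dagger)=\overline{\lambda(a)}$ since $\lambda$ vanishes on odd elements and $a^*=a^\dagger$ for even $a$) is \emph{equivalent} to this extra condition --- not that the condition is automatic. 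A second, smaller slip: from $i^\dagger=-i$ and $i$ even one gets $i^*=i^\dagger=-i$, not $i^*=i$ as you assert when discussing the odd--odd Hermiticity case; the sign bookkeeping there needs to be redone, although the conclusion that the factor of $i$ is what makes the odd--odd case work is correct.
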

\begin{proof}
We start by postulating a formula of the form
\[
\langle a_g, b_g \rangle := \alpha(a_g, b_g) a_g b_g^\dagger
\]
where $\alpha(a_g,b_g) = \alpha_{\theta(g)}(|a_g|,|b_g|) \in \C$ only depends on $|a_g|, |b_g|$ and $\theta(g)$.
We then show that $A_g$ being stellar modules is equivalent to $\alpha_{\theta(g)}(0,0)$ and $\alpha_{\theta(g)}(1,0)$ being real and
\begin{align}
\label{eq:alphaconditions}
\alpha_{\theta(g)} (0,1) &= i \alpha_{\theta(g)}(1,0)
\\
\alpha_{\theta(g)}(1,1) &= i \alpha_{\theta(g)}(0,0)
\\
\alpha_{\theta(g)}(1,0) &= \alpha_{\theta(g)}(0,0)
\end{align}
Then we show that multiplication is unitary if and only if $\alpha$ is as in the statement of the proposition (except that $i$ could be replaced by $-i$, but it turns out the choice $+i$ is fixed by our conventions for the stellar structure on a complex $*$-algebra.).

We proceed to go through the four conditions of Proposition \ref{prop:Hilbert module}, skipping the first as it is redundant.
Note that for $a \in A$ the comparison of the $\Z_2$-graded and super $*$-structures on $\overline{A}$ are
\[
\overline{a}^* = (-1)^{|a|} \overline{a^*} = 
\begin{cases}
\overline{a^\dagger}  & |a| = 0,
\\
i \overline{a^\dagger} & |a| = 1.
\end{cases}
\]
and so $\overline{a}^\dagger = \overline{a^\dagger}$.
Hence for $\theta(g) = 1$ we get
\begin{align*}
(-1)^{|a| |b_g|} \langle a_g, b_g \overline{a}^* \rangle &= (-1)^{|a| |b_g|} \langle a_g, b_g \textcolor{blue}{i} \overline{a^\dagger} \rangle 
= (-1)^{|a| |b_g|} \alpha( a_g, b_g  \overline{a^\dagger}) a_g (b_g \textcolor{blue}{i} \overline{a^\dagger})^\dagger
\\
&= \textcolor{blue}{-i} (-1)^{|a| |b_g|} \alpha( a_g, b_g  \overline{a^\dagger}) a_g \overline{a} b_g^\dagger
\end{align*}
where the blue symbols are only there if $|a| = 1$.
Comparing this with 
\[
\langle a_g \overline{a}, b_g \rangle = \alpha(a_g \overline{a}, b_g) a_g \overline{a} b_g^\dagger
\]
gives 
\[
\alpha(a_g \overline{a}, b_g) = \textcolor{blue}{-i} (-1)^{|a| |b_g|} \alpha( a_g, b_g  \overline{a^\dagger}).
\]
The case $\theta(g) = 0$ yields the same formula.
Plugging in all eight possibilities, we see this is equivalent to the first two equations of \ref{eq:alphaconditions}.

We turn to the Hermiticity condition which says that
\begin{align*}
\alpha(a_g, b_g) a_g b_g^\dagger =
(-1)^{|a_g| |b_g|} \langle b_g, a_g \rangle^*
&= 
(-1)^{|a_g| |b_g|} \overline{\alpha(b_g, a_g)}\, \textcolor{blue}{i}\, (b_g a_g^\dagger)^\dagger
\end{align*}
where the blue imaginary unit is there if and only if $|a_g| + |b_g| = 1$.
This results in 
\[
\alpha(a_g, b_g) = (-1)^{|a_g| |b_g|} \textcolor{blue}{i} \, \overline{\alpha(b_g, a_g)} 
\]
Writing out all four possibilities for the degrees, we conclude that $\alpha_{\theta(g)}(0,0)$ is real, $\alpha_{\theta(g)}(1,1)$ is imaginary and
\[
\alpha_{\theta(g)}(0,1) = i \, \overline{\alpha_{\theta(g)}(1,0)}.
\]
Together with Equations \ref{eq:alphaconditions}, this implies $\alpha_{\theta(g)}(1,0)$ is real and so we have found exactly the conditions on $\alpha$ as claimed at the beginning of the proof.
The third condition on being a Hilbert module of Proposition \ref{prop:Hilbert module} is obvious.

The only thing that remains to be checked is that multiplication is unitary.
For $\theta(g) = 0$ we compute
\begin{align*}
\alpha(a_g a_h, b_g b_h) a_g a_h (b_g b_h)^\dagger &= \langle a_g a_h, b_g b_h \rangle 
= (-1)^{|b_g| |b_h|} \langle a_g \langle a_h, b_h \rangle, b_g \rangle
= (-1)^{|b_g| |b_h|} \langle a_g \alpha( a_h, b_h) a_h b_h^\dagger, b_g \rangle
\\
&=  (-1)^{|b_g| |b_h|} \alpha( a_g \alpha( a_h, b_h) a_h b_h^\dagger, b_g )a_g \alpha( a_h, b_h) a_h b_h^\dagger b_g^\dagger
\\
&= (-1)^{|b_g| |b_h|} \alpha( a_g a_h b_h^\dagger, b_g )  \alpha( a_h, b_h) a_g a_h b_h^\dagger b_g^\dagger
\end{align*}
giving us the condition
\begin{equation}
\alpha(a_g a_h, b_g b_h) =  (-1)^{|b_g| |b_h|} \alpha( a_g a_h b_h^\dagger, b_g )  \alpha( a_h, b_h) 
\end{equation}
The same computation for $\theta(g) = 1$ gives
\begin{equation}
\alpha(a_g a_h, b_g b_h) =  (-1)^{|b_g| |b_h| + |b_h|} \alpha( a_g a_h b_h^\dagger, b_g )  \overline{\alpha( a_h, b_h)}
\end{equation}
Given all the conditions on $\alpha$ from before this yields $\alpha_{\theta(g)} (0,0) = \alpha_{\theta(g)} (1,0) = 1$ and $\alpha_{\theta(g)} (0,1) = \alpha_{\theta(g)} (1,1) = \pm i$ independent of $\theta(g)$.
Moreover, the sign in front of the imaginary unit is the same for both choices of $\theta(g)$.
The compatibility between the Frobenius structure and the stellar structure in the case where $A$ is a $*$-algebra is $\lambda(a^*) = \overline{\lambda(a)}$ which gives the last statement since $\lambda$ is zero on odd elements and on even elements we have $a^\dagger = a^*$.
This finishes the proof.
\end{proof}

\begin{remark}
We warn the reader that $\dagger: \mathcal{A} \to \mathcal{A}$ is not a $\C$-antilinear map if $\theta$ is nontrivial since 
\[
(i a_g)^\dagger = a_g^\dagger i^\dagger = - a_{g}^\dagger i = i a_{g}^\dagger
\]
for $\theta(g) = 1$.
\end{remark}

\begin{remark}
The above proposition immediately implies a reality condition on the Frobenius structure through
\[
\overline{\lambda(1)} = \lambda(1^\dagger) = \lambda(1) \in \R^\times.
\]
\end{remark}

We provide a few concrete examples of TFTs that can be constructed using the lemma above and indicate some generalizations to the case where $G$ is not discrete along the way.

\begin{example}
Consider for a finite fermionic group $G$ the trivial theory, which has $A = \C$ with stellar structure induced by its canonical $C^*$-algebra structure and the Frobenius structure is given by $\lambda(z) = z$.
The fermionically $G$-graded algebra is the fermionic group algebra
\[
\mathcal{A} = \frac{\R[i,x_g: g\in G ] }{(i^2 = -1, x_g i = (-1)^{\theta(g)} ix_g, x_g x_h = x_{gh})}
\] 
defined to be purely even.
Note that $\lambda$ satisfies the required condition.
We make this into a $\Z_2$-graded real $*$-algebra by $x_g^\dagger = x_{g^{-1}}$ and $i^\dagger = -i$.
Clearly $\lambda(z^\dagger) = \overline{\lambda(z)}$ for $z \in A_1 = \C$ and so by Proposition \ref{Lem:construct}, this defines an $H$-TFT with spin-statistics and reflection structure.
Note that we obtained a pretty interesting algebra by starting with the trivial theory. 
In particular since $\mathcal{A}$ is semisimple, we could assign the Frobenius algebra $\mathcal{A}$ to a point to define a two-dimensional TFT with structure group $O_2$.
We expect that this theory is the quantization of the trivial theory described above obtained by path integrating over the $G$-backgrounds.
Note also that if $G$ only contains time-preserving symmetries, the resulting theory is oriented.
\end{example}

For explicit computations, it is useful to minimize the amount of algebraic information necessary to describe a topological field theory.
For this we take the fermionically skeletal model of the fermionic $2$-group $\mathcal{G} = \pi_{\leq 1} G$.
We recall that it is defined as $\mathcal{G} = \Z_2^c \rtimes_{\Gamma, \Xi} \mathcal{G}_b$ where $\mathcal{G}_b$ is the skeletal model of $\pi_{\leq 1} G_b$.
Here the action of $\mathcal{G}_b$ on $\Z_2^c$ defined by the map $\mathcal{G}_b \to *\DS \Z_2^c$ classifying the extension of $G_b$ by $\Z_2^c$ is summarized by the data $\Gamma: \pi_1(G_b) \to \Z_2^c$ and $\Xi: \pi_0(G_b) \times \pi_0(G_b) \to \Z_2^c$ as explained in Section \ref{Sec:fermskeletal}.
The first advantage of this situation is that the objects form the relatively small set $\pi_0(G_b) \times \Z_2^c$ over which the algebras will be graded.
Additionally a lot of the data on $1$-morphisms is determined by loops.

So let 
\[
\mathcal{A} = \bigoplus_{g \in \Z_2^c \times \pi_0(G_b)} A_g
\]
denote a strongly $\mathcal{G}$-graded stellar Frobenius algebra.
Note that if $\gamma^b \in \pi_1(G_b)$ is a loop, then the data of a bimodule isomorphism $F_{\gamma^b}: A \cong A_{\Gamma(\gamma^b)}$ is equivalent to a choice of even invertible element $a_{\gamma^b} \in A_{\Gamma(\gamma^b)}$ satisfying the additional condition that
\[
a_{\gamma^b} a = a a_{\gamma^b} \quad \forall a \in A.
\]
We will sometimes identify $a_{\gamma^b}$ with an invertible element $a_{\gamma^b}'$ of $A$ using the canonical generator $(-1)^F \in A_{(-1)^F}$.
But given the twisted commutation condition $(-1)^F$ satisfies, this also changes the centrality condition on $a_{\gamma^b}'$.
Namely, it is central in case $\Gamma(\gamma^b) = 1$ and satisfies
\[
a_{\gamma^b} a = (-1)^{|a|} a a_{\gamma^b} \quad \forall a \in A
\]
in case $\Gamma(\gamma^b) = c$.
We emphasize that this latter condition is not saying that $a_{\gamma^b}$ is in the graded center since $a_{\gamma^b}$ is even and so the condition of it being in the ungraded center and in the supercenter are equal.

Note that the condition on the $F_\gamma$ for two loops $\gamma^b_1, \gamma^b_2$ is equivalent to $a_{\gamma^b_1 \gamma^b_2} = a_{\gamma^b_1} a_{\gamma^b_2}$ independent of $\Gamma(\gamma^b_i)$.
Moreover, by translating paths to start at the identity using the relationship between $F_{\id_g \otimes \gamma}$ and $F_\gamma$, the data of the $F_\gamma$ is determined by specifying it for all paths $1 \to 1$ and $1 \to c$ determined by the elements $a_{\gamma^b}$.
They then satisfy an extra condition coming from the action of $\pi_0(G)$ on $\pi_1(G)$ by conjugation.
Indeed, Diagram \ref{Diagram B} applied to the constant path at $g$ and a loop $\gamma^b$ gives a condition relating $F_{g \gamma^b g^{-1}}$ with $F_\gamma$ using $F_{\id_{g}} = \id_{A_{g^{-1}}}$, which can also be obtained as a special case of the equation the $F_\gamma$ have to satisfy.
On the level of elements this results in
\[
a_{g \gamma^b g^{-1}} = a_\gamma \quad g \in \pi_0(G),  \gamma \in \pi_1(G).
\]

\begin{remark}
	If additionally the associator of the skeletal model of $\pi_{\leq 1} G_b$ vanishes, then it also vanishes in the fermionically skeletal model of $\mathcal{G}$ and so the object set of $\mathcal{G}$ is a group.
	Alternatively, we can see $\operatorname{ob} \mathcal{G}$ as the extension of $\pi_0(G_b)$ by $\Z_2^c$ defined by $\Xi$, which is a $2$-cocycle on $\pi_0(G_b)$ because we assumed the associator to be trivial.
	In particular, $\operatorname{ob} \mathcal{G}$ comes with a canonical fermionic group structure and the algebra $\mathcal{A}$ is then also strongly fermionically graded under this $1$-group.
\end{remark}

In the case $G_b$ has nontrivial associator, the multiplication on $\operatorname{ob} \mathcal{G}$ need not be associative.
This is the case if and only if $\Gamma \circ \alpha = d\Xi \neq 0$, which tells us that $g_1 (g_2 g_3)$ and $(g_1 g_2) g_3$ differ by $\Gamma (\alpha(g_1, g_2,g_3)) \in \Z_2^c$.
The failure of $\mathcal{A}$ to be associative - even when $\Gamma \circ \alpha = 0$ - is now controlled by the element $a_{\alpha(g_1,g_2,g_3)} \in A$ obtained by taking the path $\gamma :=\alpha(g_1,g_2,g_3) \in \pi_1(G_b)$.
More precisely for all $g_1, g_2, g_3 \in G, a_{g_1} \in A_{g_1}, a_{g_2} \in A_{g_2}$ and $a_{g_3} \in A_{g_3}$ we have
\[
a_{g_1} (a_{g_2} a_{g_3}) = (a_{g_1} a_{g_2}) a_{g_3} a_{\alpha(g_1,g_2,g_3)} \in A_{g_1 (g_2 g_3)}.
\]
Note that since $a_\gamma \in A_{\Gamma(\gamma)}$, both sides of this equation are in the same set.

\begin{example}
\label{Ex:spinc}
Take $G = \Spin_2$ so that $G_b = SO_2$ and $H_d = \Spin^c_d$.
In the physics literature, this internal symmetry group is often written $G = U_1$ because it typically occurs in the case of charged particles.
We decide to refrain from this notation, because $\Spin_2$ has a nontrivial central element $c$ imposing a spin-charge relation.
A $G$-TFT consists of a stellar Frobenius algebra $(A, M, \sigma, \lambda) \in \stFrob$ together with a single even invertible element $a_\gamma \in A_{(-1)^F}$ such that $a_\gamma a =  a a_\gamma$ for all $a \in A$ and multiplication by it defines a unitary map $A \to A_{(-1)^F}$.
In case $A$ is a $*$-algebra, the requirement is that $a_\gamma \in A_{(-1)^F}$ is a unitary element of the $\Z_2$-graded super $*$-algebra $A \oplus A_{(-1)^F}$.
Equivalently, $a_\gamma' := a_\gamma (-1)^F$ is an even invertible unitary element of $A$ such that $a_\gamma ' a = (-1)^{|a|} a a_\gamma'$.
\end{example}

\begin{example} 
\label{ex:pin-tft}
Take $G = \Pin^+_1$ to be a time-reversal $g$ with square $1$, so $G_b = O_1$ and $H_d = \Pin^-_d$.
Then a $G$-TFT with spin-statistics connection and reflection structure contains the data of a finite-dimensional semisimple superalgebra $A$ and a strongly $\Z^c_2 \times \Z^g_2$-graded superalgebra over $\R$ of the form
\[
\mathcal{A} = A \oplus A_{(-1)^F}  \oplus A_g \oplus (A_g \otimes_A A_{(-1)^F})
\]
determined by the strongly $\Z_2$-graded superalgebra $A \oplus A_g$ such that $i a_g = - a_g i$ and $\lambda(a_g b_g) = \overline{\lambda(b_g a_g)}$.
In case $A$ is a $*$-algebra, $A_g$ has an $A$-valued nondegenerate inner product $\langle \cdot , \cdot \rangle$ (complex linear in the right argument in our convention) such that
\[
\langle a_g \overline{a}, a_g' \rangle = (-1)^{|a_g'| |a| + |a|}  \langle a_g,   a_g' a^* \rangle
\]
and the multiplication being unitary is equivalent to
\begin{align*}
 a_g a_{g}' (b_g b_{g}')^* = (-1)^{|b_{g}| |b_{g}'| + |b_{g}|} \langle a_g, \langle a_{g}', b_{g}' \rangle_{A_{g}} b_g \rangle_{A_g}
\end{align*}
for all $a_g,a_g',b_g,b_g' \in A_g$.

Assuming we are in the setting of our construction lemma, this $A$-valued inner product is induced by a $\Z_2$-graded $*$-algebra structure $\dagger$ on $\mathcal{A}$ satisfying additional conditions.
We illustrate the situation by specializing to the case $A = \C$ as we did in Example \ref{ex:pin-}.
Let $x_g$ denote a generator of $A_g$.
By multiplying $x_g$ with an appropriate element of $\C$ we can assume without loss of generality that $x_g^2 \in U_1$.
Suppose for now that $x_g$ is even, which uniquely specifies $\mathcal{A}$ as a superalgebra.
A choice of Frobenius structure on $A$ is uniquely specified by $\lambda(1) \in \R^\times$.
The condition for it be compatible with $A_g$ gives $\lambda(x_g^2) = \overline{\lambda(x_g^2)}$, which implies that $x_g^2 = \pm 1$.
We therefore necessarily have $x_g^\dagger = \pm x_g$ and either makes $\mathcal{A}$ into a $\Z_2$-graded $*$-algebra.
Note that $A \oplus A_g$ is a $C^*$-algebra (isomorphic to $M_2(\R)$ or $\mathbb{H}$) if and only if the two sign choices $x_g^2 = \pm 1$ and $x_g^\dagger = \pm x_g$ agree, which happens if and only if the $A$-valued inner product on $A_g$ is positive.
If this is the case, the reflection structure is reflection positive.

Now assume $x_g$ is odd instead.
We have $A \oplus A_g \cong Cl_{\pm 2}$ given by $i \mapsto e_1 e_2$ and $x_g \mapsto e_1$.
A $*$-algebra structure on $Cl_{\pm 2}$ is uniquely determined by the two sign choices $e_i^* = \pm e_i$ and again this gives a $C^*$-structure if and only if these signs agree with the signs $e_i^2$.
A condition in our construction lemma implies that 
\[
-e_1 e_2 = -i = i^\dagger = (e_1 e_2)^\dagger = e_2^\dagger e_1^\dagger = - e_1^\dagger e_2^\dagger
\]
so that the two sign choices $e_i^\dagger = \pm e_i$ have to be made equal.
Therefore, whether $A \oplus A_g$ will be a $C^*$-algebra will again depend on a single sign choice.
In particular, note that if we choose the sign $x_g^\dagger = \pm x_g$ equal to the sign $x_g^2$ as in the case where $x_g$ is even, we obtain that $A_g$ has a positive $A$-valued inner product.
Indeed, we have in that case that
\[
z x_g (z x_g)^\dagger = z x_g x_g^\dagger \bar{z} = |z|^2 \geq 0.
\]
Hence this choice will define a unitary theory.
It corresponds to twice a generator in the group of unitary invertible field theories $\Hom(\Omega^{\Pin^-}_2, \C^\times) \cong \Z_8$.
\end{example} 

\begin{example}
\label{Ex:Pin1-}
Take $G = \Pin^-_1$ to consist of a single time-reversal $T$ of square $(-1)^F$, so $G_b = O_1$ and $H_d = \Pin^+_d$. 
A $G$-TFT with spin-statistics connection and reflection structure is a stellar Frobenius algebra $A$, a strongly $\Z_4$-graded superalgebra
\[
\mathcal{A} = A \oplus A_{(-1)^F}  \oplus A_T \oplus (A_T \otimes_A A_{(-1)^F})
\]
such that $i a_T = - a_T i$.
The condition on the Frobenius structure is $\lambda(a_T b_{T^{-1}}) = \overline{\lambda(b_{T^{-1}} a_T)}$
Since $T^{-1} = c T$, this is equivalent to
\[
\lambda(a_T  b_T (-1)^F) = \overline{\lambda(b_T (-1)^F a_T)} = (-1)^{|a_T|} \overline{\lambda(b_T  a_T (-1)^F)}
\]
for all $a_T,b_T \in A_T$.
$A_T$ is a stellar bimodule such that the multiplication map $A_T \otimes_{\overline{A}} \overline{A_T} \to A_{(-1)^F}$ is unitary.
The multiplication being unitary is equivalent to
\begin{align*}
(-1)^{|b_{T}| |b_{T}'| + |b_{T}|} \langle a_T, \langle a_{T}', b_{T}' \rangle_{A_{T}} b_T \rangle_{A_T}  =  \langle a_T a_{T}' , b_T b_{T}' \rangle_{A_{(-1)^F}} = a_T a_{T}' (-1)^F (b_T b_{T}' (-1)^F)^* = a_T a_T' b_T b_T'
\end{align*}
for all $a_T,a_T',b_T,b_T' \in A_T$, see Example \ref{ex:stellar A_c}.

Note that alternatively, we could have implemented the twisting cocycle $\Xi$ by giving a nonassociative $\Z_2$-graded algebra $A \oplus A_T$ with twisted associativity conditions such as
\[
a_T (b_T a) =(-1)^{|a|} (a_T b_T) a \quad a_T, b_T \in A_T,  a \in A.
\]
In that convention the Frobenius structure satisfies
\[
\lambda(a_T  b_T ) = (-1)^{|a_T|} \overline{\lambda(b_T  a_T )}
\]
However, we prefer to work with the twice larger associative algebra $\mathcal{A}$.
\end{example}

\begin{example}
For a more complicated example, consider the internal symmetry group consisting of charge and a time-reversal with square $(-1)^F$, also called class AII in the literature on topological phases of matter.
This means that 
\[
G = \Pin_2^- \cong \frac{\Spin_2 \rtimes \Z_4^T}{\Z_2^c}
\]
with nontrivial $\theta$ and nontrivial $c$ so that $G_b = O_2$.
We have seen in Example \ref{Ex: pin ferm gp} that the skeletal model of $G$ has $\pi_0(G) = \Z_2$ acting nontrivially on $\pi_1(G) = \Z$ with nontrivial $k$-invariant.
However, we will work with the fermionically skeletal model.
So start with the skeletal model of $G_b = O_2$, which has $\pi_0(G_b) = \Z_2$ acting nontrivially on $\pi_1(G_b) = \Z$ with trivial associator.
We obtain the fermionically skeletal model $\mathcal{G}$, which has a $\Z_4$-worth of objects: $1,c,T$ and $cT$, where $T^2 = c$ is a lift of a reflection in the plane to $\Pin_2^-$.
Let $\gamma: 1 \to c$ be a lift of a generator of $\pi_1(G_b)$ to $\mathcal{G}$, which in the Lie group presentation we can explicitly write 
\[
\gamma(t) = e^{i \pi t} \in U_1 \cong \Spin_2 \subseteq\frac{ \Spin_2 \rtimes \Z_4^T}{\Z_2^c}
\]
for $t \in [0,1]$.
Because of the semidirect product, we obtain the relation $T \gamma(t) = \gamma(t)^{-1} c T$.

We now consider $2$d TFTs with spin-statistics and reflection structure and internal symmetry group $G$.
Because $\Pin^-_1$ naturally insides $\mathcal{G}$, we can start by copying the data from Example \ref{Ex:Pin1-}.
In particular we have a $\Z_4$-graded algebra
\[
A \oplus A_{(-1)^F} \oplus A_T \oplus (A_{T} \otimes_A A_{(-1)^F})
\]
with antilinear $A_T$.
The path $\gamma$ induces an even invertible element $a_\gamma' \in A$ such that $a a_\gamma' = (-1)^{|a|} a_\gamma' a$ for all $a \in A$ as in Example \ref{Ex:spinc}.
The commutation relation between $T$ and $\gamma$ will give the relation
\[
a_\gamma' a_T = (-1)^{|a_T|} a_T a_\gamma'^{-1}
\]
for all $a_T \in A_T$.
Indeed, this follows by the relation
\[
a_T a_\gamma ' (-1)^F = a_T a_\gamma  = F_{T \gamma}(a_T) = F_{\gamma^{-1} cT }(a_T) = (a_\gamma')^{-1} (-1)^F a_T.
\]
We will omit the discussion of the stellar Frobenius structure in this example.
\end{example}

\begin{example}
Take $G = SU_2$ where $c \in SU_2$ is the unique element of order two. 
Since $\pi_{\leq 1} SU_2 = 0$, the internal fermionic $2$-group is trivial and so we don't see anything of the interesting higher homotopy groups of $SU_2$ until spacetime dimension three.
However, note that $G_b = SO_3$ and we have $\pi_{\leq 1} G_b = * \DS \Z_2$, so the resulting TFTs will neither be the same as theories with symmetry group $\Z^c_2$, nor will they be the same as theories with trivial bosonic symmetry group.
Now $\Gamma$ is the identity on $\Z_2$ and its fermionically skeletal model is $\mathcal{G} = \Z_2 \rtimes */\Z_2$
The classification of two-dimensional topological field theories with symmetry group $G$ is thus equal to the one for $G = \Spin_2$ with the extra condition that $a_\gamma^2 = 1$.

Suppose more generally that the fermionic symmetry group is of the form $G = \frac{SU_2 \times K}{\Z_2}$ where $K$ is a finite fermionic group, the quotient is by the diagonal $\Z_2^c$.
The fermion parity is $(c,1) = (1,c)$ and $\theta: G \to \Z_2$ is the unique extension of $\theta: K \to \Z_2$.
Then $\operatorname{ob}\mathcal{G} = K$ and $\Gamma$ is again the identity.
Therefore in addition to the data from the case $G = SU_2$ we have a fermionically $K$-graded algebra and all $A_k$ are stellar modules with unitary multiplication such that $\lambda(a_k b_{k^{-1}}) = \overline{\lambda(b_{k^{-1}} a_{k})}$ for all $k \in K$.
\end{example}

\subsection{Classifications of some other types of 2d TFTs}\label{Sec:Computation 2D}

Our main theorem classifies two-dimensional fully local TFTs with fermionic symmetry, reflection structure, and spin-statistics connection.
In this section we will summarize how our results extend when we modify several adjectives in the last sentence.
Additionally, we will compare with known results in the literature.

In the bosonic setting discussed in Remark \ref{Rem: c=1 in 2 D}, it is easy to see what happens when we leave out the spin-statistics connection.
The only thing that changes is that we no longer identify $A \cong A_{(-1)^F}$ so that we just get a $(G,\theta)$-graded algebra.
However, it turns out that finite-dimensional semisimple superalgebras have a graded-symmetric Frobenius structure if and only if $A \cong A_{(-1)^F}$.
For example, suppose $\C l_1 = \frac{\C [e]}{(e^2 = 1)}$ would have a graded-symmetric Frobenius structure $\lambda: A \to \C$.
Then 
\[
\lambda(1) = \lambda(e^2) = - \lambda(e^2) = -\lambda(1)
\]
and so since $\lambda$ is an even map, the Frobenius form is degenerate.

Note that leaving out spin-statistics in the case where the symmetry group is not bosonic is more awkward.
In that case we will not have an identification of the arbitrary $\Z_2^c$-graded superalgebra $A \oplus A_c$ with the $\Z_2^c$-graded superalgebra $A \oplus A_{(-1)^F}$.
Consequently, the Frobenius structure is replaced by an even bimodule isomorphism $A \to A^* \otimes_A A_c$ satisfying a symmetry condition.
In case the graded algebra $A \oplus A_c$ is induced by an automorphism $\phi: A \to A$ so that $A_c = A_\phi$ compatible with the multiplication, this bimodule isomorphism can be replaced by an even Frobenius structure on $A$ with Nakayama automorphism $(-1)^F_A \circ \phi$.
This discussion also applies to $\Spin^r$-TFTs~\cite{LorantNils}; in our framework they are given by a $\Z_r$-graded superalgebras
\[
\mathcal{A} = A_1 \oplus A_g \oplus \dots \oplus A_{g^{r-1}}
\]
together with a bimodule isomorphism $A \to A^* \otimes_A A_g$ satisfying a condition.
In case $A_g$ is induced by an automorphism $\phi$ of order $r$, this bimodule map is equivalent to a Frobenius structure with Nakayama isomorphism $(-1)^F_A \circ \phi$.

In the bosonic case where $\theta$ is additionally trivial, leaving out the reflection structure gives a classification of arbitrary $H_2 = G \times SO_2$-TFTs in terms of a finite-dimensional strongly $\pi_0(G)$-graded algebra
\[
\mathcal{A} = \bigoplus_{g \in \pi_0(G)} A_g
\]
together with a graded-symmetric Frobenius stucture $\lambda$ on $A$ such that
\[
\lambda(a_g a_{g^{-1}}) = (-1)^{|a_g||a_{g^{-1}}|} \lambda(a_{g^{-1}} a_g)
\]
for all $a_g \in A_g$ and $a_{g^{-1}} \in A_{g^{-1}}$ and even invertible elements $a_\gamma \in Z(A)$ for every $\gamma \in \pi_0(G)$ such that
\[
a_{g \gamma g^{-1}} = a_\gamma
\]
for all $g \in \pi_0(G)$.

In particular, restricting further to the case where $G$ has no morphisms allows us to compare with the results of \cite{oritthesis} (up to the minor discrepancy that we work with superalgebras and she works with ungraded algebras).
To prove the equivalence of the above description with Davidovich' classification, the only nontrivial thing to remark is that our condition $\lambda(a_g a_{g^{-1}}) = (-1)^{|a_g||a_{g^{-1}}|} \lambda(a_{g^{-1}} a_g)$ on the graded-symmetric Frobenius structure $\lambda:A_1 \to \C$ is exactly equivalent to saying that $\lambda$ extends to a graded-symmetric Frobenius structure on $\mathcal{A}$ with the property that $\lambda(a_g) = 0$ unless $g = 1$.

Note that similarly to leaving out spin-statistics in the non-bosonic case, leaving out the reflection structure in the time-reversing case makes the results harder to interpret, independently of whether we allow fermions.
For example, bosonic TFTs with only a time-reversing symmetry $T$ of square one and reflection structure are classified by bosonically $\Z_2^T$-graded algebras
\[
A \oplus A_T
\]
in which elements of $A_T$ anti-commute with $i$.
Here $A$ is stellar, $A_T$ is a stellar bimodule such that multiplication is unitary and $A$ has a graded-symmetric Frobenius structure such that $\lambda(a_T b_T) = (-1)^{|a_T| |b_T|} \overline{\lambda(b_T a_T)}$ for all $a_T,b_T \in A_T$.
Without assuming a reflection structure, these TFTs have been classified in \cite{schommerpriesthesis} (he works with target ungraded algebras, but the generalization is straightforward).
Except for the graded Frobenius structure on $A$, the extra data is a complex-linear stellar structure on $A$.
In other words, it consists of an $(A,A^{\op})$-bimodule $N$ and a bimodule isomorphism $\tau: N^{\op} \to N$ such that $\tau^2 = \id_N$.
There is an additional compatibility condition between the complex-linear stellar structure and the Frobenius structure \cite[Definition 3.82]{schommerpriesthesis}.
Note that our data classifying such TFTs with an additional reflection structure in particular give such a $\C$-linear stellar algebra; by using the identification of $A^{\op}$ with $\overline{A}$ using the $\C$-antilinear stellar structure, we can identify $A_T$ with $N$ and the multiplication map $A_T \otimes_{\overline{A}} \overline{A_T} \to A$ with $\tau$.

\appendix 

\section{2-groups and their actions on bicategories}
\label{App: 2-Group}

This appendix contains some details and our conventions related to 2-groups and their
actions on bicategories. We assume some familiarity with the basic definitions related to bicategories. For details we refer to~\cite[Appendix A]{schommerpriesthesis}. We denote by $\BiCat$ the tricategory of bicategories. The appendix also contains some new result. In particular, 
related to fixed point categories of semi-direct products, see Proposition~\ref{Prop: Fixed points}. When it comes to coherence isomorphisms we
will not always be consistent with their direction. This is never a problem because we can always replace them by their (adjoint) inverse. The main reason for this is 
that depending on the situations different choices of direction seem more natural.  

\subsection{2-groups}\label{App:2-groups}

2-groups are categorifications of ordinary groups.
\begin{definition}
	A \emph{(weak) 2-group} is a monoidal groupoid $(\mathcal{G},\otimes , \alpha_\mathcal{G}, 1)$ such that every object $g\in \mathcal{G}$ is invertible with respect to the tensor product, i.e. there exists $g^{-1}\in G $ such that $g\otimes g^{-1}\cong 1 \cong g^{-1}\otimes g$. 
\end{definition} 
Equivalently, 2-groups can be described by a 2-groupoid $B\mathcal{G}$ with one object $*$ and
$\mathcal{G}$ as category of endomorphisms. 

\begin{example}
	Let $G$ be a topological group. Its foundamental groupoid $\Pi_1(G)$, i.e. the category with
	object points $g\in G$ and morphisms homotopy classes of paths. The monoidal structure induced
	by the multiplication in $G$. The corresponding 2-groupoid $B\Pi_1(G)$ is equivalent to the
	fundamental 2-groupoid of the classifying space $\Pi_2(BG)$. 
\end{example}

Since a (weak) 2-group $\mathcal{G}$ is in particular a monoidal category, $B \mathcal{G}$ is a bicategory with one object.
Through the definition of functors, natural transformations and modifications of bicategories we get notions of 1- 2- and 3-morphisms between 2-groups, which we now spell out using~\cite[Appendix A.1]{SP}.
The main issue that has to be treated with care is that we don't want to assume our 2- and 3-morphisms to be pointed.
Therefore for example 2-morphisms are slightly more general than just monoidal natural transformations.
The result is the following definition (which would work just as well for monoidal categories but we are mainly interested in 2-groups).

\begin{definition}
	A \emph{1-morphism} $F: \mathcal{G}_1 \to \mathcal{G}_2$ of $2$-groups is a monoidal functor.
	A \emph{2-morphism} $\sigma:F \to G$ consists of an object $\sigma_* \in \operatorname{ob} \mathcal{G}_1$ and a collection of natural isomorphisms $\sigma_g: \sigma_* \otimes F(g) \to G(g) \otimes \sigma_*$ for every $g \in \operatorname{ob}\mathcal{G}_1$ such that the diagram 
	\begin{equation}
		\begin{tikzcd}[column sep=1.7cm]
			& G(g_1 \otimes g_2) \otimes \sigma_* 
			\ar[ld,"{\chi_{g_1,g_2} \otimes 1_{\sigma_*}}",swap] &  
			\\ 
			(G(g_1) \otimes G(g_2)) \otimes \sigma_* \arrow[d] & & \sigma_* \otimes F(g_1 \otimes g_2) \ar[d, "{1_{\sigma_*} \otimes \phi_{g_1,g_2}}"]  \ar[lu, "{ \sigma_{g_1\otimes g_2} }",swap]  
			\\ 
			G(g_1) \otimes (G(g_2) \otimes \sigma_*) & &  \sigma_* \otimes (F(g_1) \otimes F(g_2)) 
			\\ 
			G(g_1) \otimes (\sigma_* \otimes F(g_2))  \ar[u, "{1_{G(g_1)} \otimes \sigma_{g_2}}"] & (G(g_1) \otimes \sigma_*) \otimes F(g_2) \arrow[l]  &  (\sigma_* \otimes F(g_1)) \otimes F(g_2) \ar[l,"{\sigma_{g_1} \otimes 1_{F(g_2)}}"] \arrow[u]
		\end{tikzcd}
	\end{equation}
	commutes for all $g_1,g_2 \in \operatorname{ob} \mathcal{G}_1$
	(Here $\phi$ and $\chi$ are the monoidality data for $F$ and $G$, respectively), and 
	\begin{equation}
		\begin{tikzcd}
			\sigma_* \otimes F(e)\ar[r, "\sigma_{1_1}"] \ar[d] & G(e)\otimes \sigma_* \ar[d] \\ 
			 	\sigma_* \otimes 1_2 \ar[r] & 1_2\otimes \sigma_*
		\end{tikzcd}
	\end{equation} 
commutes, where the vertical unlabeled isomorphisms are part of the monoidality data for $F$ and $G$ and the horizontal isomorphism is build from the unit constrains in $\mathscr{G}_2$.
	We say $\sigma$ is \emph{pointed} if $\sigma_* = 1$, i.e. if $\sigma$ is a monoidal natural transformation.
	A 3-morphism $m: \sigma \to \tau$ is a morphism $m_*: \sigma_* \to \tau_*$ such that
	\[
	\begin{tikzcd}
		\sigma_* \otimes 		F(g_1)  \arrow[d,"{m_* \otimes F(f)}", swap]  \arrow[r,"\sigma_{g_1}"] & G (g_1) \otimes \sigma_* \arrow[d,"{G(f) \otimes m_*}"]
		\\
		\tau_* \otimes F(g_2) \arrow[r,"{\tau_{g_2}}"] & G(g_2)  \otimes \tau_*
	\end{tikzcd}
	\]
	commutes for all morphisms $f: g_1 \to g_2$ in $\mathcal{G}_1$. Note that all
 	pointed 3-morphisms are trivial. 
	We denote by $\TGrp$ the tricategory of 2-groups which is a full subcategory of the tricategory of 2-groupoids $\TGrpd$. 
\end{definition}

A 2-group is called \emph{skeletal} if every isomorphism class of objects only contains one
object. Every 2-group is equivalent to a skeletal one, called its \emph{skeletal model}.
The skeletal model can be reconstructed from the following combinatorial data
\begin{equation}\label{Eq: Skeletal model}
	(\pi_0(\mathcal{G}),\pi_1(\mathcal{G}_1), \alpha: \pi_0(\mathcal{G})  \curvearrowright \pi_1(\mathcal{G}), k \in H^3(\pi_0(\mathcal{G}),\pi_1(\mathcal{G})_\alpha)).
\end{equation}
Here 
\begin{enumerate}
	\item $\pi_0(\mathcal{G})$ is the set of isomorphism classes of objects with group structure given by $\otimes$;
	\item $\pi_1(\mathcal{G})$ is the set of 1-morphisms $1 \to 1$ with abelian group structure given by either $\otimes$ or composition (which are equal and commutative by Eckman-Hilton);
	\item $\alpha$ is the action of $\pi_0(\mathcal{G})$ on $\pi_1(\mathcal{G})$ given by $\gamma \mapsto 1_g \otimes \gamma \otimes 1_{g^{-1}}$;
	\item $k(g_1,g_2,g_3)$ is the associator $(g_1 \otimes g_2) \otimes g_3 \to g_1 \otimes (g_2 \otimes g_3)$.
\end{enumerate}
Every such 4-tuple gives a skeletal 2-group with objects $\pi_0(\mathcal{G})$, morphisms from $g_1$ to $g_2$ equal to $\pi_1(\mathcal{G})$ if $g_1 = g_2$ and empty otherwise.
Given $g \in \pi_0 \mathcal{G}$, denote the element of $\Aut g$ corresponding to $\gamma \in \pi_1(\mathcal{G})$ by $\gamma_{g}$.
Then the tensor product $\Aut g_1 \times \Aut g_2 \to \Aut g_1 g_2$ is defined (in our conventions) as
\[
\gamma_{g_1} \otimes \delta_{g_2} = ( \gamma(\alpha(g_1)\delta))_{g_1 g_2}
\]
In other words, we have $\gamma_g = \gamma \otimes 1_g$ but $1_g \otimes \gamma = (\alpha(g) \gamma)_g$.
With this convention the pentagon identity for the monoidal structure is equivalent to $k$ being a $3$-cocycle with values in the left $\pi_0(\mathcal{G})$-module $\pi_1(\mathcal{G})$:
\[
k(g_1,g_2,g_3 g_4) k(g_1 g_2, g_3, g_4) = \alpha(g_1)(k(g_2,g_3,g_4)) k(g_1, g_2 g_3, g_4) k(g_1, g_2, g_3).
\]

Morphism between skeletal models can also be described explicitly:
\begin{lemma}
\label{Lem:skeletalmap}
	A homomorphism between skeletal $2$-groups $F: (\mathcal{G},\alpha,k) \to (\mathcal{H},\beta,l)$ consists of a homomorphism $F_1: \pi_1(\mathcal{G}) \to \pi_1(\mathcal{H})$, a map $F_0: \pi_0(\mathcal{G}) \to \pi_0(\mathcal{H})$  such that
	\[
	F_1(\gamma \cdot \alpha(g_1)(\delta)) = F_1(\gamma) \cdot \beta(g_1)(F(\delta))
	\]
	and a collection of elements $\phi(g_1,g_2) \in \pi_1(\mathcal{H})$ for every $g_1,g_2\in \pi_0(\mathcal{G})$ satisfying 
	\begin{align*}
		F_1(k(g_1,g_2,g_3))& \phi(g_1 g_2, g_3) \phi(g_1,g_2) = \phi(g_1, g_2 g_3) \cdot \beta(F_0(g_1))(\phi(g_2,g_3)) \cdot  l(F_0(g_1), F_0(g_2), F_0(g_3))
	\end{align*}
	for all $g_1,g_2,g_3 \in \pi_0(\mathcal{G})$.
	
	A 2-morphism $\sigma: (F_0,F_1,\phi) \Longrightarrow (G_0,G_1,\chi)$ of skeletal 2-groups consists of an object $\sigma_* \in \pi_0(\mathcal{H})$ and a collection $\sigma(g) \in \pi_1(\mathcal{H})$ for every $g \in \pi_0(\mathcal{G})$ such that
	\begin{align*}
		l(G_0(g_1),& G_0(g_2), \sigma_*)\chi(g_1,g_2) \sigma(g_1 g_2) \\
		&= \beta(G_0(g_1))(\sigma(g_2)) \cdot l(G_0(g_1), \sigma_*, F_0(g_2)) \cdot \sigma(g_1) \cdot  l(\sigma_*, F_0(g_1), F_0(g_2)) \cdot \beta(\sigma_*)(\phi(g_1,g_2)) \ \ .
	\end{align*}
	The (vertical) composition of two 2-morphisms $\sigma: (F_0,F_1,\phi) \Longrightarrow (G_0,G_1,\chi)$ and $\tau: (G_0,G_1,\chi) \Longrightarrow (H_0,H_1,\psi)$ is given by $(\tau \circ \sigma)_* = \tau_* \cdot \sigma_*$ and $(\tau \circ \sigma)(g) = \tau(g) \cdot \beta(\tau_*)(\sigma(g))$.
\end{lemma}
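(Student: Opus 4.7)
The plan is to unpack the definitions of monoidal functor, 2-morphism, and vertical composition under the simplifying assumption that source and target 2-groups are skeletal. In a skeletal 2-group any morphism is an automorphism (since source equals target as objects), and the assignment $\gamma \mapsto \gamma_g$ identifies $\Aut(g)$ with $\pi_1$. We will use this repeatedly to turn the (a priori rather baroque) hexagons into equations in the abelian group $\pi_1(\mathcal{H})$.

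First I will treat 1-morphisms. A monoidal functor $F:\mathcal{G} \to \mathcal{H}$ is in particular a functor, so it determines a map on objects $F_0:\pi_0(\mathcal{G}) \to \pi_0(\mathcal{H})$ together with, for each $g\in\pi_0(\mathcal{G})$, a map $\Aut(g) \to \Aut(F_0(g))$. Under the skeletal identification $\Aut(g) \cong \pi_1(\mathcal{G})$ (sending $\gamma\mapsto \gamma_g$), and the analogous identification in $\mathcal{H}$, these assemble into a single map $F_1:\pi_1(\mathcal{G}) \to \pi_1(\mathcal{H})$; functoriality with respect to composition of automorphisms makes $F_1$ a group homomorphism. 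Next, in a skeletal 2-group the monoidality isomorphisms $F(g_1)\otimes F(g_2) \cong F(g_1\otimes g_2)$ force the equality $F_0(g_1)F_0(g_2) = F_0(g_1\otimes g_2)$ of objects, and then such an isomorphism is an element $\phi(g_1,g_2)\in \pi_1(\mathcal{H})$. Naturality of $\phi$ with respect to tensor products of automorphisms, combined with the formula $\gamma_{g_1}\otimes \delta_{g_2} = (\gamma\cdot \alpha(g_1)(\delta))_{g_1g_2}$ recalled in the excerpt, yields exactly the equivariance condition $F_1(\gamma\cdot \alpha(g_1)(\delta)) = F_1(\gamma)\cdot \beta(F_0(g_1))(F_1(\delta))$. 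Finally, I write out the pentagon comparing the two ways of associating $F(g_1\otimes g_2\otimes g_3)$: the associator of $\mathcal{G}$ contributes $F_1(k(g_1,g_2,g_3))$, the associator of $\mathcal{H}$ contributes $l(F_0(g_1),F_0(g_2),F_0(g_3))$, and the two `composites of $\phi$' contribute the remaining four factors. Rearranging using the $\pi_0(\mathcal{H})$-action on $\pi_1(\mathcal{H})$ (which enters precisely when one of the associator arrows is tensored with the identity of $F_0(g_1)$ on the left) gives the stated cocycle-type identity.

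For 2-morphisms, I unfold Definition in the excerpt: a 2-morphism $\sigma:F\Rightarrow G$ is an object $\sigma_*\in\pi_0(\mathcal{H})$ together with, for each $g\in\pi_0(\mathcal{G})$, an isomorphism $\sigma_g:\sigma_*\otimes F_0(g) \to G_0(g)\otimes \sigma_*$. Skeletality forces the identity $\sigma_*F_0(g)=G_0(g)\sigma_*$ in $\pi_0(\mathcal{H})$, after which $\sigma_g$ is uniquely determined by an element $\sigma(g)\in\pi_1(\mathcal{H})$. The hexagon in the definition, with $\phi = \phi_F$ and $\chi = \phi_G$, involves five associator-type moves in $\mathcal{H}$ on the relevant product of three factors $G_0(g_1), G_0(g_2), \sigma_*$ and $\sigma_*, F_0(g_1), F_0(g_2)$ respectively, plus one application of $\sigma(g_1)$, one of $\sigma(g_2)$, one of $\sigma(g_1g_2)$, one of $\phi(g_1,g_2)$ and one of $\chi(g_1,g_2)$. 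Collecting these using $\alpha$- and $\beta$-actions where needed (the $\beta$-action appears exactly when $\sigma(g_2)$ or $\phi(g_1,g_2)$ is whiskered on the left by $G_0(g_1)$ or $\sigma_*$) produces the displayed equation. The additional condition that $\sigma_1$ is compatible with units is redundant in the skeletal setting modulo a standard cocycle normalization argument.

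The only remaining item is vertical composition: given $\sigma:F\Rightarrow G$ and $\tau:G\Rightarrow H$, the composite 2-morphism has underlying object $\tau_*\otimes \sigma_*$ (by definition of horizontal composition of 1-morphisms $\tau_* \otimes (-)$ and $\sigma_* \otimes (-)$ as pasted into the square), and on the $g$-component the composite of $\sigma_g$ and $\tau_g$ whiskered appropriately, which on the level of $\pi_1(\mathcal{H})$ gives $\tau(g)\cdot \beta(\tau_*)(\sigma(g))$. The main obstacle I expect is simply notational: keeping track of which factor of $\pi_0(\mathcal{H})$ acts on which factor of $\pi_1(\mathcal{H})$ through $\beta$, and of the direction of the various $l$'s, so that the sides of the displayed identity land in the same group element of $\Aut(G_0(g_1g_2)\sigma_*)$. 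This is a careful but mechanical pasting diagram calculation.
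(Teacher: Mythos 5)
Your proposal is correct and follows essentially the same route as the paper: unpack the functoriality, naturality and pentagon (resp.\ hexagon) diagrams and use skeletality together with the convention $\gamma_{g_1}\otimes\delta_{g_2}=(\gamma\cdot\alpha(g_1)(\delta))_{g_1g_2}$ to translate them into equations in the abelian group $\pi_1(\mathcal{H})$, which is exactly what the paper does (the paper additionally normalizes $\phi_{1,g}=1$ so that the a priori $g$-dependent maps $\Aut(g)\to\Aut(F_0(g))$ are all determined by a single $F_1$ on $\Aut(1)$ --- a point you should make explicit rather than asserting that they ``assemble''). The only other slip is cosmetic: the 2-morphism hexagon contains three associators, not five, matching the three $l$-terms $(G_0(g_1),G_0(g_2),\sigma_*)$, $(G_0(g_1),\sigma_*,F_0(g_2))$, $(\sigma_*,F_0(g_1),F_0(g_2))$ in the displayed identity.
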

\begin{proof}
	We first note that we can assume without loss of generality that the monoidal units are strictly preserved. 
	A monoidal functor between skeletal $2$-groups $F: (\mathcal{G},\alpha,k) \to (\mathcal{H},\beta,l)$ consists of a homomorphism $F_1: \pi_1(\mathcal{G}) \to \pi_1(\mathcal{H})$, a map $F_0: \pi_0(\mathcal{G}) \to \pi_0(\mathcal{H})$ and a collection of isomorphisms $\phi_{g_1,g_2}: F(g_1 g_2) \cong F(g_1) F(g_2)$ corresponding to $\phi(g_1,g_2) \in \pi_1(\mathcal{H})$ for every $g_1,g_2\in \pi_0(\mathcal{G})$. We can assume without loss of generality that
 $\phi_{1,g} = 1$ for all $g \in \pi_0(\mathcal{G})$
	so that $F_1$ is determined by $F_1(\gamma_g) = F_1(\gamma \otimes 1_g) = F_1(\gamma)_g$.
	The commuting diagram
	\[
	\begin{tikzcd}[column sep=1.7cm]
		F_0(g_1 \otimes g_2) \arrow[r,"F_1(\gamma_{g_1} \otimes \delta_{g_2})"] \arrow[d,"\phi_{g_1,g_2}"] & F_0(g_1 \otimes g_2) \arrow[d,"\phi_{g_1,g_2}"] 
		\\
		F_0(g_1) \otimes F_0(g_2) \arrow[r,"{F_1(\gamma_{g_1}) \otimes F_1(\delta_{g_2})}"] & F_0(g_1) \otimes F_0(g_2)
	\end{tikzcd}
	\]
	translates into formulas using the skeletal data as follows.
	First note that
	\[
	F_1(\gamma_{g_1} \otimes \delta_{g_2}) = F_1((\gamma \cdot \alpha(g_1)(\delta))_{g_1 g_2})
	\]
	and similarly
	\[
	F_1(\gamma_{g_1}) \otimes F_1(\delta_{g_2}) =F_1(\gamma)_{g_1} \otimes F_1(\delta)_{g_2} = (F_1(\gamma) \cdot \beta(g_1)(F_1(\delta)))_{g_1 g_2}
	\]
	We arrive at the following equation in $\pi_1(\mathcal{H})$
	\[
	\phi(g_1,g_2) F_1(\gamma \cdot \alpha(g_1)(\delta)) = F_1(\gamma) \beta(g_1)(F_1(\delta)) \phi(g_1,g_2).
	\]
	Since $\pi_1(\mathcal{H})$ is abelian we can get rid of the $\phi(g_1,g_2)$.
	
	The other diagram saying that $F$ is a monoidal functor looks as follows
	\[
	\begin{tikzcd}[column sep=2cm]
		(F(g_1) \otimes F(g_2)) \otimes F(g_3) \arrow[r,"l_{F(g_1),F(g_2),F(g_3)}"] \arrow[d,"{\phi_{g_1,g_2} \otimes 1_{F(g_3)}}"] & F(g_1) \otimes (F(g_2) \otimes F(g_3)) \arrow[d,"{1_{F(g_1)} \otimes \phi_{g_2,g_3}}"] 
		\\
		F(g_1 \otimes g_2) \otimes F(g_3) \arrow[d,"{\phi_{g_1 g_2, g_3}}"] & F(g_1) \otimes F(g_2 \otimes g_3) \arrow[d,"{\phi_{g_1,g_2 \otimes g_3}}"]
		\\
		F((g_1 \otimes g_2) \otimes g_3) \arrow[r,"{F(k_{g_1,g_2,g_3})}"] & F(g_1 \otimes (g_2 \otimes g_3))
	\end{tikzcd}
	\]
	Noticing that $1_{F(g_1)} \otimes \phi_{g_2,g_3} = \beta(F(g_1))(\phi_{g_2,g_3})$ this finishes the proof of what morphisms between skeletal 2-groups look like.
	The expressions for $2$- and $3$-morphisms follow directly from the relevant diagrams in the definition above.
\end{proof}

\subsection{Actions and homotopy fixed points} 
\label{Sec:actions and hfps}

This part of the appendix contains some details on the definition of homotopy fixed points for 
actions of 2-groups on bicategories essentially following~\cite{HSV}. The motivation for this is that in the main part of this paper we want to compute 
homotopy fixed points of actions of topological groups on bicategories. As we 
explain now in all cases we are interested in this is equivalent to computing
fixed points for the action a 2-group action.  
To define actions of a topological group $G$ on a bicategory we proceed as follows. We 
denote by $BG$ the classifying space of $G$ which we consider as an $\infty$-groupoid. A $G$-action on
a bicategory can now be described by a 3-functor $\rho \colon BG \longrightarrow \BiCat$ which will factor
through the fundamental 3-groupoid $\Pi_3(BG)$. Since we are only interested in compact Lie groups (and hence groups with $\pi_2(G)=0$) in this paper we can replace $\Pi_3(BG)$ with $\Pi_2(BG)$. $\Pi_2(BG)$ is the delooping of a 2-group $\Pi_{\leq 1} (G)$.

We now explicitly spell out the data corresponding to the action of a 2-group $\mathcal{G}$ on a bicategory.
Since $B\mathcal{G}$ has one object, 	we can describe an action $\rho \colon B\mathcal{G} \longrightarrow \BiCat$ with $\rho(*) = \mathcal{B}$ equivalently as a monoidal functor $\mathcal{G} \to \Aut \mathcal{B}$.
Here the domain has monoidal structure given by composition of functors and for the composition of $1$-morphisms in $\Aut \mathcal{B}$ we have fixed a choice of horizontal composition of natural transformations.

\begin{definition}[see also Remark 3.8~\cite{HSV}]
	\label{def:action}
	Let $\mathcal{B}$ be a bicategory and $(\mathcal{G}, \otimes, \alpha_\mathcal{G}, 1 )$ a 2-group. An \emph{action} of $\mathcal{G}$
	on $\mathcal{B}$ consists of a functor $\rho: \mathcal{G} \to \Aut \mathcal{B}$ between bicategories together with monoidality data, which is a collection of natural isomorphisms $R_{g',g}:\rho(g' \otimes g)  \overset{\sim}{\Longrightarrow} \rho(g') \circ \rho(g)$ natural in $g$ and $g'$.
	Explicitly, this is the following data:
	\begin{itemize}
		\item for every object $g \in \mathcal{G}$ a functor $\rho(g)\colon \mathcal{B}\longrightarrow \mathcal{B}$ 
		\item For every morphism $\gamma \colon g \longrightarrow g' \in \mathcal{G}$ a natural isomorphism 
		$\rho(\gamma)\colon \rho(g) \longrightarrow \rho(g')$. Concretely, those consists of 1-isomorphisms $\rho(\gamma)(b) \colon \rho(g)[b] \longrightarrow \rho(g')[b]$ for all $b\in \mathcal{B}$ and 2-isomorphisms
		\begin{equation}
			\begin{tikzcd}
				\rho(g)[b]  \ar[rr, " {{\rho(\gamma)}(b)}"]  \ar[dd,swap, "{ \rho(g)[f]}"] & & \rho(g')[b] \ar[dd, " {\rho(g')[f]}"] \ar[lldd, Rightarrow, "{\rho(\gamma)(f)}"] \\ 
				& &
				\\ 
				\rho(g') [b'] \ar[rr,swap, " {\rho(\gamma)(b')}"]  &  & \rho(g' )[b']
			\end{tikzcd}
		\end{equation}
		for all $f\colon b \longrightarrow b' \in \mathcal{B}$ satisfying a natural coherence condition.
		
		\item for every pair of composable morphisms $\gamma \colon g \longrightarrow g'$ and 
		$\gamma' \colon g' \longrightarrow g'' $ an invertible modification 
		\begin{equation}
			\begin{tikzcd}
				& \rho(g) \ar[dd, Rightarrow, "\rho (\gamma'\circ \gamma )"] \ar[ld,swap, Rightarrow, "\rho(\gamma)"] \\ 
				\rho(g') \ar[rd, swap, Rightarrow, "\rho(\gamma')"] \tarrow[ "{\alpha_{\gamma',\gamma}}"]{r} & \ \\ 
				& \rho(g'') 
			\end{tikzcd}
		\end{equation}

		\item for every pair of objects $g,g'\in \mathcal{G}$ natural isomorphisms $R_{g',g}\colon \rho(g'\otimes g) \longrightarrow \rho(g') \circ \rho(g)$. Concretely, those consists of 1-isomorphisms $R_{g',g}(b) \colon \rho(g'\otimes g)[b] \longrightarrow \rho(g') \circ \rho(g)[b]$ for all $b\in \mathcal{B}$ and 2-isomorphisms
		\begin{equation}
			\begin{tikzcd}
				\rho(g'\otimes g)[b] \ar[dd, " {\rho(g' \otimes g)[f]}"]  \ar[rr, " {R_{g',g}(b)}"] & &  \rho(g') \circ \rho(g)[b]    \ar[dd, "{\rho(g') \circ \rho(g)[f]}"] \ar[lldd, Rightarrow, "{R_{g',g}(f)}"]\\ 
				& &
				\\ 
				\rho(g' \otimes g)[b'] \ar[rr,swap, " {R_{g',g}(b')}"] &  & \rho(g') \circ \rho(g)[b'] 
			\end{tikzcd}
		\end{equation}
		for all $f\colon b \longrightarrow b' \in \mathcal{B}$ satisfying a natural coherence condition.
		
		\item the structure from the previous point is functorial in $g,g'$, i.e. 
		for every pair of morphisms $\gamma \colon g_1 \longrightarrow g_2$ and $\gamma'\colon
		g_1' \longrightarrow g_2'$ we have to specify a modification 
		\begin{equation}
			\begin{tikzcd}
				\rho(g_1'\otimes g_1)\ar[d, Rightarrow, "\rho(\gamma'\otimes \gamma)", swap]  \ar[r, Rightarrow, "R_{g_1',g_1}"] 
				& 
				\rho(g_1') \circ \rho(g_1) \ar[d, Rightarrow, "\rho(\gamma') \bullet \rho(\gamma)"] \tarrow[ "{R_{\gamma',\gamma}}"]{ld} 
				\\ 
				\rho(g_2'\otimes g_2) \ar[r, swap, Rightarrow, "R_{g_2',g_2}"] 
				& 
				\rho(g_2') \circ \rho(g_2)
			\end{tikzcd}
		\end{equation}
		where we used a bullet to denote horizontal composition.
		
		\item 
		for every triple of objects $g'',g',g \in \mathcal{G} $ invertible modifications 
		\begin{equation}
			\begin{tikzcd}
				& \rho (g'') \circ \rho (g') \circ \rho(g) 
				& \tarrow[ "{\omega_{g'',g',g}}"near end, shorten >=0.5cm,shorten <=2.5cm]{lldd} \\ 
				\rho(g'')\circ \rho(g'\otimes g) \ar[ur, Rightarrow, "{\id_{\rho(g'')}\bullet R_{g',g}}"]  & & \rho(g''\otimes g')\circ \rho(g)   \ar[ul, Rightarrow ,"{ R_{g'',g'} \bullet \id_{\rho(g)} }", swap] \\ 
				\rho(g''\otimes (g'\otimes g)) \ar[u, Rightarrow, "{R_{g'',g'\otimes g}}"] \ar[rr,swap, Rightarrow, "{\rho(\alpha_\mathcal{G}(g'',g',g))}"] &  &  \rho((g''\otimes g')\otimes g) \ar[u, Rightarrow,"{R_{g''\otimes g',g}}", swap]
			\end{tikzcd}
		\end{equation}
		satisfying a pentagon-type identity HTA1 \cite[Figure 2.1]{janthesis} ~\cite{streettricat}
	\end{itemize}
	such that identities are preserved strictly\footnote{This can always be assumed up to isomorphism.} 
\end{definition}

\begin{remark}
We will often use the fact that in a $2$-group, composition of morphisms is determined by the tensor product of morphisms.
This implies for example that the data of $\alpha$ in the above definition is redundant. 
\end{remark}

\begin{remark}\label{Rem: Action on Cat}
	The definition above can be specialized to the case of ordinary categories by considering them as bicategories with only identity 2-morphisms. Concretely an 
	action of a 2-group $(\mathcal{G},\otimes,1,\alpha)$ on a category $\mathcal{C}$ consists of 
	\begin{itemize}
		\item An equivalence $\rho(g)\colon \mathcal{C} \longrightarrow \mathcal{C}$ for all $g\in \Obj(\mathcal{G})$, 
		\item natural isomorphisms $R_{g',g}\colon \rho(g'\otimes g)\Longrightarrow \rho(g')\circ \rho(g)$ for all pairs of objects $g',g\in \Obj(\mathcal{G})$,
		\item and natural isomorphisms $\rho(\gamma)\colon \rho(g)\Longrightarrow \rho(g')$ for all morphisms $\gamma\colon g\to g'$ in $\mathcal{G}$
	\end{itemize} 
	such that the diagrams 
\begin{equation}
		\begin{tikzcd}
		& \rho(g) \ar[dd, Rightarrow, "\rho (\gamma'\circ \gamma )"] \ar[ld,swap, Rightarrow, "\rho(\gamma)"] \\ 
		\rho(g') \ar[rd, swap, Rightarrow, "\rho(\gamma')"] & \ \\ 
		& \rho(g'') 
	\end{tikzcd}
\ , \ \
		\begin{tikzcd}
			\rho(g_1'\otimes g_1)\ar[d, Rightarrow, "\rho(\gamma'\otimes \gamma)", swap]  \ar[r, Rightarrow, "R_{g_1',g_1}"] 
			& 
			\rho(g_1') \circ \rho(g_1) \ar[d, Rightarrow, "\rho(\gamma') \bullet \rho(\gamma)"] 
			\\ 
			\rho(g_2'\otimes g_2) \ar[r, swap, Rightarrow, "R_{g_2',g_2}"] 
			& 
			\rho(g_2') \circ \rho(g_2)
		\end{tikzcd} \ , 
	\end{equation}
	\begin{equation}
		\ \ \text{and} \ \ 
	\begin{tikzcd}
		& \rho (g'') \circ \rho (g') \circ \rho(g) 
		&  \\ 
		\rho(g'')\circ \rho(g'\otimes g) \ar[ur, Rightarrow, "{\id_{\rho(g'')}\bullet R_{g',g}}"]  & & \rho(g''\otimes g')\circ \rho(g)   \ar[ul, Rightarrow ,"{ R_{g'',g'} \bullet \id_{\rho(g)} }", swap] \\ 
		\rho(g''\otimes (g'\otimes g)) \ar[u, Rightarrow, "{R_{g'',g'\otimes g}}"] \ar[rr,swap, Rightarrow, "{\rho(\alpha_\mathcal{G}(g'',g',g))}"] &  &  \rho((g''\otimes g')\otimes g) \ar[u, Rightarrow,"{R_{g''\otimes g',g}}", swap]
	\end{tikzcd}
\end{equation}
commute and $\rho(1)=\id_{\mathcal{C}}$. Note that every action on a bicategory induces an action in this sense on the homotopy 1-category.
\end{remark}

Having defined group actions on bicategories we can define the corresponding notion
of homotopy fixed points. There exists a concise definition in terms of 
tricategories. Let $\rho \colon B\mathcal{G} \longrightarrow \catf{BiCat}$ describe
an action of $\mathcal{G}$ on a bicategory $\mathcal{B}$. A \emph{homotopy fixed 
	point} is a natural transformation 
\begin{equation}
	\begin{tikzcd}
		B\mathcal{G} \ar[rr, bend left=40, ""{name=U, below}, "{\rho}"] \ar[rr, bend right=40, swap, ""{name=T, above},  "{*}"] & & \catf{BiCat}
		\ar[Rightarrow, from=T, to=U, "F",swap]
	\end{tikzcd}
\end{equation}
from the constant functor at the bicategory $*$ with one object, and only identity 
morphisms. Following~\cite{HSV} we concretely spell out the definition in the 
case of a 2-group action. 

\begin{definition}\label{Def: Hfixed point}
	Let $(\mathcal{G}, \otimes, \alpha_{\mathcal{G}},1)$ be a 2-group with action $(\mathcal{B},\rho(-), R_{-,-}, \alpha, \omega ) $ on a bicategory $\mathcal{B}$. 
	A \emph{homotopy fixed point} consists of the following data
	\begin{itemize}
		\item an object $F\in \mathcal{B}$ 
		
		\item for all elements $g\in \mathcal{G}$ a 1-isomorphism $F_g \colon \rho(g)[F]\to F$
		
		\item for all pairs of elements $g,g'\in \mathcal{G} $ 2-isomorphisms 
		\begin{equation}
			\begin{tikzcd}
				\rho(g')\circ \rho(g)[F] \ar[rr, "{\rho(g')[F_g]}"]  & \ar[dd, Rightarrow, "\varphi_{g',g}", shorten >=8] & \rho(g')[F] \ar[dd, "{F_{g'}}"]  \\  
				& & \\ 
				\rho(g'\otimes g)[F] \ar[uu, "{R_{g',g}[F]}"] \ar[rr, "{F_{g'\otimes g}}", swap]  & \ &  F
			\end{tikzcd}
		\end{equation}
		\item 
		for all 1-morphisms $\gamma \colon g \longrightarrow g'$ in $\mathcal{G}$ 2-isomorphisms 
		\begin{equation}
			\begin{tikzcd}
				\rho(g)[F] \ar[dd, "{\rho(\gamma)[F]} ", swap] \ar[rd,"{F_g}"] &  \ar[ldd,Rightarrow, "F_\gamma "near end,shorten <=1.1cm, shorten >= .3cm,swap] \\ 
				\  &  F \\ 
				\rho(g')[F]  \ar[ru, "{F_{g'}}",swap ] & 
			\end{tikzcd}
		\end{equation}
	\end{itemize}
	such that all units are preserved strictly, i.e. $F_e \colon \rho(e)[F]=F \xrightarrow{\id_F} F$, $F_{\id_g}=\id_{F_g}$ and $\varphi_{g',g}$ is trivial as
	soon as one of the components is.
	Furthermore, the data has to satisfy the following two relations 
	\newpage
	\begin{equation}
		\begin{tikzcd}[row sep=1.2cm,column sep=2cm]
			\rho(g'')\circ \rho(g')\circ \rho(g)[F]  \ar[d, swap, "{\rho(g'')\circ \rho(g')[F_g]}"]  \ar[dr, Rightarrow, "R_{g'',g'}(F_g)",shorten <=10, shorten >=10]    & \rho(g''\otimes g') \circ \rho(g)[F]  \ar[l, "{R_{g'',g'}(\rho(g)[F])}", swap] \ar[d, "{\rho(g''\otimes g')F_g}"] & \rho((g''\otimes g') \otimes g)[F]\ar[dddll, bend left= 50, "F_{(g''\otimes g') \otimes g}"] \ar[l, "{R_{g''\otimes g',g}(F)}", swap] \\ 
			\rho(g'')\circ \rho(g')[F]  \ar[d, swap, "{\rho(g'')[F_{g'}]}"] & \rho(g''\otimes g')[F] \ar[ddl, bend left=20, "{F_{g''\otimes g' }}"] \ar[l, "R_{g'',g'}(F)"] \ar[rd, Rightarrow, "\varphi_{g''\otimes g',g}" near start, shorten >=50] & \\
			\rho(g'')[F] \ar[d, swap, "{F_{g''}}"] \ar[r, Rightarrow, "\varphi_{g'',g'}" near start, shorten >=60]  & \ & \ \\ 
			F & & 
		\end{tikzcd} 
	\end{equation}
	\[
	=
	\]
	\begin{equation}
		\begin{tikzcd}[row sep=1.2cm,column sep=2cm]
			\rho(g'')\circ \rho(g')\circ \rho(g)[F]    \ar[d, swap, "{\rho(g'')\circ \rho(g')[F_g]}"]  \ar[ddr,Rightarrow, "{\rho(g)[\varphi_{g',g}]}",shorten <=25, shorten >=70, swap]  & \rho(g''\otimes g') \circ \rho(g)[F]  \ar[l, "{R_{g'',g'}(\rho(g)[F])}", swap] & \rho((g''\otimes g') \otimes g)[F] \ar[l, "{R_{g''\otimes g',g}(F)}", swap] \ar[dddll, bend left= 145, "F_{(g''\otimes g') \otimes g}"] \\ 
			\rho(g'')\circ \rho(g')[F]  \ar[d, swap, "{\rho(g'')[F_{g'}]}"]&  \ar[lu, "{\rho(g'')[R_{g',g}(F)]}", swap] \ar[u, Rightarrow, "{\omega_{g'',g',g}(F)}", swap]  \rho(g'')\circ \rho(g'\otimes g)[F]  \ar[ld, "{\rho(g'')[F_{g'\otimes g}]}"]  & \rho(g''\otimes (g'\otimes g))[F] \ar[u, "{\rho(\alpha_\mathcal{G}(g'',g',g))(F)}"] \ar[l, "{R_{g'',g'\otimes g}(F)}"] \ar[lldd, "{F_{g''\otimes(g'\otimes g)}}" near end] \\
			\rho(g'')[F] \ar[d, swap, "{F_{g''}}"] \ar[r,Rightarrow, "{\varphi_{g'',g'\otimes g}}",shorten <=10, shorten >=10]  & \ \ar[rd, Rightarrow, "{F_{\alpha_\mathcal{G}(g'',g',g)}}" near start,shorten <=10, shorten >=90] & \ \\ 
			F & \ & \
		\end{tikzcd} 
	\label{Diagram A}
	\end{equation}
	for all $g'',g',g \in  \mathcal{G}$
	and 		
	\begin{equation}
		\begin{tikzcd}[row sep=1.2cm,column sep=1.2cm]
			\rho(g_2) \circ  \rho(g_1) [F] \ar[d,"{\rho(g_2)[F_{g_1}]}", swap] \ar[rrd,Rightarrow, " \varphi_{g_2,g_1}",shorten <=10, shorten >=60, near start, swap]     && \rho(g_2\otimes g_1 ) [F] \ar[ll, "R_{g_2,g_1}(F)"] \ar[rr, "{\rho(\gamma_2\otimes \gamma_1)[F]}"] \ar[lldd,"F_{g_2\otimes g_1}"]   && \rho(g_2'\otimes g_1')[F] \ar[lllldd, "F_{g_2'\otimes g_1'}", bend left=25]&& \\ 
			\rho(g_2)[F] \ar[d, "F_{g_2}",swap] & \ar[rrru,Rightarrow, " F_{\gamma_2\otimes \gamma_1}",shorten <=25, shorten >=10, swap]& \ && \ \\ 
			F  && \ && \ \
		\end{tikzcd}
	\end{equation}
	\[
	= \]
	\begin{equation}
		\begin{tikzcd}[row sep=1.2cm,column sep=2cm]
			\rho(g_2) \circ  \rho(g_1) [F] \ar[rd, "{\rho(g_2)[\rho(\gamma_1)(F)] }"] \ar[d,"{\rho(g_2)[F_{g_1}]}", swap]    & \rho(g_2\otimes g_1 ) [F] \ar[r, "{\rho(\gamma_2\otimes \gamma_1)[F]}"] \ar[ld,Leftarrow, " {\rho(g_2)[F_{\gamma_1}]}" near end,shorten <=60, shorten >=10,swap]  \ar[l, swap, "R_{g_2,g_1}(F)"] & \rho(g_2'\otimes g_1')[F] \ar[rdd, "F_{g_2'\otimes g_1'}", bend left=45] \ar[d, "{R_{g_2',g_1'}(F)}"] & \\ 
			\rho(g_2)[F] \ar[d, "F_{g_2}",swap] \ar[rd, " {\rho(\gamma_2)[F]} ", swap] \ar[rr,Rightarrow, "{\rho(\gamma_2)(F_{g_1'})}", shorten <=50, shorten >=30, bend right=14]  & \ar[ru,Rightarrow, " {R_{\gamma_2,\gamma_1}(F)}",shorten <=20, shorten >=20] \ar[l, "{\rho(g_2)[F_{g_1'}]}"] \rho(g_2)\circ \rho(g_1')[F]  \ar[r, "{\rho(\gamma_2)(\rho(g_1')[F])}"]  \ar[ld,Leftarrow, " F_{\gamma_2}" near end,shorten <=80, shorten >=10,swap]  & \rho(g_2')\circ \rho(g_1')[F]  \ar[ld, "{\rho(g_2')[F_{g_1'}]}" ] \\ 
			F  \ar[rrr, "\id", bend right=20,swap] &  \ar[l, "F_{g_2'}"] \rho(g_2')[F] \ar[rr,Rightarrow, " \varphi_{g_2',g_1'}",shorten <=40, shorten >=40]  & &  F \\ 
			& & & 
		\end{tikzcd}
	\label{Diagram B}
	\end{equation}	
	for all morphisms $\gamma_2\colon g_2 \to g_2'$ and $\gamma_1\colon g_1 \to g_1'$ in $\mathcal{G}$. 
\end{definition}
As a direct consequence of the definition we find 
\begin{proposition}
	Let $(\mathcal{G}, \otimes, \alpha_\mathcal{G},1)$ be a 2-group. Homotopy fixed points for the trivial
	action on a bicategory $\mathcal{B}$ are objects $b\in \mathcal{B}$ equipped with $\mathcal{G}$-action.
\end{proposition}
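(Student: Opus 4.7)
The plan is to unpack Definition~\ref{Def: Hfixed point} in the special case where the action is trivial, i.e.\ $\rho(g) = \id_{\mathcal{B}}$, $\rho(\gamma) = \id$, $R_{g',g} = \id$, and all higher coherence data $\alpha, \omega$ are trivial. I will then identify the remaining data piece by piece with the data of a monoidal functor $\mathcal{G} \to \Aut_{\mathcal{B}}(b)$, since this is precisely what ``an object $b \in \mathcal{B}$ equipped with a $\mathcal{G}$-action'' means (via the same adjunction used in Section \ref{Sec:actions and hfps} to describe $\mathcal{G}$-actions on $\mathcal{B}$ itself as functors $\mathcal{G} \to \Aut(\mathcal{B})$).

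First I would translate the data. A homotopy fixed point consists of:
\begin{itemize}
\item an object $b \coloneqq F \in \mathcal{B}$;
\item for each $g \in \mathcal{G}$, since $\rho(g)[F] = F$, a $1$-isomorphism $F_g \colon F \to F$, i.e.\ an object of $\Aut_{\mathcal{B}}(b)$;
\item for each pair $g,g' \in \mathcal{G}$, an invertible $2$-morphism $\varphi_{g',g} \colon F_{g'} \circ F_g \Rightarrow F_{g' \otimes g}$ (since $R_{g',g}[F] = \id_F$), i.e.\ the monoidality constraint of a putative functor;
\item for each morphism $\gamma \colon g \to g'$, an invertible $2$-morphism $F_\gamma \colon F_g \Rightarrow F_{g'}$, i.e.\ the action on morphisms.
\end{itemize}
The normalization conditions $F_1 = \id_F$ and $F_{\id_g} = \id_{F_g}$ say that this assignment strictly preserves units, matching our convention for $2$-group homomorphisms. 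The additivity $F_{\gamma' \circ \gamma}$ being compatible with $F_{\gamma'}$ and $F_\gamma$ is not an extra datum because in a $2$-group composition of $2$-morphisms is determined by the tensor product together with the data above, cf.\ the remarks after Definition~\ref{def:action}.

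Next I would verify that the two coherence diagrams reduce exactly to the pentagon and naturality axioms for a monoidal functor. In Diagram~\eqref{Diagram A} every occurrence of $R_{-,-}$, $\rho(g)[-]$ and $\omega$ collapses to the identity; the leftover $2$-cells are $\varphi_{g'',g'}$, $\varphi_{g''\otimes g', g}$, $\varphi_{g'', g'\otimes g}$, $\varphi_{g',g}$ and $F_{\alpha_{\mathcal{G}}(g'',g',g)}$, and the equation becomes precisely the standard pentagon for the monoidality constraint $\varphi$ of a monoidal functor $\mathcal{G} \to \Aut_{\mathcal{B}}(b)$ (with the associator of $\mathcal{G}$ being mapped to the associator of composition in $\Aut_{\mathcal{B}}(b)$ via $F_\alpha$). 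Similarly Diagram~\eqref{Diagram B} collapses to the naturality square for $\varphi$ with respect to morphisms $\gamma_1, \gamma_2$ of $\mathcal{G}$, i.e.\ the equality
\[
\varphi_{g_2',g_1'} \circ (F_{\gamma_2} \bullet F_{\gamma_1}) \;=\; F_{\gamma_2 \otimes \gamma_1} \circ \varphi_{g_2, g_1}.
\]

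Finally I would assemble these identifications into an equivalence of groupoids (or rather, of the natural $2$-categorical structures on both sides) and note that it is essentially tautological: the data and axioms of a homotopy fixed point for the trivial action become, term by term, the data and axioms of a monoidal functor $\mathcal{G} \to \Aut_{\mathcal{B}}(b)$, which by our convention is exactly a $\mathcal{G}$-action on the object $b$. I do not expect a serious obstacle here; the only mild subtlety is bookkeeping the many trivial fillers arising from $\rho(g) = \id$ and $R_{g',g} = \id$, and checking that our chosen orientations for the $2$-cells $\varphi_{g',g}$ and $F_\gamma$ match the standard conventions for a monoidal functor (which, if needed, can be corrected by taking inverses).
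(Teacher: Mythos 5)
Your proof is correct and is exactly the verification the paper has in mind: the paper states this proposition as a direct consequence of Definition~\ref{Def: Hfixed point} without writing out the unpacking, and your term-by-term identification of the fixed-point data with a monoidal functor $\mathcal{G}\to\Aut_{\mathcal{B}}(b)$ (with Diagram~\eqref{Diagram A} collapsing to the associativity coherence and Diagram~\eqref{Diagram B} to naturality) is that omitted argument. The only caveat, which you already flag, is matching the orientations of the $2$-cells, consistent with the paper's stated policy of not fixing the directions of coherence isomorphisms.
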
  
\begin{remark}
	General homotopy fixed points can be understood as actions twisted by the $\mathcal{G}$-action
	on the bicategory. Indeed, let $G$ be an ordinary discrete group which acts
	on the 1-category $\catf{Vect}$ through a 2-cocycle $\alpha \in C^2(G;\C^\times)$. Homotopy 
	fixed points for this action are exactly projective representations of $G$ with twist $\alpha$. 
\end{remark}

Homotopy fixed points naturally form a bicategory $\catf{HFP}(\rho)$: 
\begin{definition}
\label{def:1-mor of hfps}
	Let $(\mathcal{G}, \otimes, \alpha,1)$ be a 2-group with action $(\mathcal{B},\rho(-), R_{-,-}, \alpha, \omega ) $ on a bicategory $\mathcal{B}$.
	A \emph{1-morphism between homotopy fixed points} $(F,F_g,F_\gamma, \varphi_{g',g})$ and $(F',F'_g,F'_\gamma, \varphi'_{g',g})$ consists of 
	\begin{itemize}
		\item a 1-morphism $f\colon F\to F'$ in $\mathcal{B}$, 
		\item for every $g \in \mathcal{G}$ a 2-isomorphism 
		\begin{equation}
			\begin{tikzcd}[row sep=1.0cm,column sep=1.2cm]
				\rho(g)[F'] \ar[r, "{\rho(g)[f]}"] \ar[d,"F_g",swap] & \rho(g)[F'] \ar[d,"F'_g"] \ar[ld, Leftarrow, "f_g", swap, shorten <=10, shorten >=10] \\ 
				F \ar[r,"f",swap] & F'
			\end{tikzcd}
		\end{equation}
	\end{itemize}
	compatible with identities and such that 
	\begin{equation}
		\begin{tikzcd}[row sep=1.0cm,column sep=1.2cm]
			\rho(g')\circ \rho(g)[F] \ar[d,"{\rho(g')[F_g]}",swap] \ar[rd, Rightarrow, "{\varphi_{g',g}}" near start,shorten <=10, shorten >=30,swap] \ar[rdd, Rightarrow, "{f_{g'\otimes g}}" near end,shorten <=70, shorten >=10,swap] & \rho(g'\otimes g)[F]  \ar[l, "{R_{g',g}[F]}", swap] \ar[ldd, "{F_{g'\otimes g}}",swap] \ar[d, "{\rho(g'\otimes g)[f]}"]  \\ 
			\rho(g')[F] \ar[d,"{F_{g'}}",swap] & \rho(g'\otimes g)[F'] \ar[d,"{F'_{g'\otimes g}}"] \\ 
			F \ar[r,swap, "{f}"] & F'
		\end{tikzcd} 
		= 
		\begin{tikzcd}[row sep=1.0cm,column sep=1.4cm]
			\rho(g')\circ \rho(g)[F] \ar[d,"{\rho(g')[F_g]}",swap]  \ar[r, "{\rho(g')\circ \rho(g)[f]}"]  & \rho(g')\circ \rho(g)[F'] \ar[d,"{\rho(g')[F'_g]}"]  \ar[r, Rightarrow,shorten <=10, shorten >=10, "{R_{g',g}[f]}"] & \rho(g'\otimes g)[F]  \ar[d, "{\rho(g'\otimes g)[f]}"]  \ar[ll, "{R_{g',g}[F]}", bend right=30] \\ 
			\rho(g')[F] \ar[d,"F_{g'}",swap ] \ar[r, "{\rho(g')[f]}", swap] \ar[ru, Rightarrow,shorten <=10, shorten >=10, "{\rho(g')[f_g]}"] & \rho(g')[F'] \ar[rd, "{F'_{g'}}", swap] \ar[d, Leftarrow, "{f_{g'}}",shorten <=5, shorten >=5] \ar[r, Rightarrow,shorten <=10, shorten >=10, "{\varphi'_{g',g}}"] & \rho(g'\otimes g)[F'] \ar[d,"F'_{g'\otimes g}"] \ar[lu, "{R_{g',g}[F']}", swap] \\ 
			F \ar[rr,swap, "{f}"] & \ & F'
		\end{tikzcd} 
	\end{equation}
	for all $g',g \in \mathcal{G}$ and 
	\begin{equation}
		\begin{tikzcd}[column sep=1.1cm]
			\rho(g)[F]   \ar[r, "{\rho(\gamma)[F]}"] \ar[dd, swap, "{F_g}"] & \rho(g')[F] \ar[r, "{\rho(g')[f]}"] \ar[ldd, "{F_{g'}}"]& \rho(g')[F'] \ar[dd, "{F'_{g'}}"] \\
			\ \ar[ru, Rightarrow, shorten <=5, shorten >=5, "{F_{\gamma}}" ]& & \\ 
			F \ar[rr, swap, "f"] &\ \ar[ruu, Rightarrow, shorten <=5, shorten >=5, "{f_{g'}}" ] & F' 
		\end{tikzcd}
		=
		\begin{tikzcd}[column sep=1.4cm]
			\rho(g)[F] \ar[rd, "{\rho(g)[f]}",swap]  \ar[r, "{\rho(\gamma)[F]}"] \ar[dd, swap, "{F_g}"] & \rho(g')[F] \ar[r, "{\rho(g')[f]}"] & \rho(g')[F'] \ar[dd, "F'_{g'}"] \\
			\ & \rho(g)[F']\ar[dr,"F'_g"] \ar[ru, "{\rho(\gamma)[F']}"] \ar[u, Rightarrow,  "{\rho(\gamma)[f]}" ] \ar[r,Rightarrow, "{F'_\gamma}",shorten <=5, shorten >=5] & \ \\ 
			F \ar[rr, swap, "f"] \ar[ru, Rightarrow, shorten <=5, shorten >=5, "f_g" ]  &  & F' 
		\end{tikzcd}
	\end{equation} 
	for all $\gamma \colon g\to g'$.
		Composition of $1$-morphisms	$(F,F_g,F_\gamma, \varphi_{g',g}) \xrightarrow{(f,f_g)} (F',F'_g,F'_\gamma, \varphi'_{g',g})$ and $(F',F'_g,F'_\gamma, \varphi_{g',g}') \xrightarrow{(f',f'_g)} (F'',F''_g,F''_\gamma, \varphi''_{g',g})$ is given by $f' \circ f$ together with the $2$-isomorphism
		\begin{equation}
			\begin{tikzcd}[row sep=1.0cm,column sep=1.2cm]
				\rho(g)[F'] \ar[r, "{\rho(g)[f]}"] \ar[d,"F_g",swap] & \rho(g)[F'] \ar[d,"F'_g"] \ar[ld, Leftarrow, "f_g", swap, shorten <=10, shorten >=10] \ar[r, "{\rho(g)[f']}"] & \rho(g)[F''] \ar[d,"F''_g"] \ar[ld, Leftarrow, "f'_g", swap, shorten <=10, shorten >=10] \\ 
				F \ar[r,"f",swap] & F' \arrow[r,"f'", swap] & F''
			\end{tikzcd}
		\end{equation}

\end{definition}
\begin{definition}
	Let $(\mathcal{G}, \otimes, \alpha_\mathcal{G},1)$ be a 2-group with action $(\mathcal{B},\rho(-), R_{-,-}, \alpha, \omega ) $ on a bicategory $\mathcal{B}$. 
	Furthermore let $(f,f_g)$ and $(f',f'_g)$ be 1-morphisms $(F,F_g,F_\gamma, \varphi_{g',g})\to (F',F'_g,F'_\gamma, \varphi'_{g',g}) $ of homotopy fixed points. A \emph{2-morphism of homotopy fixed points} consists of a 2-morphism 
	\begin{equation}
		\begin{tikzcd}
			&  \arrow[dd,Rightarrow,"{\Gamma}", shorten <=10, shorten >=10,swap]  & \\
			F \arrow[rr, bend left=45, ""{name=U, below},"{f}"] \arrow[rr, bend right=45, ""{name=D},"{f'}",swap]& & F' \\ 
			& \ & 
		\end{tikzcd} 
	\end{equation}
	such that 
	\begin{equation}
		\begin{tikzcd}[row sep=1.4cm]
			& \ar[d, Rightarrow, "{\rho(g)[\Gamma]}" near end, shorten <=25, shorten >=5,swap] & \\
			\rho(g)[F] \ar[rrd, Rightarrow, "f'_g" , shorten <=5, shorten >=8,swap] \ar[d, "F_g", swap]  \ar[rr,"{\rho(g)[f]}",bend left =50] \ar[rr,"{\rho(g)[f']}",swap] & \ & \rho(g)[F'] \ar[d, "{F'_g}"] \\ 
			F \ar[rr, "f'"]  &  & F' \ 
		\end{tikzcd} 
		=
		\begin{tikzcd}[row sep=1.4cm]
			\rho(g)[F] \ar[rrd, Rightarrow, "f_g", shorten <=5, shorten >=8,swap]   \ar[rr,"{\rho(g)[f]}"] \ar[d,"F_g",swap] & \ & \rho(g)[F'] \ar[d, "{F'_g}"] \\ 
			F\ar[rr, "f"]  \ar[rr,"{f'}",bend right =50,swap] & \ar[d, Rightarrow, "\Gamma" near start, shorten <=1, shorten >=25,swap] & F' \ \\
			& \ & 
		\end{tikzcd} 
	\end{equation}
	for all $g\in \mathcal{G}$. 

\end{definition}

\begin{remark}\label{Rem: FP on Cat}
As in Remark~\ref{Rem: Action on Cat} we can restrict the definition of a homotopy fixed point to ordinary categories. Let $(\mathcal{G},\otimes,1,\alpha)$ be a 2-group and $(\rho(-), R_{-,-})$ an action of $\mathcal{G}$ on a category $\mathcal{C}$. A homotopy fixed point for $\rho$ consists of an object $F\in \mathcal{C}$ together with isomorphisms $F_g\colon \rho(g)[F]\longrightarrow F$ for all objects $g\in \mathcal{G}$ such that the diagrams
		\begin{equation}
	\begin{tikzcd}
		\rho(g')\circ \rho(g)[F] \ar[rr, "{\rho(g')[F_g]}"]  &  & \rho(g')[F] \ar[dd, "{F_{g'}}"]  \\  
		& & \\ 
		\rho(g'\otimes g)[F] \ar[uu, "{R_{g',g}[F]}"] \ar[rr, "{F_{g'\otimes g}}", swap]  & \ &  F
	\end{tikzcd}
\text{  and  } \ \ 
	\begin{tikzcd}
		\rho(g)[F] \ar[dd, "{\rho(\gamma)[F]} ", swap] \ar[rd,"{F_g}"] &  \\ 
		\  &  F \\ 
		\rho(g')[F]  \ar[ru, "{F_{g'}}",swap ] & 
	\end{tikzcd}
\end{equation} 
commute in $\mathcal{C}$ and $F_1=\id_F$. 

A 1-morphism between homotopy fixed points is given by morphism $f\colon F\to F'$ in 
$\mathcal{C}$ such that $f \circ F_g=  F'_g \rho(g)[f]$. There are no non-trivial 2-morphisms between homotopy fixed points. Hence homotopy fixed points form a 1-category $\mathcal{C}^\mathcal{G}$.  
\end{remark}

\subsection{Exact sequences and semi-direct products of 2-groups}
\label{App: Exact} 

We now turn our attention to exact sequences of 2-groups and semi-direct products. An exact
sequence of discrete groups 
\begin{align}
	1\to N\xrightarrow{\ \iota \ } G \xrightarrow{\ \lambda \ } H \to 1
\end{align}  
consists of a surjective group homomorphism $\lambda$ with kernel $K\xrightarrow{\ \iota \ } G$.
To generalise the definition of an exact sequence to 2-groups we start with the following 
definition
\begin{definition}
	Let $\lambda: \mathscr{G} \to \mathscr{H}$ be a homomorphism of $2$-groups.
	Its \emph{kernel} $\lambda^{-1}[1]$ is the monoidal category with objects consisting of $g \in \operatorname{ob} \mathcal{G}$ together with an isomorphism $n_g: \lambda(g) \to 1$ (`nullifier')
	and morphisms $(\gamma: g_1 \to g_2) \in \operatorname{Mor} \mathcal{G}$ such that 
	\[
	\begin{tikzcd}
		\lambda(g_1) \arrow[rr,"\lambda(\gamma)"] \arrow[rd,"n_{g_1}", swap] && \lambda(g_2)\arrow[ld,"n_{g_2}"]
		\\
		& 1 &
	\end{tikzcd}
	\]
	commutes.
\end{definition}
The monoidal structure is given by $n_{g_1} \otimes n_{g_2}$ on objects, which is clearly a functor.
The associator between objects in $\ker \lambda$ is just the associator in $\mathscr{G}$.

Let 
\begin{align}
	1\to \mathcal{N} \xrightarrow{\ \iota \ } \mathscr{G} \xrightarrow{\ \lambda \ } \mathscr{H} \to 1
\end{align}  
be a sequence of 2-groups together with a pointed 2-isomorphism $\epsilon\colon \lambda \circ \iota \to 1$. The 2-isomorphism $\epsilon$ defines a morphism of 2-groups 
\begin{align}
	\iota^\epsilon \colon  \mathscr{N} & \to \lambda^{-1}[1] \\ 
	k & \longmapsto(\iota(n) , \lambda \circ \iota(n)\xrightarrow{\epsilon_n} 1)
\end{align} 

\begin{definition}
	An \emph{exact sequence of 2-groups} is a sequence of 2-groups 
	\begin{align}
		1\to \mathscr{N} \xrightarrow{\ \iota \ } \mathscr{G} \xrightarrow{\ \lambda \ } \mathscr{H} \to 1
	\end{align}  
	together with a pointed 2-isomorphism $\epsilon\colon \lambda \circ \iota \to 1$ such that
	the map $\iota^\epsilon$ is an equivalence of 2-groups.
\end{definition}

\begin{remark}
	We can equivalently describe an exact sequence of 2-groups by a sequence of 2-groupoids
	\begin{align}
		1\to B\mathscr{N} \xrightarrow{\ \iota \ } B\mathscr{G} \xrightarrow{\ \lambda \ } B\mathscr{H} \to 1
	\end{align} and a pointed 1-morphism $\epsilon\colon \lambda \circ \iota \to 1$
	such that the induced map $B\mathscr{N}\to \lambda/*$ is an equivalence.
	In this case we can conclude that the slice category $\lambda/*$ is connected and that there is a an
	equivalence of 2-groupoids $B \lambda^{-1}[1] \to \lambda/* $. Hence the canonical
	map of 2-groups $\mathcal{N} \to \lambda^{-1}[1]$ is also an equivalence.
\end{remark}
As for ordinary groups an exact sequence induces an action of $\mathscr{H}$ on $B \mathscr{N}$. The 
action can be abstractly\footnote{Even more abstractly it is the straightening of the
	fibration $\lambda\colon B\mathscr{G}\to B\mathscr{H}$.} constructed by noting that $\mathscr{H}$ acts
canonically on the slice $ \lambda / *$ by post composition. Since $\iota^\epsilon$
is an equivalence we can transfer the action to $B \mathscr{N}$. We describe the induced action
in the case that $\epsilon$ is the identity and $\lambda$ is a fibration. This assumption only streamlines the presentation and 
is enough for the extensions considered in this paper. However, it is straightforward to generalise 
to the case of non-trivial $\epsilon$'s. For this we describe
the bicategory $\lambda/*$ more concretely: It consists of 

\begin{itemize}
	\item \textbf{Objects:} Morphisms $h\colon \lambda(*) \to *$ in $B \mathscr{H}$.
	\item \textbf{1-Morphisms:} Consist of a pair of a
	1-morphism $g\colon *\to *$ in $\mathscr{G}$ and a 
	2-morphism
	\begin{equation}
		\begin{tikzcd}
		&	* \ar[dd,"\lambda(g)",swap] \ar[rd,"h"] &  \\
		& \	 & \ar[l, Rightarrow, "\gamma",swap ,shorten <=0, shorten >=0 ] * \\ 
		\	& * \ar[ru,"h'", swap] &
		\end{tikzcd}
	\end{equation}
	\item \textbf{2-Morphisms:} A 2-morphism $\Omega \colon (g,\gamma)\Longrightarrow (g',\gamma')$ is a 2-morphism 
	$\Omega \colon g \to g'$ in $B\mathscr{G}$ such that 
	\begin{equation}
		\begin{tikzcd}
			&	* \ar[dd,"\lambda(g)",near start] \ar[dd, "\lambda(g')", bend right=70, swap ] \ar[rd,"h"] & \\
			\ar[r, Leftarrow, "\lambda(\Omega)", near end, shorten <=15, shorten >=0 ] & \	\ar[r, Leftarrow, "\gamma" ] & * \\ 
			&	* \ar[ru,"h'", swap] &
		\end{tikzcd}
		=	\
		\begin{tikzcd}
			* \ar[dd,"\lambda(g)",swap] \ar[rd,"h"] & \\
			\ar[r, Leftarrow, "\gamma'" ] & * \\ 
			* \ar[ru,"h'", swap] &
		\end{tikzcd}
	\end{equation}
\end{itemize}  
with the obvious composition.
Because $\iota \colon B\mathscr{N} \to \lambda/*$ is essentially surjective there exists for every
element $\lambda(*)\xrightarrow{\ h \ } * $ a diagram
\begin{equation}
	\begin{tikzcd}
		* \ar[dd,"\lambda(s(h))",swap] \ar[rd,"h"] & \\
		& * \\ 
		* \ar[ru,"*", swap] &
	\end{tikzcd} \ \ .
\end{equation}   
with an element $s(h)\in \mathscr{G}$ which we can assume to commute strictly. We pick such an element 
for all $h\in \mathscr{H}$, i.e. we pick a set theoretical section at the level of objects. Being 
essentially surjective on the level of 1-morphisms implies that for all 1-morphisms $(n,\gamma)\colon 
(*,1)\to (*,1)$ there exists an element $\Gamma(\gamma)\in \mathscr{N}$ and a path $s(\gamma)\colon \Gamma(\gamma) \to n $ in $\mathscr{G}$ such 
that
\begin{equation}
	\begin{tikzcd}
		* \ar[dd,"\lambda(n)",swap] \ar[rd,"1"] & \\
		\ar[r, Leftarrow, "\gamma" ] & * \\ 
		* \ar[ru,"1", swap] &
	\end{tikzcd}
	=	\
	\begin{tikzcd}
		&	* \ar[dd,"\lambda(\Gamma(\gamma))", bend left=5] \ar[dd, "\lambda(n)", bend right=95, swap ] \ar[rd,"1"] & \\
		\ar[r, Leftarrow, "\lambda(s(\gamma))", near end, shorten <=10, shorten >=0 ] & \	& * \\ 
		&	* \ar[ru,"1", swap] &
	\end{tikzcd} \ . 
\end{equation}
We again pick these `lifts'. 
In particular, for every path $\gamma\colon h \to h'$ in $\mathscr{H}$ we can consider 
\begin{equation}
	\begin{tikzcd}\label{Eq: diagram gamma}
		* \ar[d, "\lambda(s(h')^{-1})", swap] \ar[rrdd, "1"]  & & \\ 
		* \ar[dd, "\id", swap] \ar[rrd, "h", near start] & & \\ 
		\ar[rr, Leftarrow, "\gamma"] & & * \\ 
		* \ar[d,"\lambda(s(h))",swap] \ar[rru, "h'",near start, swap] & & \ \\ 
		* \ar[rruu, "1",swap] & & 	  
	\end{tikzcd}
\end{equation} 
and get a path $s(\gamma) \colon \Gamma(\gamma)  \to s(h') \otimes s(h)^{-1} $ or equivalently a path
$s(\gamma) \colon s(h) \to \Gamma(\gamma)^{-1} s(h')$, i.e. a lift of $\gamma$ to a path in $\mathscr{G}$ 
which starts at $s(h)$, but does not end at $s(h')$. Finding a lift which also ends at $s(h')$ is 
impossible in general. In cases where the identity is a possible lift we always pick it. Based on these choices we can pick a specific inverse to $\iota \colon B \mathscr{N} \to \lambda /*$:
\begin{align}
	\iota^{-1} \colon \lambda /* & \to B \mathscr{N}, \ \ (h\colon \lambda(*)\to *) \longmapsto * \\ 
	(g,\gamma)\colon h \to h' & \longmapsto \Gamma (s(h') \otimes \gamma \otimes s(h)^{-1}) 
\end{align}  
where we denote $s(h') \otimes \gamma \otimes s(h)^{-1}$ the 1-morphism of a form similar to
Equation~\eqref{Eq: diagram gamma} constructed from $(g,\gamma)$.
The value on a 2-morphism $\Omega\colon (g_1,\gamma_2) \to (g_2,\gamma_2)$ can be constructed as the 2-morphism
\begin{align}
	\Gamma(\lambda(s(h')) \otimes \gamma_1 \otimes \lambda(s(h)^{-1}))  & \xrightarrow{s(\lambda(s(h')) \otimes \gamma_1 \otimes \lambda(s(h)^{-1})} s(h') \otimes g_1 \otimes s(h)^{-1} \xrightarrow{\Omega} s(h') \otimes g_2 \otimes s(h)^{-1} \\ & \xrightarrow{s(\lambda(s(h')) \otimes \gamma_1 \otimes \lambda(s(h)^{-1})^{-1}} \Gamma(\lambda(s(h')) \otimes \gamma_2 \otimes \lambda(s(h)^{-1}))
\end{align}
which is an 2-morphism in $B\mathscr{N}$ because $\iota$ is fully faithful on 2-morphisms. By construction we have $\iota^{-1}\circ \iota = \id$ and a canonical natural isomorphism $\iota\circ \iota^{-1} \to \id_{\lambda/*}$.
 
The action of an element $H\in \mathscr{H}$ on $B\mathscr{N}$ is now $\psi(H)\colon B\mathscr{N}\xrightarrow{\iota }
\lambda/* \xrightarrow{H \circ -} \lambda/* \xrightarrow{\iota^{-1} } B\mathscr{N} $ which concretely
maps a 1-morphism $n$ to $s(H)\otimes n \otimes s(H)^{-1}$ and a 2-morphism $\gamma_{\mathscr{N}}$ to 
the induced 2-morphism $s(H)\otimes \gamma_{\mathscr{N}} \otimes s(H)^{-1}$.

A path $\gamma_\mathscr{H}$ acts by the natural transformation $\psi(\gamma_H)\colon \psi(H) \to 
\psi(H')$ whose value at $*\in B\mathscr{N}$ is $\Gamma(\gamma)$. The data making this into a natural
transformation is induced by the 2-isomorphisms $s(\gamma)\colon \Gamma(\gamma)\Longrightarrow s(h') \otimes s(h)^{-1}$. 

From this description we can also work out the coherence isomorphism involving two elements $H,H'\in \mathscr{H}$. It is given abstractly by 
\begin{equation}
\begin{tikzcd}
B\mathscr{N}\ar[r,"\iota"] & \lambda/*  \ar[r, "H \circ -"] & \lambda/* \ar[r, "\iota^{-1}"] \ar[rr, "\id", bend right =40, swap] & B\mathscr{N} \ar[d, Rightarrow, shorten >= 5] \ar[r,"\iota"] & \lambda/*  \ar[r, "H' \circ -"] & \lambda/* \ar[r, "\iota^{-1}"] & B\mathscr{N} \\ 
& & & \ & & &
\end{tikzcd}
\end{equation}
The component of this natural isomorphism at the base point is
\begin{equation}
	\iota^{-1}\left( 
\begin{tikzcd}
	* \ar[rd, "1"] \ar[rrd, "H'", bend left =10] \ar[d,"\lambda(s(H)^{-1})", swap] & \\ 
	* \ar[r,"H", swap] & * \ar[r, "H'", swap] & *
\end{tikzcd}
\right) =  s(H'\circ H) \circ s(H)^{-1} \circ s(H')^{-1}
\end{equation}
measuring the failure of $s$ to be compatible with the composition in $B\mathscr{H}$.

All exact sequences of groups can be reconstructed from the action of $\mathscr{H}$ on $\mathscr{N}$ 
via the twisted semi-direct products. We now explain the analogous construction for 2-groups. 
Let $\mathscr{G}$ be a $2$-group equipped with an action on another $2$-group $\mathscr{N}$.
We are going to define the semi-direct product $2$-group explicitly.
Conceptually it is easiest to start with writing down what should be the (derived) coinvariants (i.e. homotopy quotients) for such an action.
We do this by thinking of $\mathscr{G}$ and $\mathscr{N}$ as bicategories and describe the semidirect product as the colimit of the functor  $\rho: B \mathscr{G} \to \BiCat$ between tricategories describing the action.
After this we derive expressions in terms of monoidal categories from those which are less conceptual but useful for computations.
By the coherence theorem for monoidal categories, we from now on omit the associators of $\mathcal{G}$ and $\mathcal{N}$ without loss of generality.
This simplifies the already tedious formulas and it is straightforward to generalize to nontrivial associators in practice.

Concretely, the bicategory $B(\mathcal{N} \rtimes \mathcal{G})$ will have 1-morphisms consisting of a 1-morphism $g: * \to *$ of $B\mathcal{G}$ together with an object $n: * = \rho(g)(*) \to *$ of $\mathcal{N}$.
The composition of $(n', g')$ and $(n,g)$ can be defined by taking the composition 
\[
\rho(g' \otimes g)[*] \xrightarrow{R_{g',g}(*)} \rho(g') [\rho(g)[*]] \xrightarrow{\rho(g')[n]} \rho(g')[*] \xrightarrow{n'} *
\] 
to be the first entry and $g' \otimes g$ be the second.
In analogy with the twisted semi-direct product of groups, we think of the multiplication in the semidirect $2$-group as being specified by the `condition' $g n g^{-1} = \rho(g)(n)$ together with the `twisting 2-cocycle' $R_{g',g}$.
2-morphisms from $(n, g)$ to $(n', g')$ consist of a 2-morphism $\gamma: g \to g'$ in $B\mathcal{G}$ and a 2-morphism in $B\mathcal{N}$ filling the diagram
\begin{equation}
	\begin{tikzcd}
		\rho(g)[*] \ar[dd, "{n} ", swap] \ar[rd,"{\rho(\gamma)_*}"] & \\ 
		\ar[r,Rightarrow, "\nu "near end,shorten <=0.4cm] &  \ar[dl, "{n'}"] 					\rho(g')[*] \\ 
		*  & 
	\end{tikzcd}
\end{equation}
The vertical composition of 2-morphisms $\gamma: g \to g', \nu:  n \to n' \otimes \rho(\gamma)_*$ and $\gamma': g' \to g'', \nu: n' \to n'' \otimes \rho(\gamma')_*$ is then given by 
\[
\begin{tikzcd}[column sep=1.7cm,row sep=1.2cm]
	\rho(g)[*] \ar[rr,"\rho(\gamma' \circ \gamma)_*"] \ar[dr,"\rho(\gamma)_*", swap] \ar[ddr,"n", bend right, swap] & \ & \rho(g'')[*] \ar[ddl, "n''", bend left] 
	\\
	\ar[r, "\nu", Rightarrow, shorten >=5, shorten <=18, swap] & \rho(g')[*] \ar[ru,"{\rho(\gamma')_*}"] \ar[d,"n'"] \ar[u,Rightarrow,"{\alpha_{\gamma',\gamma}}", shorten <=4 , shorten >=4]  & \
	\\
	\ & * \ar[uur, "\nu'",Rightarrow, shorten <=15, shorten >=15] & \
\end{tikzcd}
\]
For the horizontal composition between $\gamma_i: g_i \to g_i', \nu_i: n_i \to n_i' \otimes \rho(\gamma_i)_*$ for $i = 1,2$ we have to give a morphism 
\[
n_2 \otimes \rho(g_2)[n_1] \otimes R_{g_2,g_1}(*) \to n_2' \otimes \rho(g_2')[n_1'] \otimes R_{g_2',g_1'}(*) \otimes \rho(\gamma_2 \otimes \gamma_1)_*.
\]
Hence we seek to fill the diagram
\[
\begin{tikzcd}
	\rho(g_2 g_1)[*] \arrow[rr,"\rho(\gamma_2 \otimes \gamma_1)_*"] \arrow[d,"R_{g_2,g_2}"] & \ & \rho(g_2' g_1')[*] \arrow[d,"R_{g_2',g_1'}(*)"]
	\\
	\rho(g_2) \circ \rho(g_1)[*] \arrow[rr,"{(\rho(\gamma_2) \bullet \rho(\gamma_1) )_*}"]  \arrow[d,"{\rho(g_2)[n_1]}"]& \ &  
	\rho(g_2')[ \rho(g_1')[*] ]
	\arrow[d,"{\rho(g_2')[n_1']}"]
	\\
	\rho(g_2)[*] \arrow[rr,"\rho(\gamma_2)(*)"] \arrow[dr,"n_2"] & \ & \rho(g_2')[*] \arrow[dl,"n_2'"]
	\\
	\ & * & \
\end{tikzcd}
\]
The upper rectangle can be filled with $R_{\gamma_2,\gamma_1}(*)$ and the lower triangle by $\nu_2$.
After plugging in the definition of horizontal composition of natural transformations, the middle rectangle can be filled as follows:
\[
\begin{tikzcd}[column sep=1.7cm,row sep=1.2cm]
	\rho(g_2) \rho(g_1)(*) \arrow[r,"{\rho(g_2)(\rho(\gamma_1)_*)}"] \arrow[dd,"\rho(g_2)(n_1)"]
	& \rho(g_2) \rho(g_1')(*) \arrow[r,"\rho(\gamma_2)_{\rho(g_1')(*)}"] 
	\arrow[ddl, "\rho(g_2)(n_1')"] 
	&  \rho(g_2') \rho(g_1')(*) 
	\arrow[dd,"{\rho(g_2')(n_1')}"]  \arrow[ddll,"\rho(\gamma_2)_{n_1'}", Rightarrow, shorten <=10,  shorten >=10, bend left=5]
	\\
	{\ } \arrow[ur,Rightarrow,"{\rho(g_2)(\nu_1)}", shorten <=10,  shorten >=10] & \ & \
	\\
	\rho(g_2)(*) \arrow[rr,"\rho(\gamma_2)(*)"] & \ & \rho(g_2')(*) 
\end{tikzcd} \ \ .
\]
For the associator, let us consider the triple tensor product of $(n_3,g_3), (n_2,g_2)$ and $(n_1,g_1)$.
The two possible tensor product orders are the left- and rightmost composition of 1-morphisms in the following diagram:
\[
\begin{tikzcd}
	\ & \rho(g_3 g_2 g_1)[*] \arrow[dl, "{R_{g_3 g_2, g_1}(*)}"] \arrow[dr,"{R_{g_3,g_2 g_1}(*)}"]& \
	\\
	\rho(g_3 g_2) \rho(g_1)[*] \arrow[d, "{\rho(g_3 g_2)[n_1]}"] \arrow[drr, "{R_{g_3, g_2}(\rho(g_1)[*])}"] & \ & \rho(g_3) \rho(g_2 g_1)[*] 
	\arrow[d,"{\rho(g_3)(R_{g_2,g_1}(*))}"] 
	\arrow[ddd, "{\rho(g_3)[n_2 \rho(g_2)[n_1]R_{g_2,g_1}(*)]}", bend left=100]
	\\
	\rho(g_3 g_2)[*] \arrow[dr,"{R_{g_3,g_2}(*)}"] & \ & \rho(g_3) \rho(g_2) \rho(g_1)[*] \arrow[dl,"{\rho(g_3) [\rho(g_2)[n_1]]}"]
	\\
	\   & \rho(g_3) \rho(g_2)[*]  \arrow[dr,"{\rho(g_3)[n_2]}"] & \
	\\
	\  & & \rho(g_3)[*] \arrow[d,"{n_3}"] & \
	\\
	\ & & * & \
\end{tikzcd}
\]
We can fill the upper quadrilateral by $\omega_{g_3,g_1,g_1}(*)$ and the lower one by the naturality isomorphism $R_{g_3,g_2}(n_1: \rho(g)[*] \to *)$.
We can get to the rightmost bent arrow by using the monoidality data of the monoidal functor $\rho(g_3)$.

Translating the formulas above, we obtain an equivalent description of the semi-direct product as a monoidal category.

\begin{definition}
\label{Def:semidirectproduct}
	The \emph{semidirect product} $\mathcal{N} \rtimes \mathcal{G}$ is the monoidal category with objects pairs $(n, g)$ with $g \in \mathcal{G}$ and $n \in \mathcal{N}$ together with monoidal product
	\[
	(n', g') \otimes (n, g) = (n' \otimes \rho(g')(n) \otimes R_{g',g}(*), g \otimes g').
	\]
	A morphism $(n, g) \to (n', g')$ consist of a morphism $\gamma: g \to g'$ in $\mathcal{G}$ and a morphism $\nu:n \to n' \otimes \rho(\gamma)_*$ in $\mathcal{N}.$
	The composition of $(\nu, \gamma): (g, n) \to ( g', n')$ and $(\nu', \gamma'): (  n', g') \to ( n'', g'')$ is defined by the pair consisting of $\gamma' \circ \gamma$ and the composition
	\[
	n \xrightarrow{\nu} n' \otimes \rho(\gamma)_* \xrightarrow{\nu' \otimes \id_{\rho(\gamma)_*}} n'' \otimes \rho(\gamma')_* \otimes \rho(\gamma)_* \xrightarrow{\id_{n''} \otimes \alpha_{\gamma', \gamma}} n'' \otimes \rho(\gamma' \circ \gamma)_* .
	\]
	The horizontal composition of $(\nu_1, \gamma_1): (n_1,g_1) \to (n_1',g_1')$ and $(\nu_2, \gamma_2): (n_2,g_2) \to (n_2',g_2')$ is defined by the pair consisting of $\gamma_2 \otimes \gamma_1$ and the composition
	\begin{align*}
		& n_2 \otimes \rho(g_2)[n_1] \otimes R_{g_2,g_1}(*)
		\xrightarrow{\nu_2 \otimes \id_{\rho(g_2)[n_1] \otimes R_{g_2, g_1}(*)}} 
		\\
		&n_2' \otimes \rho(\gamma_2)_* \otimes \rho(g_2)[n_1] \otimes R_{g_2, g_1}(*)
		\xrightarrow{\id_{n_2' \otimes \rho(\gamma_2)_* } \otimes \rho(g_2)[\nu_1] \otimes \id_{R_{g_2,g_1}(*)}}
		\\
		& n_2' \otimes \rho(\gamma_2)_* \otimes \rho(g_2)[n_1'] \otimes \rho(g_2)[\rho(\gamma_1)_*]  \otimes R_{g_2, g_1}(*)
		\xrightarrow{\id_{n_2'} \otimes \rho(\gamma_2)_{n_1'} \otimes \id_{\rho(g_2)[\rho(\gamma_1)_*] \otimes R_{g_2,g_1}(*)}}
		\\
		& n_2' \otimes \rho(g_2')[n_1'] \otimes \rho(\gamma_2)_* \otimes \rho(g_2)[\rho(\gamma_1)_*]  \otimes R_{g_2, g_1}(*)
		\xrightarrow{\id_{n_2' \otimes \rho(g_2')[n_1']} \otimes R_{\gamma_2, \gamma_1}(*)}   n_2' \otimes \rho(g_2')[n_1'] \otimes R_{g_2',g_1'} \otimes \rho(\gamma_2 \otimes \gamma_1)_* 
	\end{align*}
	The associator of $(n_3,g_3), (n_2,g_2)$ and $(n_1,g_1)$ is the isomorphism
	\begin{align*}
		((n_3, g_3) \otimes (n_2, g_2)) \otimes (n_1,g_1) &= (n_3 \rho(g_3)[n_2] R_{g_3,g_2}(*) \rho(g_3 g_2)[n_1] R_{g_3 g_2, g_1}(*), g_3 g_2 g_1) \to
		\\
		(n_3, g_3) \otimes ((n_2, g_2) \otimes (n_1,g_1)) &= (n_3 \rho(g_3)[n_2 \rho(g_2)[n_1]R_{g_2,g_1}(*)] R_{g_3,g_2 g_1}(*), g_3 g_2 g_1)
	\end{align*}
	which is the identity on $g_3 g_2 g_1$ and given by the composition
	\begin{align*}
		& n_3 \rho(g_3)[n_2] R_{g_3,g_2}(*) \rho(g_3 g_2)[n_1] R_{g_3 g_2, g_1}(*) \xrightarrow{\id_{n_3 \rho(g_3)[n_2]} \otimes R_{g_3,g_2}(n_1) \otimes \id_{R_{g_3 g_2, g_1}(*)}}
		\\
		& n_3 \rho(g_3)[n_2] \rho(g_3)[\rho(g_2)[n_1]] R_{g_3,g_2}(*) R_{g_3 g_2, g_1}(*)
		\xrightarrow{\id_{n_3 \rho(g_3)[n_2] \rho(g_3)[\rho(g_2)[n_1]]} \otimes \omega_{g_3,g_2,g_1}(*)}
		\\
		& n_3 \rho(g_3)[n_2] \rho(g_3)[\rho(g_2)[n_1]] \rho(g_3)[R_{g_2,g_1}(*)] R_{g_3,g_2 g_1}(*) \cong 
		n_3 \rho(g_3)[n_2 \rho(g_2)[n_1]R_{g_2,g_1}(*)] R_{g_3,g_2 g_1}(*)
	\end{align*}
	on the $\mathcal{N}$ factor.
\end{definition}

\paragraph{Decomposing actions}
Every exact sequence of 2-groups $B\mathscr{N}\to B\mathscr{G} 
\to B\mathscr{H}$ corresponds to an action $\psi$ of 
$\mathscr{H}$ on $B\mathscr{N}$. In this case the bicategory $B\mathscr{G}$
can be reconstructed from the action as 
$\colim_{B\mathscr{H}} \psi$, which can be explicitly modelled by the semi direct product described in the previous section. The description as a colimit 
can be used to give a description of a $\mathscr{G}$-action on
a bicategory $\cat{B}$ in terms of $\mathscr{H}$ and $\mathscr{N}$. Using the universal property of the colimit 
\begin{align}
	\catf{TriCat}(\colim_{B\mathscr{H}}\psi, \BiCat ) \cong [B\mathscr{H}, \catf{TriCat}](\psi , \Delta \BiCat)
\end{align} 
where $\Delta \BiCat$ is the constant diagram at the 
tricategory $ \BiCat$. This implies that we can describe a
$\mathscr{G}$-action by a natural transformation
$\rho \colon \psi \Longrightarrow \Delta \BiCat$ of 4-functors. Concretely,
this consists of
\begin{itemize}
	\item A 3-functor $\rho_{\mathscr{N}} \colon B\mathscr{N}\longrightarrow \BiCat $, i.e. an action of
	$\mathscr{N}$ on a bicategory $\cat{B}$.  
	\item For all 1-morphisms $h\in B \mathscr{H}$ a invertible natural
	transformation $\rho_h \colon  \rho \circ \psi_h 
	\Longrightarrow \rho_h $. 
	\item For all pairs of 1-morphisms $h,h' \in B \mathscr{H}$ invertible modifications 
\end{itemize}
\begin{equation}
	\begin{tikzcd}
		& \ & \ & \ & \ & \ & \ & \ & \\
		&\	&\ &   \tarrow["\rho_{h,h'}",  shorten >=15,  shorten <=15]{rr} & \ & \ & \ & \ & & \\
		B\mathscr{N} \ar[r, "\psi_h",swap] \ar[rrr, bend left =90, "\rho_{\mathscr{N}}"]&	B \mathscr{N} \ar[r, "\psi_h'",swap] \ar[rr, bend left=60, "\rho_{\mathscr{N}}"] \ar[uu, Rightarrow, "\rho_{h}",  shorten >=5] & \ar[u, Rightarrow, "\rho_{h'}",  shorten >=0,swap] B \mathscr{N} \ar[r, "\rho_{\mathscr{N}}",swap] & \BiCat	& & 	B\mathscr{N} \ar[r, "\psi_{h}",swap] \ar[rrr, bend left =90, "\rho_{\mathscr{N}}"] \ar[rr, bend left=60, "\psi_{(h\otimes h')}"] &	B \mathscr{N} \ar[r, "\psi_{h'}",swap]  \ar[u, Rightarrow,   shorten >=0,swap] &  B \mathscr{N} \ar[r, "\rho"] \ar[uu, Rightarrow, "\psi_{h\otimes h'}",  shorten >=5,swap] & \BiCat
	\end{tikzcd}
\end{equation}
\begin{itemize}
	\item for all triples of 1-morphisms $h,h',h'' \in B\mathscr{H}$ invertible perturbations 
\end{itemize}
\begin{equation}
	\begin{tikzcd}
		& \ & \ & \ & \ & \ & \ & \ & \ \\
		&\	&\ &  \  & \tarrow["\rho_{h',h''}"]{r} \ & \ & \ & \ & \ & \ \\
		B\mathscr{N} \ar[rrrr, bend left=70,"\rho_{\mathscr{N}}"] \ar[r, "\psi_h",swap] & B\mathscr{N} \ar[r, "\psi_h'",swap] \ar[rrr, bend left =55, "\rho_{\mathscr{N}}"] \ar[uu, Rightarrow, "\rho_{h}",  shorten >=0]& \tarrow["\rho_{h, h'}"]{dd}	B \mathscr{N} \ar[r, "\psi_h''",swap] \ar[rr, bend left=35, "\rho_{\mathscr{N}}"] \ar[uu, Rightarrow, "\rho_{h'}",  shorten >=15] & \ar[u, Rightarrow, "\rho_{h''}" near start,  shorten >=5,swap] B \mathscr{N} \ar[r, "\rho_{\mathscr{N}}",swap] & \BiCat	& B\mathscr{N} \ar[rrrr, bend left=70,"\rho_{\mathscr{N}}"] \ar[r, "\psi_h", swap] & 	B\mathscr{N} \ar[uu, Rightarrow, "\rho_{h}",  shorten >=0] \ar[r, "\psi_{h'}",swap] \ar[rrr, bend left =60, "\rho_{\mathscr{N}}"] \ar[rr, bend left=40, "\psi_{(h'\otimes h'')}"] & \tarrow["\rho_{h, h' \otimes h''}"]{dd}	B \mathscr{N} \ar[r, "\psi_{h''}",swap]  \ar[u, Rightarrow,   shorten >=5,swap] &  B \mathscr{N} \ar[r, "\rho"] \ar[uu, Rightarrow, "\psi_{h'\otimes h''}" near start,  shorten >=15,swap] & \BiCat
		& \ & \ & \ & \ & \ & \ & \ & \ \\
		& \ & \ & \ & \  & \ & \ & \ & \ \\
		& \ & \ & \ & \ \qarrow["\rho_{h,h',h''}"]{rr} & \ & \ & \ & \ \\ 
		& \ & \ & \ & \ & \ & \ & \ & \ \\
		&\	&\ &  \  &  \ & \ & \ & \ & \ & \ \\
		B\mathscr{N} \ar[rrrr, bend left=70,"\rho_{\mathscr{N}}"] \ar[rr, "\psi_{h\otimes h'}", bend left=35] \ar[r, "\psi_h",swap] &  B\mathscr{N} \ar[r, "\psi_h'",swap]  \ar[u, Rightarrow,   shorten >=5]& \tarrow["\rho_{h\otimes h',h''}"]{dd}	B \mathscr{N} \ar[r, "\psi_h''",swap] \ar[rr, bend left=40, "\rho_{\mathscr{N}}"] \ar[uuu, Rightarrow, "\rho_{h\otimes h'}",  shorten >=15] & \ar[u, Rightarrow, "\rho_{h''}" near start,  shorten >=5,swap] B \mathscr{N} \ar[r, "\rho_{\mathscr{N}}",swap] & \BiCat	& B\mathscr{N} \ar[rrrr, bend left=70,"\rho_{\mathscr{N}}"] \ar[rrr, bend left=50,"\psi_{h\otimes (h' \otimes h'')}"] \ar[r, "\psi_h", swap] & 	B\mathscr{N} \ar[uu, Rightarrow,  shorten >=20] \ar[r, "\psi_{h'}",swap]  \ar[rr, bend left=40, "\psi_{(h\otimes h')}"] &	B \mathscr{N} \ar[r, "\psi_{h''}",swap]  \ar[u, Rightarrow,   shorten >=5,swap] &  B \mathscr{N} \ar[r, "\rho"] \ar[uuu, Rightarrow, "\rho_{h\otimes(h' \otimes h'')}" near start,  shorten >=25,swap] & \BiCat
		& \ & \ & \ & \ & \ & \ & \ & \ \\
		& \ & \ & \ & \ & \ & \ & \ & \ \\
		& \ & \ & \ & \ & \ & \ & \ & \ \\ 
		& \ & \ & \ &  \ \tarrow[""]{rruu} & \  &  \ & \ & \ \\
		&\	&\ &  \  &  \ & \ & \ & \ & \ & \ \\
		B\mathscr{N} \ar[rrrr, bend left=70,"\rho_{\mathscr{N}}"] \ar[rr, "\psi_{h\otimes h'}", bend left=35]  \ar[rrr, "\psi_{(h\otimes h')\otimes h''}", bend left=50] \ar[r, "\psi_h",swap] & B\mathscr{N} \ar[r, "\psi_h'",swap]  \ar[u, Rightarrow,   shorten >=5]&	B \mathscr{N} \ar[r, "\psi_h''",swap]  \ar[uu, Rightarrow,   shorten >=15] & \ar[uu, Rightarrow, "\rho_{(h\otimes h')\otimes h''}" near start,  shorten >=0,swap] B \mathscr{N} \ar[r, "\rho_{\mathscr{N}}",swap] & \BiCat	
	\end{tikzcd} \ \ ,
\end{equation}
where the unlabelled morphisms are part of the action of $\mathscr{H}$ on $B\mathscr{N}$
\begin{itemize}
	\item For all 2-morphisms $\gamma \colon h \to h'$ 
	modifications
\end{itemize} 
\begin{equation}
	\begin{tikzcd}
		& \ &  \\
		B\mathscr{N} \ar[r,"\psi_h",swap] \ar[rr,"{\rho_{\mathscr{N}}}", bend left=60] & \ar[u, Rightarrow, "\rho_h"] B \mathscr{N} \ar[r,"\rho_{\mathscr{N}}",swap] & \BiCat
	\end{tikzcd}
	\begin{tikzcd}
		\tarrow["\rho_\gamma"]{r}  & \
	\end{tikzcd}
	\begin{tikzcd}
		& \ & \ & \ \\
		B\mathscr{N} \ar[rr,"\psi_h",swap] \ar[rr,"\psi_{h'}",bend left=40] \ar[rrr,"{\rho_{\mathscr{N}}}", bend left=60] & \ar[u,Rightarrow, "\psi_{\gamma}" near start, shorten >=6] & \ar[ru, Rightarrow, "\rho_{h'}" near start, shorten >= 14] B \mathscr{N} \ar[r,"\rho_{\mathscr{N}}",swap] & \BiCat
	\end{tikzcd} 
\end{equation}
\begin{itemize}
	
	\item for all pairs of compossible 2-morphisms $\gamma_1\colon h \to h'$ and $\gamma_2 \colon h'\to h''$ perturbation 
	\begin{equation}
		\begin{tikzcd}
			& \ & \ & \ & \ & \ & \  \\
			B\mathscr{N} \ar[r,"\psi_h",swap] \ar[rr,"{\rho_{\mathscr{N}}}", bend left=60] & \ar[u, Rightarrow, "\rho_h"] B \mathscr{N} \tarrow["\rho_{\gamma' \circ \gamma}",swap]{dd} \ar[r,"\rho_{\mathscr{N}}",swap] & \BiCat  \tarrow["\rho_\gamma"]{r} &  B\mathscr{N} \ar[rr,"\psi_h",swap] \ar[rr,"\psi_{h'}",bend left=40] \ar[rrr,"{\rho_{\mathscr{N}}}", bend left=60] & \ar[u,Rightarrow, "\psi_{\gamma}" near start, shorten >=6] & \ar[ru, Rightarrow, "\rho_{h'}" near start, shorten >= 14] B \mathscr{N} \ar[r,"\rho_{\mathscr{N}}",swap] & \BiCat \tarrow["\rho_{\gamma'}",swap]{dd} \qarrow["\rho_{\gamma,\gamma'}", shorten <= 40]{ddllll} \\ 
			& \ & \ & \ & \ & \ & \  \\
			& \ & \ & \ & \ & \ & \  \\
			& \ & \ & \ & \ & \ & \ & \  \\
			B\mathscr{N} \ar[rr,"\psi_h",swap] \ar[rr,"\psi_{h''}",bend left=40] \ar[rrr,"{\rho_{\mathscr{N}}}", bend left=60] & \ar[u,Rightarrow, "{\psi_{ \gamma' \circ \gamma}}" near start, shorten >=6] & \ar[ru, Rightarrow, "\rho_{h'}" near start, shorten >= 14] B \mathscr{N} \ar[r,"\rho_{\mathscr{N}}",swap] & \BiCat   & \tarrow["\psi_{\gamma,\gamma'}",swap]{l}	B\mathscr{N} \ar[rr,"\psi_h",swap] \ar[rr,bend left=20] \ar[rr,"\psi_{h''}" near end,bend left=45] \ar[rrr,"{\rho_{\mathscr{N}}}", bend left=60] & \ar[u,Rightarrow, "\psi_{\gamma}" near start, shorten >=15] \ar[u,Rightarrow, "\psi_{\gamma'}" near end, shorten <=16] & \ar[ru, Rightarrow, "\rho_{h'}" near start, shorten >= 14] B \mathscr{N} \ar[r,"\rho_{\mathscr{N}}",swap] & \BiCat
		\end{tikzcd}
	\end{equation}
	\item For all pairs of 2-morphisms $\gamma_1\colon h_1 \to 
	h'_1$ and $\gamma_2 \colon h_2 \to h_2'$ perturbations
\end{itemize} 
\begin{equation}
	\begin{tikzcd}
		& \ & \ & \ & \ & \ & \ & \ & \\
		&\	&\ &   \tarrow["\rho_{h_1,h_2}",  shorten >=15,  shorten <=15]{rr} & \ & \ & \ & \ & & \\
		B\mathscr{N} \ar[r, "\psi_{h_1}",swap] \ar[rrr, bend left =90, "\rho_{\mathscr{N}}"]& \tarrow[ "\rho_{\gamma_2} \circ  \rho_{\gamma_1}",  shorten >=15,  shorten <=15, swap]{ddd}	B \mathscr{N} \ar[r, "\psi_{h_2}",swap] \ar[rr, bend left=60, "\rho_{\mathscr{N}}"] \ar[uu, Rightarrow, "\rho_{h_1}",  shorten >=5] & \ar[u, Rightarrow, "\rho_{h_2}",  shorten >=0,swap] B \mathscr{N} \ar[r, "\rho_{\mathscr{N}}",swap] & \BiCat	& & 	B\mathscr{N} \ar[r, "\psi_{h_1}",swap] \ar[rrr, bend left =90, "\rho_{\mathscr{N}}"] \ar[rr, bend left=60, "\psi_{(h_1\otimes h_2)}"] &	B \mathscr{N} \ar[r, "\psi_{h_2}",swap]  \ar[u, Rightarrow,   shorten >=0,swap] \qarrow["\rho^{\otimes}_{\gamma_1,\gamma_2}", shorten <= 40]{ddllll} & \tarrow[ "\rho_{\gamma_2 \otimes \gamma_1} ",  shorten >=15,  shorten <=15]{ddd}  B \mathscr{N} \ar[r, "\rho"] \ar[uu, Rightarrow, "\psi_{h_1\otimes h_2}",  shorten >=5,swap] & \BiCat \\
		& \ & \ & \ & \ & \ & \ & \ & \\
		& \ & \ & \ & \ & \ & \ & \ & \\
		& \ & \ & \ & \ & \ & \ & \ & \\
		&\	&\ &   \tarrow[  shorten >=15,  shorten <=15]{rr} & \ & \ & \ & \ & & \\
		B\mathscr{N} \ar[r, "\psi_{h_1}",swap, ""{name=h1, above}] \ar[r, "\psi_{h_1'}", bend left=40, ""{name=h1', below}] \ar[rrr, bend left =90, "\rho_{\mathscr{N}}"]&	B \mathscr{N} \ar[r, "\psi_{h_2'}", bend left=40, ""{name=h2', below}] \ar[r, "\psi_{h_2}",swap, ""{name=h2,above}] \ar[rr, bend left=65, "\rho_{\mathscr{N}}"] \ar[uu, Rightarrow, "\rho_{h_1'}",  shorten >=5] & \ar[u, Rightarrow, "\rho_{h_2'}",  shorten >=0,swap] B \mathscr{N} \ar[r, "\rho_{\mathscr{N}}",swap] & \BiCat	& & 	B\mathscr{N} \ar[r, "\psi_{h_1}",swap] \ar[rrr, bend left =90, "\rho_{\mathscr{N}}"] \ar[rr, bend left=80, "\psi_{(h'_1\otimes h'_2)}"] \ar[rr, bend left=30, "\psi_{(h_1\otimes h_2)}"] &	B \mathscr{N} \ar[r, "\psi_{h_2}",swap]  \ar[u, Rightarrow,   shorten >=7,swap] \ar[uu, Rightarrow,   shorten >=15, shorten <=20,swap] &  B \mathscr{N} \ar[r, "\rho"] \ar[uu, Rightarrow, "\psi_{h_1\otimes h_2}",  shorten >=5,  shorten >=5,swap] & \BiCat
		\arrow[Rightarrow, from=h1, to=h1'] \arrow[Rightarrow, from=h2, to=h2']
	\end{tikzcd}
\end{equation}

satisfying a large list of coherence conditions. Since 
4-categories have a slightly shaky foundations we want to highlight that the data above can be used to explicitly construct the data for action of
$\mathscr{N}\rtimes_{\psi} \mathscr{H}$ and the coherence conditions will exactly
correspond to the requirement that this defines an action. 

We will now sketch how to construct the data starting with an exact sequence $B\mathscr{N}\to 
B\mathscr{G} \to B\mathscr{H}$ satisfying the same simplifying assumptions as above explicitly. 
Let $\rho_{\mathscr{G}}\colon B \mathscr{G} \to \BiCat$ be an  action of $\mathscr{G}$ on a bicategory. 
The corresponding representation of $\mathscr{N}$ is $\rho_{\mathscr{N}}=  \iota^* \rho_{\mathscr{G}}$.
To describe the other data we use the inverse $\iota^{-1}\colon \lambda/* \to 
B \mathscr{N}$ described above. Note that 
\begin{equation}
	\begin{tikzcd}
		B \mathscr{N} \ar[rr, "{\iota}"]  \ar[rd] &  & \ar[ld] \lambda / *	 \\ 
		& B \mathscr{G} &  
	\end{tikzcd}
\end{equation} 
commutes strictly and hence  
\begin{equation}
	\begin{tikzcd}
		\lambda/ * \ar[rr, "{\iota^{-1}}"]  \ar[rd] & \ar[d, Rightarrow, "\tilde{s}"] & \ar[ld] B \mathscr{N}	 \\ 
		& B \mathscr{G} &  
	\end{tikzcd}
\end{equation}
commutes up to a natural isomorphism $\tilde{s}$. The value of $\tilde{s}$ at an object $\lambda(*)\xrightarrow{h} *$ is $s(h)^{-1}$. Using $\tilde{s}$ we define for all $H\in \mathscr{H}$ the natural isomorphism
\begin{equation}
	\begin{tikzcd}
		B\mathscr{N} \ar[r,"\iota"] \ar[rrrrd, "\rho_{\mathscr{N}}", bend left=60] &  \lambda / * \ar[r, "H"] &  \lambda / * \ar[d] \ar[rd] & \ar[ld, "\tilde{s}" near end, Leftarrow, shorten <=30 ]	\\
		& & B \mathscr{N} \ar[r]  & B\mathcal{G} \ar[r, "\rho_{\mathscr{G}}"] & \BiCat \\ 	 
	\end{tikzcd}
\end{equation}
where the upper part commutes strictly. This defines $\rho_{H}$, which concretely is given by
$\cat{B} \xrightarrow{\rho_{\mathscr{G}}(s(H)^{-1})} \cat{B}$. 
For a 2-morphism $\gamma_{\mathscr{H}}\colon h \longrightarrow h'$ in $B \mathscr{N}$ part of the modification  
\begin{equation}
	\begin{tikzcd}
		& \ &  \\
		B\mathscr{N} \ar[r,"\psi_h",swap] \ar[rr,"{\rho_{\mathscr{N}}}", bend left=60] & \ar[u, Rightarrow, "\rho_h"] B \mathscr{N} \ar[r,"\rho_{\mathscr{N}}",swap] & \BiCat
	\end{tikzcd}
	\begin{tikzcd}
		\tarrow["\rho_\gamma"]{r}  & \
	\end{tikzcd}
	\begin{tikzcd}
		& \ & \ & \ \\
		B\mathscr{N} \ar[rr,"\psi_h",swap] \ar[rr,"\psi_{h'}",bend left=40] \ar[rrr,"{\rho_{\mathscr{N}}}", bend left=60] & \ar[u,Rightarrow, "\psi_{\gamma}" near start, shorten >=6] & \ar[ru, Rightarrow, "\rho_{h'}" near start, shorten >= 14] B \mathscr{N} \ar[r,"\rho_{\mathscr{N}}",swap] & \BiCat
	\end{tikzcd} 
\end{equation} is the natural transformation 
\begin{equation}
\begin{tikzcd}
B \ar[dr, "\rho(\Gamma(\gamma))", swap]\ar[rr, bend left = 0, "\rho(s(h)^{-1}) "]& \ar[d, Rightarrow] & B	\\ 
 & B \ar[ru, "\rho(s(h')^{-1})", swap] &
\end{tikzcd}
\end{equation}  
which is given by applying $\rho$ to the morphism $s(h)^{-1} \longrightarrow s(h'^{-1})\otimes \Gamma(\gamma)$ constructed from $s(\gamma)$. 
The remaining data can be constructed similarly.  

\paragraph{Decompositions of fixed point categories} 
In the final part of this section we give a formula for the decomposition
of categories of homotopy fixed points. We use Kan extension in a 3-categorical setting to derive our
results abstractly. It is not clear to us how well developed the theory of those is, even though we only 
need the most basic properties. For this reason 
we will also give a more explicit description of the results in Remark~\ref{Rem: Explicit form of action}. 

Let $1\longrightarrow \mathscr{N}\xrightarrow{\iota}  \mathscr{G} \xrightarrow{\lambda} \mathscr{H} \longrightarrow 1$  
be an exact sequence of 2-groups where we assume for simplicity that $\epsilon$ is the identity. Equivalently, this can 
be described by a fibre sequence $B\mathscr{N}\to B\mathscr{G} \to 
B\mathscr{H}$ in the $\infty$-category of topological spaces ($\infty$-groupoids) or 2-groupoids. Let $\rho \colon B\mathscr{G} \to \BiCat$ be an action of $\mathscr{G}$ on a bicategory $\mathcal{B}$. The bicategory of homotopy fixed points was 
defined as the limit $\mathcal{B}^\rho \coloneqq \lim 
\rho$. This limit computes the right Kan extension 
\begin{equation}
	\begin{tikzcd}
		B \mathscr{G} \ar[r, "\rho"] \ar[d, "*", swap] & \BiCat \\ 
		* \ar[ru, "\text{Ran}_*\rho", swap]
	\end{tikzcd}
\end{equation}   
Using the composition law for right Kan extensions we can compute the category of fixed points in two steps. 
First we consider the functor $\text{Ran}_\lambda \rho \colon B\mathscr{H} \to \BiCat$. This corresponds to a new bicategory
with an action of $\mathscr{H}$ such that its bicategory of 
$\mathscr{H}$-fixed points agrees with $\mathcal{B}^\rho$. 
Next we will identify this bicategory explicitly: By the limit 
formula for right Kan extensions we can identify the value 
of $\text{Ran}_\lambda \rho$ with the limit over the slice 
2-category $\lambda/*$. Since the sequence is exact the 
inclusion
$\iota \colon B\mathscr{N} \to \lambda / *$  
is an equivalence of bicategories. This in turn implies that $\text{Ran}_\lambda \rho  \simeq \lim_{B\mathscr{H}} \iota^* \rho \simeq \mathcal{B}^{\iota^* \rho}$. Combining the discussion so far we have proven the following proposition
\begin{proposition}\label{Prop: Fixed points}
	Let $1\longrightarrow \mathscr{N}\xrightarrow{\iota}  \mathscr{G} \xrightarrow{\lambda} \mathscr{H} \longrightarrow 1$  
	be an exact sequence of 2-groups and $\rho \colon B\mathscr{G} \to \BiCat $ an action of $\mathscr{G}$ on a bicategory $\mathcal{B}$. There is an induced $\mathscr{H}$-action $\rho_{\mathscr{H}}$ 
	on the bicategory of $\mathscr{N}$-fixed points $\mathcal{B}^{\iota^* \rho}$ such that its bicategory of fixed
	points is equivalent to the bicategory of $\rho $-fixed points.
\end{proposition}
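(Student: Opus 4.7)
The plan is to verify the two main assertions of the proposition in sequence, using the abstract formalism of right Kan extensions in a 3-categorical setting along with the explicit pointwise formula. Recall that the bicategory of homotopy fixed points for any action $\sigma \colon B\mathscr{K} \to \BiCat$ is by definition $\mathcal{B}^\sigma = \lim \sigma$, which is the value at the terminal object of the right Kan extension $\operatorname{Ran}_{t_{\mathscr{K}}} \sigma \colon * \to \BiCat$ along the unique map $t_{\mathscr{K}} \colon B\mathscr{K} \to *$.

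First I would construct the induced $\mathscr{H}$-action on $\mathcal{B}^{\iota^* \rho}$. Consider the right Kan extension $\rho_{\mathscr{H}} := \operatorname{Ran}_\lambda \rho \colon B\mathscr{H} \to \BiCat$ of $\rho$ along the 2-functor $\lambda \colon B\mathscr{G} \to B\mathscr{H}$. Its value at the unique object is, by the pointwise formula, the limit of the composition $\lambda/* \to B\mathscr{G} \xrightarrow{\rho} \BiCat$. Since the sequence is exact, the canonical 2-functor $\iota \colon B\mathscr{N} \to \lambda/*$ (sending the unique object to the pair $(*, \epsilon \colon \lambda(*) \xrightarrow{=} *)$) is an equivalence of bicategories, and under this equivalence the diagram pulls back to $\iota^* \rho \colon B\mathscr{N} \to \BiCat$. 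Therefore
\[
\rho_{\mathscr{H}}(*) \simeq \lim_{B\mathscr{N}} \iota^* \rho = \mathcal{B}^{\iota^* \rho},
\]
and the functoriality of $\rho_{\mathscr{H}}$ in $B\mathscr{H}$ supplies the desired $\mathscr{H}$-action on this bicategory.

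Second, I would compare the iterated fixed points with the total fixed points by invoking the composition law for right Kan extensions. Factoring $t_{\mathscr{G}} \colon B\mathscr{G} \to *$ as $t_{\mathscr{H}} \circ \lambda$ gives
\[
\mathcal{B}^\rho \;=\; \operatorname{Ran}_{t_{\mathscr{G}}} \rho \;\simeq\; \operatorname{Ran}_{t_{\mathscr{H}}} \operatorname{Ran}_\lambda \rho \;=\; \operatorname{Ran}_{t_{\mathscr{H}}} \rho_{\mathscr{H}} \;=\; \bigl(\mathcal{B}^{\iota^* \rho}\bigr)^{\rho_{\mathscr{H}}},
\]
which yields the equivalence of the two bicategories of fixed points.

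The main obstacle I anticipate is that the 3-categorical theory of right Kan extensions into $\BiCat$ is not documented in the literature with the level of care we would like. To sidestep possible foundational issues, I would complement the abstract argument with a concrete verification: unpack $\rho_{\mathscr{H}}$ using the explicit inverse equivalence $\iota^{-1} \colon \lambda/* \to B\mathscr{N}$ built earlier in the appendix from a set-theoretic section $s$ of $\lambda$, so that an element $h \in \mathscr{H}$ acts on $\mathcal{B}^{\iota^* \rho}$ via the functor induced by $\rho(s(h)^{-1})$ together with the coherence data coming from the failure of $s$ to be multiplicative. Having such an explicit description of $\rho_{\mathscr{H}}$ then allows one to directly match an $\mathscr{H}$-fixed point in $\mathcal{B}^{\iota^* \rho}$ with a $\mathscr{G}$-fixed point in $\mathcal{B}$ by combining the two sets of fixed-point data along $s$, producing the equivalence without appealing to any delicate higher-categorical machinery.
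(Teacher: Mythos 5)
Your proposal is correct and follows essentially the same route as the paper: both compute $\mathcal{B}^\rho$ as a right Kan extension to the point, factor it through $\operatorname{Ran}_\lambda$, identify the pointwise value via the equivalence $\iota \colon B\mathscr{N} \to \lambda/*$ coming from exactness, and then invoke the composition law for Kan extensions. Even your caveat about the shaky foundations of 3-categorical Kan extensions and the remedy of unpacking the action explicitly via a set-theoretic section $s$ mirrors what the paper does in the remark following the proposition.
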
  
\begin{remark}\label{Rem: Explicit form of action}
	We describe the $\mathscr{H}$-action on $\mathcal{B}^\mathscr{N}$ explicitly. For this we first describe 
	the simpler action on $\lim_{ \lambda / *} \rho $. This is the 2-category of natural transformations of the 
	form
	\begin{equation}
		\begin{tikzcd}
			\lambda / * \ar[r] \ar[rr, bend right=60, "*", swap] & B\mathcal{G} \ar[r, "\rho"] & \BiCat \\ 	
			& \ar[u, Rightarrow] &
		\end{tikzcd} \ \ .
	\end{equation} 
	Such a natural transformation has an explicit description similar to the one given in Definition~\ref{Def: Hfixed point}. Its objects consists of 
	\begin{itemize}
		\item An object $F_h\in \mathcal{B}$ for every objects $*\xrightarrow{h} *$ in $ \lambda / *$ 
		\item For all 1-morphisms 
		\begin{equation}
			\begin{tikzcd}
				* \ar[dd,"g",swap] \ar[rd,"h"] & \\
				\ar[r, Rightarrow, "\omega" ] & * \\ 
				* \ar[ru,"h'", swap] &
			\end{tikzcd}
		\end{equation} a 1-morphism $F_{(g,\omega)}\colon \rho(g)[F_h] \to F_{h'}$. 
		\item 2-morphisms $F_{(g,\omega),(g',\omega')}$ implementing the compatibility with the composition 
		of 1-morphisms in $ \lambda / *$. 
		\item 	for all 2-morphisms $\Omega \colon (g',\omega') \longrightarrow (g, \omega)$ 2-isomorphisms 
		\begin{equation}
			\begin{tikzcd}
				\rho(g)[F_h] \ar[dd, "{\rho(\Omega)[F_h]} ", swap] \ar[rd,"{F_{(g,\omega)}}"] & \\ 
				\  & \ar[l,Rightarrow, "F_\Omega "near end,shorten <=0.4cm] F_{h'} \\ 
				\rho(g')[F_h] \ar[ru, "{F_{(g',\omega')}}",swap ] & 
			\end{tikzcd}
		\end{equation}
	\end{itemize} 
	satisfying similar relations to those in Definition~\ref{Def: Hfixed point}. One and two morphism have
	similar descriptions. Now the action of $\mathscr{H}$ on this category is induced by the functors 
	\begin{align}
		H\colon */\lambda & \longrightarrow \lambda / * \\
		*\xrightarrow{h}* & \longmapsto *\xrightarrow{h}*\xrightarrow{H}*
	\end{align} 
	for $H\in \mathscr{H}$. 
	
	To construct the action on the $\mathscr{N}$-fixed points we pick as always an inverse $\iota^{-1} \colon 
	\lambda / * \to B \mathscr{N}$. Recall, that we have the following natural equivalence
	\begin{equation}
		\begin{tikzcd}
			\lambda / * \ar[rr, "{\iota^{-1}}"]  \ar[rd] & \ar[d, Rightarrow, "\tilde{s}"] & \ar[ld]	B \mathscr{N} \\ 
			& B \mathscr{G} &  
		\end{tikzcd}
	\end{equation} 
	Now the equivalence between fixed point categories is 
	\begin{equation}
		\begin{tikzcd}
			\lambda / * \ar[r] \ar[rr, bend right=60, "*", swap] & B\mathcal{G} \ar[r, "\rho"] & \BiCat \\ 	
			& \ar[u, Rightarrow] &	
		\end{tikzcd} \longmapsto 
		\begin{tikzcd}
			B \mathcal{N} \ar[d] \ar[rd]	\\
			\lambda / * \ar[r] \ar[rr, bend right=60, "*", swap] & B\mathcal{G} \ar[r, "\rho"] & \BiCat \\ 	
			& \ar[u, Rightarrow] & 
		\end{tikzcd}
	\end{equation}   
	where the triangle on the right commutes strictly. The map in the other direction is 
	\begin{equation}
		\begin{tikzcd}
			B \mathscr{N} \ar[r] \ar[rr, bend right=60, "*", swap] & B\mathcal{G} \ar[r, "\rho"] & \BiCat \\ 	
			& \ar[u, Rightarrow] &	
		\end{tikzcd} \longmapsto 
		\begin{tikzcd}
			\lambda / * \ar[d] \ar[rd] & \ar[ld, "\tilde{s}" near end, Leftarrow, shorten <=30 ]	\\
			B \mathscr{N} \ar[r] \ar[rr, bend right=60, "*", swap] & B\mathcal{G} \ar[r, "\rho"] & \BiCat \\ 	
			& \ar[u, Rightarrow] & 
		\end{tikzcd}\label{Eq: map fixed points}
	\end{equation}   
	The $\mathscr{H}$-action on the bicategory of homotopy fixed points is the one transferred through these equivalences. Abstractly, 
	an element $H\in \mathscr{H}$ acts by 
	\begin{equation}\label{Eq: Action}
		\begin{tikzcd}
			B \mathscr{N} \ar[r] \ar[rr, bend right=60, "*", swap] & B\mathcal{G} \ar[r, "\rho"] & \BiCat \\ 	
			& \ar[u, Rightarrow] &	
		\end{tikzcd} \longmapsto 
		\begin{tikzcd}
			B\mathscr{N} \ar[r,"\iota"] \ar[rrrrd, "\rho_{\mathscr{N}}", bend left =60] &  \lambda / * \ar[r, "H^{-1}"] &  \lambda / * \ar[d] \ar[rd] & \ar[ld, "\tilde{s}" near end, Leftarrow, shorten <=30 ]	\\
			& & B \mathscr{N} \ar[r] \ar[rr, bend right=60, "*", swap] & B\mathcal{G} \ar[r, "\rho"] & \BiCat \\ 	
			& & & \ar[u, Rightarrow] & 
		\end{tikzcd}
	\end{equation}  
	This also makes the action of path $\gamma\colon H \to H'$ apparent.  
	
	Let us be more explicit on how the maps look. The image of an $\mathscr{N}$ fixed point $(F,F_n,F_{n,n'}, F_{\gamma_n})$ 
	under the map \ref{Eq: map fixed points} can be described as follows: Its underlying family of objects is $F_h=\rho(s(h)^{-1})[F]$, 
	the morphism $F_{g,\omega}\colon \rho(g) \rho(s(h)^{-1}) F \simeq \rho(g \otimes s(h)^{-1}) F \xrightarrow{\rho(s(g,\omega))} \rho(s(h')^{-1} \otimes \Gamma(g,\omega)) F \simeq \rho(s(h')^{-1}) \rho(\Gamma(g,\omega)) F   \xrightarrow{F_{\Gamma(g,\omega)}} \rho(s(h')^{-1})F $, and so one.
	Now the map~\ref{Eq: Action} just picks out the component $F_{H^{-1}}=\rho(H)[F]$ with the corresponding structure. For
	example the morphism $\rho(n).F_{H^{-1}} \to F_{H^{-1}}$ is the value of $F_{(n,\id)}$, where $(n,\id)$ is seen as a 1-morphism
	from $*\xrightarrow{H^{-1}}*$ to $*\xrightarrow{H^{-1}}*$. The action of path in $\mathscr{H}$ is now also easy 
	to read of. 
	
	We briefly comment on the coherence isomorphisms for this action. For this we note that
	\begin{align}
		H_\mathscr{N} \colon B \mathscr{N} \xrightarrow{\iota}  \lambda / * \xrightarrow{H}  \lambda / * \xrightarrow{\iota^{-1}} B \mathscr{N} \ \ 
	\end{align}
	is exactly the action of $\mathscr{H}$ on $B \mathscr{N}$ corresponding to the exact sequence
	which come with natural isomorphisms $H'_\mathscr{N} \circ H_\mathscr{N} \Longrightarrow (H'\otimes H)_\mathscr{N}$. 
	The natural isomorphism $\tilde{s}$ induces a natural transformation 
	\begin{equation}
		\begin{tikzcd}
			&\ & \\
			B \mathscr{N} \ar[r, "H_\mathscr{N}"] \ar[rr, bend left=60, "\rho"] & \ar[u, Rightarrow, "\rho_H",  shorten >=0,swap] B \mathscr{N} \ar[r, "\rho"] & \BiCat	
		\end{tikzcd}
	\end{equation}
	This is exactly part of the decomposition of the $\mathscr{G}$-action into $\mathscr{N}$ and
	$\mathscr{H}$ described above. Hence there are invertible modification $\omega$
	\begin{equation}
		\begin{tikzcd}
			& \ & \ & \ & \ & \ & \ & \ & \\
			&\	&\ &   \tarrow["\omega",  shorten >=15,  shorten <=15]{rr} & \ & \ & \ & \ & & \\
			B\mathscr{N} \ar[r, "H_\mathscr{N}'"] \ar[rrr, bend left =90, "\rho"]&	B \mathscr{N} \ar[r, "H_\mathscr{N}"] \ar[rr, bend left=60, "\rho"] \ar[uu, Rightarrow, "s",  shorten >=5] & \ar[u, Rightarrow, "s",  shorten >=0,swap] B \mathscr{N} \ar[r, "\rho"] & \BiCat	& & 	B\mathscr{N} \ar[r, "H_\mathscr{N}'"] \ar[rrr, bend left =90, "\rho"] \ar[rr, bend left=60, "(H\otimes H')_{\mathscr{N}}"] &	B \mathscr{N} \ar[r, "H_\mathscr{N}"]  \ar[u, Rightarrow,   shorten >=0,swap] &  B \mathscr{N} \ar[r, "\rho"] \ar[uu, Rightarrow, "s",  shorten >=5,swap] & \BiCat
		\end{tikzcd}
	\end{equation}
	which satisfies a natural coherences condition. 
	Using $\omega$ we can describe the coherence isomorphism concretely. It is given by
	\begin{equation}
		\begin{tikzcd}
			\	& \ & \ & \ & \ & \ & \ & \ & \ & \ \\
			&\	&\ &   &   \ & \ & \ & \ & \ & \ & \\
			B\mathscr{N} \ar[rrrr, bend right=60, "*",swap] \ar[r, "H_\mathscr{N}'"] \ar[rrrr, bend left =90, "\rho"]&	B \mathscr{N} \ar[r, "H_\mathscr{N}"] \ar[rrr, bend left=60, "\rho"] \ar[uu, Rightarrow, "s",  shorten >=5] & \ar[u, Rightarrow, "s",  shorten >=0,swap] B \mathscr{N} \ar[rr, bend right=40, "*",swap] \ar[rr, "\rho"] & \ & \BiCat \tarrow["\omega",  shorten >=1,  shorten <=1]{r}	& 	B\mathscr{N} \ar[r, "H_\mathscr{N}'"] \ar[rrrr, bend right=60, "*", swap] \ar[rrrr, bend left =90, "\rho"] \ar[rr, bend left=60, "(H\otimes H')_{\mathscr{N}}"] &	B \mathscr{N} \ar[r, "H_\mathscr{N}"]  \ar[u, Rightarrow,   shorten >=0,swap] &  B \mathscr{N} \ar[rr, "\rho"] \ar[rr, bend right =40, "*", swap] \ar[uu, Rightarrow, "s",  shorten <=5,swap] & \ &  \BiCat \\ 
			& \ & \ & \  \ar[u, Rightarrow, shorten <=5, "F"]  & \ & \ & \ & \ &  \ar[u, Rightarrow, shorten <=5, "F"] &
		\end{tikzcd}
	\end{equation}
	\begin{equation}
		=
		\begin{tikzcd}
			&\ & \ \\
			B \mathscr{N} \ar[rr, "(H \otimes H')_\mathscr{N}"] \ar[rrrr, bend left=35, "\rho"] \ar[rrrr, bend right=60, "*", swap] & & \ar[u, Rightarrow, "s",  shorten >=0,swap] B \mathscr{N} \ar[rr, "\rho"] \ar[rr,bend right=40, swap, "*"] & \ & \BiCat \\
			& & & \ar[u, Rightarrow, shorten <=5, "F"] & 	
		\end{tikzcd}
	\end{equation}
	where the last equality follows from the fact that there are no non-trivial natural transformation
	from $*$ to $*$. All the other coherence data can also be derived from the decomposition of the
	action.
\end{remark}

\section{Dualizability in symmetric monoidal bicategories}
\label{App:dualizable}

In this appendix, we will review dualizability in symmetric monoidal bicategories.
In the first subsection, we define adjoints of morphisms, duals of objects and fully dualizable objects.
We also discuss uniqueness of duals and the behavior of symmetric monoidal functors and natural transformations with respect to duals.
In the second subsection, we discuss the Serre automorphism and how it defines an $SO_2$-action on the core of the fully dualizable objects.
Next we provide the full $O_2$-action and its compatibility with symmetric monoidal actions of other $2$-groups.

\subsection{Duals and adjoints}
\label{Sec:duals}

We briefly list the necessary definitions on duals we will use in the next section and the main text.

\begin{definition}
Let $f: c_1 \to c_2$ be a $1$-morphism in a bicategory. 
A \emph{right adjoint} of $f$ is a $1$-morphism $f^R: c_2 \to c_1$ together with $2$-morphisms $\coev^R_f: \id_{c_1} \Rightarrow f^R \circ f$ and $\ev^R_f: f \circ f^R \Rightarrow \id_{c_2}$ satisfying the usual snake relations.
In that case we also call $f$ a \emph{left adjoint} of $f^R$.
\end{definition}

\begin{example}
The adjoint of a $1$-morphism in the bicategory of categories is an adjoint functor.
\end{example}

\begin{example}
Let $\mathcal{C}$ be a monoidal category seen as a bicategory $B \mathcal{C}$ with one object.
Then a morphism $c: * \to *$ in $B \mathcal{C}$ is the same as an object in $\mathcal{C}$.
A left adjoint of $c \in B \mathcal{C}$ is a left dual of $c \in \mathcal{C}$.\footnote{Some references call this a right dual instead.}
\end{example}

Adjoints have similar existence and uniqueness behavior to duals in monoidal categories.
For example, if a $1$-morphism has an adjoint, then two choices of adjoint are $2$-isomorphic and there is a unique such $2$-morphism preserving the unit and counit of the adjunction \cite[Proposition A.29]{schommerpriesthesis}.
Note that even though we will exclusively work in symmetric monoidal bicategories, the right and left adjoints of $1$-morphisms can be different.
Recall that a $1$-morphism $f: c_1 \to c_2$ is called a $1$-isomorphism or an equivalence if it admits a (pseudo-)inverse $g: c_2 \to c_1$, i.e. $f\circ g$ and $g \circ f$ are isomorphic to an identity $1$-morphism.
We will often use the following result. 
If $f$ is both an equivalence and admits a right adjoint, its unique up to isomorphism right adjoint $(f^R, \ev, \coev)$ realizes $f^R$ as the inverse of $f$, i.e. $\ev$ and $\coev$ are invertible. \cite[Proposition A.27]{schommerpriesthesis}.
If $f: c_1 \to c_2$ and $g: c_2 \to c_3$ admit right adjoints $f^R$ and $g^R$, then $f^R g^R$ is a right adjoint of $gf$ where the evaluation and coevaluation maps of $gf$ are induced by those of $f$ and $g$.

If $\phi: f \Longrightarrow g$ is a $2$-morphism between $1$-morphisms $c_1 \to c_2$ with specified right adjoints $f^R$ and $g^R$ and evaluation data, then there is an induced $2$-morphism $\phi^R: g^R \Longrightarrow f^R$.
Its existence and uniqueness are again similar to the existence and uniqueness of duals of morphisms between objects with specified duality data in a monoidal category.
More precisely, we can define $\phi^R$ by the explicit composition
\[
g^R = \id_{c_1} \circ g^R \overset{\coev^R_f}{\Longrightarrow} f^R \circ f \circ g^R \overset{\phi}{\Longrightarrow} f^R \circ g \circ g^R \overset{\ev^R_g}{\Longrightarrow} f^R \circ \id_{c_2} = f^R
\]
or equivalently as the unique $2$-morphism such that the square
\[
\begin{tikzcd}
f \circ g^R \arrow[r,"\phi", Rightarrow] \arrow[d,"\phi^R", Rightarrow] &  g \circ g^R \arrow[d,"{\ev_{g}^R}", Rightarrow] 
\\
f \circ f^R \arrow[r,"{\ev^R_{f}}", Rightarrow] & 1 
\end{tikzcd}
\]
commutes.

\begin{definition}
Let $\mathcal{C}$ be a symmetric monoidal bicategory.
A \emph{(left) dual} of an object $c$ consists of an object $c^\vee$, $1$-morphisms $\ev_c: c^\vee \otimes c \to 1$ and $\coev_c: 1 \to c \otimes c^\vee$ and so-called \emph{cusp $2$-isomorphisms} $\alpha$ and $\beta$ equating the the snake identities to the identity $1$-morphisms on $c^\vee$ and $c$ respectively. 
If $c$ admits a dual we call it \emph{1-dualizable}.
We call $c$ \emph{fully dualizable} or \emph{$2$-dualizable} if $\ev_c$ and $\coev_c$ both have right and left adjoints.
Let $\mathcal{C}^{\text{fd}} \subseteq \mathcal{C}$ denote the maximal sub $2$-groupoid on the fully dualizable objects.
Explicitely this means
\begin{itemize}
\item \textbf{Objects} are fully dualizable objects $c \in \mathcal{C}$;
\item \textbf{$1$-morphisms} between two fully dualizable objects are $1$-isomorphisms;
\item \textbf{$2$-morphisms} are $2$-isomorphisms between such $1$-isomorphisms.
\end{itemize}
\end{definition}

Note that an object of a symmetric monoidal bicategory is $1$-dualizable if and only it is dualizable in the homotopy $1$-category, but realizing two objects as each others duals in the bicategory is requires more data than realizing them as each others dual in the homotopy category.

\begin{remark}
There is also a notion of right dual.
Similarly to symmetric monoidal $1$-categories, left and right duals are canonically isomorphic using the symmetric braiding.
\end{remark}

\begin{remark} 
For $c$ to be fully dualizable, it suffices to require $\ev_c$ to have both adjoints.
It also suffices to require both $\ev_c$ and $\coev_c$ to have a left adjoint. \cite{piotr}
\end{remark}

\begin{remark}
For monoidal $1$-categories duals are unique in the sense that if $(c^\vee, \ev_c, \coev_c), (c'^\vee, \ev'_c, \coev'_c)$ are duality data  for the object $c$, then there is a unique isomorphism $s:c'^\vee \cong c^\vee$ intertwining the evaluation and coevaluation maps.
Explicitly $s$ is given by the composition
\[
c'^\vee \xrightarrow{\id \otimes \coev_c} c'^\vee \otimes c \otimes c^\vee \xrightarrow{\ev'_c \otimes \id} c^\vee
\]
with inverse
\[
c^\vee \xrightarrow{\id \otimes \coev'_c } c^\vee \otimes c \otimes c'^\vee \xrightarrow{\ev_c \otimes \id } c'^\vee.
\]
There is a similar more technical statement for $2$-categories for which we refer to \cite{piotr} for details. 
If $c$ is $1$-dualizable, then it always admits so-called coherent duality data, see \cite[Figure 3, Figure 4, Corollary 2.8]{piotr}.
Moreover, if $(c^\vee, \ev_c,\coev_c,\alpha,\beta)$ and $(c'^\vee, \ev'_c,\coev'_c,\alpha',\beta')$ are two coherent duality data, then there is a invertible $1$-morphism $s: c^\vee \to c'^\vee$ and two $2$-morphisms filling the diagrams
\[
\begin{tikzcd}
1 \arrow[r,"{\coev_c'}"] \arrow[rd,"{\coev_c}", swap] & c \otimes c'^\vee & \ & c'^\vee \otimes c \arrow[r,"{\ev'_c}"] & 1
\\
\ & c \otimes c^\vee \arrow[u,"1 \otimes s", swap] & \ & c^\vee \otimes c \arrow[u,"s \otimes 1"] \arrow[ru,"\ev_c", swap]
\end{tikzcd}
\]
satisfying two compatibility conditions with the $\alpha$s and $\beta$s.
Moreover, the choice of comparison data is unique up to a choice of $2$-morphism preserving the fillings of the two triangles above \cite[Lemma 2.18]{piotr}.
For example, $s$ could be given by the analogous expression as the one for monoidal $1$-categories above.
However, in this category level $s$ is only determined up to canonical $2$-isomorphism, so in practice we might want to choose a different $s$ and equip it with data preserving evaluations and coevaluations instead.
\end{remark}

A similar reasoning applies to the case of duals of $1$-morphisms in a bicategory; we could define them directly by a formula:
\begin{definition}
If $c_1,c_2$ are $1$-dualizable equipped with duality data and $f: c_1 \to c_2$  is a $1$-morphism, its \emph{dual} is defined to be the composition
\[
c_2^\vee \xrightarrow{\id \otimes \coev_{c_1}} c_2^\vee \otimes c_1 \otimes c_1^\vee \xrightarrow{\id \otimes f \otimes \id} c_2^\vee \otimes c_2 \otimes c_1^\vee \xrightarrow{\ev_{c_2} \otimes \id} c_1^\vee
\]
\end{definition}
This assembles into an equivalence of categories \cite[Lemma 2.5]{piotr}
\[
\Hom(c_1, c_2) \cong \Hom(c_2^\vee, c_1^\vee)
\]

However, it can be more natural to define them by the universal property, which in the $1$-categorical case uniquely specifies them, but in the $2$-categorical case only up to canonical $2$-isomorphism.
More specifically, if $c_1,c_2$ are $1$-dualizable equipped with duality data and $f: c_1 \to c_2$  is a $1$-morphism, we could define a dual of $f$ to consists of a $1$-morphism $f^\vee :c_2^\vee \to c_1^\vee$ together with $2$-isomorphisms filling the diagrams
\[
\begin{tikzcd}
c_2^\vee \otimes c_1 \arrow[r,"f"] \arrow[d,"g"] & c_2^\vee \otimes c_2 \arrow[d,"{\ev_{c_2}}"] & \ & c_2 \otimes c_1^\vee   & c_2 \otimes c_2^\vee \arrow[l,"f^\vee",swap]
\\
c_1^\vee \otimes c_1 \arrow[r,"{\ev_{c_1}}",swap] & 1 & \ & c_1 \otimes c^\vee_1 \arrow[u,"f"]  & \arrow[l,"{\coev_{c_1}}"] \arrow[u,"{\coev_{c_2}}",swap] 1
\end{tikzcd}
\]
satifying some natural coherence conditions.
This other definition of the dual is verified in \cite[Lemma 2.6]{janserre}.
However, since we will mostly be working with the inverse of the dual of $1$-morphisms, we will instead directly define $f^{\vee-1}:c_1^\vee \to c_2^\vee$ to be a dual inverse of $f$ when it comes equipped with filling of the same diagram but with the direction of $f^\vee$ reversed:
\[
\label{eq: dualinverse}
\begin{tikzcd}
\ & c_2^\vee \otimes c_2 \arrow[d,"{\ev_{c_2}}"] & \ & \   & c_2 \otimes c_2^\vee
\\
c_1^\vee \otimes c_1 \arrow[ur,"f^{\vee -1} \otimes f"] \arrow[r,"{\ev_{c_1}}",swap] & 1 & \ & c_1 \otimes c^\vee_1 \arrow[ur,"f \otimes f^{\vee -1}"]  & \arrow[l,"{\coev_{c_1}}"] \arrow[u,"{\coev_{c_2}}",swap] 1
\end{tikzcd}
\]
All statements we have seen of the form `choices of duals are essentially unique' can be succinctly summarized as follows. 
If $\mathcal{C}$ is a symmetric monoidal bicategory in which all objects admit duals, then there is a bicategory $\operatorname{CohDualPair} \mathcal{C}$ in which objects are cohorent dual pairs and $1$-morphisms are pairs consisting of a morphism and a choice of dual inverse.
The forgetful map from coherent dual pairs to the maximal sub 2-groupoid $\mathcal{C}^{\cong}$ of $\mathcal{C}$ specifying the underlying object
\[
\operatorname{CohDualPair} \mathcal{C} \to \mathcal{C}^{\cong}
\] 
is an equivalence of bicategories.
In practice often a canonical choice for duality data given an object of $c$ exists and otherwise it is useful to pick duality data.
Essentially all necessary choices involved in `picking duals coherently' can be obtained after choosing an inverse $s: \mathcal{C}^{\cong} \to \operatorname{CohDualPair} \mathcal{C}$ and the data specifying how $s$ is the inverse of the forgetful map.
We will from now on assume we have made this contractible choice.

We now briefly discuss how one would construct the $O_1$-action on the $2$-groupoid of dualizable objects coming from the cobordism hypothesis, as this does not seem to appear in the literature.
The functor underlying the action can be specified by picking for every object arbitrary duality data and for every invertible $1$-morphism an arbitrary choice of dual inverse as explained above.
This data can moreover be chosen functorially.
The desired canonical isomorphism $\hat{R}_c:= (R_{-,-})_{c}: c \cong c^{\vee\vee}$ in a symmetric monoidal bicategory specifying how the action squares to the identity is then a special case of uniqueness of duals.
Indeed, in a similar fashion to the $1$-categorical situation, one needs to realize that both are duals of $c^\vee$; $c$ is a right dual of $c^\vee$ instead, but this does not matter because of the symmetry.
By using the standard formula for the canonical morphism between two duals of the same object
we obtain the $1$-isomorphism $\hat{R}_c$ as
\[
c \xrightarrow{\id \otimes \coev_{c^\vee} } c \otimes c^\vee \otimes c^{\vee\vee} \xrightarrow{\sigma_{c,c^\vee} \otimes \id} c^\vee \otimes c \otimes c^{\vee\vee} \xrightarrow{\ev_c \otimes \id} c^{\vee\vee}.
\]
However, since $\hat{R}_c$ is only unique up to canonical $2$-isomorphism, it will be useful in practice to be able to pick it in a different way that preserves the duality data.
Note also that there is a $2$-isomorphism $\omega_c: \hat{R}_c^\vee \circ \hat{R}_{c^\vee} \cong \id_{c^\vee}$ uniquely given by the fact that both $1$-morphisms $\hat{R}_{c^\vee}^{-1}$ and $\hat{R}_c^\vee$ preserve the duality data and such isomorphisms are unique up to a canonical $2$-isomorphism.
Indeed, $\hat{R}_{c^\vee}$ is defined as the canonical $1$-isomorphism witnessing the fact that both $c^\vee$ and $c^{\vee\vee\vee}$ are duals of $c^{\vee \vee}$.

One convenient framework to pick the dual functor without making any choices is by working in the equivalent $2$-groupoid $\operatorname{CohDualPair} \mathcal{C}$.
Then the dual functor is most conveniently given on objects by
\[
(c,c^\vee,\ev,\coev,\alpha,\beta) \mapsto (c^\vee,c,\ev \circ \sigma_{c, c^\vee}, \sigma_{c^\vee,c} \circ \coev, \alpha', \beta')
\]
so that the double dual equals the identity by using the $2$-isomorphism $\sigma_{c^\vee,c} \circ \sigma_{c,c^\vee} \cong \id_{c \otimes c^\vee}$.
Because of the analogous well-understood situation in the homotopy $1$-category, we know that a choice of $\alpha'$ and $\beta'$ making the above a coherent dual pair exists and one should also be able to specify them explicitly.
We can then explicitly transport an $O_1$-action obtained in this way, along the equivalence by a choice of inverse $\mathcal{C} \to \operatorname{CohDualPair} \mathcal{C}$.
This gives a choice of dual $c^\vee$ for every object and a choice of inverse dual $f^{\vee-1}: c_1^\vee \to c_2^\vee$ for every invertible $1$-morphism $f: c_1 \to c_2$.
As such the section $s$ specifies for us a covariant functor $\mathcal{C} \to \mathcal{C}$ which is an inverse dual on $1$-morphisms.

We now sketch how symmetric monoidal functors preserve duals.
Let $F: \mathcal{C} \to \mathcal{C}$ be a symmetric monoidal functor between symmetric monoidal bicategories.
For simplicity assume all data is strictly unital and $\mathcal{C}$ has identity associator for the monoidal product.
Using that duals are unique, it suffices to make $F(c^\vee)$ into a dual of $F(c)$.
So let $(c, \coev_c,\ev_c, \alpha, \beta)$ be duality data for $c$.
The evaluation map will be the composition
\[
F(c^\vee) \otimes F(c) \to F(c^\vee \otimes c) \xrightarrow{F(\ev_c)} F(1) = 1
\]
and similar for the coevaluation.
The $2$-isomorphism specifying the snake identity is given by the filling of the diagram
\[
\begin{tikzcd}[column sep =80]
F(c) \arrow[dd, equals] \arrow[r,"{F(\coev_c) \otimes \id_{F(b)}}"] \arrow[rd,"{F(\coev_c \otimes \id_c)}", swap]& F(c \otimes c^\vee) \otimes F(c) \arrow[d] & \
\\
\ & F(c \otimes c^\vee \otimes c) \arrow[ld,"{F(\id_c \otimes \ev_c)}", swap] & F(c) \otimes F(c^\vee) \otimes F(c) \arrow[lu] \arrow[ld]
\\
F(c) & F(c) \otimes F(c^\vee \otimes c) \arrow[l,"{\id_{F(c)} \otimes F(\ev_c)}"] \arrow[u] & \
\end{tikzcd}
\]
The right part is filled by the associativity data of $F$ being monoidal, the left triangle is filled by $F(\beta)$ and the two remaining triangles are the naturality data of $F(c_1 \otimes c_2)\cong F(c_1) \otimes F(c_2)$ together with unitality.
There is an analogous filling of the other snake identity.
This shows $F(c^\vee)$ is a dual of $F(c)$. 

In particular, the $1$-isomorphism $F(c^\vee) \cong F(c)^\vee$ is given by a formula analogous to the $1$-categorical case
\[
F(c^\vee) \xrightarrow{\id \otimes \coev_{F(c)}} F(c^\vee) \otimes F(c) \otimes F(c)^\vee \cong F(c^\vee \otimes c) \otimes F(c)^\vee \xrightarrow{F(\ev_{c}) \otimes \id} F(c)^\vee
\]
The canonical $2$-isomorphism between the composition
\[
F(c) \xrightarrow{F(\phi_c)} F(c^{\vee\vee}) \cong F(c^\vee)^\vee \cong F(c)^{\vee\vee} \xrightarrow{\phi_{F(c)}^{-1}} F(c)
\]
and the identity is obtained by noting that we are comparing two $1$-isomorphisms that compare two choices of dual of $F(c)^\vee$ and such $1$-isomorphisms are unique up to canonical $2$-isomorphism.

\subsection{The Serre automorphism}
\label{Sec:serre}

In this section we will consider the Serre automorphism on fully dualizable objects of a symmetric monoidal bicategory $\mathcal{C}$.

\begin{definition}
If $c$ is fully dualizable in $\mathcal{C}$, its \emph{Serre automorphism} $S_c$ is defined to be the composition
\[
c \xrightarrow{\ev^R_c} c \otimes c \otimes c^\vee \xrightarrow{\sigma_{c,c}} c \otimes c \otimes c^\vee \xrightarrow{\ev_c} c.
\]
\end{definition}


\begin{remark}
The name `Serre automorphism' is motivated by the relationship with Serre duality, see \cite[Remark 4.2.4]{lurietft}.
\end{remark}

The following is \cite[Proposition 4.23]{janthesis}.

\begin{proposition}
\label{prop:serrenat}
Let $\mathcal{C}$ be a symmetric monoidal bicategory.
The Serre automorphism is a monoidal natural automorphism of the identity functor on the core of fully dualizable objects.
\end{proposition}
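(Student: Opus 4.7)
The plan is to reduce everything to properties of duals and adjoints, avoiding writing out the large pasting diagrams in full. First I would show that $S_c$ is indeed an automorphism (not merely an endomorphism). Since $c$ is fully dualizable, $\ev_c$ admits both a right adjoint $\ev_c^R$ and a left adjoint $\ev_c^L$, and the symmetry $\sigma_{c,c}$ is invertible. A candidate inverse is $S_c^{-1} := \ev_c \circ \sigma_{c^\vee,c}^{-1} \circ \ev_c^L$, and the snake identities together with the fact that $\ev_c^L \dashv \ev_c \dashv \ev_c^R$ give invertible $2$-morphisms $S_c \circ S_c^{-1} \cong \id_c \cong S_c^{-1} \circ S_c$.

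Next, for naturality along an invertible $1$-morphism $f: c_1 \to c_2$ in $\mathcal{C}^{\mathrm{fd}}$, I would construct a $2$-isomorphism $S_f : S_{c_2} \circ f \cong f \circ S_{c_1}$ as follows. Pick coherent dual data on $c_1$ and $c_2$ and a dual inverse $f^{\vee -1}: c_1^\vee \to c_2^\vee$ as in Diagram~\eqref{eq: dualinverse}. Then $f \otimes f \otimes f^{\vee -1}$ is compatible with the evaluation and the symmetry intertwines $\sigma_{c_1,c_1}$ with $\sigma_{c_2,c_2}$; applying the same compatibility to the adjoint $\ev_{c_2}^R$ (via the $2$-isomorphism between $\ev_{c_1}^R$ and $(f \otimes f \otimes f^{\vee -1}) \circ \ev_{c_2}^R \circ f^{-1}$ induced by uniqueness of adjoints) provides the desired $2$-isomorphism. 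The key abstract input is that any two choices of right adjoint to $\ev_{c}$ (for instance the one naturally obtained from pre- and post-composition with an invertible $1$-morphism) are canonically $2$-isomorphic. The coherence of $S_{(-)}$ with vertical composition of $2$-morphisms, and with horizontal composition $f_2 \circ f_1$ of $1$-isomorphisms, then follows from the uniqueness clause in the same way, so the explicit verification reduces to two applications of ``$2$-morphisms between adjoints are unique''.

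For monoidality, I would use that if $c_1, c_2$ are fully dualizable with chosen duality data, then $c_1 \otimes c_2$ obtains duality data with $(c_1 \otimes c_2)^\vee = c_2^\vee \otimes c_1^\vee$, $\ev_{c_1 \otimes c_2} = \ev_{c_1} \circ (\id \otimes \ev_{c_2} \otimes \id)$, and similarly for $\ev_{c_1 \otimes c_2}^R$ built from $\ev_{c_1}^R$ and $\ev_{c_2}^R$ (up to canonical $2$-isomorphism coming from uniqueness of adjoints). A hexagon argument using the symmetric braiding rewrites $\sigma_{c_1 \otimes c_2, c_1 \otimes c_2}$ as the interchange of the two copies of $c_2$ past the two copies of $c_1$, and this produces a canonical $2$-isomorphism $S_{c_1 \otimes c_2} \cong S_{c_1} \otimes S_{c_2}$. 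Similarly $S_1 \cong \id_1$ is immediate since $\ev_1$ is the unit isomorphism and the braiding on $1 \otimes 1$ is trivial. The remaining coherence data — compatibility of $S_{c_1 \otimes c_2} \cong S_{c_1} \otimes S_{c_2}$ with the associator, unitors, and braiding of $\mathcal{C}$ — all follow again from uniqueness of duality data (now on triple tensor products) and the coherence theorem for symmetric monoidal bicategories.

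The main obstacle is bookkeeping rather than ideas: one must carefully organize the chain of canonical isomorphisms so that every comparison arises as the unique $2$-isomorphism between two ways of exhibiting the same object (or $1$-morphism) as an adjoint, and then appeal to this uniqueness to dispose of all coherence conditions without writing out full pasting diagrams. The cleanest way is to work throughout in the bicategory of coherent dual pairs and use that the forgetful map to $\mathcal{C}^{\cong}$ is an equivalence, so that \emph{existence} and \emph{uniqueness up to canonical $2$-isomorphism} are both automatic once the formula for $S_c$ is written down.
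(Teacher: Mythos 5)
The paper does not actually prove this proposition: it is quoted verbatim from \cite[Proposition 4.23]{janthesis} (see also \cite[Propositions 2.8--2.9]{janserre}), so there is no in-paper argument to compare yours against line by line. The proofs in those references proceed by explicit pasting/string-diagram computations in the symmetric monoidal bicategory: the naturality square for an equivalence $f$ is filled by hand using the compatibility of $\ev$ with $f$ and its dual together with naturality of the braiding, and the monoidality isomorphism $S_{c_1\otimes c_2}\cong S_{c_1}\otimes S_{c_2}$ is likewise exhibited by a concrete diagram. Your route is genuinely different in flavour: you propose to characterize every comparison $2$-morphism as the unique one between two presentations of the same adjoint (working in $\operatorname{CohDualPair}\mathcal{C}$, where the forgetful map to $\mathcal{C}^{\cong}$ is an equivalence) and let uniqueness dispose of all coherence. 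This is the strategy the paper itself endorses in Appendix~\ref{Sec:duals} for the $O_1$-action, and it buys you shorter verifications at the cost of having to set up each comparison so that it really is an instance of ``uniqueness of adjoints preserving unit and counit''; the diagrammatic proofs buy explicitness at the cost of length.

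Two places in your sketch carry the real content and are currently asserted rather than proved. First, the invertibility of $S_c$: the existence of the double adjunction $\ev_c^L\dashv\ev_c\dashv\ev_c^R$ together with the snake identities does \emph{not} formally imply that your candidate $S_c^{-1}$ is a two-sided inverse --- the unit and counit of the relevant composite adjunction must be shown invertible, and this genuinely uses the symmetry of the braiding and the cusp isomorphisms (for a merely braided $\mathcal{C}$ the analogous statement is more delicate). This is a known computation, but it is a computation, not a consequence of adjunction formalism alone. Second, your reduction of the naturality-in-$2$-morphisms and monoidal-coherence checks to ``uniqueness'' only works if each comparison $2$-isomorphism (e.g.\ $S_f$, and $S_{c_1\otimes c_2}\cong S_{c_1}\otimes S_{c_2}$) is pinned down by a property that is itself visibly stable under vertical/horizontal composition and under the associator and braiding; you acknowledge this but do not exhibit the characterizing property. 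Neither point is a wrong turn --- both can be completed --- but as written the proof defers exactly the steps that the cited references spend their pages on.
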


The Serre defines the $SO_2$-part of the $O_2$-action on $\mathcal{C}^{\operatorname{fd}}$ in the following sense.
An $SO_2$-action on a bicategory is specified by a single generating natural automorphism of the identity functor because $SO_2$ is a $B\Z$.
The $SO_2$-action on $\mathcal{C}^{\operatorname{fd}}$ coming from the cobordism hypothesis is the one induced by the Serre automorphism using the above proposition\cite{janserre} \cite[Example 2.4.14]{lurietft}.
In practice, this means that positive integers $n>0$ will act by
\[
\rho(n)c = \underbrace{S_c \circ \dots \circ S_c}_n.
\]
For negative integers $n<0$ we will have to pick an inverse $S_c^{-1}$ of $S_c$ (which is unique up to $2$-isomorphism) and define
\[
\rho(n)c = \underbrace{S_c^{-1} \circ \dots \circ S_c^{-1}}_{-n}.
\]
There is actually a canonical choice given by the composition
\[
c \xrightarrow{\ev^L_{c} \otimes \id} c^\vee \otimes c \otimes c \xrightarrow{\id \otimes \sigma_{c,c}} c^\vee \otimes c \otimes c \xrightarrow{\ev_c \otimes \id} c
\]
, see \cite[Section 1.3]{douglasSPsnyder}.
We will enhance this $SO_2$-action to an $O_2$-action in the next section.

\begin{proposition}
\label{prop:preserve serre}
Let $F: \mathcal{C} \to \mathcal{C}$ be a symmetric monoidal functor.
Assume every object is fully dualizable and let $S:\id_{\mathcal{C}} \Longrightarrow \id_{\mathcal{C}}$ denote the Serre isomorphism.
There is a natural modification $F(S_c) \Rrightarrow S_{F(c)}$
\end{proposition}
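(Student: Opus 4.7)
The plan is to construct the modification by expressing both $F(S_c)$ and $S_{F(c)}$ as compositions of three pieces and identifying them piece by piece using the structure of $F$ as a symmetric monoidal functor.

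First I would unpack the definitions. By definition, $S_{F(c)}$ is the composition
\[
F(c) \xrightarrow{\ev^R_{F(c)}} F(c) \otimes F(c) \otimes F(c)^\vee \xrightarrow{\sigma_{F(c),F(c)}} F(c) \otimes F(c) \otimes F(c)^\vee \xrightarrow{\ev_{F(c)}} F(c),
\]
while $F(S_c)$ is $F$ applied to the analogous composition in $\mathcal{C}$. Using the monoidality data of $F$, which provides a 2-isomorphism $F(S_c) \cong F(\ev_c) \circ F(\sigma_{c,c}) \circ F(\ev^R_c)$ after inserting the associativity and unit data, the task reduces to comparing the three factors of $F(S_c)$ with the three factors of $S_{F(c)}$.

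Next I would invoke the three structural facts about $F$ discussed in the previous section. (i)~Since $F$ is symmetric monoidal, the braiding data gives a canonical 2-isomorphism $F(\sigma_{c,c}) \cong \sigma_{F(c), F(c)}$ after passing through the monoidality isomorphism $F(c \otimes c) \cong F(c) \otimes F(c)$. (ii)~Symmetric monoidality also equips $F(c^\vee)$ with the structure of a dual of $F(c)$, and choosing our preferred dual $F(c)^\vee$ gives a canonical 1-isomorphism $F(c^\vee) \cong F(c)^\vee$ intertwining the evaluations; by essential uniqueness of duality data this produces a 2-isomorphism $F(\ev_c) \cong \ev_{F(c)}$ after pulling through the comparison maps. (iii)~Since $F$ restricts to an equivalence on the morphism categories of $\mathcal{C}^{\fd}$ (functors of bicategories preserve adjoints), $F$ sends adjunction data to adjunction data, so $F(\ev^R_c)$ is the right adjoint of $F(\ev_c)$ and by uniqueness of right adjoints up to canonical 2-isomorphism we obtain $F(\ev^R_c) \cong \ev^R_{F(c)}$ under the identification in (ii).

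Composing these three 2-isomorphisms vertically and combining them horizontally with the monoidality coherences of $F$ yields the desired invertible 2-morphism $\Phi_c \colon F(S_c) \Rightarrow S_{F(c)}$. The main obstacle is then verifying that the family $\{\Phi_c\}_c$ genuinely assembles into a modification, i.e.\ that for every invertible 1-morphism $f\colon c \to c'$ the square
\[
\begin{tikzcd}[column sep=2cm]
F(S_c) \circ F(f) \arrow[r, Rightarrow, "\Phi_c \bullet \id"] \arrow[d, Rightarrow] & S_{F(c)} \circ F(f) \arrow[d, Rightarrow]
\\
F(f) \circ F(S_c) \arrow[r, Rightarrow, "\id \bullet \Phi_c"] & F(f) \circ S_{F(c)}
\end{tikzcd}
\]
commutes, where the vertical 2-morphisms come from $F$ applied to the naturality of $S$ in $\mathcal{C}$ (Proposition~\ref{prop:serrenat}) and from the naturality of $S$ in $\mathcal{C}$ applied to $F(f)$. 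This reduces to checking that the uniqueness-of-duals 2-isomorphisms in (ii) and (iii) are themselves natural in 1-isomorphisms, which follows from the contractibility of the space of coherent duality data recalled in Section~\ref{Sec:duals} together with the fact that $F$ preserves this contractible structure. Once naturality is established, the essentially unique choice of all the comparison 2-isomorphisms guarantees that $\Phi$ is compatible with the monoidal structure, completing the construction of the modification.
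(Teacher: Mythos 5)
Your proposal is correct and follows essentially the same route as the paper's proof: decompose the Serre automorphism into its three constituent $1$-morphisms and identify each with its image under $F$ using the symmetric monoidality data, the essential uniqueness of duality data giving $F(c^\vee)\cong F(c)^\vee$, and the uniqueness of right adjoints, then observe that all comparison $2$-isomorphisms are natural and hence assemble into a modification. (The only blemish is a cosmetic index slip in your modification square, where the bottom row should involve $S_{c'}$ rather than $S_c$ for $f\colon c\to c'$.)
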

\begin{proof}
The $2$-isomorphism $F(S_c) \cong S_{F(c)}$ is uniquely specified by the fact that every symmetric monoidal functor preserves duals and adjoints of $1$-morphisms.
For example, recalling how the evaluation map is defined for realizing $F(c^\vee)$ as a dual of $F(c)$ and the uniqueness of duals isomorphism $F(c^\vee) \cong F(c)^\vee$ we get a filling of
\[
\begin{tikzcd}
1 \arrow[r,"{\ev^R_{F(c)}}"] \arrow[ddr,"{F(\ev_c^R)}", swap, bend right] & F(c)^\vee \otimes F(c) \arrow[d]
\\
\ & F(c^\vee) \otimes F(c) \arrow[d]
\\
\ & F(c^\vee \otimes c)
\end{tikzcd}
\]
by uniqueness of adjoints.
All relevant $2$-isomorphisms used in this definition define modifications.
\end{proof}


\subsection{The $O_2$-action on fully dualizable objects}
\label{Sec:O2-action}

If $\mathcal{C}$ is a bicategory in which every object is fully dualizable, the cobordism hypothesis says that there is a canonical $O_2$-action on $\mathcal{C}^{\operatorname{fd}}$, the full sub 2-groupoid on the fully dualizable objects of $\mathcal{C}$.
In this section we will spell out this action explicitly, the reader can compare with Section 16 and further of \cite{CSPdualizability}.
The $SO_2$-subgroup acts through the Serre automorphism of Section \ref{Sec:serre} and a reflection will act by the dual explained in Section \ref{Sec:duals}.
In the abstract setting of this section, we will therefore often assume $s: \mathcal{C}^{\cong} \to \operatorname{CohDualPair} \mathcal{C}$ is a specified choice of inverse of the forgetful map.

Because $O_2$ is a homotopy $1$-type, it is equal to its truncated 2-group $\pi_{\leq 1}(O_2)$.
We will use the skeletal $2$-group model, which we will now describe explicitly as the semidirect product $\Z_2 \rtimes B\Z$.
Namely, we have $\pi_0(O_2) \cong \Z/2, \pi_1(O_2) \cong \Z$ and any choice of reflection in the plane gives a splitting $O_2 \cong SO_2 \rtimes \Z/2$ of the exact sequence
\[
1 \to SO_2 \to O_2 \to O_1 \to 1
\]
so that the associator of the $2$-group is trivial.
In other words, the space $BO_2$ has trivial $k$-invariant.
After picking such a reflection, we get the following skeletal model for the 2-group. 
It has objects $\{+, -\}$ isomorphic to $\Z/2$ under $\otimes$. 
The morphisms are $\Hom(-,-) \cong \Hom(+,+) \cong \Z$ under composition and the other two morphism sets are empty.
We write morphisms as $n_+ \in \Hom(+,+)$ and $n_- \in \Hom(-,-)$ for integers $n$.
The monoidal structure on morphisms is defined by
\[
n_+ \otimes m_- = (n+m)_-, \quad m_- \otimes n_+ = (m-n)_-.
\]
This model can be explicitly realized as paths in $O_2 \subseteq \Aut S^1$ where $S^1 \subseteq \C$ is the unit circle as
\begin{align*}
n_+ \iff \gamma_{n_+}(t) z &=e^{int} z
\\
n_- \iff \gamma_{n_-}(t) z &=e^{int} \overline{z}
\end{align*} 
We now turn to how this 2-group acts on the core of fully dualizable objects.

Since $\rho(+) = \id$ by our assumptions on strictly preserving units, the first data we have to give is a functor $\rho(-): \mathcal{C}^{\operatorname{fd}} \to \mathcal{C}^{\operatorname{fd}}$ .
It is given by taking the dual on objects and morphisms
\begin{align*}
\rho(-) c = c^\vee \quad \rho(-)(f: c_1 \to c_2) = (f^{\vee})^{-1} \quad \rho(-)(\phi: f_1 \to f_2) = \phi^{\vee}
\end{align*}
Note that if we have a preferred choice of duals of $1$-morphisms, we have to make a choice of an inverse for every 1-morphisms to make the functor covariant, which is allowed because we are working on the core.
This also requires choosing between $(f^{\vee})^{-1}$ and $(f^{-1})^\vee$, which are however canonically $2$-isomorphic.
In this abstract setting it is therefore convenient to not pick duals on $1$-morphisms, but dual inverses directly as explained in Section \ref{Sec:duals}.
Unfortunately, when working with this $O_1$-action in practice we have to face the fact that that the duals of $1$-morphisms are usually more canonically defined than the dual inverse.
We will therefore also often rewrite diagrams in Section \ref{Sec:actions and hfps} in forms where it is not necessary to pick inverses whenever possible.

The data of the dual functor preserving composition of $1$-morphisms is easily derived by using the diagrams \ref{eq: dualinverse}.
Next we have to give the four natural isomorphisms $R_{g_1,g_2}$ for $g_i = \pm$.
The only nontrivial isomorphism is
\[
\hat{R} := R_{-,-}: \rho(- \otimes - ) = \rho(+) = \id \Longrightarrow \rho(-) \circ \rho(-).
\]
It is given by the natural isomorphism to the double dual $\hat{R}(c): c^{\vee \vee} \to c$ specifying uniqueness of duals, as discussed in Section \ref{Sec:duals}. 
In particular, if $f: c_1 \to c_2$ is a 1-morphism, $\hat{R}(f)$ is the canonical filling of the diagram
\[
\begin{tikzcd}
c_1 \arrow[r,"f"] \arrow[d, swap,"\hat{R}(c_1)"] & c_2 \arrow[d,"\hat{R}(c_2)"]
\\
c_1^{\vee \vee} \arrow[r,"f^{\vee \vee}"] & c_2^{\vee \vee}
\end{tikzcd}
\]
Heuristically this diagram is the analogue of the $1$-categorical fact that $f^{\vee \vee} = f$ under the identification $c^{\vee \vee} \cong c$.
The modifications $\omega_{g_1,g_2,g_3}$ are trivial except $\omega:= \omega_{-,-,-}$.
After removing the inverse on $\rho(-)(\hat{R}(c)) = \hat{R}(c)^{\vee -1}$, the result is a collection of 2-isomorphisms
\[
\omega_c: \hat{R}(c)^\vee \circ \hat{R}(c^\vee) \Rrightarrow \id_{c^\vee}
\]
specifying the agreement between the two identifications between $c^{\vee \vee \vee}$ and $c^{\vee}$.

We will now combine the above $O_1$-action with the $SO_2$-action from Section \ref{Sec:serre} to obtain a $O_2$-action.
We start by giving the collection of natural isomorphisms $\rho(n_-): \rho(-) \Longrightarrow \rho(-)$ obtained by whiskering Serre automorphisms with the dual functor.
For $n>0$ the action will be 
\[
\rho(n_-)c = \underbrace{S_{c^\vee} \circ \dots \circ S_{c^\vee}}_n
\]
while for $n< 0$ 
\[
\rho(n_-)c = \underbrace{S_{c^\vee}^{-1} \circ \dots \circ S_{c^\vee}^{-1}}_{-n}.
\]
Since the Serre automorphism gives a natural automorphism of the identity functor, these are natural too.
By our choice of definition, $\alpha$ can be taken trivial up to the choice of invertibility data $S_c \circ S_c^{-1} \to \id_c$.

The $R_{\gamma',\gamma}$ are given by inserting identifications $(S_c)^\vee \cong S_{c^\vee}$ whenever necessary. Before moving on we briefly explain how to obtain the isomorphism $(S_c)^\vee \cong S_{c^\vee}$: For this note that $(-)^\vee$ is a symmetric
monoidal functor $\cat{C}\longrightarrow \cat{C}^{\otimes \op}$ where $\cat{C}^{\otimes \op}$ is the symmetric monoidal bicategory build from $\cat{C}$ 
by reversing the direction of composition and tensor product. Next recall that by Proposition~\ref{prop:preserve serre} this implies
that there is a natural 2-isomorphism $(S_c)^\vee \cong S^{\otimes \op }_{c^\vee}$ where the Serre automorphism on the right is computed in $C^{\otimes \op}$. Hence to arrive at the desired 2-isomorphism it is left to show that $S_x^{\otimes \op}\cong S_x$ for an arbitrary element $x\in \cat{C}^{\otimes \op}$. It is straightforward to verify that
the dual of $x$ in $\cat{C}^{\otimes \op}$ is also $x^\vee$ with evaluation 
$\ev_x^{\otimes \op} = \sigma \circ \ev_x^R$ and coevaluation $\coev_x^{\otimes \op} = \coev_x^R \circ \sigma $. The claim now becomes a short straightforward computation using the graphical calculus for symmetric monoidal bicategories and the fact that the right adjoint of a morphism in $\cat{C}^{\otimes \op }$ is the left adjoint of the corresponding morphism in $\cat{C}$   
    
For example, if $\gamma = n_+$ and $\gamma' = m_-$, then the relevant diagram applied to an object $c$ becomes
\[
\begin{tikzcd}
c^\vee \arrow[r, equal] \arrow[d,"S_{c^\vee}^m \circ ((S^n_c)^\vee)^{-1}",swap]& c^\vee \arrow[d,"S^{m-n}_{c^\vee}"]
\\
c^\vee \arrow[r,equal] & c^\vee
\end{tikzcd}
\]
which we can indeed fill.
Note how the isomorphism $S_c^\vee \cong S_{c^\vee}$ implies that the Serre and the dual functor assemble into an $O_2$-action and not an $SO_2 \times O_1$-action, since the $O_1$-action is by inverse dual on $1$-morphisms.

Next we will explain the fact that if a $2$-group $\mathcal{G}$ acts symmetric monoidally on $\mathcal{C}$ we have an induced $O_2 \times \mathcal{G}$-action on fully dualizable objects.
On the level of data, this entails 
\begin{enumerate}
\item specifying the natural transformation between the symmetric monoidal functors $\rho(g)$ and the dual functor;
\item specifying the modification intertwining the double dual natural isomorphism $\phi_c$ with $\rho(g)$;
\item specifying the modification intertwining the Serre natural isomorphism with $\rho(g)$;
\item specifying the modification intertwining the dual functor with the symmetric monoidal natural transformations $\rho(\gamma)$ and $R_{g,g'}$.
\end{enumerate}
This data satisfies a large list of conditions, which holds by the cobordism hypothesis. 
We will not prove these conditions, since in this article we will assume the cobordism hypothesis holds.
However, we will use the above four points of data and so we proceed to describe them.
Point 1 was given at the end of Section \ref{Sec:duals} and point 3 in Section \ref{Sec:serre}.

It only remains to show how to obtain the modification specifying how symmetric monoidal natural transformations preserve duals.
Let $F_1,F_2$ be symmetric monoidal functors and $\phi: F_1 \Rightarrow F_2$ a symmetric monoidal natural transformation.
Assume we have chosen an inverse of $\operatorname{CohDualPair} \mathcal{C} \to \mathcal{C}$.
This gives coherent duality data on all objects so that $(.)^\vee$ is a covariant functor on the core.
It also choose the data of $F_1$ and $F_2$ preserving duals by looking at their extension to $\operatorname{CohDualPair} \mathcal{C}$. 
Then there is a canonical filling of the square
\[
\begin{tikzcd}
F_1 \circ (.)^\vee \arrow[d,"{\phi * \id_{(.)^\vee } }"] \arrow[r] 
& (.)^\vee \circ F_1 \arrow[d,"{\id_{(.)^\vee } * \phi}"]
\\
F_2 \circ (.)^\vee \arrow[r] 
& (.)^\vee \circ F_2
\end{tikzcd}
\]
Indeed, going through either direction of the diagram preserves duality data since $\phi$ is symmetric monoidal and the filling follows by essential uniqueness of $1$-isomorphisms specifying uniqueness of duals.

\section{Super algebras}\label{App: Super alg}

Let $\sVect$ denote the category in which objects are supervector spaces over the complex numbers, i.e. $\Z_2$-graded vector spaces $V = V_0 \oplus V_1$ and morphisms are linear maps that preserve the grading.
$\sVect$ is a symmetric monoidal category in which the braiding has the usual Koszul sign.
Let $\sVect^{\operatorname{fd}}$ denote the groupoid of finite-dimensional vector spaces and invertible even linear maps between them.
In the next subsection we will 
\begin{enumerate}
\item define a symmetric monoidal $\Z^B_2 \times B\Z^F_2$-action on $\sVect$ which we need in the main text to define one-dimensional topological field theories with spin-statistics and reflection structure;
\item show how the action extends to a $O_1 \times \Z^B_2 \times B\Z^F_2$-action on $\sVect^{\operatorname{fd}}$ by the cobordism hypothesis.
\end{enumerate}
In the subsequent sections we will then mimic this procedure for the bicategory of superalgebras:
\begin{enumerate}
\item In Section \ref{Sec:salgbasics} we will provide the basics of superalgebras and highlight some subtleties in defining $*$-algebras
\item In Section \ref{Sec:Z2xBZ2 action on sAlg} we will explicitly work out the $\Z^B_2 \times B\Z^F_2$-action on the bicategory of superalgebras.
The main tool is a technical lemma on how actions on the category of supervector spaces induce an action on superalgebras, of which the long but straightforward proof is included in Section \ref{Sec:induced action on sAlg}.
\item In Section \ref{Sec:salgdual} we provide explicit expressions for the sub $2$-groupoid of fully dualizable superalgebras and the $O_2$-action coming from the cobordism hypothesis, using the theory from Section \ref{App:dualizable}.
\item In Section \ref{Sec:total action} we unveil the complete picture of the $O_2 \times \Z^B_2 \times B\Z^F_2$-action used in the main text to study two-dimensional topological field theories with spin-statistics and reflection structure.
\end{enumerate}

\subsection{The $\Z_2^B \times B\Z_2^F$-action on supervector spaces}
\label{Sec:action on sVect}

Given $V \in \sVect$ its \emph{fermion parity operator} $(-1)^F_V$ is the even automorphism of order at most $2$ defined by the grading operator
\[
(-1)^F_V(v) = (-1)^{|v|} v
\]
on homogeneous vectors $v \in V$.
The \emph{complex conjugate vector space} $\overline{V}$ is defined to be equal to $V$ as an abelian group and we schematically write its elements as $\overline{v} \in \overline{V}$ for $v \in V$.
We define a scalar multiplication on $\overline{V}$ by
\[
\lambda \cdot \overline{v} := \overline{\overline{\lambda} \cdot v} \quad \lambda \in \C.
\]
We want to extend $V \mapsto \overline{V}$ to a $\Z_2^B$-action on $\sVect$ and $V \mapsto (-1)^F_V$ to a $B\Z_2^F$-action on $\sVect$.
Recall that $B\Z_2^F$ is the $2$-group with one object and two morphisms.
For the definition of an action of a $2$-group on a $1$-category, the reader can take Definition \ref{def:action} for actions on bicategories and restrict to the case where the bicategory only has identity $2$-morphisms.

To describe how $V \mapsto \overline{V}$ extends to a symmetric monoidal $\Z_2^B$-action $\tau$, there is still some freedom in the rest of the data we have to pick.
In particular, there are various possible `sign' choices to make, which are worthy of consideration for certain applications.
In this article, we always work with the canonical sign choices given by blindly following the Koszul sign rule, which can be briefly summarized as
\begin{align}\label{eq:signdata}
& f: V \to W\mapsto \overline{f}: \overline{V} \to \overline{W} \quad \overline{f}(\overline{v}) := \overline{f(v)}
\\
& \overline{V}_1 \otimes \overline{V}_2 \cong \overline{V_1 \otimes V_2}\quad  \overline{v}_1 \otimes \overline{v}_2 \mapsto \overline{v_1 \otimes v_2}
 \\
& V \cong \overline{\overline{V}} \quad v \mapsto \overline{\overline{v}}.
\end{align}
The first point tells us how $\tau(B)$ is a functor, the second makes $\tau(B)$ monoidal and the last gives the data of the natural transformation $R_{B,B}: \id_\sVect \to \tau(B) \circ \tau(B)$ telling us how $\tau(B)$ squares to one.
To make this into a $\Z_2^B \times B\Z_2^F$-action we now only have to check several conditions:

\begin{proposition}
The data described above defines a symmetric monoidal $\Z_2^B \times B\Z_2^F$-action on $\sVect$.
\end{proposition}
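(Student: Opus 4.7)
The plan is to exploit the product structure of the $2$-group $\Z_2^B\times B\Z_2^F$: specifying a symmetric monoidal action of it is equivalent to giving a symmetric monoidal action of $\Z_2^B$, a symmetric monoidal action of $B\Z_2^F$, and commutation data relating the two. I would verify these three pieces in turn, each of which reduces to an elementary Koszul-sign bookkeeping exercise.

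For the $\Z_2^B$-action I would first check that $V\mapsto\overline{V}$ and $f\mapsto\overline{f}$ define a $\C$-linear functor $\sVect\to\sVect$: linearity of $\overline{f}$ follows from $\overline{f}(\lambda\overline{v})=\overline{f(\overline{\lambda}v)}=\overline{\overline{\lambda}f(v)}=\lambda\overline{f(v)}$. The monoidality isomorphism $\overline{V_1}\otimes\overline{V_2}\xrightarrow{\sim}\overline{V_1\otimes V_2}$, $\overline{v}_1\otimes\overline{v}_2\mapsto\overline{v_1\otimes v_2}$, is natural and satisfies both the associativity pentagon and the unit constraints because it acts as the identity on underlying abelian groups; compatibility with the symmetric braiding is automatic since the Koszul sign depends only on degrees, which are preserved by $\tau(B)$. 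The involutivity datum $R_{B,B}\colon V\xrightarrow{\sim}\overline{\overline{V}}$, $v\mapsto\overline{\overline{v}}$, is $\C$-linear because two conjugations cancel, and the modification $\omega_{B,B,B}$ recording coherence of $\tau(B)^{\circ 3}$ is forced to be the identity on each component.

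For the $B\Z_2^F$-action I would observe that $(-1)^F_V$ is natural (since every morphism in $\sVect$ preserves degree), monoidal (since $(-1)^{|v|+|w|}=(-1)^{|v|}(-1)^{|w|}$), and squares to the identity (since $(-1)^{2|v|}=1$). The commutation datum between the two actions amounts to an identification of the horizontal compositions $\tau(B)\bullet(-1)^F$ and $(-1)^F\bullet\tau(B)$, which holds strictly at each component by the direct computation
\begin{equation*}
\overline{(-1)^F_V}(\overline{v})=\overline{(-1)^F_V(v)}=\overline{(-1)^{|v|}v}=(-1)^{|v|}\overline{v}=(-1)^F_{\overline{V}}(\overline{v}).
\end{equation*}

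The only point that might appear delicate is the verification of the higher coherence conditions entering the definition of a $2$-group action recorded in Remark~\ref{Rem: Action on Cat} (the pentagon-style condition on $\omega$, the compatibility of $R_{-,-}$ with composition of morphisms of $\mathcal{G}$, naturality of all the structure natural transformations, and so on). However, in the present situation every relevant comparison isomorphism is either literally the identity on the underlying abelian groups or a sign-twisted variant thereof, so each coherence equation collapses to a tautological check; no genuine obstacle arises.
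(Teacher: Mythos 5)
Your proposal is correct and follows essentially the same route as the paper: verify that $V\mapsto\overline{V}$ is a symmetric monoidal functor, that $(-1)^F$ is a monoidal natural automorphism of the identity squaring to one, check the strict commutation $\overline{(-1)^F_V}=(-1)^F_{\overline{V}}$ (which the paper derives from the relation $B(-1)^F=(-1)^FB$ in the $2$-group), and observe that all remaining coherences are identities. The sign computations you give match those in the paper's proof.
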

\begin{proof}
It is easy to check $\tau(B): V \mapsto \overline{V}$ is a functor.
For morphisms $f_1: V_1 \to W_1$ and $f_2: V_2 \to W_2$ the diagram
\[
\begin{tikzcd}
\overline{V_1 \otimes V_2}\arrow[d,"{\overline{f_1 \otimes f_2}}", swap] & \overline{V}_1 \otimes \overline{V}_2 \arrow[l] \arrow[d,"{\overline{f}_1 \otimes \overline{f}_2}"]
\\
\overline{W_1 \otimes W_2} & \overline{W}_1 \otimes \overline{W}_2 \arrow[l]
\end{tikzcd}
\]
clearly commutes.
There is a canonical isomorphism $\C \cong \overline{\C}$ given by complex conjugation.
The diagram
\[
\begin{tikzcd}[column sep=large]
\overline{V_1 \otimes V_2} \arrow[r,"{\overline{\sigma_{V_2, V_1}}}"] \arrow[d] & \overline{V_2 \otimes V_1} \arrow[d]
\\
\overline{V_1} \otimes \overline{V_2} \arrow[r,"{\sigma_{\overline{V}_2, \overline{V}_1}}",swap] & \overline{V_2} \otimes \overline{V_1}
\end{tikzcd}
\]
also clearly commutes.
So complex conjugation is a symmetric monoidal functor.

Because morphisms in $\sVect$ are all even, $V \mapsto (-1)^F_V$ defines a natural transformation $\tau((-1)^F): \id_{\sVect} \Longrightarrow \id_{\sVect}$.
Note that $(-1)^F_{\overline{\C}} =\id_{\overline{\C}} = \overline{(-1)^F_{\C}}$.
The natural transformation is monoidal by definition of the grading of a tensor product $(-1)^F_{V_1 \otimes V_2} := (-1)^F_{V_1} \otimes (-1)^F_{V_2}$.

It is easy to check $R_{B,B}$ is a monoidal natural transformation.
It is also easy to see that $R_{B,B}(\overline{V}) = \overline{R_{B,B}(V)}$.

The natural transformation $\overline{V} \to \overline{V}$ corresponding to the non-identity morphism in $\Z_2^B \times B \Z_2^F$ going from $B$ to itself is now determined by compatibility of the action with tensor products of morphisms in $\Z_2^B \times B\Z_2^F$.
Indeed, applying this compatibility to $(-1)^F$ and the constant path at $B$ implies that $\tau(B (-1)^F)$ is given by the horizontal composition $\id_{\tau(B)} \bullet \tau((-1)^F)$, so that
\[
\tau(B (-1)^F)[V] =\tau(B)[(-1)^F_V] = \overline{(-1)^F_V}.
\]
Because of the equation $B (-1)^F = (-1)^F B$ which holds in $\Z_2^B \times B \Z_2^F$, we obtain the required condition
\[
\overline{(-1)^F_V} = \tau(B (-1)^F)[V] = \tau((-1)^F B)[V] =\tau((-1)^F)[\overline{V}] = (-1)^F_{\overline{V}}
\]
which indeed is easy to check.
The other compatibility conditions for $\tau(\gamma)$ with respect to composition and tensor product for $\gamma$ one of the two morphisms of $\Z_2^B \times B\Z_2^F$ follow from the linear map $(-1)^F_V$ squaring to one.
\end{proof}

Next we restrict to the maximal subgroupoid on the fully dualizable objects, which is $\sVect^{\operatorname{fd}}$, the groupoid in which objects are finite-dimensional vector spaces and morphisms are even invertible linear maps.
By the cobordism hypothesis, it has an $O_1$-action which we proceed to describe.
Define the covariant functor $(-)^*: \sVect^{\operatorname{fd}} \to \sVect^{\operatorname{fd}}$ as being the dual on objects and the inverse of the dual on morphisms.
For concreteness, we specify duality data $V^* \otimes V \to \C$ as $f \otimes v \mapsto f(v)$ and the corresponding coevaluation data is $1 \mapsto \sum_i e_i \otimes \epsilon^i$, where $\{e_i\}$ is any basis of $V$ and $\{\epsilon^i\}$ the corresponding dual basis.
The natural transformation $\id_{\sVect^{\operatorname{fd}}} \Longrightarrow (-)^{**}$ is provided abstractly by the fact that both $V^{**}$ and $V$ are canonically dual to $V^*$ since $\sVect$ is symmetric.
It is easy to check that this natural isomorphism is given on $V \in \sVect^{\operatorname{fd}}$ by the evaluation of a functional at a vector 
\[
V \to V^{**} \quad v \mapsto \ev^V_v \quad \ev^V_v(f) = (-1)^{|f||v|} f(v) \quad v \in V, f\in V^*.
\]
We have now provided all the data of the $O_1$-action.

Finally we show how to combine the $O_1$-action and the $\Z_2^B \times B\Z_2^F$-action to a $O_1 \times \Z_2^B \times B\Z_2^F$ on $\sVect^{\operatorname{fd}}$.
This is a formal consequence of the cobordism hypothesis.
Indeed, recall that the groupoid of framed TFTs 
\[
\mathcal{C}^{\operatorname{fd}} \overset{\sim}{\leftarrow} \Fun_{\operatorname{sym}-\otimes}(\Bord^{\operatorname{fr}}_{1,0}, \mathcal{C})
\]
comes equipped with its $O_1$-action given by reversing the framing (or in 1-dimension equivalently the orientation) on the domain category.
Now, if $G$ acts symmetric monoidally on the target symmetric monoidal category $\mathcal{C}$, then it acts on topological field theories by postcomposition.
Given that the $O_1$-action is in the domain and the $G$-action is in the target, there is an induced $O_1 \times G$-action on $\mathcal{C}^{\operatorname{fd}}$.
For an explicit computation of the action, in practice, the main ingredient is the fact that the $G$-action is symmetric monoidal and so comes equipped with canonical data preserving the $O_1$-action.
Note that the $O_1 \times G$-action itself is also symmetric monoidal.
This will be relevant for understanding the symmetric monoidal structure on homotopy fixed points and so in particular for understanding the stacking of theories with spin-statistics and reflection structures.
However, this topic is outside of the scope of this paper and so we will work out the $O_1 \times G$-action as an action by ordinary functors.

We specialize to $\mathcal{C} = \sVect$ and the action of $G = \Z_2^B \times B\Z_2^F$ as above.
The only part of the $O_1 \times \Z_2^B \times B\Z_2^F$-action that is new is the commutation data between $O_1$ and $\Z_2^B$.
It is given by the canonical isomorphism $\overline{V^*} \cong \overline{V}^*$ given that both sides are the dual of $\overline{V}$.
Computing the isomorphism explicitly as the composition
\[
\overline{V^*} \xrightarrow{\id_{\overline{V^*}} \otimes \operatorname{coev}_{\overline{V}}} \overline{V^*} \otimes \overline{V} \otimes \overline{V}^* 
\cong \overline{V^* \otimes V} \otimes \overline{V}^* \xrightarrow{\overline{\operatorname{ev}_V} \otimes \id_{\overline{V}^*}} \overline{V}^*.
\]
 gives the canonical map
\[
\overline{f}(\overline{v}) = \overline{f(v)}
\]
For the curious reader, we remark that changing the monoidality data of the functor $\overline{(.)}$ as defined in \eqref{eq:signdata} by a sign $(-1)^{|v||w|}$ would also result in a sign change by $(-1)^{|v|}$ in the formula above.\footnote{This might seem unnatural to the mathematician, but this convention is sometimes used in the physics literature for dealing with Grassmann variables.}
The remaining conditions for obtaining a $O_1 \times \Z_2^B \times B\Z_2^F$-action are now easy to see by direct computation. 
Since for reflection structures, we will need to take fixed points for the action of the diagonal subgroup $\Z_2^R \subseteq O_1 \times \Z_2^B$ we will separately record the that $R = TB$ acts by $V \mapsto \overline{V}^*$ and the data of this action squaring to one
\[
V \cong \overline{\overline{V}^*}^*
\]
is a composition of the isomorphism exchanging the bar and the star with the two isomorphisms given by the double dual evaluation isomorphism and the double bar isomorphism given above.

\subsection{Basic definitions and the bicategory of super algebras}
\label{Sec:salgbasics}

Let $\mathbb{F}$ be a field either equal to $\R$ or $\C$.
A \emph{superalgebra} $A = A_0 \oplus A_1$ over $\mathbb{F}$ is an algebra over $\mathbb{F}$ with a $\Z/2$-grading that is respected by the multiplication. 
The usual definitions for normal algebras directly transfer to superalgebras, but one has to be careful about the Koszul sign rule. 
In other words, we consider superalgebras to be monoids in $\sVect$ with its interesting braiding.

Let $A$ and $B$ be super algebras over the complexes.
A \emph{(super)algebra} homomorphism $\phi: A\to B$ is an even algebra homomorphism.
We let $\sAlg_1$ denote the category of superalgebras with homomorphisms between them.
An \emph{$(A,B)$-bimodule $M$} is a $\Z_2$-graded $(A,B)$-bimodule such that the actions of $A$ and $B$ respect the grading.
Given an $(A,B)$-bimodule $M$ and an $(B,C)$-bimodule $N$ one can define an $(A,C)$-bimodule structure on the (graded) tensor product $M \otimes_B N$.
A \emph{homomorphism $\phi: M \to N$ of bimodules} is an even bimodule map.\footnote{We will not consider odd bimodule maps in this article, except in Section \ref{Sec:salgdual}}.
Superalgebras form a $1$-cateogory $\sAlg_1^\mathbb{F}$ in which objects are algebras and morphisms are algebra homomorphisms and a bicategory $\sAlg_\mathbb{F}$ in which objects are superalgebras, 1-morphisms $A \to B$ are $(B,A)$-bimodules and 2-morphisms are even bimodule maps.
We denote $\sAlg := \sAlg_\C$.
The composition of 1-morphisms is given by the relative tensor product.
Note that in our convention an $(B,A)$-bimodule $_B M_A$ has source $A$ and target $B$ and therefore composition is in the usual order $_C N_B \circ {}_B M_A := N \otimes_B M$.
With the monoidal structure $\otimes := \otimes_\mathbb{F}$ given by tensor product over $\mathbb{F}$ and the braiding with the appropriate Koszul sign
\[
A \otimes B \to B \otimes A \quad a\otimes b \mapsto (-1)^{|a| |b|} b \otimes a
\]
$\sAlg$ becomes a symmetric monoidal bicategory \cite{claudiathesis}.
A superalgebra is called a \emph{superdivision algebra} when all homogeneous elements are invertible.
Superdivision algebras need not to be division algebras in the ungraded sense.
Indeed, consider the first complex Clifford algebra
\[
A = \C l_1 = \frac{\C[e]}{(e^2-1)}
\]
as a superalgebra over $\C$ with odd generator $e$.
It is clearly a superdivision algebra. 
However, as an ungraded ring it is not even an integral domain, because $(1-e)(1+e) = 0$.

\begin{proposition}
\cite{wallgradedbrauer}
	There are ten superdivision algebras over $\R$:
	\[
	\R, \quad Cl_1, \quad Cl_2, \quad Cl_3, \quad \mathbb{H}, \quad Cl_{-3}, \quad Cl_{-2}, \quad Cl_{-1}, \quad \C, \quad\C l_1
	\]
\end{proposition}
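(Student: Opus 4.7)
The plan is to reduce to a case analysis on the even part $A_0$, which must be an ordinary $\R$-division algebra (since it contains the identity and every nonzero homogeneous even element is invertible). By Frobenius's theorem, $A_0 \in \{\R, \C, \mathbb{H}\}$. If $A_1 = 0$ then $A = A_0$, yielding the three purely even superdivision algebras $\R, \C, \mathbb{H}$. The remaining work concerns the case $A_1 \neq 0$, which should yield seven more algebras.

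Fixing any nonzero $e \in A_1$, I would first observe that $e$ is invertible with $e^{-1} \in A_1$ (since $e$ is odd and invertibility is about homogeneous components), and that left multiplication $A_0 \to A_1, \ a \mapsto ae$ is an $A_0$-linear bijection (injectivity from invertibility of $e$; surjectivity from $b = (be^{-1})e$ with $be^{-1} \in A_0$). Hence $A_1 = A_0 \cdot e$, and $A$ is completely determined by the pair $(\phi, \xi)$ where $\phi \in \Aut_\R(A_0)$ is defined by $ea = \phi(a) e$ and $\xi = e^2 \in A_0^\times$. Under the replacement $e \mapsto \lambda e$ with $\lambda \in A_0^\times$, the pair transforms to $(\operatorname{Ad}_\lambda \circ \phi, \ \lambda \phi(\lambda) \xi)$, and this is the equivalence relation we need to quotient by.

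Next I would run the case analysis.
\begin{itemize}
\item $A_0 = \R$: $\Aut_\R(\R)$ is trivial, so $\phi = \id$, and scaling $e \mapsto \lambda e$ rescales $\xi$ by $\lambda^2 > 0$. Hence $\xi$ is normalized to $\pm 1$, giving $Cl_{+1}$ and $Cl_{-1}$.
\item $A_0 = \C$: $\Aut_\R(\C) = \{\id, \overline{(\cdot)}\}$. If $\phi = \id$, then $\xi \in \C^\times$ is normalized to $1$ by any square root ($\C^\times$ is divisible), producing $\C l_1$. If $\phi = \overline{(\cdot)}$, then $\phi^2 = \id$ forces $\xi$ to be fixed by $\phi$, so $\xi \in \R^\times$; scaling by $\lambda \in \C^\times$ changes $\xi$ by $|\lambda|^2 > 0$, normalizing to $\pm 1$, producing $Cl_{+2}$ and $Cl_{-2}$.
\item $A_0 = \mathbb{H}$: every $\R$-algebra automorphism of $\mathbb{H}$ is inner (by Skolem-Noether), say $\phi = \operatorname{Ad}_q$; replacing $e$ by $q^{-1} e$ reduces to $\phi = \id$. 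Then $\xi$ is central in $\mathbb{H}$ hence real, and real rescalings of $e$ change $\xi$ by positive reals, normalizing to $\pm 1$, producing two algebras which I would identify with $Cl_{+3}$ and $Cl_{-3}$.
\end{itemize}

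The main obstacle is the identification of each resulting superalgebra with the Clifford algebra in the statement, which requires picking explicit generators and computing signs. For instance, to recognize the $(A_0 = \C, \phi = \overline{(\cdot)}, \xi = \pm 1)$ case as $Cl_{\pm 2}$, I would note that in $Cl_{\pm 2}$ with generators $e_1, e_2$, the even element $e_1 e_2$ squares to $-1$ and anticommutes with $e_1$, so $A_0 = \R\langle e_1 e_2 \rangle \cong \C$ and $e_1$ implements complex conjugation, with $e_1^2 = \pm 1$. An analogous computation for $Cl_{\pm 3}$ — identifying $(e_1 e_2 e_3)^2 = \mp 1$ and noting $e_1 e_2 e_3$ is odd and central with $\R\langle e_1 e_2, e_2 e_3 \rangle = \mathbb{H}$ — completes the identification. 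Finally, non-isomorphism of the ten algebras follows from the invariants $(A_0, \ A_1 = 0 \text{ or not}, \ \phi, \ \mathrm{sign}(\xi))$, which are preserved under graded isomorphism.
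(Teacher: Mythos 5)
Your argument is correct, and it is essentially the standard proof of Wall's classification; note that the paper itself offers no proof of this proposition — it only cites \cite{wallgradedbrauer} — so there is nothing internal to compare against. The reduction to the invariant pair $(\phi,\xi)=(e(\cdot)e^{-1}|_{A_0},\,e^2)$ modulo rescaling of the odd generator, combined with Frobenius's theorem for $A_0$, is exactly how this is usually done, and your case analysis and the identifications with the Clifford algebras all check out (e.g.\ $(e_1e_2e_3)^2=\mp1$ in $Cl_{\pm3}$ is right). Three small points to tighten. First, you should say you work with finite-dimensional superalgebras: Frobenius's theorem, and the proposition itself, fail without this. Second, in the case $A_0=\C$, $\phi=\overline{(\cdot)}$, the reason $\xi$ is real is not that ``$\phi^2=\id$ forces $\xi$ to be fixed''; rather $\phi(\xi)=e\,e^2\,e^{-1}=e^2=\xi$ holds for every superdivision algebra. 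Recording this identity together with $\phi^2=\operatorname{Ad}_\xi$ is worthwhile anyway, since these are precisely the consistency conditions one needs to check that each normalized pair really does define an associative superalgebra (all homogeneous elements are then automatically invertible, so realizability is immediate without appealing to the Clifford models). Third, in the non-isomorphism step a graded isomorphism may also act on $A_0$ by an $\R$-algebra automorphism, not just rescale $e$; a one-line check shows this merges none of your orbits (the automorphism group is trivial for $\R$, acts trivially on your normalized data for $\C$, and is inner for $\mathbb{H}$, hence already absorbed into $e\mapsto\lambda e$).
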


There is a natural `contravariant $\Z_2$-action' $-^\op$ on $\sAlg$: let $A$ and $B$ be superalgebras. 
The \emph{opposite superalgebra} $A^{\op}$ is $A$ as a vector space but with multiplication $a^{\op} b^{\op} = (-1)^{|a||b|} (ba)^{\op}$.
If $M$ is an $(A,B)$-super bimodule, then $M^{\op}$ is the $(B^{\op},A^{\op})$-super bimodule defined by $b^{\op} \cdot m^{\op} = (-1)^{|b||m|} (m b)^{\op}$ and similar for the right $A^{\op}$-action.
If $\phi: M \to N$ is a super bimodule map, then $\phi^{\op}: M^{\op} \to N^{\op}$ is defined to be equal to $\phi$ as a linear map, i.e. $\phi^{\op}(m^{\op}) = \phi(m)^{\op}$. 
The following straightforward lemma implies that this construction extends to a functor
$\sAlg \to \sAlg^\op$
\begin{lemma}\label{Lem: op bimod}
	Let $M$ be an $(A,B)$-bimodule and $N$ a $(B,C)$-bimodule.
	Then there is an isomorphism
	\[
	\label{eq:optensor}
	M^{\op} \otimes_{B^{\op}} N^{\op} \cong (N \otimes_B M)^{\op}, \quad \phi(m^{\op} \otimes n^{\op}) = (-1)^{|m||n|} (n \otimes m)^{\op}
	\]
\end{lemma}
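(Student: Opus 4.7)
The plan is to verify that the proposed formula $\phi(m^{\op} \otimes n^{\op}) = (-1)^{|m||n|}(n \otimes m)^{\op}$ descends to a well-defined map on the relative tensor product, check that it is a bimodule homomorphism for the $(C^{\op}, A^{\op})$-bimodule structures on both sides, and then construct a two-sided inverse by the analogous explicit formula. All of this reduces to systematic bookkeeping of Koszul signs using the concrete formulas for the opposite-bimodule structures given just before the statement of the lemma, namely $b^{\op} \cdot m^{\op} = (-1)^{|b||m|}(mb)^{\op}$ and its analogues on $N^{\op}$ and on $(N \otimes_B M)^{\op}$.

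First I would check the $B^{\op}$-balance. The only nontrivial case is to verify that for $b \in B$ homogeneous, applying $\phi$ to $(m^{\op} \cdot b^{\op}) \otimes n^{\op}$ agrees with applying it to $m^{\op} \otimes (b^{\op} \cdot n^{\op})$. Both compute a scalar multiple of $(n \otimes bm)^{\op}$ (after applying $B$-balance in $N \otimes_B M$), and the two accumulated sign factors can be seen to collapse to the same parity using $(|b|+|m|)|n| + |b||m| = |b||n| + |m||b| + |m||n|$, so $\phi$ factors through $\otimes_{B^{\op}}$.

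Next I would check that $\phi$ is a bimodule map. The left $C^{\op}$-action on the domain comes from the left $C^{\op}$-action on $N^{\op}$ (which is defined by $c^{\op}\cdot n^{\op} = (-1)^{|c||n|}(nc)^{\op}$), while on the codomain it comes from the structure of $(N\otimes_B M)^{\op}$ as a $(C^{\op}, A^{\op})$-bimodule. Writing both sides out, the intertwining is again an equality of Koszul sign monomials. The right $A^{\op}$-action is checked by a symmetric computation, using that the left $A$-action on $N \otimes_B M$ only touches the $M$ tensor factor.

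Finally I would define $\phi^{-1}\colon (N \otimes_B M)^{\op} \to M^{\op} \otimes_{B^{\op}} N^{\op}$ by $(n \otimes m)^{\op} \mapsto (-1)^{|m||n|} m^{\op} \otimes n^{\op}$, verify its well-definedness (this uses once more the sign identity already produced, but inserted into $B$-balance in $N \otimes_B M$ instead), and then observe $\phi \circ \phi^{-1} = \id$ and $\phi^{-1}\circ \phi = \id$ on generators, where the two factors of $(-1)^{|m||n|}$ manifestly cancel. The only real obstacle is keeping track of the signs carefully; there is no conceptual difficulty, and once the balance condition in Step 1 is established, all subsequent compatibilities follow the same pattern of sign collapse.
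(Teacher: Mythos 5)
Your proof is correct: the paper offers no argument for this lemma (it is introduced as ``the following straightforward lemma''), and the direct verification you outline --- checking $B^{\op}$-balance via the identity $(|b|+|m|)|n|+|b||m| = |b||n|+|m||b|+|m||n|$, checking the $(C^{\op},A^{\op})$-bimodule compatibility, and exhibiting the explicit inverse with the cancelling Koszul factor --- is exactly the intended computation. The signs all check out against the conventions $b^{\op}\cdot m^{\op}=(-1)^{|b||m|}(mb)^{\op}$ stated just before the lemma, so there is nothing to add.
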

Clearly $A^{\op \op} = A$ and $M^{\op \op} = M$.

A useful way of constructing 1-morphisms in $\sAlg$ is to twist actions by super algebra homomorphisms as explained in the following lemma. 

\begin{lemma}
Let $\phi: A \to B$ be a super algebra homomorphism. Define the $(B,A)$-module $B_{\phi}$ by $B$ as a left $B$-module, but with right $A$-action
\[
b_\phi \cdot a = (b \phi(a))_\phi.
\]
This assembles into a symmetric monoidal functor $\sAlg_1 \to \sAlg$ from the $1$-category of superalgebras with algebra homomorphisms as morphisms seen as a $2$-category with trivial $2$-morphisms, to the Morita $2$-category of superalgebras.
\end{lemma}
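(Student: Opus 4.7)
The plan is to specify the data of the symmetric monoidal pseudofunctor explicitly and then argue that the requisite coherence conditions all reduce to elementary identities in the algebras involved. On objects the functor is the identity $A \mapsto A$, and on a $1$-morphism $\phi : A \to B$ in $\sAlg_1$ we set it equal to the $(B,A)$-bimodule $B_\phi$ as in the statement. Since $\sAlg_1$ has only identity $2$-morphisms, there are no $2$-morphisms to send anywhere. The first substantive piece of data is the compositor: for a composable pair $A \xrightarrow{\phi} B \xrightarrow{\psi} C$ I would define the $(C,A)$-bimodule isomorphism
\[
c_{\psi,\phi} \colon C_\psi \otimes_B B_\phi \xrightarrow{\ \cong\ } C_{\psi \circ \phi}, \qquad c \otimes b \longmapsto c\,\psi(b),
\]
with inverse $c \mapsto c \otimes 1$. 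The unitor is the obvious bimodule isomorphism $A_{\id_A} \cong A$ given by the identity map on the underlying vector space.

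Next I would check the coherence of a pseudofunctor. Associativity of the compositor, i.e.\ that the two bimodule maps $D_\chi \otimes_C C_\psi \otimes_B B_\phi \to D_{\chi \circ \psi \circ \phi}$ coming from the two ways of bracketing agree, follows from associativity of multiplication in $D$ together with $\chi$ being an algebra homomorphism: both sides send $d \otimes c \otimes b$ to $d\,\chi(c)\,\chi(\psi(b)) = d\,\chi(c \psi(b))$. The unit coherences are equally trivial. For the symmetric monoidal structure, given $\phi_i : A_i \to B_i$ ($i=1,2$) I would define the monoidality isomorphism
\[
m_{\phi_1,\phi_2}\colon (B_1)_{\phi_1} \otimes_{\mathbb{F}} (B_2)_{\phi_2} \xrightarrow{\ \cong\ } (B_1 \otimes B_2)_{\phi_1 \otimes \phi_2}, \qquad b_1 \otimes b_2 \longmapsto b_1 \otimes b_2,
\]
which is a bimodule isomorphism because $(\phi_1 \otimes \phi_2)(a_1 \otimes a_2) = \phi_1(a_1) \otimes \phi_2(a_2)$ and the Koszul sign that appears when letting $a_1 \otimes a_2$ act on the right of $b_1 \otimes b_2$ matches the Koszul sign in the external action on $(B_1)_{\phi_1} \otimes (B_2)_{\phi_2}$. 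The unit isomorphism $\mathbb{F} \cong \mathbb{F}_{\id_{\mathbb{F}}}$ is the identity.

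The hexagon and pentagon diagrams for a symmetric monoidal pseudofunctor then reduce, after unwinding definitions, to the associativity and commutativity (up to Koszul sign) of tensor products of algebras together with the fact that $\phi_1 \otimes \phi_2$ is an algebra map. Compatibility with the braiding amounts to checking that the diagram
\[
\begin{tikzcd}
(B_1)_{\phi_1} \otimes_{\mathbb{F}} (B_2)_{\phi_2} \ar[r,"m"] \ar[d,"\sigma"'] & (B_1 \otimes B_2)_{\phi_1 \otimes \phi_2} \ar[d,"\sigma_*"] \\
(B_2)_{\phi_2} \otimes_{\mathbb{F}} (B_1)_{\phi_1} \ar[r,"m"'] & (B_2 \otimes B_1)_{\phi_2 \otimes \phi_1}
\end{tikzcd}
\]
commutes, where both vertical maps carry the Koszul sign $(-1)^{|b_1||b_2|}$; this is immediate. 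The main obstacle is really just the bookkeeping of Koszul signs in the symmetric monoidality data, but no nontrivial idea is required beyond what has already been recorded in Lemma~\ref{Lem: op bimod} and the discussion of the symmetric monoidal structure on $\sAlg$.
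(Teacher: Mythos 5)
Your proposal is correct and follows essentially the same route as the paper: the same compositor $c\otimes b\mapsto c\,\psi(b)$, the same observation that $A_{\id}=A$ with composition given by multiplication, the same monoidality isomorphism $(B_1)_{\phi_1}\otimes(B_2)_{\phi_2}\cong(B_1\otimes B_2)_{\phi_1\otimes\phi_2}$, and the same braiding compatibility via the bimodules induced by the braiding algebra isomorphisms. The only cosmetic difference is that you spell out a few more of the Koszul-sign checks that the paper leaves implicit.
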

\begin{proof}
If $A \overset{\phi}{\to} B \overset{\psi}{\to} C$ are maps of superalgebras, then there is a canonical isomorphism $C_\psi \otimes_B B_\phi \to C_{\psi \phi}$ of $(C,A)$-bimodules given by
\[
c_\psi \otimes b_\phi \mapsto c \psi(b)_{\psi \phi}
\]
It is clearly a well-defined left $C$-module map. It is also a right $A$-module map because
\[
(c_\psi \otimes b_\phi) a = c_\psi \otimes b \phi(a)_\phi 
\]
is mapped to 
\[
c \psi(b \phi(a))_{\psi \phi} = c \psi(b) \psi \phi(a)_{\psi \phi} = c \psi(b)_{\psi \phi} \cdot a.
\]
The diagram
\[
\begin{tikzcd}
D_{\chi} \otimes_C C_\psi \otimes_B B_\phi \arrow[r] \arrow[d] & D_\chi \otimes_C C_{\psi \phi} \arrow[d]
\\
D_{\chi \psi} \otimes_B B_\phi \arrow[r] & D_{\chi \psi \phi}
\end{tikzcd}
\]
clearly commutes.
The identity homomorphism $A \to A$ induces the $(A,A)$-bimodule $A_{\id} = A$ and the composition
\[
B_\phi \otimes_A A = B_\phi \otimes_A A_{\id} \to B_{\phi \circ \id}
\]
is equal to right multiplication in the $(B,A)$-module $B_\phi$.
There is a similar statement for left multiplication by $b \in B$.

The monoidal structure of the functor is induced by the canonical $(B_1 \otimes B_2, A_1 \otimes A_2)$-bimodule isomorphisms $(B_1)_{\phi_1} \otimes (B_2)_{\phi_2} \cong (B_1 \otimes B_2)_{\phi_1 \otimes \phi_2}$ given algebra homomorphisms $\phi_i: A_i \to B_i$ for $i=1,2$.
The symmetric structure is given by the invertible bimodules induced by the braiding algebra isomorphisms $A_1 \otimes A_2 \cong A_2 \otimes A_1$.
\end{proof}

\begin{remark}
If $\chi: B \to A$ is a algebra homomorphism, we can also define the $(B,A)$-module $_{\chi} A$, which is $A$ as a right $A$-module and the left $B$-action is given by composing with $\chi$.
However, we prefer to induce bimodules on the other side, because for this convention the tensor product now changes order under composition of two homomorphisms.
\end{remark}

\begin{remark}
If $\phi: A \to B$ is a superalgebra isomorphism, then there is an isomorphism of $(B,A)$-bimodules
\begin{equation}
\label{eq:left-right}
B_\phi \cong {}_{\phi^{-1}} A \quad b_\phi \mapsto {}_{\phi^{-1}} \phi^{-1}(b)
\end{equation}
\end{remark}

\begin{remark}
Since functors preserved invertible $1$-morphisms, an isomorphism $\phi: A \to B$ gives an invertible bimodule $B_\phi$.
Explicitly, the invertibility data is given by the bimodule isomorphism
\[
B_{\phi} \otimes_A A_{\phi^{-1}} \to B \quad b_{\phi} \otimes a_{\phi^{-1}} \mapsto b \phi(a)
\]
and a similar one on the other side.
\end{remark}

\begin{remark}
Given algebra homomorphisms $\phi_1,\phi_2: A \to B$, the induced bimodules $B_{\phi_1}$ and $B_{\phi_2}$ are isomorphic if and only if there exists an invertible $b\in B_{\operatorname{ev}}$ such that $\phi_1(a) b_0 = b_0 \phi_2(a)$ for all $a \in A$.
Indeed, if $T: B_{\phi_1} \to B_{\phi_2}$ is an even intertwiner, then $T(1_{\phi_1}) = (b_{0})_{\phi_2}$ for some even $b_0$.
Because $T$ is a left $B$-module map we derive $T(b_{\phi_1}) =  (b b_{0})_{\phi_2}$.
By invertibility of $T$ there exists a left inverse of $b_0$. Using the right module structure we can also conclude that $b_0$ also has a right inverse and hence is invertible.
Because $T$ is a right $A$-module map, we derive
\begin{equation}
b_{\phi_1} \cdot a = b \phi_1 (a)_{\phi_1} \mapsto b \phi_1(a) (b_{0})_{\phi_1}
\end{equation}
Since $(b b_{0})_{\phi_2} \cdot a = b b_0 \phi_2(a)_{\phi_2}$, we derive this works if and only if $\phi_1(a) b_0 = b_0 \phi_2(a)$.
\end{remark}

\begin{remark}
A useful isomorphism is the identification between the opposite algebra homomorphism and the opposite of the induced bimodule.
Namely, for $\phi: A \to B$ any algebra map, there is an isomorphism of $(A^{\op},B^{\op})$-bimodules
\begin{equation}
	\label{eq:op-iso}
	(B_\phi)^{\op} \cong {}_{\phi^{\op}} B^{\op} \quad (b_\phi)^{\op} \mapsto {}_{\phi^{\op}} b^{\op}.
\end{equation}
This is indeed a left $A^{\op}$-module map since
\begin{align*}
	a^{\op} \cdot b_{\phi}^{\op} = (-1)^{|a||b|} (b_{\phi} \cdot a)^{\op} = (-1)^{|a||b|} (b \phi(a)_{\phi})^{\op}
\end{align*}
is mapped to
\begin{align*}
	a^{\op} \cdot {}_{\phi^{\op}} b^{\op} = {}_{\phi^{\op}} \phi^{\op}(a^{\op}) \cdot b^{\op} = {}_{\phi^{\op}} \phi(a)^{\op} \cdot b^{\op} = {}_{\phi^{\op}} (-1)^{|a| |b|} (b \phi(a))^{\op}
\end{align*}
\end{remark}
We conclude the subsection with a proposition which is relevant when studying Morita equivalences between symmetric super Frobenius algebras.  
\begin{proposition}
	Let $A,B$ be super algebras and $M$ an invertible $(A,B)$-bimodule.
	Then $M$ induces a map
	\[
	f_M: \frac{B}{[B,B]} \to \frac{A}{[A,A]}
	\]
\end{proposition}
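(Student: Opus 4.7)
The strategy is to recognise $A/[A,A]$ as the super analogue of the zeroth Hochschild homology $HH_0(A)$, and the asserted map as a manifestation of its Morita invariance. I would first unpack the invertibility of $M$: it yields a $(B,A)$-bimodule $M^{-1}$ together with bimodule isomorphisms
\[
\mu\colon M^{-1}\otimes_A M \xrightarrow{\sim} B,\qquad \nu\colon M\otimes_B M^{-1} \xrightarrow{\sim} A.
\]
Here $[A,A]$ denotes the sub-super-vector-space generated by super-commutators $a_1 a_2 - (-1)^{|a_1||a_2|}a_2 a_1$, and analogously for $[B,B]$.

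The definition of $f_M$ proceeds in two stages. First define a $\C$-linear map $\widetilde f\colon M^{-1}\otimes M \to A/[A,A]$ by $\widetilde f(n\otimes m)=(-1)^{|n||m|}[\nu(m\otimes n)]$ on homogeneous elements, where the sign is the Koszul sign of the swap $m\leftrightarrow n$. I would verify that this factors through $M^{-1}\otimes_A M$: applying $\widetilde f$ to the $A$-balancing relation $na\otimes m = n\otimes am$ and using that $\nu$ is an $(A,A)$-bimodule map reduces, after a short Koszul-sign bookkeeping, to the super-commutator $(-1)^{|a||m|}[\nu(m\otimes n),a]$ in $A$, which vanishes in $A/[A,A]$. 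Composing with $\mu^{-1}$ then gives a well-defined $\C$-linear map $f_M\colon B\to A/[A,A]$.

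Second, I would show that $f_M$ annihilates $[B,B]$. For this I would exploit the fact that under $\mu$, the multiplication of $B$ corresponds to the contraction $(n_1\otimes m_1)\cdot(n_2\otimes m_2) = n_1\otimes \nu(m_1\otimes n_2)m_2 = n_1\nu(m_1\otimes n_2)\otimes m_2$ in $M^{-1}\otimes_A M$. Computing $f_M(b_1 b_2)$ and $(-1)^{|b_1||b_2|}f_M(b_2 b_1)$ on homogeneous lifts reveals that both return the same expression in $A/[A,A]$ up to a single additional swap of $A$-valued factors, i.e.\ a super-commutator in $A$, which is killed in the quotient. This is the same cyclicity computation that underlies the Morita invariance of $HH_0$ in the ungraded setting, adapted to the Koszul rule.

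The main subtlety is purely notational: tracking the Koszul signs across the two relations (the $A$-balancing and the super-cyclicity). A cleaner but less explicit alternative would be to invoke uniqueness of duals/adjoints in the symmetric monoidal bicategory $\sAlg$: the invertible $1$-morphism $M$ implements an equivalence between $A$ and $B$ as objects of $\sAlg$, hence identifies their trace $\operatorname{Tr}(\id)=A/[A,A]\cong B/[B,B]$ canonically, and $f_M$ is the induced isomorphism; one then observes that $f_{M^{-1}}$ provides a two-sided inverse, so $f_M$ is in fact an isomorphism, though the proposition only asserts the existence of the map.
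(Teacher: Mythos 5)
Your proposal is correct and follows essentially the same route as the paper: both define $f_M$ via the unit and counit of the Morita equivalence, sending $b$ with $\eta(b)=\sum_i n_i\otimes_A m_i$ to $\sum_i(-1)^{|n_i||m_i|}\,\epsilon(m_i\otimes_B n_i)$ modulo $[A,A]$, then check well-definedness against the $A$-balancing relation and show super-commutators in $B$ are killed by the same cyclicity argument. The only cosmetic difference is that the paper verifies the vanishing on $[B,B]$ by applying the $(B,B)$-bimodule property of $\eta$ to $b_1b-(-1)^{|b_1||b|}bb_1$ directly rather than transporting the multiplication of $B$ to $N\otimes_A M$, but the computation is the same.
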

\begin{proof}
	Let $\eta: B \to N \otimes_A M$ and $\epsilon: M \otimes_B N \to A$ implement the invertibility of $M$.
	Let $b \in B$ and write
	\[
	\eta(b) = \sum_i n_i \otimes_A m_i.
	\]
	Define 
	\[
	f_M([b]) = \left[ \sum_i (-1)^{|n_i| |m_i|} \epsilon(m_i \otimes_B n_i) \right].
	\]
	This is well-defined, because first of all
	\begin{align*}
	\sum_i (-1)^{|m_i| |n_i a|} \epsilon(m_i \otimes_B n_i a) &= 
	\sum_i (-1)^{|m_i| |n_i| + |m_i||a|} \epsilon(m_i \otimes_B n_i) a 
	\\
	&\sim \sum_i (-1)^{|m_i| |n_i| + |m_i||a| + |a| |m_i \otimes n_i|} a \epsilon(m_i \otimes_B n_i) 
	\\
	&=
	\sum_i (-1)^{|am_i| |n_i|} \epsilon(a m_i \otimes_B n_i)
	\end{align*}
	Secondly this is well-defined modulo commutators: if $b' = b_1 b - (-1)^{|b_1||b|} b b_1$, then $\eta(b') = b_1 \eta(b) - (-1)^{|b_1||b|} \eta(b) b_1$ and so
	\begin{align*}
	f_M(b') = \sum_i (-1)^{|b_1 n_i||m_i|} m_i \otimes_B b_1 n_i - (-1)^{|b_1| |b| + |n_i| |m_i b_1|} m_i b_1 \otimes_B n_i = 0.
	\end{align*}
	Indeed, it is easy to see that the signs check out when one notices that $|b| = |m_i| + |n_i|$ for all $i$.
\end{proof}

\begin{example}
	Let $\mathcal{A} = \bigoplus_{g \in G} A_g$ be a strongly graded algebra. 
	Then we get induced maps $f_g: \frac{A}{[A,A]} \to \frac{A}{[A,A]}$ coming from the $(A,A)$-bimodule $A_g$.
	Take the inverse to be $N = A_{g^{-1}}$ and $\eta,\epsilon$ are induced by the multiplication maps, which are invertible by assumption.
	It follows by the associativity of $\mathcal{A}$ that these from an adjoint equivalence.
	Pick 
	\[
	\sum_i a_{g^{-1}}^i \otimes_A a^i_g \in A_{g^{-1}} \otimes_A A_g \quad \text{such that} \quad \sum_i a_{g^{-1}}^i a^i_g \in A_{g^{-1}} = 1.
	\]
	Then 
	\[
	f_g([a]) = \left[ \sum_i (-1)^{|a| |a_g^i| + |a_{g^{-1}}||a_g^i|} a_g^i a a_{g^{-1}}^i \right]
	\]
	Note that if all elements involved are homogeneous, the sign is only nontrivial when $a_g^i$ and $a_{g^{-1}}^i$ are odd and $a$ is even.
	
	As a subexample, let $A$ be any superalgebra and consider the associated $\Z_2^F = \{1,c\}$-graded `spin-statistics' algebra
	\[
	\mathcal{A} = A \oplus Ac
	\]
	with multiplication given by $c^2 = 1$ and $ac = (-1)^{|a|} ca$.
	Take $g = c$, then $\eta(a) = a c \otimes_A c$ and so
	\[
	f_g([a]) = [cac] = [(-1)^{|a|} a].
	\]
\end{example}

\subsection{$*$-algebras}
\label{Sec:*-algs}

In contrast to the common conventions for $\Z_2$-graded $C^*$-algebras, we work with a definition of $*$-algebra that uses the Koszul sign rule.

\begin{definition}
	A \emph{(super) $*$-algebra} is a complex superalgebra $A$ together with a complex antilinear even involution $a \mapsto a^*$ such that $(ab)^* = (-1)^{|a||b|} b^* a^*$.
	In case that $(ab)^* = b^* a^*$ instead, we call $A$ a \emph{$\Z_2$-graded $*$-algebra}.
\end{definition}

The notion of a super $*$-algebra is equivalent to the notion of a $\Z_2$-graded $*$-algebra.
Namely, given a complex-antilinear super $*$-algebra $(A,*)$ we can define the structure of a $\Z_2$-graded $*$-algebra on $A$ by $a^\dagger = a^*$ if $a$ is even and $a^\dagger = -i a^*$ if $a$ is odd.
Then $\dagger$ is a $\Z_2$-graded complex-antilinear $*$-structure on $A$.
In particular, since $\Z_2$-graded $C^*$-algebras are well-studied, we can use the above identification to get a theory of super $C^*$-algebras.
Directly transferring the definition of a $\Z_2$-graded $C^*$-algebra\cite{blackadar} to the corresponding super $*$-algebra gives the following fairly convoluted definition. 

\begin{definition}
	A super $*$-algebra $A$ is called a \emph{super $C^*$-algebra} if it comes equipped with a complete norm $\|.\|: A \to \C$ such that $\|ab\| \leq \| a\| \|b\|$ and for all $a = a_0 + a_1 \in A = A_0 \oplus A_1$ we have the $B^*$-identity
	\[
	\| a_0 a_0^* - i a_1 a_1^* - i a_0 a_1^* + a_1 a_0^*\| = \| (a_0 + a_1) (a_0 - i a_1^*) \| = \| a\|^2.
	\]
\end{definition}

		In particular, note how $\operatorname{Spec} a^* a \subseteq i \R_{\geq 0}$ if $A$ is a super $C^*$-algebra.
		
		We will also use $\Z_2$-graded $*$-algebras over $\R$, where we replace the condition that $*: A \to A$ is complex antilinear by it simply being real linear.
The reader should be warned that the analogous notion of super $*$-algebra over $\R$ is not equivalent to $\Z_2$-graded $*$-algebra over $\R$.
For example $Cl_{+1}$ does not admit any structure of a super $*$-algebra.
Therefore this notion should not be used to develop $\Z_2$-graded $C^*$-theory and so we will stick to $\Z_2$-graded $C^*$-algebras over $\R$ instead.

\begin{example}
	Let $(V, \langle.,. \rangle)$ be a super Hermitian space, see Definition \ref{def:hermvect}.
	We define the (super Hermitian) adjoint $T^*$ as
	\[
	\langle T^* v, w \rangle = (-1)^{|T||v|} \langle v, T w \rangle.
	\]
	If $T^\dagger$ denotes the usual ungraded Hermitian adjoint with respect to the $\Z_2$-graded Hilbert space structure $\langle\cdot |\cdot \rangle$ as in equation \ref{eq:Z2Hilb}, then 
	\[
	T^\dagger  = 
	\begin{cases}
	T^*  & T \text{ even,}
	\\
	- i T^*  & T \text{ odd,}
	\end{cases}
	\]
	so the super Hermitian adjoint is related to the ungraded Hermitian adjoint in the same way that super $*$-algebras are related to $\Z_2$-graded $*$-algebras.
		Indeed, if $T$ is odd, $v$ is odd and $w$ is even, then
	\begin{align}
	- \langle T^* v, w \rangle = \langle v, Tw \rangle = i \langle v | T w\rangle = i \langle T^\dagger v |w \rangle = i \langle T^\dagger v,w \rangle.
	\end{align}
	Similarly, if $T$ is odd, $v$ is even and $w$ is odd, then
	\begin{align*}
	\langle T^* v, w \rangle = \langle v, Tw \rangle = \langle v | Tw \rangle = \langle T^\dagger v |w \rangle = -i \langle T^\dagger v,w \rangle.
	\end{align*}
	Using complex antilinearity in the first slot, we get $T^\dagger = -iT^*$.
	The above $*$ makes the superalgebra $\End V$ over $\C$ into a complex antilinear super $*$-algebra.
In this positive case $\End V$ becomes a $C^*$-algebra (a negative definite pairing would also work).
\end{example}

\begin{example}
	We show that every $*$-structure on $M_n(\C)$ is induced by a Hermitian pairing as in the last example or by a skew-Hermitian pairing in the sense that
	\[
	\langle v,w \rangle = - \overline{\langle w,v \rangle}
	\]
	Let $*$ denote the standard Hermitian conjugate transpose $*$-structure coming from the standard Hilbert space structure on $\C^n$ and let $\star$ be any $*$-structure.
	Then $a \mapsto a^{\star *}$ is a complex-linear algebra automorphism.
	By Skolem-Noether, there exists an invertible $a_0 \in M_n(\C)$ unique up to $\C^\times$ such that $a^{\star } = a_0 a^* a_0^{-1}$.
	Moreover, we need to have
	\[
	a = a^{\star \star} = a_0 a_0^{-1*} a a_0^* a_0^{-1} \Longrightarrow a_0^* a_0^{-1} \in Z(M_n(\C)) = \C 
	\]
	so that $a_0^* = za_0$ for some $z \in \C^\times$.
	We use the $\C^\times$-undeterminacy in the definition of $a_0$ to require that $z$ is real.
	This is possible since
	\[
	(w a_0)^* = z \overline{w} a_0 = \frac{z \overline{w}^2}{|w|^2} wa_0.
	\]
	Using that $a^{**} = a$ we get that $z = \pm 1$.
	Now recall that every Hermitian form on $\C^n$ is of the form 
	\[
	B(v,w) = \langle v, a w \rangle
	\]
	where $\langle \cdot, \cdot \rangle$ is the standard inner product and $a \in M_n(\C)$ is self-adjoint and invertible.
	Similarly a skew-Hermitian form is of this form for $a$ skew-adjoint.
	The Hermitian form that induces the $*$-structure $\star$ is
	\[
	B(v,w) := \langle v, a_0^{-1} w \rangle
	\]
	since
	\[
	B(T v,w) = \langle v, T^* a_0^{-1} w \rangle = \langle v, a_0^{-1} T^\star w \rangle = B(v,T^\star w).
	\]
	Also note that in the case $a_0$ is self-adjoint, $\star$ is a $C^*$-structure if and only if $a_0$ is either positive definite or negative definite.
\end{example}

\begin{example}
	There are two complex-antilinear super $*$-structures on $\C l_1$ given by a choice of sign in $e^* = \pm i e$.
	They correspond to the two complex-linear $\Z_2$-graded $*$-structures $e^\dagger = \pm e$.
	Picking the sign to be negative results in a $*$-algebra that can never satisfy the $B^*$-identity for any norm $\|.\|$.
	Indeed, for $a = 1 +e$ we have
	\[
	\|1 + e\|^2 = \| (1 + e)(1- ie^*)\| = \|(1+e)(1-e)\| =\| 0 \| = 0.
	\]
	Also note that $e^* e = - i$ has spectrum $\{-i\}$ and so there is no $*$-superrepresentation of $\C l_1$ for which the resulting matrix $e^* e$ has spectrum contained in $i\R_{\geq 0}$.
	Therefore this $*$-algebra is not super $C^*$.
\end{example}	

Our Koszul sign convention for the definition of a super $*$-algebra $A$ also has consequences for the correct definition of $\overline{A}$.
To understand this, first recall from Section \ref{Sec:fermrep} that if $(V,h)$ is a super Hermitian vector space, we defined the Hermitian vector space structure on $\overline{V}$ to be $(-1)^F_{\overline{V}} \circ \overline{h}$.
This had two separate reasons:
\begin{enumerate}
\item if $(V,h)$ is a super Hilbert space,  $(\overline{V},\overline{h})$ is in general not a super Hilbert spaces, but $(\overline{V}, (-1)^F_{\overline{V}} \circ \overline{h})$ is;
\item in Section \ref{Sec:1D} we found that this is related to the action on super Hermitian vector spaces for which fixed points give one-dimensional TFTs with reflection structure.
\end{enumerate}
The situation for super $*$-algebras $A$ is similar.
There is a $\Z_2 \times \Z_2$-action $A \mapsto A^{\op}$ and $A \mapsto \overline{A}$ on $\sAlg_1$ (our sign conventions are explained in Section \ref{Sec:Z2xBZ2 action on sAlg}).
Seeing $A$ as a fixed point in the $1$-category $\sAlg_1$ of superalgebras under the action $A \mapsto \overline{A}^{\op}$ the induced remaining action $A \mapsto\overline{A}$ on $*$-algebras gives $\overline{A}$ a $*$-structure
\[
\overline{A} \to \overline{\overline{A}^{\op}} \cong \overline{\overline{A}}^{\op}
\]
which is simply $\overline{a}^* = \overline{a^*}$.
Now if $A$ is $C^*$, then for every odd element $a \in A$ we have that $a^* a$ has spectrum contained in the positive imaginary axis.
But then in $\overline{A}$
\[
\Spec \overline{a}^* \overline{a} =  \Spec \overline{a^* a} = \overline{\Spec a^* a}
\]
is contained in the negative imaginary axis.
Therefore just like for super Hilbert spaces, we change the $*$ by a $(-1)^F$ defining
\[
\overline{a}^* := (-1)^{|a|} \overline{a^*}
\]
for homogeneous $a \in A$.
In other words, we change the data of $R$ squaring to $1$ in this action with the $B\Z_2^F$-action $(-1)^F$.

The analogue of the second reason above applies to defining the complex conjugate of a stellar algebra as in Section \ref{Sec:stellar}.

\subsection{The $\Z_2^B \times B\Z_2^F$-action on superalgebras}
\label{Sec:Z2xBZ2 action on sAlg}

In this section we will discuss the $\Z_2^B \times B\Z_2^F$-action on $\sAlg$ induced by the $\Z_2^B \times B\Z_2^F$-action on $\sVect$ from Section \ref{Sec:action on sVect}.
This will be an application of the construction of a $\mathcal{G}$-action on $\sAlg$ coming from a symmetric monoidal $\mathcal{G}$-action on $\sVect$ worked out in Section \ref{Sec:induced action on sAlg}.
We will also specify how the action is symmetric monoidal because of the compatibility with the $O_2$-action as discussed in Section \ref{Sec:total action}.

We start by spelling out some of the most basic information contained in this action and slowly get more technical.
The $\Z_2^B$-action maps a superalgebra $A$ to its complex conjugate $\overline{A}$:

\begin{definition}
	\label{def:C-conj algebra}
The complex conjugate $\overline{A}$ of a superalgebra $A$ is defined to be the complex conjugate vector space of $A$ together with the multiplication $\overline{a} \cdot \overline{b} = \overline{ab}$.
\end{definition}

The $B\Z_2^F$-action maps $A$ to the automorphism of $A$ given by the spin-statistics bimodule:

\begin{definition}
	\label{def:spin-statistics bimodule}
	Given a superalgebra $A$, define the \emph{spin-statistics bimodule} $A_{(-1)^F}$ of $A$ to be the $(A,A)$-bimodule associated to the algebra automorphism $(-1)_A^F: A \to A$ given by $a \mapsto (-1)^{|a|} a$ on homogeneous $a \in A$.
	We write this bimodule explicitly as $A_{(-1)^F} = \{ a (-1)^F : a \in A\}$ so that we can think of $(-1)^F$ as a symbol satisfying $a (-1)^F = (-1)^{|a|} (-1)^F a$.
\end{definition}

Using Section \ref{Sec:action on sVect} and the results above, we will now spell out all the data of how this assembles into a $\Z_2^B \times B\Z_2^F$-action on $\sAlg$.
The $\Z_2^B$-action (which we again write by a bar) starts with the data of a functor $\rho(B): \sAlg \to \sAlg$ given by $A \mapsto \overline{A}$.
The action can be defined on morphisms concretely as follows: Given a $(A,B)$-bimodule $M$, $\overline{M}$ is the $(\overline{A},\overline{B})$-bimodule given by $\overline{a} \cdot \overline{m} = \overline{am}$ and similar for the right $\overline{B}$-action.
If $M'$ is another $(A,B)$-bimodule and $\phi: M \to M'$ is a bimodule map, the bimodule map $\overline{\phi}: \overline{M} \to \overline{M'}$ is defined by $\overline{\phi}(\overline{m}) = \overline{\phi(m)}$.
Given $M$ an $(A,B)$-bimodule and $N$ a $(B,C)$-bimodule, the data of $\rho(B)$ preserving composition of $1$-morphisms involves the isomorphism
\[
\overline{N \otimes_B M} \cong \overline{N} \otimes_{\overline{B}} \overline{M} \quad \overline{n \otimes_B m} \mapsto \overline{n} \otimes_B \overline{m}.
\]
We have now defined the functor $\rho(B)$.
Its monoidality data on algebras $A_1$ and $A_2$ is induced by the algebra isomorphism
\[
\overline{A_1 \otimes A_2} \cong \overline{A}_1 \otimes \overline{A}_2 \quad \overline{a_1 \otimes a_2} \mapsto \overline{a}_1 \otimes \overline{a}_2
\]
giving an invertible $(\overline{A_1 \otimes A_2}, \overline{A}_1 \otimes \overline{A}_2)$-bimodule $N_{A_1,A_2}$.
The naturality data for this isomorphism with respect to a $(B_1,A_1)$-bimodule $M_1$ and a $(B_2,A_2)$-bimodule $M_2$ is given by the obvious bimodule isomorphism
\[
 \overline{M_1 \otimes M_2} \otimes_{\overline{A_1 \otimes A_2}} N_{A_1,A_2}\cong  N_{A_1,A_2} \otimes_{\overline{A}_1 \otimes \overline{A}_2} (\overline{M}_1 \otimes \overline{M}_2) .
\]
We will from now on identify $\overline{A_1 \otimes A_2} \cong \overline{A}_1 \otimes \overline{A}_2$ directly. 
Then all other data to make $\rho(R)$ into a symmetric monoidal functor becomes the identity.

We move on to the data $R_{g_1,g_2}$ for objects $g_1,g_2$ in $\Z_2^B \times B \Z_2^F$.
The only interesting case is $g_1 = g_2 = B$ in which case the invertible bimodule is induced by the equality of algebras $A = \overline{\overline{A}}$.
The other data $\omega_{g,g',g''} $ needed to define an action are therefore also identities.

We now give $\rho(\gamma)$ for morphisms $\gamma$ in $\Z_2^B \times B\Z_2^F$.
$\rho((-1)^F)[A] = A_{(-1)^F}$ becomes a natural transformation using the $(B,A)$-bimodule isomorphisms
\[
M \otimes_A A_{(-1)^F} \to B_{(-1)^F} \otimes_B M \quad  m \otimes a (-1)^F \mapsto (-1)^{|m| + |a|} (-1)^F \otimes ma.
\]
The data of $\rho((-1)^F)$ being a monoidal natural transformation is corresponds to the canonical isomorphism of $(A \otimes B, A \otimes B)$-bimodules
\[
A_{(-1)^F} \otimes B_{(-1)^F} \cong (A \otimes B)_{(-1)^F} \quad a (-1)_A^F \otimes b(-1)_B^F \mapsto (a \otimes b) (-1)^F_{A \otimes B}.
\]
Note that in contrast to the naturality of $(-1)^F$ with respect to a bimodule $M$, there is no sign.
This is similarly true for $\rho((-1)^F R) := \rho((-1)^F) \circ \rho(R)$.

We turn to the $\alpha_{\gamma, \gamma'}$.
The data of $\alpha_{(-1)^F, (-1)^F}$ is determined by the isomorphisms $A_{(-1)^F} \otimes_A A_{(-1)^F} \cong A$ corresponding to the fact that the algebra homomorphism $(-1)^F$ squares to one.
Explicitly, this data looks like
\[
A_{(-1)^F} \otimes_A A_{(-1)^F} \to A \quad a_1 (-1)^F \otimes_A a_2 (-1)^F \mapsto(-1)^{|a_2|} a_1 a_2.
\]
The natural isomorphism $\alpha_{R (-1)^F, R (-1)^F}$ is defined analogously.

Next we move to the natural transformations specifying $\rho((-1)^F R) = \rho(R (-1)^F) \cong \rho(R) \bullet \rho((-1)^F)$, i.e. $R_{\id_{R}, (-1)^F}$ and $R_{(-1)^F, \id_R}$.
Note first that because of the equality of linear maps $(-1)^F_{\overline{V}} = \overline{(-1)^F_V}$, there is an equality of $(A,A)$-bimodules $\overline{A}_{(-1)^F_{\overline{A}}} = \overline{A_{(-1)^F_A}}$.
Now the $(\overline{B}, \overline{A})$-bimodule isomorphism
\[
\overline{M} \otimes_{\overline{A}} \overline{A}_{(-1)^F_{\overline{A}}}  \cong \overline{B}_{(-1)^F_{\overline{A}}}  \otimes_{\overline{B}} \overline{M}
\]
is similarly given by
\[
\overline{m} \otimes \overline{a} (-1)^F \mapsto (-1)^{|m| + |a|} (-1)^F \otimes \overline{ma}.
\]
The other $R_{\gamma_1, \gamma_2}$ are defined analogously.

\subsection{Dualizability in $\sAlg$}
\label{Sec:salgdual}

We apply the theory of dualizability from Appendix \ref{App:dualizable} to the bicategory $\sAlg$.

To define adjoints, let $M$ be an $(A,B)$-bimodule.
Let $\operatorname{HOM}_A(M,A)$ be the super vector space consisting of maps of left $A$-modules that are not necessarily even.
More precisely, homogeneous elements $f \in \operatorname{HOM}_A(M,A)$ satisfy the graded $A$-linearity condition
\[
f(am) = (-1)^{|a||f|} a f(m)
\]
Then $\operatorname{HOM}_A(M,A)$ becomes an $(B,A)$-bimodule in the obvious way using appropriate Koszul signs:
\[
(fa)(m) := f(am) \quad (bf)(m) := (-1)^{|b|(|f| + |m|)} f(mb).
\]
Analogously we can define $\operatorname{HOM}_B(M,B)$ as a $(B,A)$-bimodule using right $B$-module maps.
There are evaluation bimodule maps
\begin{align*}
&\ev: M \otimes_B \operatorname{HOM}_A(M,A) \to A \quad \ev(m \otimes f) = (-1)^{|m||f|} f(m)
\\
&\ev: \operatorname{HOM}_B(M,B) \otimes_A M \to B \quad \ev(g \otimes m) = g(m)
\end{align*}
To get these to be nondegenerate pairings, we need a finiteness condition:

\begin{proposition}[Proposition 1.6 of \cite{doldpuppe}]
\label{ex:adjoints in Alg}
Define the $(B,A)$-bimodules 
\[
M^L := \operatorname{HOM}_B(M,B)  \quad M^R := \operatorname{HOM}_A(M,A) 
\]
Using the above pairings, $M^L$ is a left adjoint of $M$ if and only if $M$ is finitely generated and projective over $A$ and $M^R$ is a right adjoint of $M$ if and only if $M$ is finitely generated and projective over $B$. 
\end{proposition}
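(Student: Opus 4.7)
This proposition is essentially the classical dual basis lemma transported to the Morita bicategory $\sAlg$, and the strategy is to make the snake identities explicit via a chosen dual basis on one side, and to extract a dual basis from given adjoint data on the other. The two statements about $M^R$ and $M^L$ are formally symmetric after exchanging the roles of the left $A$-action and the right $B$-action together with the corresponding evaluation pairing, so my plan will focus on the case of $M^R$ and only indicate at the end how the argument must be modified for $M^L$.

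For the implication (f.g.\ projective $\Rightarrow$ right adjoint): assuming the relevant finite projectivity hypothesis on $M$, I would invoke the graded dual basis lemma to produce finitely many homogeneous elements $m_i \in M$ and $f_i \in M^R = \operatorname{HOM}_A(M,A)$ satisfying the defining dual basis relation. Using these, I would define the coevaluation as the $(B,B)$-bimodule map $\eta : B \to M^R \otimes_A M$ sending $1 \mapsto \sum_i f_i \otimes m_i$, check that it is independent of the choice of dual basis up to the relations imposed by the tensor product, and verify the two snake identities when paired against the evaluation map of the statement with its Koszul sign $(-1)^{|m||f|}$. Each snake identity then reduces, after expanding, to the dual basis relation.

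For the converse direction, assume that $M^R$ is a right adjoint of $M$ in $\sAlg$ with counit given by the prescribed evaluation map. The unit is then a $(B,B)$-bimodule map $\eta : B \to M^R \otimes_A M$, and I would write $\eta(1) = \sum_{i=1}^n f_i \otimes m_i$ as a finite sum of homogeneous tensors. Applying the first snake identity to an arbitrary homogeneous $m \in M$ yields the relation
\[
m \;=\; \sum_i (-1)^{|m||f_i|}\, f_i(m) \cdot m_i,
\]
which is precisely a graded dual basis exhibiting $M$ as a direct summand of a finitely generated free module on the relevant side, so $M$ is finitely generated and projective there. The analogous statement for $M^L$ is proved in the same way using the other natural pairing $M^L \otimes_A M \to B$.

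The main obstacle is purely bookkeeping: one has to track the Koszul signs arising from the symmetric braiding of $\sVect$ consistently through the construction of the unit, the verification of the snake identities, and the extraction of the dual basis. The ungraded version is the original statement of Dold--Puppe, and since $\sAlg$ is built functorially from $\sVect$ equipped with its Koszul-signed symmetric braiding, the super refinement is a formal consequence once the signs are installed correctly in the evaluation and coevaluation.
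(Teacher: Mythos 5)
The paper offers no proof of this proposition (it is quoted from Dold--Puppe), so there is no internal argument to compare against; your dual-basis strategy is the standard one and its architecture is right. The genuine gap sits exactly where you wrote ``on the relevant side'' and ``projective there'': you never identify \emph{which} module structure your dual basis exhibits, and that identification is the entire content of the statement. Carry it out: for $M^R=\operatorname{HOM}_A(M,A)$ the counit is $\ev\colon M\otimes_B M^R\to A$ and the unit is a $(B,B)$-bimodule map $\eta\colon B\to M^R\otimes_A M$, $1\mapsto\sum_i f_i\otimes m_i$, so the first snake identity reads
\[
m=\sum_i(-1)^{|m||f_i|}\,f_i(m)\,m_i ,
\]
with $f_i(m)\in A$ acting on $m_i$ through the \emph{left $A$-action} (the last leg of the snake is $A\otimes_A M\cong M$). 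Since the $f_i$ are left $A$-linear, this realizes $M$ as a retract of a finitely generated free left $A$-module: it is equivalent to $M$ being finitely generated projective \emph{over $A$}, not over $B$. Concretely, for $A=k[x]$, $B=k$ and $M={}_{k[x]}k[x]_{k}$, the bimodule $M^R=\operatorname{HOM}_{k[x]}(k[x],k[x])\cong {}_{k}k[x]_{k[x]}$ is a genuine right adjoint of $M$ (the unit $k\to M^R\otimes_{k[x]}M\cong k[x]$, $1\mapsto 1$, satisfies both snake identities), yet $M$ is not finitely generated over $B=k$. So your argument, carried out honestly, proves the proposition with $A$ and $B$ interchanged; as printed the statement is false, and your hedge ``the relevant side'' conceals precisely the point at issue. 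You should commit to the side your dual basis actually gives and flag the discrepancy with the statement.

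Two smaller gaps. First, in the forward direction you must check that $\eta$ is a morphism of $(B,B)$-bimodules, i.e.\ that $\sum_i f_i\otimes m_i$ commutes with both $B$-actions; this is not the same as independence of the choice of dual basis, though both follow from the observation that $\sum_i f_i\otimes m_i$ corresponds to $\id_M$ under the canonical identification of $M^R\otimes_A M$ with the graded left-$A$-linear endomorphisms of $M$. Second, the other snake identity, $f=\sum_i(-1)^{|m_i||f|}f_i\cdot f(m_i)$, does not ``reduce after expanding to the dual basis relation''; it is deduced by applying the left $A$-linearity of an arbitrary $f$ to the dual-basis expansion of $m$, and doing this honestly forces one to notice that the right $A$-action $(fa)(m):=f(am)$ as printed does not preserve left $A$-linearity unless $A$ is graded-commutative (one needs $(fa)(m)=(-1)^{|a||m|}f(m)a$ instead), so the Koszul bookkeeping you defer as ``purely formal'' is not: it is where both this verification and the correct $A$-versus-$B$ matching live.
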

An adjoint equivalence in the bicategory $\sAlg$ is also called a \emph{Morita context}.

Let $\phi:M_1 \Longrightarrow M_2$ be a homorphism of $(A,B)$-bimodules that are finitely generated as $B$-modules.
Then it is easy to check that the induced map $\phi^R: M^R_2 \Longrightarrow M_1^R$ is given by
\[
\phi^R(f)(m_1) = f(\phi(m_1)).
\]

Every object in $\sAlg$ is $1$-dualizable.\footnote{This is a special case of the general fact that every $E_n$-algebra is $n$-dualizable~\cite{ClaudiaOwen}.}
We provide the data to make the opposite algebra into the dual of an algebra $A \in \sAlg$.
The evaluation map is the $1$-morphism $\operatorname{ev}_A: A \otimes A^{\op} \to \C$ given by $A$ itself seen as a $(\C, A \otimes A^{\op})$-bimodule.
The coevaluation map $\operatorname{coev}_A: \C \to A^{\op} \otimes A$ is $A$ seen as a $( A^{\op} \otimes A, \C)$-bimodule.
Let $M$ be an $(B,A)$-bimodule. 
Then $M^{\op}$ becomes a dual of $M$ as follows.
We have to fill the diagram
\[
\begin{tikzcd}
B^{\op} \otimes A \arrow[r,"{\id \otimes M}"] \arrow[d,"{M^{\op} \otimes \id_{A}}",swap] & B^{\op} \otimes B \arrow[d,"{\ev_B}"]
\\
A^{\op} \otimes A \arrow[r,"{\ev_A}"] & \C 
\end{tikzcd}
\]
The two bimodules we have to show an isomorphism between satisfy
\begin{align*}
b_1 \otimes b_2^{\op} \otimes m &= (-1)^{|b_1||b_2|} 1 \otimes 1 \otimes b_2 b_1 m \in B \otimes_{B^{\op} \otimes B} (B^{\op} \otimes M) 
\\
 a_1 \otimes m^{\op} \otimes a_2 &= (-1)^{|a_1||m|} 1 \otimes (m a_1 a_2)^{\op} \otimes 1 \in A \otimes_{A^{\op} \otimes A} (M^{\op} \otimes A)
\end{align*}
Using this it is easy to verify that $1 \otimes 1 \otimes m \mapsto 1 \otimes m^{\op} \otimes 1$ induces an isomorphism of left $B^{\op} \otimes A$-modules, so we filled the square above.
It is straightforward to check that the relevant higher condition holds.
We emphasize that the dual of $M$ always exists and is not related to the left and right adjoint of $M$.

A general superalgebra $A$ is not fully dualizable.
Indeed, for $\ev_A$ to admit right and left adjoints, we need $A$ to be finite-dimensional and semisimple.
We obtain an explicit description of the maximal sub $2$-groupoid $\sAlg^{\text{fd}}$ on the fully dualizable objects:
\begin{itemize}
\item \textbf{Objects} are finite-dimensional semisimple superalgebras;
\item \textbf{$1$-morphisms} are invertible bimodules;
\item \textbf{$2$-morphisms} are invertible even bimodule homomorphisms.
\end{itemize}

We proceed to give the $O_2$-action on $\sAlg^{\text{fd}}$, where we continue to use the skeletal model of $O_2$ and the notation from Section \ref{Sec:O2-action}.
We provide our choices of duals and dualizability data, following \cite[Section 2.1]{gunningham2016spin}.

Let $A \in \sAlg^{\operatorname{fd}}$ be finite-dimensional and semisimple. The object $-$ of $O_2$ acts by sending a super algebra to the algebra $A^{\op}$ (which is the dual as explained above), an invertible bimodule $M$ to ${M^{\op}}^{-1}$ and an invertible bimodule map to the canonical map between the inverses of the opposites it induces. 
This is a functor by Lemma~\ref{Lem: op bimod} which is also easily seen to be symmetric monoidal.
Note that we have to make a contractible choice of an inverse for every invertible bimodule to uniquely specify this functor.
We specify the coherence isomorphisms making this a $\Z_2$-action:         
The invertible $(A = A^{\op \op}, A)$-bimodule $\hat{R}_A $ to be the identity.
Then $\omega: \hat{R}^{\op}_A \otimes_{A^{\op \op \op}} \hat{R}_{A^{\op}} \to A^{\op}$ is the multiplication map. 

We now turn to the $SO_2$ part of the action.
Looking at Proposition \ref{ex:adjoints in Alg}, a right adjoint of $\operatorname{ev}_A$ is the $(A^{\op} \otimes A,\C)$-bimodule $\Hom_{\C}(A, \C) = A^*$.
Hence we have $S_A = A^*$ with $(A,A)$-bimodule structure given by the appropriate Koszul signs
	\[
	(fa_1)(a_2) = f(a_1 a_2), \quad (a_1 f)(a_2) = (-1)^{|a_1|(|f|+ |a_2|)} f(a_2 a_1).
	\]
Similarly, the right adjoint of $\operatorname{ev}_A$ is the $(A^{\op} \otimes A,\C)$-bimodule $\Hom_{A^{\op} \otimes A}(A, \C)$.
So as in \cite[Definition 2.4]{gunningham2016spin}, we find the Serre automorphism to be given by the $(A,A)$-bimodule $A^*$ while the inverse of the Serre is given by $\Hom_{A^{\op} \otimes A}(A, A^{\op} \otimes A)$.

The naturality data of the Serre automorphism in the context of superalgebras can be worked out explicitly as follows, compare \cite[Proposition 2.8]{janserre} and \cite[Proposition 3.2]{LorantNils}.
Let $A,B \in \Alg^{\operatorname{fd}}$ and let $M$ be an invertible $(A,B)$-bimodule.
Since $M$ is invertible, we can make $M$ into a Morita context by picking its inverse $(B,A)$-bimodule $N$ to be adjoint to $M$ so that the invertibility data
\[
\eta: N \otimes_A M \to B \quad \epsilon: M \otimes_B N \to A
\]
satisfyies the snake relations.
We will use the following convenient notation.
		Given $n \in N, m \in M$ and $f \in A^*$, define $n f m$ of $B^*$ as
	\[
	(m f n)(b) := (-1)^{|m|(|f| + |n| + |b|)} f(\eta(n \otimes_B b m)).
	\]
	Note that this notation is compatible with the notation for the $(A,A)$-bimodule structure on $A^*$.
We realize $N^{\op}$ as a dual inverse of $M$ in an analogous fashion to how we realized $M^{\op}$ as a dual of $M$.
Explicitly, the filling of the diagram
\[
\begin{tikzcd}
\ & A^{\op} \otimes A \arrow[d,"\ev_A"]
\\
B^{\op} \otimes B \arrow[ru,"{N^{\op} \otimes M}"] \arrow[r,"\ev_B",swap] & \C
\end{tikzcd}
\]
amounts to the isomorphism of $(\C,B^{\op} \otimes B)$-bimodules
\[
\psi: A \otimes_{A^{\op} \otimes A} (N^{\op} \otimes M) \cong B
\]
given by
\[
a \otimes (n^{\op} \otimes m) \mapsto (-1)^{|a||n|} \eta(n a \otimes m).
\]
This satisfies the desired condition because the condition holds for realizing $M^{\op}$ as the dual of $M$. 
Next we need to take the right adjoint of the above diagram to get the map of $(B^{\op} \otimes B,\C)$-bimodules
\[
\psi^R: B^* \to \left(\operatorname{HOM}_{A^{\op}} (N^{\op},A^{\op}) \otimes \operatorname{HOM}_A( M,A)\right) \otimes_{A^{\op} \otimes A} A^*
\]
compare \cite[Lemma 2.7]{janserre}.
It maps $f \in B^*$ to the element of $A^*$ mapping $a$ to
\[
a \otimes n^{\op} \otimes b \mapsto (-1)^{|a||n|} f(\eta(n a \otimes m))
\]
We are now in shape to give the Serre naturality data, which by definition is a filling of
\[
\begin{tikzcd}[column sep = 1.8cm]
A  \arrow[r,"{\ev^R_A \otimes A}"] & A^{\op} \otimes A \otimes A \arrow[r]  & A \otimes A^{\op} \otimes A \ \arrow[r,"{A \otimes \ev_A}"] & A 
\\
B \arrow[u,"M",swap] \arrow[r,"{\ev^R_B \otimes B}",swap] & B^{\op} \otimes B \otimes B \arrow[r] \arrow[u,"{M^{\op -1} \otimes M \otimes M}"] & B \otimes B^{\op} \otimes B \arrow[r,"{B \otimes \ev_B}",swap] \arrow[u,"{M \otimes M^{\op -1} \otimes M}"] & B \arrow[u,"M"]
\end{tikzcd}
\]
We can do this by the braiding naturality and the two diagrams expressing naturality of $\ev$ and its adjoint above.
The result is the following proposition, which was communicated to us by Chris Schommer-Pries~\cite{chrismathoverflow}. 



\begin{proposition}
The naturality of the natural transformation $A \mapsto A^*$ is expressed by the bimodule map $S_M: A^* \otimes_A M \to M \otimes_B B^*$ explicitly given as  
	\[
	S_M(f \otimes_A m) = \sum_j m_j \otimes_B n_j f m
	\]
	where 
	\[
	\sum_j \eta(m_j \otimes n_j) = 1.
	\]
\end{proposition}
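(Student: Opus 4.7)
The plan is to unpack the Serre naturality data as the composition of the three fillings of the displayed naturality diagram, and then chase an arbitrary element $f \otimes_A m \in A^* \otimes_A M$ through them.

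First, I would break $S_M$ into three pieces corresponding to the three squares. The rightmost square is naturality of $\ev_A$ with respect to $M$; this gives, under the standard identifications, a tautological bimodule isomorphism that just repackages the $A$-balanced tensor structure. The middle square is naturality of the symmetric braiding $A \otimes A^{\op} \to A^{\op} \otimes A$ with respect to $M$; this contributes only a Koszul sign from swapping $m$ past an element of $A^{\op}$. The leftmost square is naturality of $\ev_A^R$, which is the substantive input. By uniqueness of adjoints, this filling is exactly the right adjoint of the isomorphism $\psi: A \otimes_{A^{\op} \otimes A}(N^{\op} \otimes M) \to B$ that realizes $N^{\op}$ as a dual inverse of $M$, as recalled in the paragraph preceding the proposition.

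Second, I would compute $\psi^R$ explicitly. Applying the definition of the right adjoint of a bimodule map between $\sAlg$-morphisms (Proposition~\ref{ex:adjoints in Alg}) to the formula for $\psi$, one obtains $\psi^R: B^* \to \operatorname{HOM}_{A^{\op}}(N^{\op},A^{\op}) \otimes \operatorname{HOM}_A(M,A) \otimes_{A^{\op} \otimes A} A^*$ sending $f \in B^*$ to the map $a \otimes n^{\op} \otimes m \mapsto (-1)^{|a||n|} f(\eta(na \otimes m))$, as explicitly written just above the proposition. Evaluating $\psi^R(1_B)$ and using the snake identity, the resulting element pairs with arbitrary inputs via a choice $\sum_j m_j \otimes n_j$ with $\sum_j \eta(m_j \otimes n_j) = 1$.

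Third, I would substitute $\psi^R$ into the composition and read off the formula. Chasing $f \otimes_A m$ through: the left square produces $\sum_j m_j \otimes n_j^{\op} \otimes f \otimes m$ (with the sign from $\psi^R$), the middle square swaps $n_j^{\op}$ past $f$ and $m$ picking up the appropriate braiding sign, and the right square identifies the resulting element of $M \otimes_B (B^{\op} \otimes B) \otimes B^*$ with $\sum_j m_j \otimes_B (n_j f m)$, recognizing the definition of the notation $(nfm)(b) = (-1)^{|m|(|f|+|n|+|b|)} f(\eta(n \otimes bm))$.

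The main obstacle will be verifying that the three sources of Koszul signs — the sign $(-1)^{|a||n|}$ in $\psi^R$, the sign from the symmetric braiding, and the sign $(-1)^{|m|(|f|+|n|+|b|)}$ implicit in interpreting the output as $nfm$ — combine correctly so that the final expression is exactly $\sum_j m_j \otimes_B (n_j f m)$ with no extra sign. I would check this by specializing to $A = B$, $M = N = A$ and $\eta$ the multiplication, where the formula must reduce to the identity on $A^*$, and to the case where $f$, $m$, $n_j$, $m_j$ are homogeneous of various parities to confirm each sign assignment. I would also verify well-definedness over the $A$-balanced tensor product using the right $A$-module structure on $A^*$ and the left $A$-module structure on $M$, together with the $A$-linearity of $\eta$.
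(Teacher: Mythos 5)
Your proposal is correct and follows essentially the same route the paper takes: the paper also decomposes the Serre naturality filling into the three squares (naturality of $\ev^R_A$, of the braiding, and of $\ev_A$), identifies the substantive left-hand filling with the right adjoint $\psi^R$ of the isomorphism realizing $N^{\op}$ as a dual inverse of $M$, and reads off the stated formula. Your additional plan to verify the Koszul signs by specializing to $A = B$, $M = N = A$ and to homogeneous elements is a sensible check that the paper leaves implicit.
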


\begin{example}\label{ex:Serre nat unitary}
	Take $A = B$, let $G$ be a finite group and let $\mathcal{A}$ be a strongly $G$-graded algebra with $A_1 = A$.
	Take $M = A_g$.
	Then we can take $N = A_{g^{-1}}$ and $\eta: A_g \otimes_A A_{g^{-1}} \to A$ can be taken to be the multiplication map $\phi_{g,g^{-1}}$.
For $a \in A^*$ and $a_g \in A_g$, the Serre naturality map can be written 
		\[
	S_{A_g}(f \otimes a_g) = \sum_j m_j \otimes n_j f a_g.
	\]
	where $m_j \in A_g$ and $n_j \in A_{g^{-1}}$ are chosen so that
	\[
	\sum_j m_j n_j = 1.
	\]
	Explicitly, given $ a_{g^{-1}} \in A_{g^{-1}}$, we have
	\[
	(a_{g^{-1}}  f a_g)(a) := (-1)^{|a_{g^{-1}}|(|f| + |a_g| + |a|)} f(a_g a a_{g^{-1}}).
	\]
	In case $g =1$, this agrees with the bimodule action of $A$ on $A^*$.
\end{example}

\begin{example}
\label{ex:*-alg Serre nat}
	Take $B = \overline{A}^{\op}$ and suppose $A$ has a super $*$-algebra structure $\overline{A}^{\op} \to A$.
	Take $M = A_*$ to be the associated stellar module. 
	Because $*$ is an involution, $_* \overline{A}^{\op}$ is a canonical inverse of $A_*$ and
	the Serre naturality map $S_{A_*}: A^* \otimes_A A_* \to A_* \otimes_{\overline{A}^{\op}} \overline{A}^{\op *}$ simplifies to 
	\[
	S_{A_*}(f \otimes_A a_*) = 1_* \otimes_{A^{\op}} (\overline{b}^{\op} \mapsto f(ab^*)).
	\]
We can write this more succinctly as $S_{A_*}(f \otimes_A 1_*) = 1_* \otimes_{\overline{A}^{\op}} f^*$, where
	$f^*: \overline{A}^{\op} \to \C$ is defined as $f^*(\overline{b}^{\op}) = f(b^*)$.
\end{example}

\begin{example}
	Let $(B, \epsilon)$ be a symmetric Frobenius algebra and $M$ an invertible $(B,A)$-bimodule.
	The induced Frobenius structure on $A$ is the composition
	\[
	A \cong M^{-1} \otimes_B B \otimes_B M \xrightarrow{\id_{M^{-1}} \otimes \epsilon \otimes \id_{M}} M^{-1} \otimes_B B^* \otimes_B M  \xrightarrow{S_M} A^*
	\]
	where we used Serre naturality at the end.
	
	As a subexample, take $A = B$ and $M = \Pi B$.
	Then we can take $M = M^{-1} = \Pi A$ where $M^{-1} \otimes_A M \to A$ is multiplication.
	The Serre naturality maps $1 \otimes f \otimes 1 \in M^{-1} \otimes B^* \otimes M$ to 
	\[
	A \ni a \sim a \otimes 1 \in M^{-1} \otimes M \mapsto f(-a)
	\]
	We see that the new Frobenius structure on $A$ is $- \epsilon$.
\end{example}	

\begin{example}
\label{ex: (-1)^F Serre nat}
We compute Serre naturality for the $(A,A)$-bimodule $A_{(-1)^F}$ and show that the resulting filling of the square
\[
\begin{tikzcd}
A \arrow[r,"A^*"] \arrow[d, "{A_{(-1)^F}}",swap] & A \arrow[d, "{A_{(-1)^F}}"]
\\
A \arrow[r,"A^*",swap] & A
\end{tikzcd}
\]
is equal as a bimodule map to the filling of the square coming from the naturality of the $B\Z_2^F$-action on superalgebras.\footnote{This already follows by the cobordism hypothesis: $A_{(-1)^F}$ is a symmetric monoidal natural transformation and so preserves the $SO_2$-action.}
For this, we will use that $A_{(-1)^F}$ is its own inverse under the map 
\[
A_{(-1)^F} \otimes_A A_{(-1)^F} \to A \qquad a_1 (-1)^F \otimes a_2 (-1)^F \mapsto (-1)^{|a_2|} a_1 a_2.
\]
The Serre naturality map can now equivalently described as a map
\[
A_{(-1)^F} \otimes_A A^* \otimes_A A_{(-1)^F} \to A^*
\]
It suffices to compute it on elements of the form $(-1)^F \otimes f \otimes (-1)^F$.
Evaluating the result on $b \in A$ gives
\begin{align*}
f(((-1)^F b (-1)^F )\cdot ((-1)^F \cdot (-1)^F)) = (-1)^{|b|} f(b)
\end{align*}
We obtain $(-1)^F \otimes f \otimes (-1)^F \mapsto (-1)^{|f|} f$ which is the naturality data of $(-1)^F$.
\end{example}

We warn the reader of the non-Koszul sign in point 3 of the following lemma, which is the origin of the ungraded symmetry of the Frobenius structure.

\begin{lemma}
\label{Lem:ungrsym}
Recall that for $f \in A^*, a_g \in A_g$ and $a_{g^{-1}} \in A_{g^{-1}}$ we have the element of $\overline{A}^*$ defined by
\[
(a_{g^{-1}} f \overline{a_g} )(\overline{a}) := (-1)^{|a_{g^{-1}}| (|f| + |a_g| + |a|)} f(a_g \overline{a} \overline{a_{g^{-1}}}).
\]
The following are equivalent for an ungraded symmetric Frobenius structure on $A$.
\begin{enumerate}
\item $\overline{\lambda(a_g a_{g^{-1}})} = \lambda(a_{g^{-1}} a_g)$ for all $a_g \in A_g$ and $a_{g^{-1}} \in A_{g^{-1}}$
\item $a_g \lambda \overline{a_{g^{-1}}} = (-1)^{|a_{g}|} (a_g \overline{a_{g^{-1}}}) \overline{\lambda}$ for all $a_g \in A_g$ and $a_{g^{-1}} \in A_{g^{-1}}$.
\item 
\[
S_{A_g}(\lambda \otimes_A a_g) = (-1)^{|a_g|} a_g \otimes_{\overline{A}} \overline{\lambda}
\]
for all $a_g \in A_g$
\end{enumerate}
Here we have implicitly used the isomorphism $\overline{A}^* \cong \overline{A^*}$ defined by $\overline{f}(\overline{a}) = \overline{f(a)}$.
\end{lemma}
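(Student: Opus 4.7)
The plan is to prove the chain of equivalences $(1) \Leftrightarrow (2) \Leftrightarrow (3)$ by successively unpacking the explicit formulas on both sides and tracking Koszul signs.

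First I would handle $(2) \Leftrightarrow (1)$ by evaluating both sides of $(2)$ on an arbitrary homogeneous element $\overline{a} \in \overline{A}$. Using the definition given in the lemma statement and the fact that $\lambda$ is even (so that the sign exponent $|a_{g^{-1}}|(|\lambda| + |a_g| + |a|)$ reduces to $|a_{g^{-1}}|(|a_g| + |a|)$), the left-hand side evaluates to $(-1)^{|a_{g^{-1}}|(|a_g|+|a|)} \lambda(a_g \overline{a} \overline{a_{g^{-1}}})$. Unpacking the right-hand side requires first computing the bimodule action: $a_g \overline{a_{g^{-1}}} \in \overline{A}$ is an honest element, and $(a_g \overline{a_{g^{-1}}}) \overline{\lambda}$ is its action on $\overline{\lambda}$. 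Applying the identification $\overline{A}^* \cong \overline{A^*}$ via $\overline{f}(\overline{a}) = \overline{f(a)}$ and tracking the conjugate-linearity, this reduces to $\overline{\lambda(\ldots)}$ for an appropriate product. Setting $\overline{a} = \overline{1}$ specializes $(2)$ to $(1)$; conversely, $(1)$ applied to the products $a_g \overline{a}$ and $\overline{a_{g^{-1}}}$ recovers the general evaluation, giving $(1) \Rightarrow (2)$.

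Next, for $(2) \Leftrightarrow (3)$ I would use the explicit Serre naturality formula from Example \ref{ex:Serre nat unitary}. Since $\mathcal{A}$ is strongly graded, $A_{g^{-1}}$ is an inverse of $A_g$ via the multiplication maps $\phi_{g,g^{-1}}$ and $\phi_{g^{-1},g}$; pick elements $m_j \in A_g, n_j \in A_{g^{-1}}$ with $\sum_j m_j n_j = 1 \in A$. Then
\[
S_{A_g}(\lambda \otimes_A a_g) = \sum_j m_j \otimes_{\overline{A}} (n_j \lambda a_g) \in A_g \otimes_{\overline{A}} \overline{A}^*.
\]
To turn the equation $(3)$ in $A_g \otimes_{\overline{A}} \overline{A}^*$ into an equation in $\overline{A}^*$, I would exploit the invertibility of $A_g$: the map
\[
A_{g^{-1}} \otimes_A \bigl(A_g \otimes_{\overline{A}} \overline{A}^*\bigr) \longrightarrow \overline{A}^*
\]
obtained from multiplication $A_{g^{-1}} \otimes_A A_g \cong \overline{A}$ (this is where $\theta(g) = 1$ enters) is an isomorphism. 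Applying this with a test element $a_{g^{-1}}$ converts $(3)$ into an equation in $\overline{A}^*$. On the left this produces $\sum_j a_{g^{-1}} m_j \otimes (n_j \lambda a_g)$, which after using the multiplication and $\sum m_j n_j = 1$ collapses to $a_g \lambda \overline{a_{g^{-1}}}$ (with its sign coming from the Koszul convention built into the notation $n f m$). On the right it produces $(-1)^{|a_g|} (a_g \overline{a_{g^{-1}}}) \overline{\lambda}$, matching $(2)$.

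The main bookkeeping obstacle is keeping track of the Koszul signs, particularly the interaction between the $(-1)^{|a_{g^{-1}}|(|f| + |a_g| + |a|)}$ sign in the definition of $a_{g^{-1}} f \overline{a_g}$, the $(-1)^{|m|}$ signs hidden in the bar/stellar structure on $\overline{A}$ from Definition \ref{def:barstellar}, and the $(-1)^{|a_g|}$ on the right of $(3)$. No new ideas beyond finite-dimensionality (ensuring the relevant adjoints exist so that Serre naturality makes sense) and the strong grading hypothesis (so that $A_{g^{-1}}$ genuinely inverts $A_g$) are needed; once the signs are compared carefully in one case, the other direction of each equivalence follows by reversing the manipulation.
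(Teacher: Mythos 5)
Your proposal is correct and follows essentially the same route as the paper: the equivalence $(1)\Leftrightarrow(2)$ by evaluating both sides on a test element, using evenness of $\lambda$, and absorbing the test element into one of the graded factors; and $(2)\Leftrightarrow(3)$ by writing out the Serre naturality map via elements $m_j, n_j$ with $\sum_j m_j n_j = 1$ and using invertibility of $A_g$ (tensoring with $a_{g^{-1}}$) to pass between the equation in $A_g \otimes_{\overline{A}} \overline{A}^*$ and the equation in $\overline{A}^*$. The only remaining work in either write-up is the Koszul sign bookkeeping, which you correctly identify as the sole nontrivial content.
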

\begin{proof}
By definition
\begin{align*}
(a_{g^{-1}} \lambda a_g)(a) &= (-1)^{|a_{g^{-1}}|(|a_g| + |a|)} \lambda(a_g a a_{g^{-1}})
\\
((a_{g^{-1}} a_g) \overline{\lambda})(a) &= (-1)^{|a|(|a_{g^{-1}}| +|a_g|)} \overline{\lambda(a a_{g^{-1}} a_g)}.
\end{align*}
Since $\lambda$ is even, we can assume without loss of generality that $|a_{g^{-1}}| + |a_g| + |a| = 0$.
Working out the signs, we see that $a_g \lambda a_{g^{-1}} = (-1)^{|a_{g}|} (a_g a_{g^{-1}}) \overline{\lambda}$ is equivalent to
\[
\overline{\lambda(a_g a a_{g^{-1}})} = \lambda(a a_{g^{-1}} a_g)
\]
for all $a \in A$.
Since $a a_{g^{-1}} \in A_{g^{-1}}$ and we can specialize to $a = 1$, this is equivalent to
\[
\overline{\lambda(a_g a_{g^{-1}})} = \lambda(a_{g^{-1}} a_g)
\]
for all $a_g \in A_g$ and $a_{g^{-1}} \in A_{g^{-1}}$.
So we have shown that point 1 is equivalent to point 2.

We now spell out Serre naturality, generalizing Example \ref{ex:Serre nat unitary} to the anti-unitary case.
Let $m_j \in A_g$ and $n_j \in \overline{A_{g^{-1}}}$ be finite collections of elements such that
\[
\sum_j m_j n_j = 1 \in A.
\]
Then for $f \in A^*$ we have
\[
S_{A_g}(f \otimes_A a_g) = \sum_j m_j \otimes_{\overline{A}} n_j f a_g
\]
So it suffices to show that point 2 of the lemma is equivalent to
\[
\label{eq:completelyrandomequation}
\sum_j m_j \otimes_A n_j \lambda a_g = (-1)^{|a_g|} a_g \otimes_{\overline{A}} \overline{\lambda} \quad \forall a_g \in A_g .
\]
Assuming point 2 implies this equation by the computation
\[
\sum_j m_j \otimes_A n_j \lambda a_g = (-1)^{|a_g|}\sum_j m_j \otimes_A (n_j a_g) \overline{\lambda} = (-1)^{|a_g|} \sum_j m_j (n_j a_g) \otimes_A \overline{\lambda} = (-1)^{|a_g|} a_g \otimes_A \overline{\lambda}
\]
Conversely, assume that the Equation \ref{eq:completelyrandomequation} holds for all $a_g$ and let $a_{g^{-1}} \in A_{g^{-1}}$.
Since $A_{g^{-1}}$ is an invertible bimodule and the multiplication map $A_{g^{-1}} \otimes_A A_g \to A$ is an isomorphism, we can tensor our equation from the left with $a_{g^{-1}}$ to obtain that
\[
(-1)^{|a_g|} a_{g^{-1}} a_g  \overline{\lambda} = \sum_j (a_{g^{-1}} m_j) (n_j \lambda a_g ) = a_{g^{-1}}  \lambda a_g
\]
\end{proof}

The Serre and the dual in the bicategory of superalgebras combine as follows, specializing the isomorphism $S_c^* \cong S_{c^*}$ to the case at hand.

\begin{lemma}
	Let $A$ be a finite-dimensional superalgebra. Then the map
	\[
	\phi:(A^*)^{\op} \to (A^{\op})^*, \quad \phi(f^{\op})(a^{\op}) = f(a)
	\]
	is an $(A^{\op},A^{\op})$-bimodule isomorphism.
	Note that in $(A^*)^{\op}$ we take the opposite of a bimodule, while in $(A^{\op})^*$ we take the opposite of an algebra.
\end{lemma}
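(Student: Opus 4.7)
The map $\phi$ is evidently a bijection of super vector spaces, since $A$ and $A^{\op}$ coincide as super vector spaces and taking the opposite of a bimodule does not change the underlying super vector space. So the only content of the lemma is the compatibility of $\phi$ with the left and right $A^{\op}$-actions. My plan is to unpack all three Koszul signs involved (the sign in the multiplication of $A^{\op}$, the sign in the opposite bimodule structure $(-)^{\op}$, and the sign in the natural bimodule structure on the dual) and verify that they cancel to give the same action on both sides.

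More concretely, I would first recall the relevant conventions from Appendix~\ref{App: Super alg}: for $f \in A^*$ one has $(a_1 f)(a) = (-1)^{|a_1|(|f|+|a|)} f(a a_1)$ and $(f a_2)(a) = f(a_2 a)$; the opposite bimodule satisfies $a_1^{\op} \cdot f^{\op} = (-1)^{|a_1||f|} (f a_1)^{\op}$ and $f^{\op} \cdot a_2^{\op} = (-1)^{|f||a_2|} (a_2 f)^{\op}$; and the product in $A^{\op}$ is $a^{\op} a_1^{\op} = (-1)^{|a||a_1|} (a_1 a)^{\op}$. For the left action, I would compute
\[
\phi(a_1^{\op} \cdot f^{\op})(a^{\op}) = (-1)^{|a_1||f|} (f a_1)(a) = (-1)^{|a_1||f|} f(a_1 a),
\]
and compare it to $(a_1^{\op} \cdot \phi(f^{\op}))(a^{\op}) = (-1)^{|a_1|(|f|+|a|)} \phi(f^{\op})((a a_1)^{\op}\cdot\text{sign})$, tracking the $A^{\op}$-multiplication sign until the two agree.

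An analogous computation handles the right action: $\phi(f^{\op} \cdot a_2^{\op})(a^{\op})$ acquires the sign $(-1)^{|f||a_2|}$ from the opposite bimodule structure together with $(-1)^{|a_2|(|f|+|a|)}$ from the $A^*$ bimodule structure, which simplify to $(-1)^{|a_2||a|} f(a a_2)$; on the other side, $(\phi(f^{\op}) \cdot a_2^{\op})(a^{\op}) = \phi(f^{\op})(a_2^{\op} a^{\op})$ also picks up the sign $(-1)^{|a||a_2|}$ from the $A^{\op}$-product and evaluates to the same expression.

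The main obstacle is nothing more than keeping all Koszul signs straight; this is a routine but sign-sensitive verification, and no further input is required. Since $\phi$ is bijective and an even intertwiner for both actions, it is the desired isomorphism of $(A^{\op},A^{\op})$-bimodules.
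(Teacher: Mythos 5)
Your proposal is correct and follows essentially the same route as the paper: a direct Koszul-sign verification that $\phi$ intertwines the left and right $A^{\op}$-actions, using the stated conventions for the dual bimodule, the opposite bimodule, and the opposite algebra. The only difference is that the paper checks the left action explicitly and declares the right action analogous, whereas you carry out both.
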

\begin{proof}
	If $b^{\op} \in B^{\op}$, then $b^{\op} f^{\op} = (-1)^{|b| |f|}(fb)^{\op}$ is mapped by $\phi$ to
	\[
	\phi(b^{\op} f^{\op})(a^{\op}) = (-1)^{|b| |f|}(fb)(a) = (-1)^{|b| |f|}f(ba).
	\]
	On the other hand we can compute
	\begin{align*}
	(b^{\op} \phi(f))(a^{\op}) = (-1)^{|b||\phi(f)| + |b| |a|} \phi(f)(a^{\op} b^{\op}) = (-1)^{|b||\phi(f)|} \phi(f)((ba)^{\op})=(-1)^{|b||f|} f(ba)
	\end{align*}
	and so $\phi$ is a left $A^{\op}$-module map.
	Checking it is a right module map is analogous.
\end{proof}

\subsection{The $\Z_2^B \times B\Z_2^F \times\O_2$-action on $\sAlg^{\operatorname{fd}}$}
\label{Sec:total action}

Next we go into the compatibility between the $O_2$-action on $\sAlg^{\operatorname{fd}}$ and the symmetric monoidal $\Z_2^B \times B \Z_2^F$-action on $\sAlg$ from Section \ref{Sec:Z2xBZ2 action on sAlg}.
Note that the $\Z_2^B \times B \Z_2^F$-action restricts to $\sAlg^{\operatorname{fd}}$ since it is symmetric monoidal and so preserves duals and inverses. 
Similarly, because of the symmetric monoidality, the actions combine to a $(\Z_2^B \times B \Z_2^F) \times O_2$-action.  
We will now provide the data of the two actions commuting explicitly.

First of all, note that the functors $\overline{(.)}$ and $(.)^{\op}$ strictly commute.
In particular, the equality $\overline{A}^{\op} = \overline{A^{\op}}$ preserves the canonical duality data on both sides.
This is a consequence of our choice of monoidal data on the complex conjugation functor on $\sVect$ and hence on $\sAlg$.
Therefore, after choosing inverses for all invertible bimodules and remembering the induced canonical isomorphisms $\overline{M}^{\op -1} \cong \overline{M^{\op -1}}$, this provides the data of the functors $\rho(-)$ and $\rho(R)$ commuting.
We have thus specified $R_{-,R}$ and $R_{R,-}$.
Since also $\overline{\overline{A}} = A$ and $A^{\op \op} = A$ strictly, the remaining higher compatibilities $\omega_{g_1,g_2,g_3}$ are identities too.

We turn to the compatibility data between $O_1$ and $B\Z_2^F$.
As a special case of Equation~\eqref{eq:left-right} and~\eqref{eq:op-iso}, we have
\[
A_{(-1)^F} \cong {}_{(-1)^F} A \quad a (-1)^F \mapsto (-1)^F (-1)^{|a|} a
\]
and $(A_{(-1)^F})^{\op} \cong {}_{(-1)^F} (A^{\op})$.
Also recall that $A_{(-1)^F}$ is canonically a self-inverse using the fact that the algebra homomorphism $(-1)F$ squares to one.
Combining these isomorphisms yields the modification specifying how $(-1)^F$ preserves duals as explained in Section \ref{Sec:O2-action}.
In other words, we have specified the $R_{\gamma_1,\gamma_2}$-type data of the action coming from the equation $(-) (-1)^F = (-1)^F (-)$ of $1$-morphisms in the $2$-group $O_1 \times \Z_2^F$.

The only compatibility data between the $\Z_2^R \times B\Z_2^F$-action and the Serre is the isomorphism $\overline{A^*} \cong \overline{A}^*$ which follows from Proposition \ref{prop:preserve serre}. 
It is given by $\overline{f}(\overline{a}) = \overline{f(a)}$.

%

%
%
%
%
%

	 \subsection{Induced actions on the bicategory of superalgebras}
 \label{Sec:induced action on sAlg}
 
In this section we show that a symmetric monoidal action of a $2$-group on $\sVect$ canonically induces a symmetric action on $\sAlg$. 
The main motivation is to get a symmetric monoidal $\Z_2^B \times B \Z_2^F$-action on $\sAlg$ induced by the action of Section \ref{Sec:action on sVect}. 
The proofs in this section are long but straightforward and only included for completeness.
We are not assuming any dualizability and don't restrict to the core.

From a more conceptual perspective, we will start by at least partially constructing a functor $\Aut_{sym-mon} \sVect \to \Aut_{sym-mon} \sAlg$ of monoidal $2$-categories.
Then we get an action of $\mathcal{G}$ on $\sAlg$ by composing
\[
\mathcal{G} \to \Aut_{sym-mon} \sVect \to \Aut_{sym-mon} \sAlg.
\]
Therefore we start with several lemmas on defining this functor on objects and morphisms of $\Aut_{sym-mon} \sVect$, i.e. how to lift symmetric monoidal functors and monoidal natural transformations from $\sVect$ to $\sAlg$.
We will often make use of the symmetric monoidal functor $\sAlg_1 \to \sAlg$.

\begin{remark}
Let $\mathcal{G}$ be a $2$-group with a symmetric monoidal action on a symmetric monoidal $1$-category $\mathcal{C}$.
Suppose that $\mathcal{C}$ satisfies the properties necessary to make a well-defined Morita $2$-category of $E_1$-algebras in $\mathcal{C}$.
Then the results in this section should generalize to obtain a $\mathcal{G}$-action on this.
\end{remark}

\begin{lemma}
Let $F: \sVect \to \sVect$ be a monoidal equivalence. 
Then it induces a functor $\hat{F}: \sAlg \to \sAlg$ which on objects maps the monoid
\[
(A, \eta_A: \C \to A, \mu_A: A \otimes A \to A)
\]
 in $\sVect$ to
\begin{align*}
\hat{F} (A) := (F[A], F[\eta_A]: \C \to F[\C] \to F[A], F[A] \otimes F[A] \cong F[A \otimes A] \xrightarrow{F[\mu_A]} F[A])
\end{align*}
where we used the monoidality data and unitality of $F$.
When $F_1, F_2$ are two monoidal functors, then $\widehat{F_2 \circ F_1} = \hat{F}_1 \circ \hat{F}_2$.
Moreover, if $F$ is symmetric monoidal, then it induces a symmetric monoidal structure on $\hat{F}$.
\end{lemma}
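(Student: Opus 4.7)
The approach is to build $\hat{F}$ level by level on the bicategory $\sAlg$: first on objects (as already given), then on $1$-morphisms (bimodules), then on $2$-morphisms (bimodule maps), and finally to verify the unitor, composition, and monoidal coherences. The key technical inputs are that $F$ is a monoidal equivalence (so it preserves all colimits and the monoidality data is coherently invertible), together with the symmetric monoidal functor $\sAlg_1 \to \sAlg$ recalled earlier in the appendix, which lets us construct many of the required bimodules from genuine algebra isomorphisms.

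First I would define $\hat{F}$ on a $1$-morphism, i.e.\ an $(A,B)$-bimodule $M$, by taking $F[M]$ and equipping it with the $(\hat F(A), \hat F(B))$-action obtained by applying $F$ to the structure maps $A \otimes M \to M$ and $M \otimes B \to M$ and pre-composing with the monoidality isomorphisms $F[A] \otimes F[M] \cong F[A \otimes M]$ and $F[M]\otimes F[B] \cong F[M \otimes B]$. Bilinearity and the bimodule axioms for $\hat F(M)$ follow from the corresponding axioms for $M$ together with the associativity and unit coherence of $F$'s monoidality data. On $2$-morphisms, a bimodule map $\phi : M \to N$ is an intertwiner in $\sVect$, and $F[\phi]$ intertwines the induced actions by naturality of monoidality. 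This gives $\hat F$ on hom-categories.

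Next I would construct the coherent isomorphism $\hat F(N) \circ \hat F(M) \cong \hat F(N \circ M)$ which makes $\hat F$ into a functor of bicategories. For composable $M : C \to B$ and $N : B \to A$, the relative tensor product $N \otimes_B M$ is defined as a coequalizer in $\sVect$. Since $F$ is an equivalence it preserves colimits, so applying $F$ and inserting monoidality isomorphisms produces a canonical isomorphism
\begin{equation*}
F[N] \otimes_{\hat F(B)} F[M] \;\cong\; F[N \otimes_B M]
\end{equation*}
of $(\hat F(A), \hat F(C))$-bimodules. The unitor comes from the isomorphism $F[A] \otimes_{F[A]} F[M] \cong F[M]$, combined with the fact that the identity $1$-morphism on $\hat F(A)$ in $\sAlg$ is $\hat F(A)$ regarded as an $(\hat F(A), \hat F(A))$-bimodule. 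Associativity of the composition isomorphism and the two unit axioms reduce to diagram chases using the pentagon and triangle identities for $F$.

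Functoriality in $F$ is then essentially definitional. For monoidal functors $F_1, F_2$, both $\widehat{F_2 F_1}(A)$ and $\hat F_2(\hat F_1(A))$ have underlying vector space $F_2 F_1[A]$, and their algebra structures agree because the monoidality isomorphism for $F_2 F_1$ is by definition the composite of those for $F_1$ and $F_2$. The same matching propagates to bimodules and bimodule maps. For symmetric monoidality of $\hat F$: if $F$ is symmetric monoidal, the monoidality isomorphism $F[A \otimes B] \cong F[A] \otimes F[B]$ is an even algebra isomorphism $\hat F(A \otimes B) \to \hat F(A) \otimes \hat F(B)$ in $\sAlg_1$, and under $\sAlg_1 \to \sAlg$ this yields an invertible bimodule realizing the monoidality of $\hat F$. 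Compatibility with the braiding of $\sAlg$ follows from $F$ being symmetric, applied to the Koszul-signed swap algebra map $A \otimes B \to B \otimes A$.

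The main obstacle is not any single step but the sheer volume of coherence diagrams that must be verified; all of them reduce formally to coherence of $F$ itself, but particular care is needed in showing that $F$ preserves relative tensor products compatibly with the $1$- and $2$-cell data, since this is where the interplay of the monoidality of $F$ with the coequalizer structure of $\otimes_B$ happens.
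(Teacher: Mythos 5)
Your proposal is correct and follows the same overall architecture as the paper's proof: define $\hat F$ on objects, bimodules and bimodule maps by applying $F$ and inserting monoidality data, construct the compositor from the behaviour of $F$ on relative tensor products, and reduce all coherences to the coherence of $F$. The one place where your route genuinely differs is the key isomorphism $F[N\otimes_B M]\cong F[N]\otimes_{F[B]}F[M]$. You justify it by observing that an equivalence preserves colimits and that $\otimes_B$ is a coequalizer; the paper instead constructs the comparison map directly (by checking that the composite $F[N]\otimes F[M]\to F[N\otimes M]\to F[N\otimes_B M]$ is $\hat F[B]$-balanced), proves surjectivity, and then verifies the universal property of the tensor product by hand using a chosen monoidal inverse $F^{-1}$ and the monoidal natural isomorphisms $F^{-1}F\Rightarrow\mathrm{id}$ and $\mathrm{id}\Rightarrow FF^{-1}$. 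Your argument is more conceptual and shorter, but it carries the same hidden obligation you correctly flag at the end: one must still check that, under the monoidality isomorphisms, the pair of parallel arrows $F[N\otimes B\otimes M]\rightrightarrows F[N\otimes M]$ is identified with the pair defining $\hat F[N]\otimes_{\hat F[B]}\hat F[M]$, which is exactly the balancing diagram the paper writes out. Either way the content is identical, and your treatment of the symmetric monoidal structure (the monoidality isomorphism being an algebra map because $F$ is symmetric, then pushed through $\sAlg_1\to\sAlg$) matches the paper's.
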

\begin{proof}
Without loss of generality we can assume $F$ is strictly unital and associative.
To verify associativity of the algebra $\hat{F}(A)$, consider the diagram
\[
\begin{tikzcd}
\ &
F[A] \otimes F[A] \otimes F[A] \arrow[dl] \arrow[dr] &
\
\\
F[A \otimes A] \otimes F[A] \arrow[d,"{F[\mu_A] \otimes \id_{F[A]}}", swap] \arrow[dr] &
\ &
F[A] \otimes F[A \otimes A] \arrow[d, "{ \id_{F[A]} \otimes F[\mu_A]}"] \arrow[dl]
\\
F[A] \otimes F[A] \arrow[d] &
F[A \otimes A \otimes A] \arrow[dl, "{F[\mu_A \otimes \id_{A}]}"] \arrow[dr, "{F[\id_A \otimes \mu_A]}", swap] &
F[A] \otimes F[A] \arrow[d]
\\
F[A \otimes A] \arrow[dr, "{F[\mu_A]}"] &
\ &
F[A \otimes A] \arrow[dl, "{F[\mu_A]}"]
\\
\ &
F[A] &
\
\end{tikzcd}
\]
The upper square commutes by the condition that $F$ is monoidal.
The left and right square commute by naturality of the isomorphism $F[V \otimes W] \cong F[V] \otimes F[W]$.
For the left square for example, we take $V = A \otimes A, W = A$ and the morphism to be $\mu_A \otimes \id_A$.
The lower square commutes by associativity of $A$.
It is easy to see that $\hat{F}[A]$ is unital.

Next we define a functor $\hat{F}: Bim(A,B) \to Bim(\hat{F}[A], \hat{F}[B])$.
Similarly as for algebras, if $(M, l_M: B \otimes M \to M, r_M: M \otimes A \to M)$ is a $(B,A)$-bimodule, we define $\hat{F}[M]$ by applying $F$ on all vector spaces and maps, applying the monoidality data whereever necessary.
The three associativity conditions equating the several ways to compose three elements holds for $\hat{F}[M]$ by a similar diagram as the one above.

If $\phi: M \to M'$ is a morphism of $(B,A)$-bimodules, we define $\hat{F}[\phi]$ by $F[\phi]$.
This is a left $\hat{F}(B)$-module map because the diagram
\[
\begin{tikzcd}
F(B) \otimes F(M) \arrow[r] \arrow[d, "{\id_{F(A)} \otimes F(\phi)}"]&
F(B \otimes M) \arrow[r, "{l_M}"] \arrow[d,"{F(\id_A \otimes \phi)}"]&
F(M) \arrow[d,"{F(\phi)}"]
\\
F(B) \otimes F(M') \arrow[r] &
F(B \otimes M') \arrow[r, "{l_{M'}}"] &
F(M')
\end{tikzcd}
\]
commutes in vector spaces. 
Indeed, the right square commutes because $\phi$ is a left $B$-module map and the left square commutes because $F[V \otimes W] \cong F[V] \otimes F[W]$ is natural.
Similarly, $\hat{F}(\phi)$ is a right $\hat{F}(A)$-module map.
The map $\hat{F}: Bim(A,B) \to Bim(\hat{F}[A], \hat{F}[B])$ maps $\id_M$ to $\id_{\hat{F}[M]}$ and is a functor because $F$ is.

Next we have to give the data specifying that $\hat{F}$ preserves composition of $1$-morphisms.
Explicitly, we have to define a natural transformation
\[
\hat{F}[N \otimes_B M] \cong \hat{F}[N] \otimes_{\hat{F}[B]} \hat{F}[M]
\]
of functors
\[
Bim(B,C) \times Bim(A,B) \to Bim(\hat{F}[A], \hat{F}[B]).
\]
We claim the monoidality data of $F$ factors uniquely to the desired isomorphism
\[
\begin{tikzcd}
\hat{F}[N \otimes_B M] 
&
 \hat{F}[N] \otimes_{\hat{F}[B]} \hat{F}[M] \arrow[l,dashed] 
\\
\hat{F}[N \otimes_\C M]  \arrow[u]
&
\hat{F}[N] \otimes_{\C} \hat{F}[M] \arrow[l] \arrow[u]
\end{tikzcd}
\]
For showing that the dashed arrow exists, it suffices to show that the composition from the lower right to the upper left corner through the left is $B$-balanced.
For this, we have to show the diagram
\[
\begin{tikzcd}[column sep=50]
F[N] \otimes F[B] \otimes F[M] \arrow[r] \arrow[d]
&
F[N] \otimes F[B \otimes M] \arrow[d] \arrow[r, "{\id_{F[N]} \otimes F[l_M]}"]
&
F[N] \otimes F[M] \arrow[d]
\\
F[N \otimes B] \otimes F[M] \arrow[r] \arrow[d,"{F[r_N] \otimes \id_{F[M]}}"]
&
F[N \otimes B \otimes M] \arrow[r,"{F[\id_{N} \otimes l_M]}"] \arrow[d,"{F[r_N \otimes \id_M]}"]
&
F[N \otimes M] \arrow[d]
\\
F[N] \otimes F[M]  \arrow[r]
&
F[N \otimes M] \arrow[r]
& 
F[N \otimes_B M]
\end{tikzcd}
\]
commutes.
The upper left square commutes because $F$ is monoidal, the upper right and lower left square commute because $F[V \otimes W] \cong F[V \otimes W]$ is natural and the lower right square commutes by definition of the tensor product over $B$.
We now show that the map
 \[
\hat{F}[N \otimes_B M] \to \hat{F}[N] \otimes_{\hat{F}[B]} \hat{F}[M]
\]
is an isomorphism.
It is surjective as the left and lower map in the above square are surjective.
For injectivity we use that $F$ is an equivalence. 
Let $F^{-1}$ be a choice of monoidal inverse of $F$ with corresponding monoidal natural isomorphism $F^{-1} F \Rightarrow \id_{\sVect}$.
To show that $\hat{F}[N \otimes_B M]$ is a tensor product of $\hat{F}[N]$ over $\hat{F}[B]$ with $\hat{F}[M]$ it suffices to verify the universal property.
Given how the map that we want to be an isomorphism is defined, we have to show that for every $\hat{F}[B]$-balanced map $f: \hat{F}[M] \otimes \hat{F}[N] \to V$ into any supervector space $V$ there exists a unique lift $\tilde{f}: \hat{F}[M \otimes_B N] \to V$ with respect to the composition
\[
\hat{F}[M] \otimes \hat{F}[N] \cong \hat{F}[M \otimes N] \to \hat{F}[M \otimes_B N].
\]
Define $f':N \otimes M \to F^{-1}[V]$ by the composition
\[
N \otimes M \cong F^{-1} F[N \otimes M] \cong F^{-1} [F[N] \otimes F[M]] \xrightarrow{F^{-1}(f)} F^{-1}[V].
\]
If we can show that $f'$ is $B$-balanced, then there exists a unique lift $\tilde{f'}:N \otimes_B M \to F^{-1}[V]$. 
Using that $F$ is fully faithful, we can then apply $F$ to $f'$ and obtain that there is a unique lift in the diagram
\[
\begin{tikzcd}
\ &
F[N \otimes M] \arrow[ld,equals] \arrow[r] \arrow[d] &
F[N \otimes_B M] 
\\
F[N \otimes M] \arrow[d] \arrow[r] &
F F^{-1} F[N \otimes M] \arrow[d] &
\
\\
F[N] \otimes F[M] \arrow[d] \arrow[r] &
F F^{-1} [F[N] \otimes F[M]] \arrow[d] &
\
\\
V \arrow[r] &
F F^{-1} [V] \arrow[uuur, dashed, bend right] & 
\
\end{tikzcd}
\]
The left vertical composition is $f$ and the two left squares commute because $\id_{\sVect} \Rightarrow F F^{-1}$ is a monoidal natural transformation.
Therefore there is a desired lift $\tilde{f}$ in that case and so it suffices to show that $f'$ is $B$-balanced.
This means we have to show the following diagram commutes
\[
\begin{tikzcd}
N \otimes B \otimes M \arrow[r,"l_M"] \arrow[d,"r_N"] \arrow[dr] &
N \otimes M \arrow[rr] & 
\ &
F^{-1} F[N \otimes M]
\\
N \otimes M  \arrow[dd]&
F^{-1} F[N \otimes B \otimes M] \arrow[ddl,"{r_N}", bend right=8, swap] \arrow[rru,"{l_M}"]&
F^{-1}[F[N] \otimes F[B \otimes M]] \arrow[l] \arrow[rd,"l_M"]&
\
\\
\ &
F^{-1}[F[N \otimes B] \otimes F[M]] \arrow[u] \arrow[dr,"{r_N}"] &
F^{-1}[F[N] \otimes F[B] \otimes F[M]] \arrow[l] \arrow[u] \arrow[d,"{r_{\hat{F}[N]}}"] \arrow[r,"{l_{\hat{F}[M]}}"]  &
F^{-1}[F[N] \otimes F[M]] \arrow[d] \arrow[uu]
\\
F^{-1} F[N \otimes M] &
\ & 
F^{-1}[F[N] \otimes F[M]]  \arrow[ll] \arrow[r]&
F^{-1}[V]
\end{tikzcd}
\]
where we used a natural isomorphism $\id \Rightarrow F^{-1} F$.
The lower right square commutes because $f$ is $\hat{F}[B]$-balanced. The triangles north and west of that square commute by definition of the bimodule structure on $\hat{F}[M]$.
The middle square commutes by monoidal associativity of $F$.
The remaining quadrilaterals commute by naturality of $\id \Rightarrow F^{-1} F$ and the iso $F[V] \otimes F[W] \cong F[V \otimes W]$.

Now let $\psi: N \to N'$ be a map of $(C,B)$-bimodules and $\phi: M \to M'$ a map of $(B,A)$-bimodules.
Then the naturality square
\[
\begin{tikzcd}
\hat{F}[N \otimes_B M] \arrow[d, "{\hat{F}[\psi \otimes_B \phi]}"] & \arrow[l] \hat{F}[N] \otimes_{\hat{F}[B]} \hat{F}[M] \arrow[d,"{\hat{F}[\psi] \otimes_{\hat{F}[B]} \hat{F}[\phi]}"]
\\
\hat{F}[N' \otimes_B M']  & \hat{F}[N'] \otimes_{\hat{F}[B]} \hat{F}[M'] \arrow[l] 
\end{tikzcd}
\]
commutes because it is induced by the square in vector spaces
\[
\begin{tikzcd}
F[N \otimes M] \arrow[d, "{F[\psi \otimes \phi]}"] & \arrow[l] F[N] \otimes F[M] \arrow[d,"{F[\psi] \otimes F[\phi]}"]
\\
F[N' \otimes M']  & F[N'] \otimes F[M'] \arrow[l] 
\end{tikzcd}
\]
which commutes by naturality of $F[V \otimes W] \cong F[V] \otimes F[W]$.

Note that the algebra $\hat{F}[A]$ seen as a $(\hat{F}[A], \hat{F}[A])$-bimodule is equal to the functor $\hat{F}: Bim(A,A) \to Bim(\hat{F}[A], \hat{F}[A])$ applied to the $(A,A)$-bimodule $A$.
This shows that $\hat{F}$ preserves the identity $1$-morphisms.
$\hat{F}$ satisfies the associativity axiom for composition of three $1$-morphisms, because $F$ does for the tensor products over $\C$ that these compositions are induced by.
Similarly, for an $(A,B)$-bimodule the diagram
\[
\begin{tikzcd}
\hat{F}[A] \otimes_{\hat{F}[A]} \hat{F}[M] \arrow[r] \arrow[d] & \hat{F}[A \otimes_A M] \arrow[ld]
\\
\hat{F}[M] & \
\end{tikzcd}
\]
is commutative because it is for tensoring over $\C$ and the same for the other side.
We have shown that $\hat{F}$ is a functor between $2$-categories.
Looking at the constructions, we have the equality $\widehat{F_2 \circ F_1} = \hat{F}_1 \circ \hat{F}_2$ on the nose.

Now assume $F$ is symmetric.
We start by defining the natural isomorphism $\hat{\phi}_{A_1,A_2}:\hat{F}[A_1 \otimes A_2] \cong \hat{F}[A_1] \otimes \hat{F}[A_2]$ between functors $\sAlg \times \sAlg \to \sAlg$.
For an object $(A,\mu_A, \eta_A)$ the monoidal data $\phi_{A_1, A_2}: F[A_1 \otimes A_2] \cong F[A_1] \otimes F[A_2]$ is an isomorphism of supervector spaces.
We have to verify that it is an algebra map so that we can take the natural isomorphism to be the induced invertible bimodule on the object $A$.
Recall that the tensor product of two algebras has its multiplication defined as
\[
A_1 \otimes A_2 \otimes A_1 \otimes A_2 \xrightarrow{\id \otimes \sigma_{A_2, A_1} \otimes \id} A_1 \otimes A_1 \otimes A_2 \otimes A_2 \xrightarrow{\mu_{A_1} \otimes \mu_{A_2}} A_1 \otimes A_2.
\]
where $\sigma$ denotes the braiding.
Therefore we consider the diagram
\[
\begin{tikzcd}
F[A_1 \otimes A_2] \otimes F[A_1 \otimes A_2]  \arrow[rr] \arrow[d]
& \
& F[A_1] \otimes F[A_2] \otimes F[A_1] \otimes F[A_2] \arrow[d,"{\id \otimes \sigma_{F[A_2] \otimes F[A_1]} \otimes \id}"] \arrow[dl]
\\
F[A_1 \otimes A_2 \otimes A_1 \otimes A_2]  \arrow[d,"{F(\id \otimes \sigma_{A_2, A_1} \otimes \id)}"] 
& F[A_1] \otimes F[A_2 \otimes A_1] \otimes F[A_2]  \arrow[d,"{\id \otimes F[\sigma_{A_2, A_1}] \otimes \id}", swap] \arrow[l]
& F[A_1] \otimes F[A_1] \otimes F[A_2] \otimes F[A_2] \arrow[d] \arrow[dl]
\\
F[A_1 \otimes A_1 \otimes A_2 \otimes A_2] \arrow[d,"{F[\mu_{A_1} \otimes \mu_{A_2}]}"] 
& F[A_1] \otimes F[A_1 \otimes A_2] \otimes F[A_2]  \arrow[l]
& F[A_1 \otimes A_1] \otimes F[A_2 \otimes A_2] \arrow[d,"{F[\mu_{A_1}] \otimes F[\mu_{A_2}]}"]\arrow[ll, bend left=12, shorten >=.3cm]
\\
F[A_1 \otimes A_2] \arrow[rr] 
& \
& F[A_1] \otimes F[A_2]
\end{tikzcd}
\]
Going from the upper left to the lower right corner through the left is the multiplication in $\hat{F}[A_1 \otimes A_2]$ followed by the isomorphism to $\hat{F}[A_1] \otimes \hat{F}[A_2]$, while going through the right amounts to first doing the isomorphism and then multiplication in $\hat{F}[A_1] \otimes \hat{F}[A_2]$.
The upper left corner commutes by various applications of the associativity condition a monoidal functor satisfies.
The left square commutes by naturality of $\phi$.
The upper right parallelogram commutes because $F$ is symmetric.
The remaining lower part commutes by naturality of $\phi$ and the associativity condition.

We now set 
\[
\hat{\phi}_{A_1,A_2} := \hat{F}[A_1 \otimes A_2]_{\phi_{A_1,A_2}}
\]
which is an invertible $1$-morphism from $\hat{F}[A_1] \otimes \hat{F}[A_2]$ to $\hat{F}[A_1 \otimes A_2]$.
Let $M_1$ be a $(B_1, A_1)$-bimodule and $M_2$ a $(B_2, A_2)$-bimodule.
For the naturality data of $\hat{\phi}$, we have to specify a natural isomorphism
\[
\hat{F}[M_1 \otimes M_2] \otimes_{\hat{F}[A_1 \otimes A_2]} \hat{F}[A_1 \otimes A_2]_{\phi_{A_1,A_2}} \cong \hat{F}[B_1 \otimes B_2]_{\phi_{B_1,B_2}} \otimes_{\hat{F}[B_1] \otimes \hat{F}[B_2]} (\hat{F}[M_1] \otimes \hat{F}[M_2])
\]
Using the bimodule actions, the right hand side is canonically isomorphic as a vector space to $F[M_1] \otimes F[M_2]$ while the left hand side is canonically isomorphic to $F[M_1 \otimes M_2]$.
This becomes a bimodule isomorphism if we equip $F[M_1 \otimes M_2]$ with the left $\hat{F}[B_1 \otimes B_2]$ action using the definition of $\hat{F}$ on modules from before, but with a right $\hat{F}[A_1] \otimes \hat{F}[A_2]$-action given by first applying $\phi_{A_1,A_2}$ before using the right $\hat{F}[A_1 \otimes A_2]$-action.
Similarly we equip $F[M_1] \otimes F[M_2]$ with its canonical right $\hat{F}[A_1] \otimes \hat{F}[A_2]$-action, but compose the left $\hat{F}[B_1] \otimes \hat{F}[B_2]$-action with $\phi_{B_1,B_2}^{-1}$ to make it into a $\hat{F}[B_1 \otimes B_2]$-action.
Under these identifications, we define $\hat{\phi}_{M_1,M_2} := \phi_{M_1,M_2}^{-1}$. \footnote{The reason for the awkward inverse in this formula is a convention issue. Note that the usual convention for the direction of the $2$-morphisms filling the naturality square of a natural transformation $F \Rightarrow G$ in $2$-categories goes in the direction $G \Rightarrow F$. We also use this convention, see the second bullet of Defininition \ref{Def: Hfixed point}.}

We now show that this is a bimodule map.
Recalling how the definition of the tensor product of modules is defined, the diagram that has to commute for the preservation of left multiplication is
\[
\begin{tikzcd}
F[A_1 \otimes A_2] \otimes F[M_1 \otimes M_2] \arrow[d] &
F[A_1 \otimes A_2] \otimes F[M_1] \otimes F[M_2] \arrow[l]
\\
F[A_1 \otimes A_2 \otimes M_1 \otimes M_2] \arrow[dd,"{F[\id \otimes \sigma_{A_2, M_1} \otimes \id]}"] & 
F[A_1] \otimes F[A_2] \otimes F[M_1] \otimes F[M_2] \arrow[u] \arrow[d,"{\id \otimes \sigma_{F[A_2], F[M_1]} \otimes \id}"]
\\
\ &
F[A_1] \otimes F[M_1] \otimes F[A_2] \otimes F[M_2] \arrow[d]
\\
F[A_1 \otimes M_1 \otimes A_2 \otimes M_2] \arrow[d,"{F[l_{M_1} \otimes l_{M_2}]}"] & 
F[A_1 \otimes M_1] \otimes F[A_2 \otimes M_2] \arrow[l] \arrow[d,"{F[l_{M_1}] \otimes F[l_{M_2}]}"]
\\
F[M_1 \otimes M_2] &
F[M_1 ] \otimes F[M_2] \arrow[l]
\end{tikzcd}
\]
This amounts to applying symmetry of $F$ and associativity with respect to $\otimes$ in the upper rectangle and naturality of $\phi$ in the lower rectangle.
Compatibility for the right $\hat{F}[B_1] \otimes \hat{F}[B_2]$-action is analogous.

Next we have to show that the above naturality data defines a natural isomorphism between functors
\[
\catf{Bim}(A_1,B_1) \times \catf{Bim}(A_2, B_2) \to \catf{Bim}(\hat{F}[A_1] \otimes \hat{F}[A_2], \hat{F}[B_1 \otimes B_2])
\]
For showing this is a natural isomorphism, let $\phi_1: M_1 \to M_1'$ be a $(B_1,A_1)$-bimodule map and $\phi_2: M_2 \to M_2'$ a $(B_2, A_2)$-bimodule map.
We have to show that the diagram
\[
\begin{tikzcd}
\hat{F}[M_1 \otimes M_2] \otimes_{\hat{F}[A_1 \otimes A_2]} \hat{F}[A_1 \otimes A_2]_{\phi_{A_1,A_2}} \arrow[r,"{\hat{\phi}_{M_1,M_2}}"] \arrow[d,"{\hat{F}[\phi_1 \otimes \phi_2] \otimes \id}"]
& \hat{F}[B_1 \otimes B_2]_{\phi_{B_1,B_2}} \otimes_{\hat{F}[B_1] \otimes \hat{F}[B_2]} (\hat{F}[M_1] \otimes \hat{F}[M_2]) \arrow[d,"{\id \otimes \hat{F}[\phi_1] \otimes \hat{F}[\phi_2]}"]
\\
\hat{F}[M_1' \otimes M_2'] \otimes_{\hat{F}[A_1 \otimes A_2]} \hat{F}[A_1 \otimes A_2]_{\phi_{A_1,A_2}} \arrow[r,"{\hat{\phi}_{M_1',M_2'}}"]
& \hat{F}[B_1 \otimes B_2]_{\phi_{B_1,B_2}} \otimes_{\hat{F}[B_1] \otimes \hat{F}[B_2]} (\hat{F}[M_1'] \otimes \hat{F}[M_2'])
\end{tikzcd}
\]
commutes.
This follows because the underlying diagram in supervector spaces
\[
\begin{tikzcd}
F[M_1 \otimes M_2] \arrow[d,"{F[\phi_1 \otimes \phi_2]}"] & F[M_1] \otimes F[M_2] \arrow[d,"{F[\phi_1] \otimes F[\phi_2]}"] \arrow[l,"{\phi_{M_1,M_2}}"]
\\
F[M_1' \otimes M_2']  & F[M_1'] \otimes F[M_2'] \arrow[l,"{\phi_{M_1',M_2'}}"] 
\end{tikzcd}
\]
commutes.
Finally we have to show that
\[
\begin{tikzcd}
\hat{F}[A_1 \otimes A_2] \otimes_{\hat{F}[A_1 \otimes A_2]} \hat{F}[A_1 \otimes A_2]_{\phi_{A_1, A_2}}  \arrow[d,"{l_{\hat{F}[A_1 \otimes A_2]_{\phi_{A_1, A_2}} }}"]
& \hat{F}[A_1 \otimes A_2]_{\phi_{A_1, A_2}}  \otimes_{\hat{F}[A_1] \otimes \hat{F}[A_2]} (\hat{F}[A_1] \otimes \hat{F}[A_2]) \arrow[dl,"{r_{\hat{F}[A_1 \otimes A_2]_{\phi_{A_1, A_2}} }}"] \arrow[l,"{\hat{\phi_{M_1, M_2}}}"]
\\
\hat{F}[A_1 \otimes A_2]_{\phi_{A_1, A_2}} &
\end{tikzcd}
\]
commutes, where $M_1$ is the $(A_1,A_1)$-bimodule $A_1$ and $M_2$ is the $(A_2,A_2)$-bimodule $A_2$ (we did not plug in $A_1$ as this would lead to a conflict of notation with $\hat{\phi}_{A_1, A_2}$ for $A_1,A_2$ algebras).
This is immediate after writing out the definitions given that only right multiplication is altered by $\phi_{A_1, A_2}$.
We have now shown that $\hat{\phi}$ defines a natural isomorphism $F[A_1 \otimes A_2] \cong F[A_1] \otimes F[A_2]$.
Because we can assume strict unitality of $F$ and corresponding monoidal data $F[A \otimes \C] \cong F[A] \otimes F[\C] = F[A]$, we also get strict unitality of $\hat{F}$.
The associator for $\hat{F}$ is the identity using that $F$ is associative on the nose.
We have now shown that $\hat{F}$ is monoidal.

To show that $\hat{F}$ is symmetric, take $\hat{F}[A_1 \otimes A_2] \cong \hat{F}[A_2 \otimes A_1]$ to be the bimodule induced by the algebra isomorphism $\sigma_{A_1,A_2}$.
The symmetry data requires filling the diagram of $1$-morphisms
\[
\begin{tikzcd}
\ &
\hat{F}[A_1 \otimes A_2] \arrow[ld] &
\
\\
\hat{F}[A_1] \otimes \hat{F}[A_2] &
\ &
\hat{F}[A_2 \otimes A_1] \arrow[lu] \arrow[ld]
\\
\ &
\hat{F}[A_2]  \otimes \hat{F}[A_1] \arrow[lu] &
\
\end{tikzcd}
\]
The diagram is induced by a commutative diagram of homomorphisms of algebras and so can be filled by isomorphisms of the form $C_g \otimes_B B_f \cong C_{gf}$.
For the same reason, the fact that $\sigma_{A_2, A_1} \sigma_{A_1,A_2} = \id_{A_1 \otimes A_2}$ gives a $2$-isomorphism from the composition
\[
\hat{F}[A_1 \otimes A_2] \to \hat{F}[A_2 \otimes A_1] \to \hat{F}[A_1 \otimes A_2]
\]
to the identity bimodule.
Braiding three elements in different order as in \cite[Figure 2.3]{schommerpriesthesis} also has a canonical filling because of the corresponding condition for the braiding of $\sVect$.
Finally, the remaining conditions \cite[BHA1, BHA2]{mccrudden2000balanced} on being a symmetric monoidal functor follow because all $2$-morphisms involved are induced by strict equatlities in the $1$-category $\sAlg_1$.
\end{proof}

\begin{remark}
Note that we needed $F$ to be monoidal to induce a functor $\hat{F}$ at all.
Also if $F$ is not symmetric, $\hat{F}$ will not be monoidal.
But if $F$ is symmetric monoidal, then $\hat{F}$ is even symmetric monoidal.
\end{remark}

\begin{lemma}
Let $\alpha: F \Rightarrow G$ be a monoidal natural isomorphism between monoidal functors on $\sVect$.
Then $\alpha$ induces a natural transformation on $\sAlg$ where on objects it is given by mapping $A$ to the $(\hat{G} (A), \hat{F} (A))$-bimodule induced by the algebra homomorphism $\hat{F}(A) \to \hat{G}(A)$ given by $\alpha(A): F(A) \to G(A)$ as a map of vector spaces.
If $F$ and $G$ are symmetric, then $\alpha$ is canonically symmetric monoidal.
\end{lemma}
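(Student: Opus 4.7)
The plan is to construct the data of the natural transformation $\hat{\alpha}: \hat{F} \Rightarrow \hat{G}$ out of the natural transformation $\alpha: F \Rightarrow G$, then verify the axioms by reducing them to the corresponding axioms for $\alpha$ on $\sVect$. First, I would verify that for each superalgebra $(A, \mu_A, \eta_A)$, the even linear isomorphism $\alpha(A): F(A) \to G(A)$ is an algebra homomorphism $\hat{F}(A) \to \hat{G}(A)$. Unit preservation follows because $\alpha$ is monoidal and commutes with the morphism $\C \to F(\C) \to F(A)$; multiplicativity follows from naturality of $\alpha$ applied to $\mu_A$ combined with the monoidality square $\alpha(A) \otimes \alpha(A) = \phi_{A,A}^G \circ \alpha(A \otimes A) \circ (\phi_{A,A}^F)^{-1}$. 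Invoking the functor $\sAlg_1 \to \sAlg$ then produces the invertible $(\hat{G}(A), \hat{F}(A))$-bimodule $\hat{\alpha}(A) := \hat{G}(A)_{\alpha(A)}$, giving the component on objects.

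Next I would supply the naturality filling for a $1$-morphism $M: A \to B$ in $\sAlg$, i.e.\ an invertible $2$-cell
\[
\hat{F}(M) \otimes_{\hat{F}(A)} \hat{G}(A)_{\alpha(A)} \;\cong\; \hat{G}(B)_{\alpha(B)} \otimes_{\hat{G}(B)} \hat{G}(M).
\]
As vector spaces both sides are canonically $F(M)$ and $G(M)$ respectively after absorbing the twist-bimodules, and I would define the filling to be (a suitable sign-corrected version of) $\alpha(M): F(M) \to G(M)$. That this is a bimodule map boils down to the naturality of $\alpha$ applied to the left and right action maps $A \otimes M \to M$ and $M \otimes B \to M$, together with the monoidality squares for $\alpha$ on $A \otimes M$ and $M \otimes B$; this is the same pattern of diagram-chase used in the preceding lemma to show that $\hat{F}$ is well-defined on $1$-morphisms.

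Having the components and their naturality, I would check the three remaining conditions for a natural transformation of bicategories: unitality (identity $1$-morphism $A \in \mathbf{Bim}(A,A)$ is sent to the identity filling, immediate from $\alpha(A) = \alpha(A)$), compatibility with composition of $1$-morphisms $N \circ M$ (reduces to naturality of $\alpha$ applied to $\mu_B: B \otimes M \to M$ combined with monoidality of $\alpha$ on $N \otimes M$), and compatibility with $2$-morphisms $\phi: M \Rightarrow M'$ (immediate from naturality of $\alpha$ at the map $\phi: M \to M'$). Each of these descends directly from a single square commuting in $\sVect$, so no new content is needed.

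Finally, assume $F, G$ are symmetric and $\alpha$ is a monoidal natural transformation. To lift $\hat{\alpha}$ to a monoidal natural transformation, I would construct the modification
\[
\hat{\alpha}(A_1 \otimes A_2) \;\cong\; \hat{\phi}^G_{A_1,A_2}^{-1} \circ (\hat{\alpha}(A_1) \otimes \hat{\alpha}(A_2)) \circ \hat{\phi}^F_{A_1,A_2}
\]
out of the monoidality square $\alpha(A_1) \otimes \alpha(A_2) = \phi^G_{A_1,A_2} \circ \alpha(A_1 \otimes A_2) \circ (\phi^F_{A_1,A_2})^{-1}$ for $\alpha$ in $\sVect$. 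This is an equality of algebra homomorphisms $\hat{F}(A_1 \otimes A_2) \to \hat{G}(A_1) \otimes \hat{G}(A_2)$, which under $\sAlg_1 \to \sAlg$ becomes the required invertible $2$-cell. The unitality and hexagon-type axioms for a monoidal/symmetric natural transformation are then automatic because all the $2$-cells in sight are images of equalities of algebra homomorphisms under $\sAlg_1 \to \sAlg$, exactly as in the proof that $\hat{F}$ is symmetric monoidal.

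The main obstacle is purely bookkeeping: one must consistently distinguish strict equalities in $\sAlg_1$ (where $\alpha$ literally is an algebra homomorphism) from the $2$-cells they induce in $\sAlg$ (which involve the twist-bimodule construction $B_f$), and track how the Koszul signs appearing in the definitions of $\hat{F}$ and $\hat{G}$ on $1$-morphisms interact with $\alpha(M)$. All these subtleties are resolved in one stroke by invoking the symmetric monoidal functor $\sAlg_1 \to \sAlg$, so there is no genuinely new diagram to verify beyond those already treated in the previous two lemmas.
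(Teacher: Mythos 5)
Your proposal matches the paper's proof in all essentials: the component at $A$ is the twist bimodule $\hat{G}(A)_{\alpha(A)}$, the naturality filling for a bimodule $M$ is obtained by absorbing the twists so that the two sides become $G(M)$ and $F(M)$ and then inserting $\alpha(M)$ (the paper takes $\alpha(M)^{-1}$ purely because of its direction convention for naturality squares), and every axiom is verified by reducing to a naturality or monoidality square for $\alpha$ in $\sVect$. The only blemish is that your displayed naturality isomorphism is mis-typed --- the tensor products do not compose as written; it should read $\hat{G}(M)\otimes_{\hat{G}(A)}\hat{G}(A)_{\alpha(A)}\cong \hat{G}(B)_{\alpha(B)}\otimes_{\hat{F}(B)}\hat{F}(M)$ --- but your prose makes the intended cell clear, and your sketch of the symmetric-monoidal upgrade (all coherence cells are images of equalities of algebra homomorphisms under $\sAlg_1\to\sAlg$) is in fact more complete than the paper's, whose proof of that part is truncated.
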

\begin{proof}
For $A$ an algebra, $\alpha[A]: F[A] \to G[A]$ is a linear isomorphism.
Because $\alpha$ is monoidal and natural, this defines an algebra homomorphism $\hat{F}[A] \to \hat{G}[A]$.
Define
\[
\hat{\alpha}[A] := \hat{G}[A]_{\alpha[A]}
\]
to be the $(\hat{G}[A], \hat{F}[A])$-bimodule induced by this algebra homomorphism.
Because $\alpha[A]$ is an invertible linear map, this bimodule is invertible.
Being a natural isomorphism between functors between $2$-categories, we still need to provide additional naturality data for $\hat{\alpha}$ given a $(B,A)$-bimodule $M$.
For this we have to provide an isomorphism
\[
\hat{G}[M] \otimes_{\hat{G}[A]} \hat{G}[A]_{\alpha[A]} \cong \hat{G}[B]_{\alpha[B]} \otimes_{\hat{F}[B]} \hat{F}[M] 
\]
of $(\hat{G}[B], \hat{F}[A])$-bimodules.
The situation is similar as in the proof of naturality of the monoidality data of $\hat{F}$ in the last lemma.
The right hand side is canonically isomorphic to $F[M]$ with its canonical right $\hat{F}[A]$-action, but we compose the left $\hat{F}[B]$-action with $\alpha[A]^{-1}$ to make it into a $\hat{G}[B]$-action. 
The left hand side is $G[M]$ instead has an interesting right $\hat{F}[A]$-action given by first applying $\alpha[A]$ before using the right $\hat{G}[A]$-action.
Then we define $\hat{\alpha}[M] := \alpha[M]^{-1}$. 

We now have to show that this is a bimodule map for the bimodule structures described above.
For showing it is a right $\hat{F}[A]$-module map, consider the diagram
\[
\begin{tikzcd}
F[M] \otimes F[A] \arrow[d,"{\alpha[M] \otimes \id}"] \arrow[r]
&
F[M \otimes A]  \arrow[dd,"{\alpha[M \otimes A]}"] \arrow[r, "{F[r_M]}"]
&
F[M] \arrow[dd,"{\alpha[M]}"]
\\
G[M] \otimes F[A] \arrow[d,"{\id \otimes \alpha[A]}"]
&
\
&
\
\\
G[M] \otimes G[A] \arrow[r]
& 
G[M \otimes A] \arrow[r,"{G[r_M]}"]
&
G[M]
\end{tikzcd}
\]
The left rectangle commutes by monoidality of $\alpha$ and the right rectangle commutes by naturality of $\alpha$.
Starting in the left upper corner and moving through the diagram along the left compositions corresponds to first applying the function $\alpha[M]$ and then multiplying from the right with an element of $F[A]$, while the other direction corresponds to first multiplying from the right with $F[A]$.
A similar diagram shows that $\hat{\alpha}[M]$ is a left $\hat{G}[B]$-module map.

Next we have to show naturality of this isomorphism between functors of $1$-categories
\[
Bim(A,B) \to Bim(\hat{F}[A], \hat{G}[B])
\]
So let $\phi: M \to M'$ be a $(B,A)$-bimodule map.
The diagram that we have to show commutes is
\[
\begin{tikzcd}
\hat{G}[M] \otimes_{\hat{G}[A]} \hat{G}[A]_{\alpha[A]} \arrow[r,"{\hat{\alpha}[M]}"]  \arrow[d,"{\hat{G}[\phi] \otimes \id}"]
&
\hat{G}[B]_{\alpha[B]} \otimes_{\hat{F}[B]} \hat{F}[M] \arrow[d,"{\id \otimes \hat{F}[\phi]}"]
\\
\hat{G}[M'] \otimes_{\hat{G}[A]} \hat{G}[A]_{\alpha[A]} \arrow[r,"{\hat{\alpha}[M']}"] 
&
\hat{G}[B]_{\alpha[B]} \otimes_{\hat{F}[B]} \hat{F}[M'] 
\end{tikzcd}
\]
This diagram clearly commutes in vector spaces by naturality of $\alpha$ where we can equivalently describe it as the diagram
\[
\begin{tikzcd}
G[M] \arrow[r,"{\alpha[M]}"] \arrow[d,"{G[\phi]}"] & F[M] \arrow[d,"{F[\phi]}"]
\\
G[M'] \arrow[r,"{\alpha[M']}"] & F[M']
\end{tikzcd}
\]
To show that $\hat{\alpha}$ is a natural transformation we still have to show that if $N$ is a $(C,B)$-bimodule, then
\[
\begin{tikzcd}
\hat{G}[N \otimes_B M] \otimes_{\hat{G}[A]} \hat{G}[A]_{\alpha[A]} \arrow[dd,"{\hat{\alpha}[N \otimes_B M]}"]
&
\hat{G}[N] \otimes_{\hat{G}[B]} \hat{G}[M] \otimes_{\hat{G}[A]} \hat{G}[A]_{\alpha[A]} \arrow[l] \arrow[d,"{\id \otimes \hat{\alpha}[M]}"]
\\
\ 
&
\hat{G}[N] \otimes_{\hat{G}[B]} \hat{G}[B]_{\alpha[B]} \otimes_{\hat{F}[B]} \hat{F}[M] \arrow[d,"{\hat{\alpha}[M] \otimes \id}"]
\\
\hat{G}[C]_{\alpha[C]} \otimes_{\hat{F}[C]} \hat{G}[N \otimes_B M]
&
\hat{G}[C]_{\alpha[C]} \otimes_{\hat{F}[C]} \hat{F}[N] \otimes_{\hat{F}[B]} \hat{F}[M] \arrow[l]
\end{tikzcd}
\]
commutes.
Recalling the identifications above, this follows because it is induced by the diagram
\[
\begin{tikzcd}
G[N \otimes M]
&
G[N] \otimes G[M] \arrow[l]
\\
\
&
G[N] \otimes F[M] \arrow[u,"{\id \otimes \alpha[M]}", swap]
\\
F[N\otimes M]  \arrow[uu,"{\alpha[M \otimes N]}"]
&
F[N] \otimes F[M] \arrow[u,"{\alpha[N] \otimes \id}", swap] \arrow[l]
\end{tikzcd}
\]
in vector spaces which commutes by monoidality of $\alpha$.
The final diagram we have to check for $\hat{\alpha}$ being a natural transformation is the following unitality condition:
\[
\begin{tikzcd}
\hat{G}[A] \otimes_{\hat{G}[A]} \hat{G}[A]_{\alpha[A]} \arrow[r,"{l_{\hat{G}[A]_{\alpha[A]}}}"] \arrow[dr,"{\hat{\alpha}[\hat{G}[A]]}", swap]
&
\hat{G}[A]_{\alpha[A]}
\\
\
&
\hat{G}[A]_{\alpha[A]} \otimes_{\hat{F}[A]} \hat{F}[A] \arrow[u,"{r_{\hat{G}[A]_{\alpha[A]}}}", swap]
\end{tikzcd}
\]
Here the arrow going diagonally down and right is the natural transformation $\hat{\alpha}$ applied to the $(\hat{G}[A], \hat{G}[A])$-bimodule $\hat{G}[A]$.
Recalling that only the right action of $\hat{G}[A]_{\alpha[A]}$ is twisted by $\alpha[A]$, this diagram commutes.
So $\hat{\alpha}$ is a natural transformation.

Now assume $\hat{F}$ and $\hat{G}$ are symmetric monoidal.
\end{proof}

\begin{remark}
Note that we needed $\alpha$ to be monoidal to induce a natural transformation $\hat{\alpha}$ at all. 
But in that case it is automatically symmetric monoidal if $F,G$ are.
\end{remark}

\begin{theorem}
Let a $2$-group $\mathcal{G}$ act symmetric monoidaly on $\sVect$.
There is a canonical induced action by $\mathcal{G}$ on $\sAlg$.
Moreover, all functors and natural transformations involved in the definition are symmetric monoidal.
\end{theorem}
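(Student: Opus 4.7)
The plan is to lift the entire action structure from $\sVect$ to $\sAlg$ componentwise, using the two preceding lemmas as the main engine. Concretely, given a symmetric monoidal action $\rho\colon \mathcal{G}\to \Aut_{\text{sym-mon}}\sVect$, I define $\hat{\rho}(g):=\widehat{\rho(g)}$ on objects and $\hat{\rho}(\gamma):=\widehat{\rho(\gamma)}$ on $1$-morphisms of $\mathcal{G}$, where the hat is the construction of the first two lemmas. Since $\rho(g)$ is a symmetric monoidal equivalence and $\rho(\gamma)$ is a symmetric monoidal natural isomorphism, the lemmas produce a symmetric monoidal functor $\hat{\rho}(g)\colon \sAlg\to \sAlg$ and a symmetric monoidal natural isomorphism $\hat{\rho}(\gamma)\colon \hat{\rho}(g)\Rightarrow \hat{\rho}(g')$.

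Next I would construct the higher coherence data. The monoidality data $R_{g',g}\colon \rho(g'\otimes g)\Rightarrow \rho(g')\circ \rho(g)$ of the original action is itself a monoidal natural isomorphism, so applying the second lemma componentwise yields $\hat{R}_{g',g}:=\widehat{R_{g',g}}\colon \hat{\rho}(g'\otimes g)\Rightarrow \hat{\rho}(g')\circ \hat{\rho}(g)$, where I use the equality $\widehat{F_2\circ F_1}=\widehat{F_1}\circ \widehat{F_2}$ (respecting the bimodule composition convention) from the first lemma. The modifications $\alpha_{\gamma',\gamma}$, $\omega_{g'',g',g}$ and the naturality data $R_{\gamma',\gamma}$ are all built from invertible $2$-morphisms in $\Aut_{\text{sym-mon}}\sVect$, so each of them corresponds to a $2$-isomorphism between monoidal natural isomorphisms. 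Such a $2$-isomorphism can be transported to $\sAlg$ simply because the hat construction on natural isomorphisms is functorial in the vertical composition of monoidal natural isomorphisms; more concretely, for each object $A$ the relevant invertible bimodule is the one induced by the isomorphism of algebras $\widehat{F}(A)\to \widehat{G}(A)$ provided by the underlying $2$-isomorphism on $\sVect$.

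The remaining work is to verify the pentagon-type coherence axioms in Definition~\ref{def:action}, together with the functoriality of $\hat{\rho}$ with respect to composition and tensor product of $1$- and $2$-morphisms in $\mathcal{G}$. Each such diagram decomposes as a collection of $2$-isomorphisms in $\sAlg$ between invertible bimodules induced by algebra isomorphisms on $\sVect$-level data; the corresponding diagram in $\sVect$ commutes by hypothesis (since $\rho$ is already an action), and this propagates to $\sAlg$ because the symmetric monoidal functor $\sAlg_1\to \sAlg$ used to produce the bimodules is $2$-functorial. In particular, whenever two compositions of invertible bimodules arise from equal algebra isomorphisms, the canonical comparison $2$-isomorphism in $\sAlg$ is uniquely determined, which pins down every modification and eliminates any ambiguity in the choices made.

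The symmetric monoidality of the entire induced action now follows from the last assertions of both lemmas together with the observation that $R_{g',g}$, $\omega$, and $\alpha$ are themselves symmetric monoidal, so their hats are too; and the interaction between the symmetric monoidal structure of $\hat{\rho}(g)$ and the bimodules underlying $\hat{R}_{g',g}$ is governed by the symmetry axioms for $\rho$, which again transport verbatim. The main obstacle is the sheer bookkeeping involved in writing out the pentagon and the naturality squares for $\hat{R}_{\gamma',\gamma}$ and $\hat{\omega}_{g'',g',g}$ in $\sAlg$, but these reduce to commuting diagrams of algebra homomorphisms that already commute by the corresponding axioms for $\rho$ on $\sVect$; no genuinely new calculation is required beyond tracking monoidality data of $\rho(g)$ through the tensor-products-of-bimodules construction.
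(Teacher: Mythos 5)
Your proposal is correct and follows essentially the same route as the paper: apply the two lemmas componentwise to get $\rho(g)=\widehat{\tau(g)}$, $\rho(\gamma)=\widehat{\tau(\gamma)}$ and $R_{g_2,g_1}=\widehat{T_{g_2,g_1}}$, then observe that all higher coherence data is forced because the corresponding conditions on $\sVect$ hold on the nose. The only imprecision is your phrase about transporting ``invertible $2$-morphisms in $\Aut_{\text{sym-mon}}\sVect$'' --- since $\sVect$ is a $1$-category these coherences are strict equalities of algebra homomorphisms, which is exactly why the paper can take $\alpha_{\gamma',\gamma}$, $R_{\gamma',\gamma}$ and the filler of the $\omega$-diagram to be identity modifications, a point you effectively recover when you note the diagrams commute at the level of algebra homomorphisms.
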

\begin{proof}
Recall that the data of a symmetric action on a $1$-category $\mathcal{C}$ consists of
\begin{itemize}
\item symmetric monoidal functors $\tau(g): \mathcal{C} \to \mathcal{C}$, strictly unital without loss of generality;
\item (pointed) monoidal natural transformations $T_{g_2,g_1}: \tau(g_2 g_1) \Longrightarrow \tau(g_2) \circ \tau(g_1)$;
\item monoidal natural transformations $\tau(\gamma): \tau(g) \Longrightarrow \tau(g')$ for every path $\gamma: g \to g'$ in $\mathcal{G}$.
\end{itemize}
satisfying the conditions that 
\begin{itemize}
\item $\tau(\gamma' \circ \gamma) = \tau(\gamma') \circ \tau(\gamma)$;
\item $T_{\bullet, \bullet}$ is `associative' in its two arguments;
\item The diagram
\begin{equation}
\label{eq: T-condition}
\begin{tikzcd}
\tau(g_2) \circ \tau(g_1) \arrow[d,"{\tau(\gamma_2) \bullet \tau(\gamma_1)}", Rightarrow] 
&
 \tau(g_2 g_1) \arrow[l, "{T_{g_2,g_1}}", Rightarrow, swap] \arrow[d,Rightarrow, "{\tau(\gamma_2 \otimes \gamma_1)}"]
\\
\tau(g_2') \circ \tau(g_1') 
&
 \tau(g_2' g_1') \arrow[l,"{T_{g_2',g_1'}}", Rightarrow]
\end{tikzcd}
\end{equation}
commutes.
\end{itemize}

We want to define an action $(\rho, R, \alpha, \omega)$ of $\mathcal{G}$ on $\sAlg$.
Let $\gamma: g \to g'$ be a path in $\mathcal{G}$.
Since $\tau(g)$ are a symmetric monoidal functors and $\tau(\gamma): \tau(g) \Rightarrow \tau(g')$ monoidal natural transformations, the last two lemmas yield canonically induced symmetric monoidal functors $\rho(g) := \widehat{\tau(g)}: \sAlg \to \sAlg$ and symmetric monoidal natural transformations $\rho(\gamma) := \widehat{\tau(\gamma)}: \rho(g) \Rightarrow \rho(g')$.
Moreover, the monoidal natural transformations $T_{g_2,g_1}: \tau(g_2 g_1) \Rightarrow \tau(g_2) \tau(g_1)$ induce symmetric monoidal natural transformations
\[
R_{g_2, g_1}: \rho(g_2 g_1) = \widehat{\tau(g_2 g_1)} \Rightarrow \widehat{\tau(g_2) \tau(g_1) } = \widehat{\tau(g_2)} \widehat{\tau(g_1) }.
\]
By Diagram \ref{eq: T-condition}, there is a diagram of algebra homomorphisms
\[
\begin{tikzcd}
\rho(g' g)[A] \arrow[r,"{T_{g'_1,g_1}[A]}"] \arrow[d,"{\tau(\gamma' \otimes \gamma)[A]}", swap]&
\rho(g_1') \rho(g_1) [A] \arrow[d,"{\tau(\gamma') \bullet \tau(\gamma)[A]}"]
\\
\rho(g_2' g_2)[A] \arrow[r,"{T_{g_2',g_2}[A]}"]&
\tau(g_2') \tau(g_2)[A]
\end{tikzcd}
\]
This implies that the diagram of induced bimodules
\[
\begin{tikzcd}
\rho(g' g)[A] \arrow[r,"{R_{g'_1,g_1}[A]}"] \arrow[d,"{\rho(\gamma' \otimes \gamma)[A]}", swap]&
\rho(g_1') \rho(g_1) [A] \arrow[d,"{\rho(\gamma') \bullet \rho(\gamma)[A]}"]
\\
\rho(g_2' g_2)[A] \arrow[r,"{R_{g_2',g_2}[A]}"]&
\rho(g_2') \rho(g_2)[A]
\end{tikzcd}
\]
can be filled by the identity $2$-morphism.
Moreover, on $2$-morphisms the two natural transformations given by either side are also equal, as they are induced by the same algebra homomorphism.
Hence we can set $R_{\gamma', \gamma} = \id$ to be the identity modification and the condition on being a modification is satisfied. 
Similarly, by the associativity for $T$ in its two arguments, the diagram defining $\omega$ has a strict filler.
Finally we can set $\alpha_{\gamma', \gamma} = \id$ because of the strict equality $\tau(\gamma') \tau(\gamma) = \tau(\gamma' \gamma)$.
Because so much of the data is trivial, all other conditions (pentagon for $\omega$, associativity of $\alpha$ and naturality condition of $\rho( - \otimes -) \Rightarrow \rho(-) \circ \rho(-)$) are now trivial.
By the past lemmas all functors and natural transformations involved are indeed symmetric monoidal.
It is easy to check all relevant types of units are preserved.
\end{proof}
	\small	
\bibliographystyle{alpha}
\bibliography{biblio}
	

\end{document}